\let\old@setaddresses\@setaddresses
\def\@setaddresses{\bigskip{\parindent 0pt\let\scshape\relax\let\ttfamily\relax\old@setaddresses}}
\newtheorem{thm}{Theorem}
\newtheorem{proposition}[thm]{Proposition}
\newtheorem{lem}[thm]{Lemma}
\theoremstyle{remark}
\newtheorem*{claim}{Claim}
\theoremstyle{definition}
\newtheorem*{defn}{Definition}
\newenvironment{claimproof}[1][\myproofname]{\begin{proof}[#1]}{\end{proof}}
\global\long\def\RR{\mathcal{R}}%
\global\long\def\R{\mathbb{R}}%
\global\long\def\N{\mathbb{N}}%
\global\long\def\Z{\mathbb{Z}}%
\global\long\def\P{\mathcal{P}}%
\global\long\def\OPT{\mathrm{OPT}}%
\global\long\def\Pe{\mathcal{P}_{E}}%
\global\long\def\Ph{\mathcal{P}_{H}}%
\global\long\def\conv{\mathrm{conv}}%
\global\long\def\fpoly{O(1)}%
\global\long\def\area{\mathrm{area}}%
\global\long\def\bPe{\mathcal{\bar{P}}_{E}}%
\global\long\def\bPh{\bar{\mathcal{P}}_{H}}%
\global\long\def\bPm{\bar{\mathcal{P}}_{M}}%
\global\long\def\shr{\mathrm{shr}}%
\global\long\def\bPm{\bar{\mathcal{P}}_{M}}%
\global\long\def\rot{\mathrm{rot}}%
\global\long\def\L{\mathcal{L}}%
\global\long\def\V{\mathcal{V}}%
\global\long\def\S{\mathcal{S}}%
\global\long\def\Pm{\mathcal{P}_{M}}%
\global\long\def\C{\mathcal{C}}%
\global\long\def\F{\mathcal{F}}%
\global\long\def\diam{\mathrm{diam}}%
\global\long\def\OPTef{\mathrm{OPT}_{\mathrm{EF}}}%
\global\long\def\OPTcf{\mathrm{OPT}_{\mathrm{CF}}}%
\global\long\def\dir{\mathrm{dir}}%
\global\long\def\lleft{\mathrm{left}}%
\global\long\def\mmid{\mathrm{mid}}%
\title{On the two-dimensional knapsack problem for convex polygons}
\author{Arturo Merino}
\address[Arturo Merino]{Technische Universität Berlin, Germany,}
\email{merino@math.tu-berlin.de}
\author{Andreas Wiese}
\address[Andreas Wiese]{Universidad de Chile, Chile,}
\email{awiese@dii.uchile.cl}
\thanks{Andreas Wiese: partially supported by FONDECYT Regular grant 1170223. Arturo Merino: partially supported by DFG Project 413902284 and ANID Becas Chile 2019-72200522.}
\begin{document}

\begin{abstract}
We study the two-dimensional geometric knapsack problem for convex polygons. Given a set of weighted convex polygons and a square knapsack, the goal is to select the most profitable subset of the given polygons that fits non-overlappingly into the knapsack. We allow to rotate the polygons by arbitrary angles. We present a quasi-polynomial time $O(1)$-approximation algorithm for the general case and a polynomial time $O(1)$-approximation algorithm if all input polygons are triangles, both assuming polynomially bounded integral input data. Also, we give a quasi-polynomial time algorithm that computes a solution of optimal weight under resource augmentation, i.e., we allow to increase the size of the knapsack by a factor of $1+\delta$ for some $\delta>0$ but compare ourselves with the optimal solution for the original knapsack. To the best of our knowledge, these are the first results for two-dimensional geometric knapsack in which the input objects are more general than axis-parallel rectangles or circles and in which the input polygons can be rotated by arbitrary angles.
\end{abstract}

\maketitle

\section{Introduction}

In the two-dimensional geometric knapsack problem (2DKP) we are given
a square knapsack $K:=[0,N]\times[0,N]$ for some integer $N$ and
a set of $n$ convex polygons $\P$ where each polygon $P_{i}\in\P$
has a weight $w_{i}>0$; we write $w(\P'):=\sum_{P_i\in \P'}w_i$ for any set $\P'\subseteq\P$. 
The goal is to select a subset $\P'\subseteq\P$
of maximum total weight $w(\P')$ such that the polygons in $\P'$ fit non-overlapping
into $K$ if we translate and rotate them suitably (by arbitrary angles).
2DKP is a natural packing problem, the reader may think of cutting
items out of a piece of raw material like metal or wood, cutting cookings
out of dough, or, in three dimensions, loading cargo into a ship or
a truck. In particular, in these applications the respective items
can have various kinds of shapes. Also note that 2DKP is a natural
geometric generalization of the classical one-dimensional knapsack
problem.

Our understanding of 2DKP highly depends on the type of input objects.
If all polygons are axis-parallel squares there is a $(1+\epsilon)$-approximation
with a running time of the form $O_\epsilon(1) n^{O(1)}$ (i.e., an 
EPTAS)~\cite{HeydrichWiese2019},
and there can be no FPTAS (unless $\mathsf{P=NP}$) since the problem
is strongly $\mathsf{NP}$-hard~\cite{leung1990packing}. For axis-parallel
rectangles there is a polynomial time $(17/9+\epsilon)<1.89$-approximation
algorithm and a $(3/2+\epsilon)$-approximation if the items can be
rotated by exactly 90 degrees~\cite{2DKPrectangles2017}. If the
input data is quasi-polynomially bounded there is even a $(1+\epsilon)$-approximation
in quasi-polynomial time~\cite{adamaszek2015knapsack}, with and
without the possibility to rotate items by 90 degrees. For circles
a $(1+\epsilon)$-approximation is known under resource augmentation
in one dimension if the weight of each circle equals its area~\cite{lintzmayer2018two}.

To the best of our knowledge, there is no result known for 2DKP for
shapes different than axis-parallel rectangles and circles. Also,
there is no result known in which input polygons are allowed to be
rotated by angles different than 90 degrees. However, in the applications
of 2DKP the items might have shapes that are more complicated than
rectangles or circles. Also, it makes sense to allow rotations by
arbitrary angles, e.g., when cutting items out of some material. In
this paper, we present the first results for 2DKP in these settings.

\subsection{Our contribution}

We study 2DKP for arbitrary convex polygons, allowing to rotate them
by arbitrary angles. Note that due to the latter, it might be that
some optimal solution places the vertices of the polygons on irrational
coordinates, even if all input numbers are integers. Our first results
are a quasi-polynomial time $O(1)$-approximation algorithm for general
convex polygons and a polynomial time $O(1)$-approximation algorithm
for triangles.

By rotation we can assume for each input polygon that the line segment
defining its diameter is horizontal. We identify three different types
of polygons for which we employ different strategies for packing them,
see Figure~\ref{fig:packings}a). First, we consider the \emph{easy
}polygons which are the polygons whose bounding boxes fit into the
knapsack without rotation. We pack these polygons such that their
bounding boxes do not intersect. Using area arguments and the Steinberg's algorithm~\cite{steinberg1997strip} we obtain
a $O(1)$-approximation for the easy polygons. Then we consider the
\emph{medium} polygons which are the polygons whose bounding boxes
easily fit into the knapsack if we can rotate them by 45 degrees.
We use a special type of packing in which the bounding boxes are rotated
by 45 degrees and then stacked on top of each other, see Figure \ref{fig:packings}b).
More precisely, we group the polygons by
the widths of their bounding boxes and to each group we assign two
rectangular containers in the packing. We compute the essentially
optimal solution of this type by solving a generalization of one-dimensional
knapsack for each group. Our key structural insight for medium polygons
is that such a solution is $O(1)$-approximate. To this end, we prove
that in $\OPT$ the medium polygons of each group occupy an area that
is by at most a constant factor bigger than the corresponding containers,
and that a constant fraction of these polygons fit into the containers.
In particular, we show that medium polygons with very wide bounding
boxes lie in a very small hexagonical area close to the diagonal of
the knapsack. Our routines for easy and medium polygons run in polynomial
time.

It remains to pack the \emph{hard} polygons whose bounding boxes just
fit into the knapsack or do not fit at all, even under rotation. Note
that this does not imply that the polygon itself does not fit. Our
key insight is that there can be only $O(\log N)$ such polygons in
the optimal solution, at most $O(1)$ from each group. Therefore,
we can guess these polygons in quasi-polynomial time, assuming that
$N$ is quasi-polynomially bounded. However, in contrast to other
packing problems, it is not trivial to check whether a set of given
polygons fits into the knapsack since we can rotate them by arbitrary
angles and we cannot enumerate all possibilities for the angles. However,
we show that by losing a constant factor in the approximation guarantee
we can assume that
the placement of each hard polygon comes from a precomputable polynomial size set
and hence we can guess the placements of the $O(\log N)$ hard polygons
in quasi-polynomial time. 
\begin{thm}
\label{thm:approx-qpoly}There is a $O(1)$-approximation algorithm
for 2DKP with a running time of $(nN)^{(\log nN)^{O(1)}}$.
\end{thm}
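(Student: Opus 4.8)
The plan is to partition $\P$ into the three classes described above---the \emph{easy} polygons $\Pe$, the \emph{medium} polygons $\Pm$, and the \emph{hard} polygons $\Ph$---and to give, for each class separately, a polynomial or quasi-polynomial time algorithm that returns a feasible packing whose weight is within a constant factor of the weight of the best packing using only polygons from that class. After the preprocessing rotation that makes the diameter segment of each $P_i$ horizontal, the classes are defined through the axis-parallel bounding box $B_i$ of $P_i$ so that they partition $\P$: $P_i$ is easy if $B_i$ fits into $K$ without rotation, medium if (it is not easy and) $B_i$ fits comfortably into $K$ after a $45^{\circ}$ rotation, and hard otherwise. Fixing an optimal solution $\OPT$, the sets $\OPT\cap\Pe$, $\OPT\cap\Pm$ and $\OPT\cap\Ph$ partition $\OPT$, each of them is a feasible packing of polygons from one class, and one of them has weight at least $w(\OPT)/3$. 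Hence, running the three class algorithms and returning the best of the three solutions is an $O(1)$-approximation; the overall running time is the maximum of the three, so it suffices to make the hard-polygon algorithm run in time $(nN)^{(\log nN)^{O(1)}}$ and the other two in polynomial time.

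For the easy polygons I would first show that insisting that the chosen bounding boxes be placed disjointly and axis-parallel loses only a constant factor against the best packing of easy polygons. This reduces the task to packing axis-parallel rectangles into $K$ whose total area is a constant fraction of $N^{2}$; Steinberg's algorithm~\cite{steinberg1997strip}, together with the trivial area upper bound on what any packing of easy polygons can achieve, then yields an $O(1)$-approximation in polynomial time.

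For the medium polygons the algorithm considers only packings of a prescribed shape: every selected bounding box is rotated by exactly $45^{\circ}$; the polygons are grouped by the width of the rotated bounding box; each group is assigned two rectangular containers placed along a diagonal of $K$, inside which the group's bounding boxes are stacked; and within a group the profit-maximizing selection fitting into the two containers is computed by solving a mild generalization of one-dimensional knapsack. The core of this part is the structural claim that the best packing of this form has weight $\Omega(w(\OPT\cap\Pm))$. I would prove it group by group, showing that (i) the polygons of a given group that appear in $\OPT$ have total area at most a constant times the area of that group's two containers---here one uses the geometric fact that a medium polygon whose bounding box is very wide is pinned inside a thin hexagonal region around a diagonal of $K$---and (ii) a constant fraction of those polygons can be repacked into the two containers. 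Summing the constant losses over the $O(\log N)$ groups gives the claim, and the routine runs in polynomial time.

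The main obstacle is the hard polygons, where essentially all the difficulty lies, and it has two components. The combinatorial one is to prove that $\OPT$ contains only $O(\log N)$ hard polygons, with $O(1)$ of them from each width group; intuitively, a hard polygon of a fixed group is so large, or so constrained in where it can be placed, that only a bounded number of them can coexist, and there are only $O(\log N)$ relevant groups. The geometric one is that feasibility of a set of polygons cannot be decided by enumerating rotation angles, since an optimal placement may use irrational angles. I would handle this by proving that, at the cost of another constant factor, every hard polygon in a near-optimal solution may be assumed to be placed at a position and angle drawn from a set of only polynomially many candidates that is precomputable from the input---established by a snapping-and-perturbation argument that repairs the overlaps it creates. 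With both ingredients in place, the algorithm enumerates all $O(\log N)$-subsets of $\P$ together with one of the polynomially many candidate placements for each chosen polygon---this is $(nN)^{O(\log N)} = (nN)^{(\log nN)^{O(1)}}$ possibilities---tests each such guess directly for feasibility, and keeps the best feasible one. Combining this with the easy- and medium-polygon algorithms and returning the best of the three solutions proves Theorem~\ref{thm:approx-qpoly}, with running time dominated by the hard-polygon enumeration, namely $(nN)^{(\log nN)^{O(1)}}$.
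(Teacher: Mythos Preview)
Your proposal is correct and follows essentially the same approach as the paper: the same three-way partition into easy, medium, and hard polygons; Steinberg's algorithm with area arguments for $\Pe$; the $45^{\circ}$-rotated container packing with per-group one-dimensional knapsack (and the hexagon area bound) for $\Pm$; and for $\Ph$ the two key facts that $|\OPT\cap\Ph|=O(\log N)$ with $O(1)$ per width group, together with a discretization lemma giving $(nN)^{O(1)}$ candidate placements per polygon, leading to the $(nN)^{O(\log N)}$ enumeration. The combination step and running-time accounting match the paper's as well.
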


\begin{figure}
\centering{}%
\begin{tabular}{cccc}
\includegraphics[height=4.7cm]{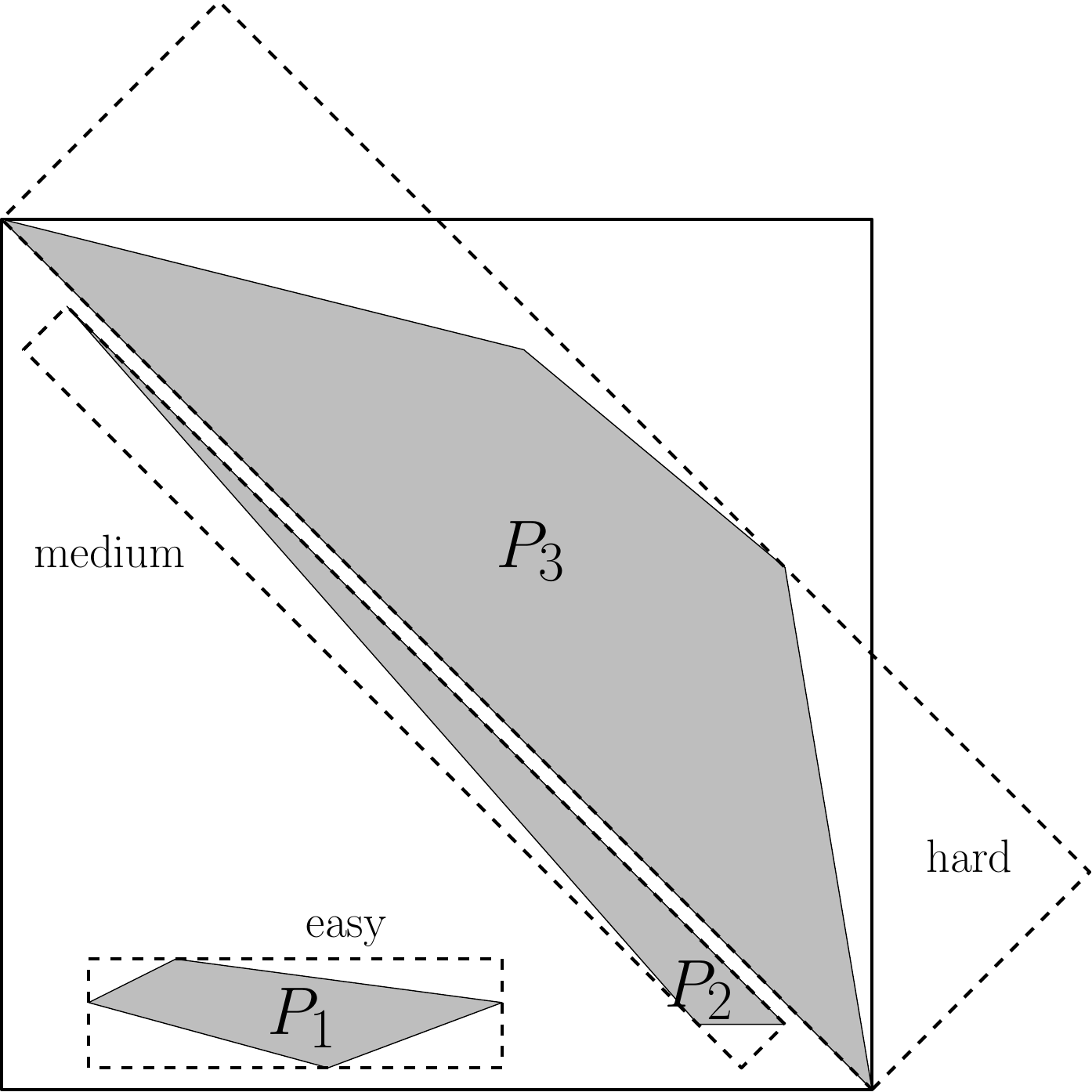} & \includegraphics[height=3.8cm]{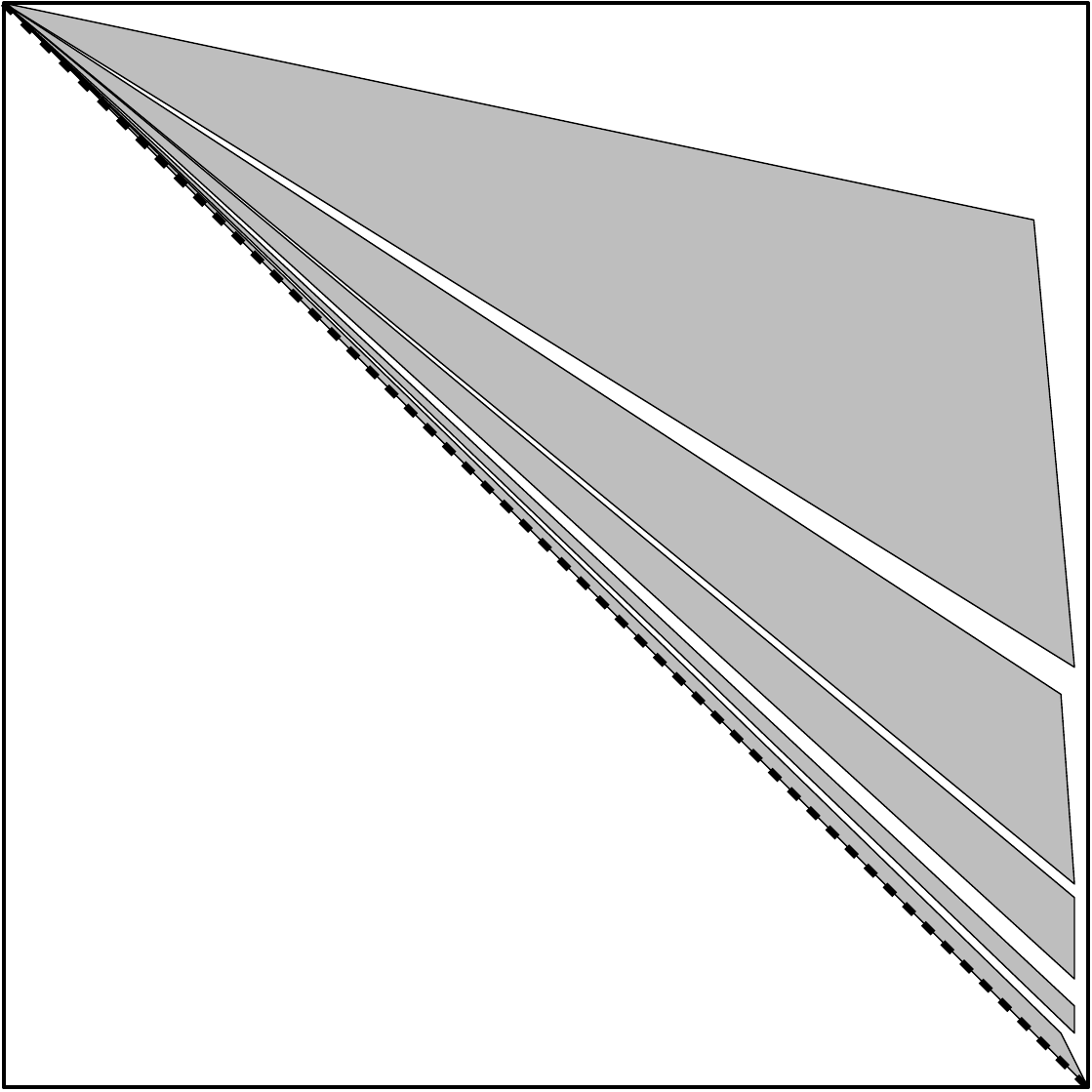} & \includegraphics[height=3.8cm]{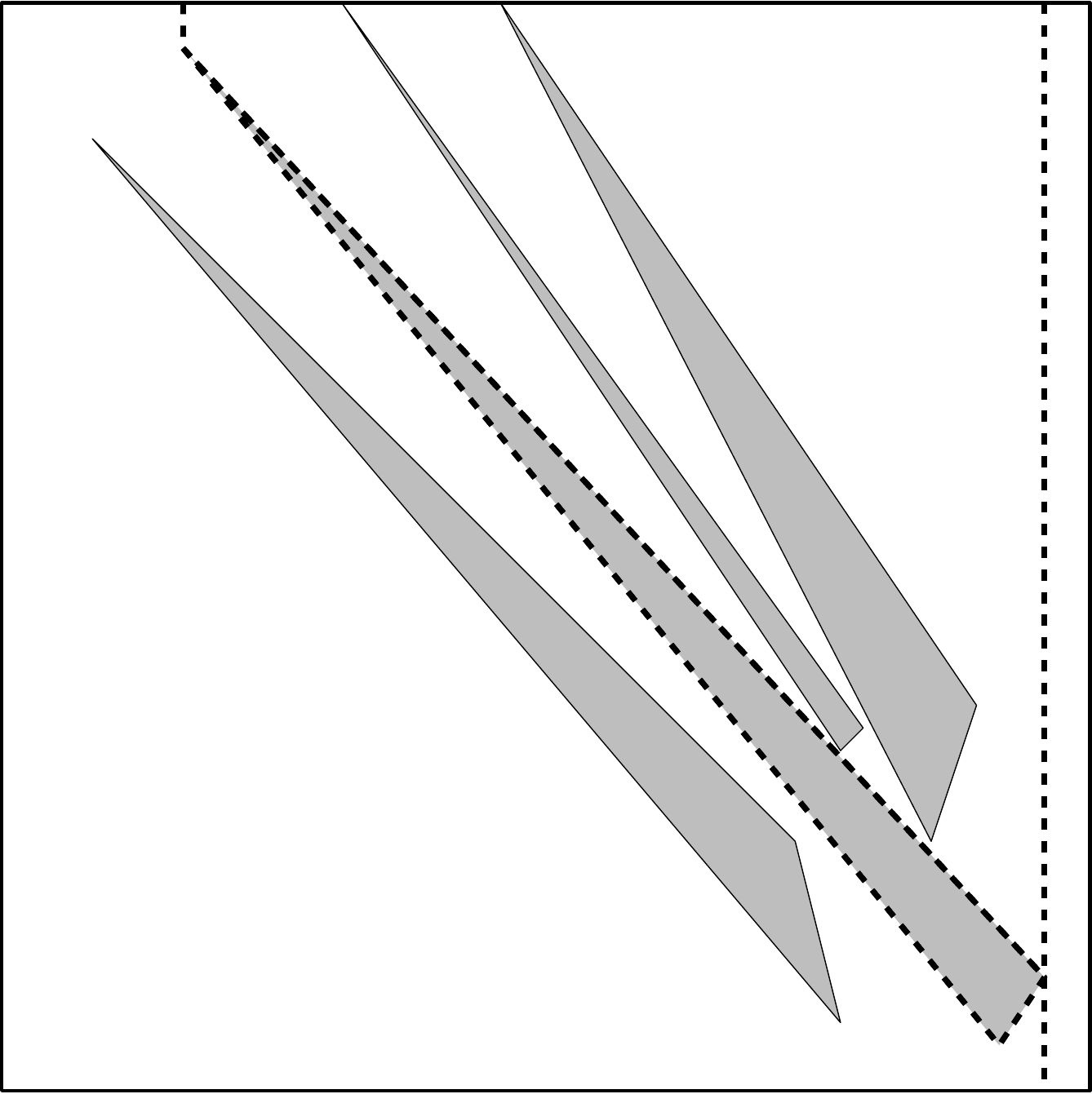}\tabularnewline
(a) & (b) & (c) \tabularnewline
\end{tabular}\caption{\label{fig:packings}(a) An easy, a medium, and a hard polygon and their bounding boxes (b): Triangles packed in a top-left packing
(c) The geometric DP subdivides the knapsack along the dashed lines
and then recurses within each resulting area.}
\end{figure}

\begin{figure}
\centering \includegraphics[height=4.5cm]{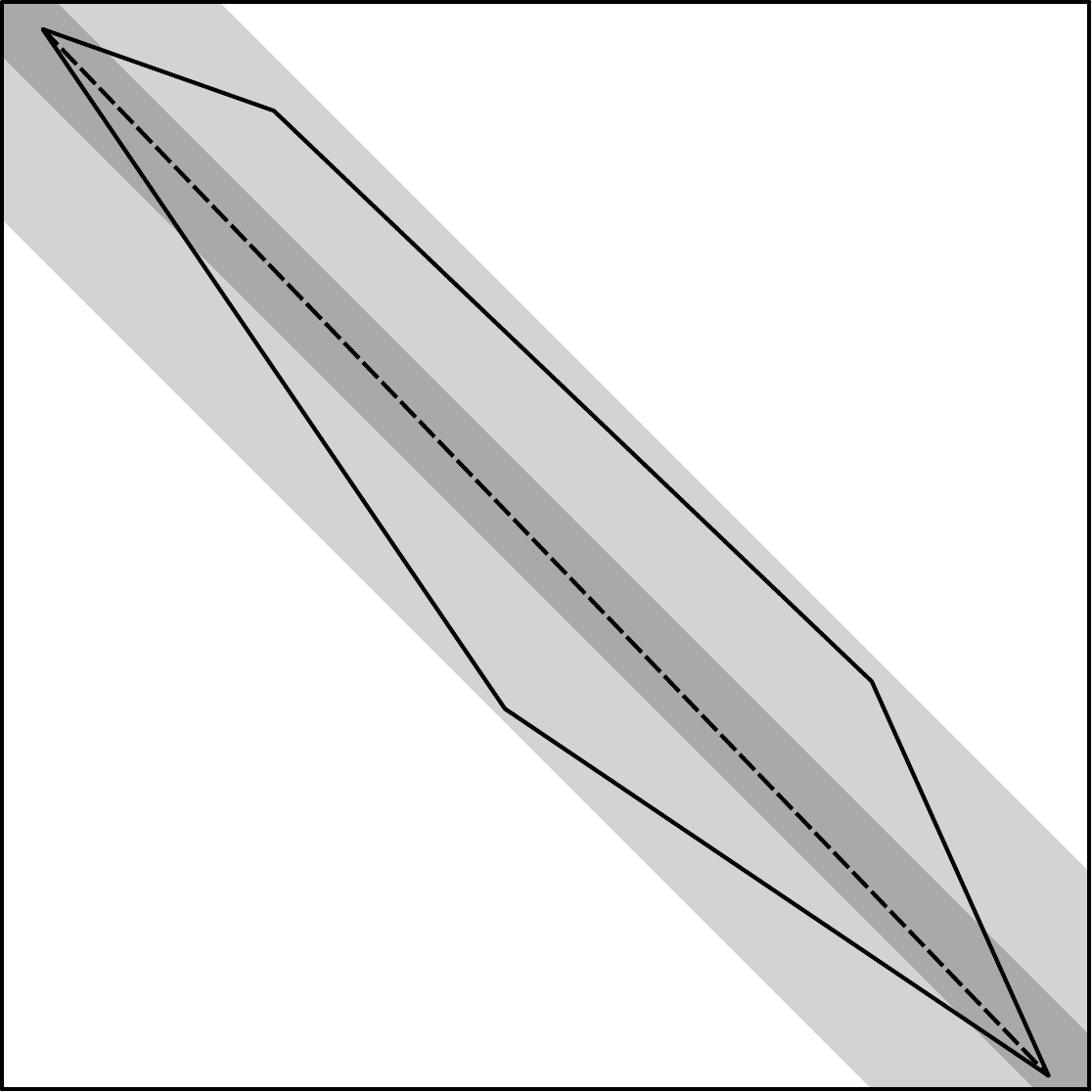}~~~~~\includegraphics[height=4.5cm]{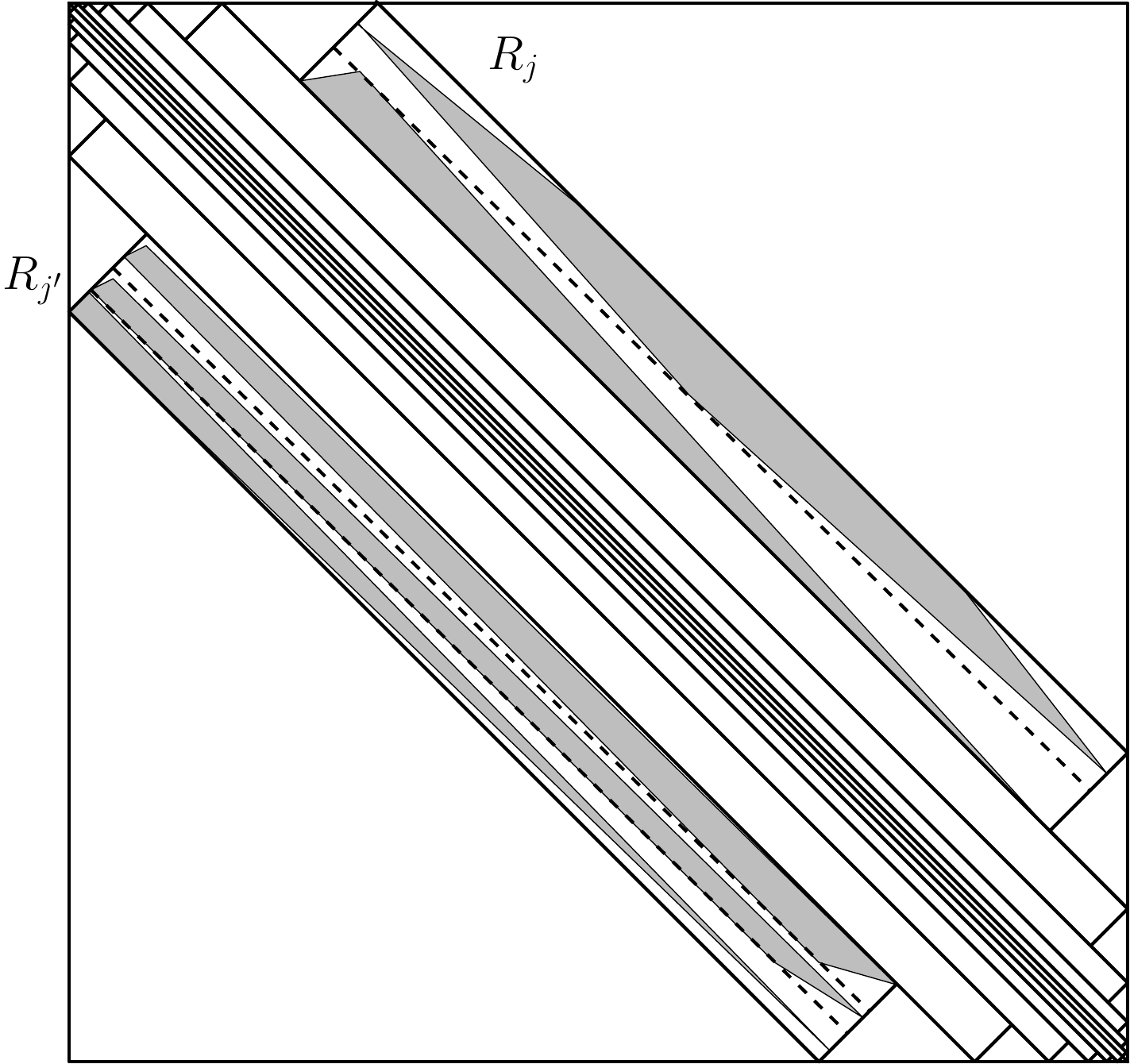}
\caption{Left: Assume that the polygon (black line segments) is a medium polygon
contained in the set $\protect\P_{j}$. Then the diagonal (dashed)
line segment must lie into the dark-gray area and the whole polygon
must be contained in the light-gray area. Right: The containers
for the medium polygons of the different groups. Within each container, the polygons are stacked
on top of each other such that their respective bounding boxes do
not intersect.}
\label{fig:hexagons-rectangles} 
\end{figure}

If all hard polygons are triangles we present even a \emph{polynomial
}time $O(1)$-approximation algorithm. We split the triangles in $\OPT$
into two types, for one type we show that a constant fraction of it
can be packed in what we call \emph{top-left-packings}, see Figure~\ref{fig:packings}b).
In these packings, the triangles are sorted by the lengths of their
longest edges and placed on top of each other in a triangular area. We devise a dynamic program (DP)
that essentially computes the most profitable top-left-packing. For
proving that this yields a $O(1)$-approximation, we need some careful
arguments for rearranging a subset of the triangles with large weight
to obtain a packing that our DP can compute. We observe that essentially all hard
polygons in $\OPT$ must intersect the horizontal line that contains
the mid-point of the knapsack. Our key insight is that if we pack
a triangle in a top-left-packing then it intersects this line to a
similar extent as in $\OPT$. Then we derive a sufficient condition
when a set of triangles fits in a top-left-packing, based on by how
much they overlap this line.

For the other type of triangles we use a geometric dynamic program.
In this DP we recursively subdivide the knapsack into subareas in
which we search for the optimal solution recursively, see Figure~\ref{fig:packings}c).
In the process we guess the placements of some triangles from $\OPT$.
Again, by losing a constant factor we can assume that for each triangle
in $\OPT$ there are only a polynomial number of possible placements.
By exploiting structural properties of this type of triangles we ensure that the
number of needed DP-cells is bounded by a polynomial. A key difficulty
is that we sometimes split the knapsack into two parts on which we
recurse independently. Then we need to ensure that we do not select
some (possibly high weight) triangle in both parts. To this end, we
globally select at most one triangle from each of the $O(\log N)$ groups (losing a constant
factor) and when we recurse, we guess for each subproblem from which
of the $O(\log N)$ groups it contains a triangle in $\OPT$. This
yields only $2^{O(\log N)}=N^{O(1)}$ guesses. 

\begin{thm}
\label{thm:approx-poly-triangles} There is a $O(1)$-approximation
algorithm for 2DKP with a running time of $(nN)^{O(1)}$ if all input
polygons are triangles.
\end{thm}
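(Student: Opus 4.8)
The plan is to partition the input triangles according to their bounding boxes, handle each class with a separate polynomial-time routine, and output the best of the resulting solutions. After rotating each triangle so that the segment realizing its diameter is horizontal, call a triangle \emph{easy} if its axis-parallel bounding box fits into $K$ without rotation, \emph{medium} if (it is not easy but) its bounding box fits into $K$ after a $45^\circ$ rotation, and \emph{hard} otherwise. Since $\OPT$ decomposes into its easy, medium, and hard parts, it suffices to obtain an $O(1)$-approximation for each of the three classes separately and return the most profitable of the three solutions, losing only a factor $3$.

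For the easy triangles we pack their bounding boxes without rotation: a standard area bound shows the bounding box of an easy triangle has area at most twice the triangle, and combined with a call to Steinberg's algorithm this gives a polynomial-time $O(1)$-approximation, exactly as for the easy polygons discussed above. For the medium triangles we group them by the widths of their bounding boxes into $O(\log N)$ groups, assign two rectangular containers to each group, and within each container stack the (now $45^\circ$-rotated) bounding boxes on top of one another. Using the structural facts that in $\OPT$ the medium triangles of each group occupy area only a constant factor larger than the corresponding containers and that a constant fraction of them fits into the containers, the optimal packing of this restricted form — computable in polynomial time by solving a generalized one-dimensional knapsack per group — is $O(1)$-approximate.

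The crux is the hard triangles, where we exploit the triangle assumption. First we observe that $\OPT$ contains only $O(\log N)$ hard triangles, at most $O(1)$ per width group, and that essentially all of them cross the horizontal line $\ell$ through the center of $K$. We split them into two types. For the first type we show that a constant fraction can be rearranged into a \emph{top-left-packing}: sort the triangles by the length of their longest edge and stack them inside a triangular region. The key point is that when a triangle is placed in such a packing it crosses $\ell$ to within a constant factor of the extent by which it crossed $\ell$ in $\OPT$; from this we derive a sufficient condition, phrased in terms of the total overlap with $\ell$, under which a family of triangles fits in a top-left-packing, and a dynamic program over the triangles sorted by longest edge then computes the most profitable feasible top-left-packing in polynomial time. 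For the second type we use a geometric dynamic program that recursively cuts $K$ along $O(1)$ guessed lines induced by placements of $\OPT$-triangles and recurses in the resulting subregions; losing a constant factor, each triangle is restricted to a polynomially bounded precomputable set of candidate placements, and the structural properties of this type keep the number of reachable DP cells polynomial. The subtlety is that a cut may split $K$ into two independently recursed parts, so to avoid selecting the same (possibly high-weight) triangle twice we globally keep at most one triangle from each of the $O(\log N)$ groups (losing a constant factor) and, upon recursing, guess for each subproblem which of the $O(\log N)$ groups supplies its $\OPT$-triangle, costing only $2^{O(\log N)}=N^{O(1)}$ guesses.

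Combining the three routines and returning the best of their solutions yields a polynomial-time $O(1)$-approximation for 2DKP on triangles. The main obstacle is the analysis of the geometric DP for the second type of hard triangles: showing that the recursive subdivision can be driven by only $O(1)$ guessed placements per level while keeping both the recursion depth and the number of DP cells polynomial, and simultaneously maintaining the bookkeeping that prevents double-counting a triangle across independently solved subproblems. The second delicate point is the top-left-packing rearrangement — converting the ``overlap with $\ell$'' condition into an actual non-overlapping placement computable by the DP.
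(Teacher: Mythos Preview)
Your proposal is correct and follows essentially the same approach as the paper: the easy/medium/hard partition, Steinberg for easy, the container-and-stacking scheme for medium, and for hard triangles the split into two types (edge-facing and corner-facing in the paper's terminology) handled respectively by the top-left-packing DP and the geometric recursive DP with per-group bookkeeping to avoid double-counting. One minor caveat: the paper sets the medium/hard threshold at $h_i \le h'_i/8$ rather than ``bounding box fits after a $45^\circ$ rotation,'' so that \emph{hard} also includes triangles whose bounding box barely fits --- this extra slack is precisely what makes the medium stacking-into-containers argument go through, so your threshold should be adjusted accordingly.
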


Then we study the setting of resource augmentation, i.e., we compute
a solution that fits into a larger knapsack of size $(1+\delta)N\times(1+\delta)N$ for some constant $\delta>0$
and compare ourselves with a solution that fits into the original
knapsack of size $N\times N$. We show that then the optimal solution
can contain only \emph{constantly} many hard polygons and hence we
can guess them in \emph{polynomial} time. 

\begin{thm}\label{thm:RA-ptime}
There is a $O(1)$-approximation algorithm for 2DKP under $(1+\delta)$-resource
augmentation with a running time of $n^{O_{\delta}(1)}$. 
\end{thm}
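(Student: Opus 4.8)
The plan is to keep the three-way split of the input polygons into the sets $\Pe$ (easy), $\Pm$ (medium) and $\Ph$ (hard), together with the polynomial time $O(1)$-approximation routines for $\Pe$ and for $\Pm$ used for Theorem~\ref{thm:approx-qpoly}, and to replace the quasi-polynomial enumeration of the hard polygons by a polynomial one. The only change to the split is that, since we are allowed a knapsack of side length $(1+\delta)N$, we already classify a polygon $P_i$ as \emph{medium} once its bounding box fits into a knapsack of side length $(1+\tfrac\delta2)N$ after a rotation by $45\degree$, i.e.\ once $(d_i+h_i)/\sqrt2\le(1+\tfrac\delta2)N$, where $d_i$ is the diameter of $P_i$ (realised by a horizontal segment, after a suitable rotation) and $h_i$ is the height of its bounding box in that orientation; \emph{easy} and \emph{hard} are defined relative to this new threshold exactly as before. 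The easy routine packs into $K$ and the medium routine, run with the enlarged knapsack, packs into the $(1+\delta)N\times(1+\delta)N$ knapsack; applying them to the sub-instances $\Pe$ and $\Pm$ yields in polynomial time feasible packings of weight $\Omega(w(\OPT\cap\Pe))$ and $\Omega(w(\OPT\cap\Pm))$, where $\OPT$ denotes a fixed optimal solution for the original knapsack $K$.

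The heart of the argument is the bound $|\OPT\cap\Ph|=O_\delta(1)$. A hard polygon is not easy, so its bounding box does not fit into $K$ without rotation; since $h_i\le d_i$ always, this means $d_i>N$. A hard polygon $P_i\in\OPT$ is contained in $K$, so its diameter, being the distance between two of its points, satisfies $d_i\le\diam(K)=\sqrt2\,N$. Since $P_i$ is also not medium we have $(d_i+h_i)/\sqrt2>(1+\tfrac\delta2)N$, and combined with $d_i\le\sqrt2\,N$ this forces $h_i>\tfrac{\delta}{\sqrt2}N$. Finally, $P_i$ contains the convex hull $Q_i$ of the two endpoints of its horizontal diameter segment together with a highest and a lowest point of $P_i$; the two diameter endpoints have the same height, so $\area(Q_i)=\tfrac12 d_ih_i$. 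Hence every hard polygon in $\OPT$ has area at least $\tfrac12 d_ih_i>\tfrac{\delta}{2\sqrt2}N^2$, and as the polygons of $\OPT$ are interior-disjoint subsets of $K$ we conclude $|\OPT\cap\Ph|\le 2\sqrt2/\delta=O_\delta(1)$.

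It remains to pack the hard polygons, which we do by exhaustive search: enumerate all $n^{O_\delta(1)}$ sets $S\subseteq\P$ with $|S|\le 2\sqrt2/\delta$; for each $S$, enumerate all tuples that assign to every $P_i\in S$ a placement from the precomputable polynomial-size candidate set of placements guaranteed by the structural lemma underlying Theorem~\ref{thm:approx-qpoly} (there are $n^{O_\delta(1)}$ such tuples, using the paper's standing assumption that the input data is polynomially bounded); discard a tuple unless the placed polygons are pairwise interior-disjoint and all contained in the $(1+\delta)N\times(1+\delta)N$ knapsack, both of which can be checked in polynomial time since $|S|=O_\delta(1)$; and keep a surviving tuple of maximum total weight. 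By that structural lemma, after losing only a constant factor we may assume that a subset of $\OPT\cap\Ph$ of weight $\Omega(w(\OPT\cap\Ph))$ is placed at candidate positions and fits into the enlarged knapsack, so this search returns a packing of weight $\Omega(w(\OPT\cap\Ph))$. Outputting the best of the three packings built for $\Pe$, $\Pm$ and $\Ph$ is feasible for the $(1+\delta)N\times(1+\delta)N$ knapsack and, since these three sets partition $\P$, has weight $\Omega\big(w(\OPT\cap\Pe)+w(\OPT\cap\Pm)+w(\OPT\cap\Ph)\big)=\Omega(w(\OPT))$; the whole procedure runs in time $n^{O_\delta(1)}$.

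The step I expect to cost the most work is verifying that the medium-polygon routine still yields an $O(1)$-approximation after the medium/hard threshold has been raised to roughly $(1+\delta)N$ at $45\degree$ and the routine is run inside the enlarged knapsack: the container-and-area arguments for the medium polygons (the hexagonal localisation near the diagonal, the comparison of the area occupied in $\OPT$ by a group with the corresponding containers, and the claim that a constant fraction of each group fits into its containers) have to be re-checked with the slightly larger bounding boxes and the slightly larger knapsack, which should go through with constants depending on $\delta$. A smaller point is to confirm that the candidate-placement lemma for hard polygons is compatible with resource augmentation; here the extra $\delta N$ of width and height is exactly the slack that lets a hard polygon of $\OPT$ be perturbed onto a nearby candidate placement without leaving the enlarged knapsack.
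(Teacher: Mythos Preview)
Your area bound for the hard polygons is correct and pleasantly direct: with your modified threshold, each hard polygon in $\OPT$ has area $\Omega(\delta N^2)$, so there are only $O(1/\delta)$ of them, and the enumeration over subsets and candidate placements goes through exactly as you say.

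The paper, however, takes a different and cleaner route that you should know about. Instead of moving the medium/hard threshold, it observes that $(1+\delta)$-resource augmentation is equivalent to shrinking every input polygon by the factor $1+\delta$ and keeping the knapsack $K$ unchanged. It then applies the \emph{original} easy/medium/hard classification to the shrunk polygons $\bar P_i$. The point is that after shrinking one has $\bar\ell_i\le \sqrt2 N/(1+\delta)$, hence $\sqrt2 N-\bar\ell_i\ge \frac{\delta}{1+\delta}\sqrt2 N$, which forces every nonempty group $\bar\P_j$ to satisfy $2^j\ge \frac{\delta}{1+\delta}\sqrt2 N$; combined with the existing upper bound on $j$ this leaves only $O(\log(1/\delta))$ nonempty groups, and since each group contributes $O(1)$ hard polygons to $\overline\OPT$ (Lemma~\ref{lem:Pj-properties}) one gets $|\overline\OPT\cap\bPh|=O(\log(1/\delta))$. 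The easy and medium routines then apply \emph{verbatim}, with no parameters changed and nothing to re-verify.

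This matters because the ``re-check'' you flag for the medium routine is not as light as you suggest. The medium algorithm crucially uses that every medium polygon in group $\P_j$ satisfies $h_i\le 2^{j-3}$, so that its bounding box fits into the container $R_j$ of height $2^{j-3}$. Under your relaxed threshold a polygon in $\P_j$ can have $h_i$ as large as $2^j+\tfrac{\delta}{\sqrt2}N$; for the many groups with $2^j\ll\delta N$ this is $\Theta(\delta N)$, far exceeding the container height, and simply inflating all such containers to height $\Theta(\delta N)$ would require total area $\Theta(\delta N^2\cdot\log N)$, which does not fit even in the enlarged knapsack. So the medium machinery does not ``go through with constants depending on $\delta$'' by a parameter change alone; one would need an additional argument (for instance, separating out the original-hard polygons that became medium under your new threshold and handling them by your area bound as well). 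The paper's shrinking trick avoids all of this.
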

Finally, we present a quasi-polynomial time algorithm that computes
a solution of weight at least $w(\OPT)$ (i.e., we do not lose any factor
in the approximation guarantee) that is feasible under resource augmentation.
This algorithm does not use the above classification of polygons into
easy, medium, and hard polygons. Instead, we prove that if we can
increase the size of the knapsack slightly we can ensure that for
the input polygons there are only $(\log n)^{O_{\delta}(1)}$ different
shapes by enlarging the polygons suitably. Also, we show that we need
to allow only a polynomial number of possible placements and rotations
for each input polygon, \emph{without }sacrificing any polygons from
$\OPT$. Then we use a technique from~\cite{adamaszek2014qptas}
implying that there is a balanced separator for the polygons in $\OPT$
with only $(\log n)^{O_{\delta}(1)}$ edges and which intersects polygons
from $\OPT$ with only very small area. We guess the separator, guess
how many polygons of each type are placed inside and outside the separator,
and then recurse on each of these parts. Some polygons are intersected
by the balanced separator. However, we ensure that they have very
small area in total and hence we can place them into the additional
space of the knapsack that we gain due to the resource augmentation.
This generalizes a result in~\cite{adamaszek2015knapsack} for axis-parallel
rectangles. 
\begin{thm}
There is an algorithm for 2DKP under $(1+\delta)$-resource augmentation
with a running time of $n^{O_{\delta}(\log n)^{O(1)}}$ that computes
a solution of weight at least $w(\OPT)$. 
\end{thm}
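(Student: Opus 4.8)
The plan is to drop the easy/medium/hard classification entirely and instead, using the resource augmentation, (i) reduce to an instance with only $(\log n)^{O_\delta(1)}$ distinct polygon shapes and only $n^{O_\delta(1)}$ candidate placements per polygon, and then (ii) solve that instance by a balanced-separator divide-and-conquer in the spirit of~\cite{adamaszek2014qptas,adamaszek2015knapsack}. Throughout fix a small constant $\epsilon$ with $\delta=\Theta(\epsilon)$ and split the augmentation budget into a constant number of parts, one per rounding step below.

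\emph{Reducing the number of shapes.} For each $P_i$, rotate so that its diameter is horizontal and replace $P_i$ by an outer approximation with $O_\epsilon(1)$ vertices contained in a $(1+\epsilon)$-dilation of $P_i$ (a standard fact on approximating convex bodies by polygons). Round $\diam(P_i)$ up to the nearest power of $1+\epsilon$; there are only $(\log n)^{O_\delta(1)}$ relevant scales. Finally round $P_i$ from the outside in the $O_\epsilon(1)$ fixed normal directions that are multiples of $\epsilon\pi$ --- crucially including the direction perpendicular to the diameter, so that the \emph{width} of $P_i$ is preserved up to a factor $1+\epsilon$ --- quantizing the offsets to values of the form $(1+\epsilon)^k$ relative to the diameter length. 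For each diameter scale this leaves only $O_\epsilon(1)$ shapes, hence $(\log n)^{O_\delta(1)}$ shapes overall, and the enlarged polygon $\hat P_i$ contains $P_i$ and lies in a bounded dilation of $P_i$ about a suitable reference point; from this (via a global dilation about the midpoint of $K$) any packing of a subset of $\P$ in $K$ becomes a packing of the corresponding polygons $\hat P_i$ in $(1+\delta)K$, while conversely a packing of the $\hat P_i$ is a packing of the $P_i$. Hence it suffices to solve the rounded instance.

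\emph{Reducing placements and applying the separator.} By snapping each polygon's reference point to a polynomially fine grid and its rotation angle to a polynomially fine set, and once more charging the error to the augmentation, it suffices to consider $n^{O_\delta(1)}$ candidate placements per polygon without losing any polygon of $\OPT$. For the polygons of $\OPT$ placed canonically in this way I would invoke the technique of~\cite{adamaszek2014qptas}: there is a closed polygonal separator $\Gamma$ with $(\log n)^{O_\delta(1)}$ edges, each on one of the polynomially many lines induced by the candidate placements, that is balanced (at most a $2/3$ fraction of the polygons of $\OPT$ lie on each side) and that crosses only polygons of $\OPT$ of small total area. The algorithm guesses $\Gamma$ in $n^{(\log n)^{O_\delta(1)}}$ time, guesses for each of the $(\log n)^{O_\delta(1)}$ shapes how many polygons of that shape $\OPT$ puts on each side of $\Gamma$, sets aside the polygons crossed by $\Gamma$, and recurses on the two sub-regions; memoizing over the $n^{(\log n)^{O_\delta(1)}}$ possible region/profile pairs and using that the recursion has depth $O(\log n)$ yields running time $n^{O_\delta(\log n)^{O(1)}}$.

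\emph{Repacking the crossed polygons, and the main obstacle.} Letting the per-level area threshold in the separator decrease geometrically with the recursion depth makes the total area of all polygons ever set aside at most $\epsilon N^2$. These are placed into the $L$-shaped region of width $\Theta(\delta N)$ along two sides of $K$ provided by the resource augmentation: grouping them by the scale of their diameter and invoking area/Steinberg-type arguments~\cite{steinberg1997strip}, together with the fact that a convex polygon of area $A$ and diameter $d$ has extent $O(A/d)$ perpendicular to its diameter --- so even a long thin polygon fits, lengthwise, into a strip of width $\delta N$ --- one obtains a non-overlapping placement. Combined with the earlier steps this packs the \emph{entire} set $\OPT$ into $(1+\delta)K$, i.e.\ with weight $w(\OPT)$. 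I expect the two genuinely delicate points to be the shape reduction for very thin polygons --- where a careless rounding blows up a sliver by an unbounded relative factor, which is exactly why one works in the diameter-horizontal frame and preserves the width --- and the faithful adaptation of the separator theorem of~\cite{adamaszek2014qptas} from axis-parallel rectangles to arbitrarily rotated convex polygons, including the actual repacking of the crossed polygons into the narrow augmentation region.
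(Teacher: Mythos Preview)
Your plan is exactly the paper's: reduce to $(\log n)^{O_\delta(1)}$ shapes and $n^{O_\delta(1)}$ placements, guess the shape profile of $\OPT$, recurse via the balanced separator of~\cite{adamaszek2014qptas}, and pack the cut polygons into the augmentation strip. The recursion, the depth bound, and the final repacking step all match.

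The genuine gap is in your shape reduction. Intersecting half-spaces in $O_\epsilon(1)$ fixed normal directions does \emph{not} give $\hat P_i\subseteq(1+\epsilon)P_i$ when $P_i$ is thin, even with the perpendicular-to-diameter direction included. For the triangle with vertices $(-1,0),(1,0),(0,h)$ and $h\ll\epsilon$, the half-spaces at angles $\pi/2$ and $\pi/2-\epsilon$ from the horizontal meet near $(1-h/\epsilon,\,h)$, which lies outside every $(1+\epsilon')P_i$ with $\epsilon'<1$; so the ``bounded dilation'' claim fails, and with it your global-dilation step turning a packing of the $P_i$ into one of the $\hat P_i$. (The related claim of $O_\epsilon(1)$ shapes per diameter scale is also off---the width still ranges over $\Theta_\delta(\log n)$ powers of $1+\epsilon$---though that alone would not break the final bound.) The paper's construction is more careful: it first packs all polygons with $h_i\le\delta N/n$ into the extra space outright (a step you also need to bound the number of scales and the recursion depth), and then rounds each remaining polygon by shooting $O_\delta(1)$ rays from a well-chosen interior reference point and discretizing distances \emph{along each ray} multiplicatively. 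This respects thin shapes in a way that rounding support widths in fixed absolute directions does not. You are right to flag this step as ``genuinely delicate''; the fix is more than ``preserve the width''.
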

In our approximation algorithms, we focus on a clean exposition of our methodology for obtaining $O(1)$-approximations, rather than
on optimizing the actual approximation ratio.

\subsection{Other related work}

Prior to the results mentioned above, polynomial time $(2+\epsilon)$-approximation
algorithms for 2DKP for axis-parallel rectangles were presented by
Jansen and Zhang~\cite{Jansen2004,jansen2004maximizing}. For the
same setting, a PTAS is known under resource augmentation in one dimension~\cite{jansen2007new}
and a polynomial time algorithm computing a solution with optimum
weight under resource augmentation in both dimensions~\cite{HeydrichWiese2019}.
Also, there is a PTAS if the weight of each rectangle equals its area~\cite{bansal2009structural}.
For squares, Jansen and Solis-Oba presented a PTAS~\cite{jansen2008ipco}.

\section{Constant factor approximation algorithms}

\label{sec:Constant-factor-approximation}

In this section we present our quasi-polynomial time $\fpoly$-approximation
algorithm for general convex polygons and our polynomial time $O(1)$-approximation
algorithm for triangles., assuming polynomially bounded input data. 
Our strategy
is to partition the input polygons $\P$ into three classes, easy,
medium, and hard polygons, and then to devise algorithms for each
class separately.

Let $K:=[0,N]\times[0,N]$ denote the given knapsack. We assume that
each input polygon is described by the coordinates of its vertices
which we assume to be integral. First, we rotate each polygon in $\P$
such that its longest diagonal (i.e., the line segment that connects
the two vertices of largest distance) is horizontal. For each polygon
$P_{i}\in\P$ denote by $(x_{i,1},y{}_{i,1}),...,(x{}_{i,k_{i}},y{}_{i,k_{i}})$
the new coordinates of its vertices. Observe that due to the rotation,
the resulting coordinates might not be integral, and possibly not
even rational. We will take this into account when we define our algorithms.
For each $P_{i}\in\P$ we define its \emph{bounding box} $B_{i}$
to be the smallest axis-parallel rectangle that contains $P_{i}$.
Formally, we define $B_{i}:=[\min_{\ell}x{}_{i,\ell},\max_{\ell}x{}_{i,\ell}]\times[\min_{\ell}y{}_{i,\ell},\max_{\ell}y{}_{i,\ell}]$.
For each polygon $P_{i}$ let $\ell_{i}:=\max_{\ell}x{}_{i,\ell}-\min_{\ell}x{}_{i,\ell}$
and $h_{i}:=\max_{\ell}y{}_{i,\ell}-\min_{\ell}y{}_{i,\ell}$. 
If necessary we will work with suitable estimates of these values later,
considering that they might be irrational and hence we cannot compute them exactly.

We first distinguish the input polygons into \emph{easy}, \emph{medium},
and \emph{hard }polygons. We say that a polygon $P_{i}$ is \emph{easy
}if $B_{i}$ fits into $K$ without rotation, i.e., such that $\ell_{i}\le N$
and $h_{i}\le N$. Denote by $\Pe\subseteq\P$ the set of easy polygons.
Note that the bounding box of a polygon in $\P\setminus\Pe$ might
still fit into $K$ if we rotate it suitably. Intuitively, we define
the medium polygons to be the polygons $P_{i}$ whose bounding box
$B_{i}$ fits into $K$ with quite some slack if we rotate $B_{i}$
properly and the hard polygons are the remaining polygons (in particular
those polygons whose bounding box does not fit at all into $K$).

Formally, for each polygon $P_{i}\in\P$ we define $h'_{i}:=\sqrt{2}N-\ell_{i}$.
The intuition for $h'_{i}$ is that a rectangle of width $\ell_{i}$
and height $h'_{i}$ is the highest rectangle of width $\ell_{i}$
that still fits into $K$.
\begin{lem}\label{lem:rotation}
Let $P_{i}\in\P$. A rectangle of width $\ell_{i}$ and height $h'_{i}$
fits into $K$ (if we rotate it by 45\degree) but a rectangle of
width $\ell_{i}$ and of height larger than $h'_{i}$ does not fit
into $K$.
\end{lem}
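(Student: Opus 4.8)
The plan is to reduce both claims to an elementary fact about inscribing axis-parallel-after-rotation rectangles in a square. First I would set up coordinates: place $K=[0,N]\times[0,N]$ and consider a rectangle $R$ of width $\ell_i$ and height $h$ that we rotate by $45\degree$, so that its long axis is parallel to the diagonal $y=x$ of $K$. By symmetry we may assume $R$ is centered at the center $(N/2,N/2)$ of $K$, since among all translates of a fixed rotated rectangle that fit in $K$, if any fits then the centrally symmetric placement fits as well (the rotated rectangle is centrally symmetric and $K$ is centrally symmetric about its center, so we can always recenter). With this normalization, the four corners of $R$ are at $(N/2,N/2) \pm \tfrac{1}{\sqrt 2}(\ell_i/2)(1,1) \pm \tfrac{1}{\sqrt 2}(h/2)(1,-1)$, and $R\subseteq K$ is equivalent to requiring the extreme corner $(N/2,N/2) + \tfrac{1}{\sqrt2}(\ell_i/2)(1,1) + \tfrac{1}{\sqrt2}(h/2)(1,-1)$ (and the three symmetric ones) to lie in $[0,N]^2$. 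Working out the $x$-coordinate of the rightmost corner gives $N/2 + \tfrac{1}{\sqrt2}\cdot\tfrac{\ell_i+h}{2} \le N$, i.e. $\ell_i + h \le \sqrt 2 N$; the $y$-coordinate constraints give the same bound by symmetry. Hence $R$ fits iff $h \le \sqrt 2 N - \ell_i = h'_i$, which proves both the positive statement (take $h=h'_i$) and the negative statement (any $h>h'_i$ violates the corner constraint).

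The two steps I would actually write out carefully are: (i) the centering reduction, i.e. that it suffices to check the central placement — this needs the remark that both $R$ (as a rectangle) and $K$ are symmetric under the point reflection through $(N/2,N/2)$, so if $v+R\subseteq K$ then $(N-v)-R = $ a translate that is the reflection, and by convexity of $K$ the midpoint placement, which is exactly the centered one, also lies in $K$; and (ii) the corner computation above. For step (ii) one should note the rotation by $45\degree$ sends the unit vectors $(1,0)$ and $(0,1)$ to $\tfrac1{\sqrt2}(1,1)$ and $\tfrac1{\sqrt2}(-1,1)$, so a rectangle $[-\ell_i/2,\ell_i/2]\times[-h/2,h/2]$ maps to the set of points $\tfrac1{\sqrt2}(s-t,\,s+t)$ with $|s|\le \ell_i/2$, $|t|\le h/2$; translating by $(N/2,N/2)$ and imposing containment in $[0,N]^2$ bounds $|s-t|$ and $|s+t|$ by $\tfrac{N}{\sqrt2}$, and the worst case is $|s|=\ell_i/2$, $|t|=h/2$ with matching signs, giving $\tfrac{\ell_i+h}{2}\le \tfrac{N}{\sqrt2}$, i.e. $\ell_i+h\le\sqrt2 N$ as claimed.

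The main obstacle, and the only place where a little care is needed, is justifying that $45\degree$ is the optimal rotation angle for the negative statement — a priori a width-$\ell_i$, height-$h$ rectangle with $h>h'_i$ might fit at some other angle. I would handle this by the same corner-extremization argument parametrized by the rotation angle $\theta$: for a rectangle rotated by $\theta$ and centered at $(N/2,N/2)$, containment in $[0,N]^2$ requires the half-width of the rotated rectangle's axis-parallel bounding box to be at most $N/2$ in each coordinate, which gives $\ell_i|\cos\theta| + h|\sin\theta| \le N$ and $\ell_i|\sin\theta| + h|\cos\theta| \le N$; adding these, $(\ell_i+h)(|\cos\theta|+|\sin\theta|)\le 2N$, and since $|\cos\theta|+|\sin\theta|\ge 1$ we cannot conclude directly — instead I use that we only need existence of a feasible $\theta$, so the question is whether $\min_\theta \max\{\ell_i|\cos\theta|+h|\sin\theta|,\ \ell_i|\sin\theta|+h|\cos\theta|\}\le N$. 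By symmetry in swapping the two expressions (replace $\theta$ by $\pi/2-\theta$), the min over $\theta$ of the max is attained when the two are equal, i.e. at $\theta=\pi/4$, where both equal $\tfrac{1}{\sqrt2}(\ell_i+h)$; so the best possible is $\tfrac{1}{\sqrt2}(\ell_i+h)\le N$, i.e. $h\le h'_i$, confirming that if $h>h'_i$ no rotation helps. (One should also note non-centered placements cannot do better, by the centering reduction applied at each fixed $\theta$.) This finishes the negative direction; the positive direction is just the $\theta=\pi/4$, $h=h'_i$ case, which meets the bound with equality and hence fits.
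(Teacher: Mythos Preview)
Your positive direction and the centering reduction are correct, and the bounding-box computation at a fixed angle $\theta$ is fine. The gap is precisely in the step you flag as the main obstacle: the claim that ``by symmetry in swapping the two expressions (replace $\theta$ by $\pi/2-\theta$), the min over $\theta$ of the max is attained when the two are equal, i.e.\ at $\theta=\pi/4$'' is false. The symmetry only makes $f(\theta):=\max\{A(\theta),B(\theta)\}$ symmetric about $\pi/4$, but a function symmetric about a point can just as well be \emph{maximized} there. Concretely, take $\ell_i=h$: then $A(\theta)=B(\theta)=\ell_i(\cos\theta+\sin\theta)$ for every $\theta$, which is minimized at $\theta\in\{0,\pi/2\}$ (value $\ell_i$) and maximized at $\pi/4$ (value $\sqrt2\,\ell_i$). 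Worse, with $\ell_i=0.9N$ and $h=0.6N$ one has $h>h'_i=(\sqrt2-0.9)N\approx0.51N$, yet the rectangle fits axis-parallel in $K$; so the negative part of the lemma, read literally for arbitrary $P_i\in\P$, is false, and your argument cannot be salvaged without an extra hypothesis.

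What is true (and all the paper needs) is the statement for $\ell_i>N$, i.e., for polygons that are not easy. There your bounding-box approach does work once the minimization is done correctly: restrict to $\theta\in[0,\pi/4]$ by your symmetry, note that (for $\ell_i\ge h$) on this interval $f(\theta)=A(\theta)=\ell_i\cos\theta+h\sin\theta$, and observe that $A$ has its unique critical point $\arctan(h/\ell_i)\in[0,\pi/4]$ and that it is a \emph{maximum}; hence the minimum of $A$ on $[0,\pi/4]$ is at an endpoint, namely $\min\bigl(\ell_i,\,(\ell_i+h)/\sqrt2\bigr)$. Thus the rectangle fits iff $\ell_i\le N$ or $\ell_i+h\le\sqrt2 N$, and under the hypothesis $\ell_i>N$ this becomes exactly $h\le h'_i$. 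The paper's own proof takes a different route --- a maximality argument placing a vertex of the extremal rectangle on a side of $K$ and bounding the two side lengths by the lengths of the $45^\circ$ segments from that vertex to the boundary of $K$ --- but that argument, as written, also reads most naturally under the implicit restriction to $45^\circ$ placements (or equivalently $\ell_i>N$).
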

\begin{proof}
	We begin by proving that the rectangle $R_i=[0,\ell_i]\times [0, h_i']$ fits into $K$ when rotating by 45$^o$. To this end, just consider the placement of $R_i$ by its new vertices $v_1=(\frac{h'_i}{\sqrt 2},0)$, $v_2=(N,\frac{\ell_i}{\sqrt 2})$, $v_3=(N-\frac{h_i'}{\sqrt 2}, N)$, $v_4=(0,N-\frac{\ell_i}{\sqrt 2})$.

	We now prove the second part of the Lemma. Choose $\delta \geq 0$ maximal such that $R_i^ \delta := [0,\ell_i]\times [0,h' _i+\delta]$ fits into $K$. We aim to prove that $\delta=0$. By maximality of $\delta $ we can assume that in the placement into $K$ some vertex of $R_i^ \delta$ lies on a side of $K$. Without loss of generality we assume that $v_1$ lies on $[0,N]\times \{0\}$. Therefore $v_1=(t,0)$ for some $t \in [0,N]$. Draw the two lines that start in $v_1$ and have $45^ o$ difference with the side $[0,N]\times \{0\}$. Note that these lines intersect $K$ at $p_1:=(0,t)$ and $p_2:=(N,N-t)$, additionally $\|v_1-v_2\|=\sqrt 2t $ and $\|v_1-v_3\|=\sqrt 2 (N-t)$. Since $R_i^ \delta$ fits, these are also upper bounds on $\ell_i$ and $h_i'+\delta$ respectively. We conclude that $\ell_i \leq \sqrt 2 t$ and therefore:
	\[
	\sqrt 2 N - \ell_i +\delta = h_i' +\delta  \leq \sqrt 2 (N-t) \leq \sqrt 2 N - \ell_i,
	\]
	concluding that $\delta = 0$ and the proof of the Lemma.
\end{proof}

Hence, if $h_{i}$ is much smaller than $h'_{i}$ then $B_{i}$ fits
into $K$ with quite some slack. Therefore, we define that a polygon
$P_{i}\in\P\setminus\Pe$ is \emph{medium }if $h_{i}\le h'_{i}/8$
and \emph{hard }otherwise. Denote
by $\Pm\subseteq\P$ and $\Ph\subseteq\P$ the medium and hard
polygons, respectively. We will present $O(1)$-approximation algorithms
for each of the sets $\Pe,\Pm,\Ph$ separately. The best of the computed
sets will then yield a $O(1)$-approximation overall.

For the easy polygons, we construct a polynomial time $O(1)$-approximation
algorithm in which we select polygons such that we can pack their
bounding boxes as non-overlapping rectangles using Steinberg's
algorithm~\cite{coffman1980performance}, see Section~\ref{subsec:easy}. The approximation ratio
follows from area arguments.
\begin{lem}
\label{lem:easy-polygons}There is a polynomial time algorithm that
computes a solution $\P'_{E}\subseteq\Pe$ with $w(\OPT\cap\Pe)\le O(w(\P'_{E}))$.
\end{lem}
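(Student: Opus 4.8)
The plan is to reduce the problem to the classical two-dimensional rectangle packing problem that Steinberg's algorithm solves. Recall that Steinberg's algorithm, given a set of axis-parallel rectangles with total area at most half the area of a target rectangular bin (and with each rectangle fitting in the bin in at least one dimension), packs all of them non-overlappingly into the bin. We will apply it with the bin being the knapsack $K$ (area $N^2$), so our goal reduces to selecting a high-weight subset of easy polygons whose bounding boxes have total area at most $N^2/2$; by definition of easy polygons each bounding box $B_i$ has $\ell_i \le N$ and $h_i \le N$, so it fits in $K$ in both dimensions and the side-length condition of Steinberg's theorem is automatically met. Packing the bounding boxes non-overlappingly packs the polygons themselves non-overlappingly (each $P_i \subseteq B_i$), which is exactly what 2DKP requires.

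First I would bound the area lost to the bounding-box relaxation. The key geometric fact is that for a convex polygon $P_i$ whose diameter is the horizontal segment of length $\ell_i$, the bounding box $B_i$ satisfies $\area(B_i) = \ell_i h_i \le c \cdot \area(P_i)$ for an absolute constant $c$ (one can take $c = 2$: the topmost and bottommost vertices of $P_i$ together with the two diameter endpoints span a quadrilateral inscribed in $P_i$ of area $\tfrac12 \ell_i h_i$, since it splits into two triangles each of base $\ell_i$ and heights summing to $h_i$; hence $\area(P_i) \ge \tfrac12 \ell_i h_i = \tfrac12\area(B_i)$). Consequently $\sum_{P_i \in \OPT \cap \Pe} \area(B_i) \le 2 \sum_{P_i \in \OPT\cap\Pe}\area(P_i) \le 2 N^2$, since the polygons in $\OPT$ pack disjointly into $K$.

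Next I would run a standard greedy/LP-based selection. Treat the easy polygons as items with weight $w_i$ and ``size'' $\area(B_i) \le N^2$, and solve the one-dimensional knapsack problem with capacity $N^2/2$ to within a constant factor in polynomial time; call the output $\P'_E$. Since $\OPT\cap\Pe$ has total bounding-box area at most $2N^2$, it can be partitioned into a constant number (namely $4$, plus possibly one leftover heavy item) of sub-collections each of bounding-box area at most $N^2/2$; the best of these has weight at least $\tfrac15 w(\OPT\cap\Pe)$ (handling separately any single item of bounding-box area exceeding $N^2/2$, which is itself a feasible singleton solution), so our knapsack routine returns $\P'_E$ with $w(\P'_E) \ge \Omega(w(\OPT\cap\Pe))$. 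Finally apply Steinberg's algorithm to $\{B_i : P_i \in \P'_E\}$ to obtain a valid packing into $K$, and output $\P'_E$; this runs in polynomial time.

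The main obstacle is purely a matter of care rather than depth: the coordinates after rotation may be irrational, so $\ell_i$ and $h_i$ — and hence $\area(B_i)$ — cannot be computed exactly. I would address this by computing rational upper bounds $\tilde\ell_i \ge \ell_i$ and $\tilde h_i \ge h_i$ that are accurate to within a $(1+\tfrac{1}{\mathrm{poly}(n)})$ factor (the diameter and width of a convex polygon are easy to approximate to arbitrary precision from the given vertices), using $\tilde\ell_i \tilde h_i$ as the item size with capacity $N^2/3$ instead of $N^2/2$ to absorb the slack, and noting Steinberg's algorithm still applies since the true total area is then below $N^2/2$. All other steps — the area bound, the partition into a constant number of feasible groups, and the constant-factor knapsack solver — are unaffected by the perturbation, so the overall guarantee $w(\OPT\cap\Pe) = O(w(\P'_E))$ survives.
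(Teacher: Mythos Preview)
Your proposal is correct and follows essentially the same route as the paper: the area bound $\area(P_i)\ge\tfrac12\area(B_i)$, a reduction to one-dimensional knapsack on the areas, and Steinberg's algorithm to pack the bounding boxes, with the irrationality handled by rounding up $\ell_i,h_i$. The only cosmetic difference is that the paper runs the knapsack FPTAS with capacity $\area(K)$ on the polygon areas and then partitions the \emph{output} into $O(1)$ groups of area $\le\area(K)/4$, whereas you run knapsack with capacity $N^2/2$ on the bounding-box areas and use the partition only as a thought experiment on $\OPT\cap\Pe$; your claimed constant ``$4+1$'' for that partition is a bit optimistic (heavy items with $\area(B_i)>N^2/2$ can number up to three, and next-fit on the rest may need a few more bins), but any constant suffices for the $O(1)$ guarantee.
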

For the medium polygons, we obtain a $O(1)$-approximation algorithm
using a different packing strategy, see Section~\ref{subsec:Medium-polygons}.

\begin{lem}
\label{lem:medium-polygons}There is an algorithm with a running time
of $n^{O(1)}$ that computes a solution $\P'_{M}\subseteq\Pm$ with
$w(\OPT\cap\Pm)\le O(w(\P'_{M}))$.
\end{lem}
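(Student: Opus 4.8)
The plan is to first establish the structural claim sketched in the introduction and illustrated in Figure~\ref{fig:hexagons-rectangles}: if a medium polygon $P_i$ is placed in $\OPT$, then its horizontal diameter segment is forced to lie in a thin hexagonal region hugging the diagonal of $K$, and consequently the whole of $P_i$ lies in a slightly larger hexagonal region. Concretely, since $P_i$ is medium we have $\ell_i > N$ (it is not easy, so $B_i$ does not fit axis-parallel, hence in particular $\ell_i>N$ or $h_i>N$; using $h_i \le h_i'/8$ and $h_i'=\sqrt2 N-\ell_i$ one gets $\ell_i\le\sqrt2 N$, so the relevant obstruction is $\ell_i>N$) and $h_i$ is tiny compared to $h_i'=\sqrt2 N-\ell_i$. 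I would partition the medium polygons into groups $\Pm = \bigcup_j \P_j$ according to a geometric bucketing of $\ell_i$, say $\ell_i \in [(1+\epsilon_0)^{j-1}, (1+\epsilon_0)^{j})$ intersected with $(N,\sqrt2 N]$ for a suitable constant $\epsilon_0$; there are only $O(1)$ such groups since $\ell_i$ ranges over the bounded interval $(N,\sqrt 2N]$. For each group $\P_j$ I would define two rectangular containers of common width roughly $\ell_{\min}(j)$ (the smallest bounding-box width in the group, rotated by $45\degree$), stacked near the diagonal as in Figure~\ref{fig:hexagons-rectangles} (right), whose combined height is chosen so that they still fit inside $K$ after the $45\degree$ rotation (here Lemma~\ref{lem:rotation} is exactly what bounds how tall such a rotated rectangle may be).

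Next I would prove the two quantitative facts claimed in the introduction. First, an \emph{area bound}: the medium polygons of group $\P_j$ that appear in $\OPT$ occupy total area at most $O(1)$ times the area of the two containers assigned to $\P_j$. This follows because all these polygons' diameter segments lie in the thin hexagon, so the polygons themselves lie in a region of area $O(h_{\max}(j)\cdot N)$ where $h_{\max}(j)$ is the largest bounding-box height in the group, which is comparable (up to the constant $\epsilon_0$ from the bucketing, and the constant $8$ from the definition of medium) to the container area. Second, a \emph{packability} statement: a constant fraction of the weight of $\OPT\cap\P_j$ can be packed into the two containers by stacking the polygons' (rotated) bounding boxes on top of each other — using the area bound together with the fact that each bounding box, being of width $\le\ell_{\max}(j)\le(1+\epsilon_0)\ell_{\min}(j)$, fits widthwise into a container of width $\ell_{\min}(j)$ up to shaving off a constant fraction of the group (or by enlarging the container width by a constant factor, which still fits by Lemma~\ref{lem:rotation} as long as we correspondingly reduce the allowed height, absorbing this into the constants). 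A clean greedy/stacking argument then shows that once total height is available, a constant fraction of the weight fits.

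Given these structural facts, the algorithm is straightforward: for each of the $O(1)$ groups, solve (essentially optimally, up to $1+\epsilon$) the problem of choosing a maximum-weight subset of $\P_j$ whose rotated bounding boxes stack into the two containers; this is a generalization of one-dimensional knapsack with a fixed number of knapsack-like constraints and can be solved in time $n^{O(1)}$ by standard dynamic programming over profit (or by a PTAS for knapsack, which suffices since we only need $O(1)$-approximation). Output the union over all groups — or, to be safe about polygons appearing in several groups' analyses, the best single group, losing another $O(1)$ factor. Combining the $O(1)$ groups, the $O(1)$ loss in the stacking/area arguments, and the $(1+\epsilon)$ loss in the per-group optimization yields $w(\OPT\cap\Pm)\le O(w(\P'_M))$. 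The running time is dominated by the $O(1)$ knapsack-type subproblems, each $n^{O(1)}$, so the total is $n^{O(1)}$.

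The main obstacle I anticipate is making the hexagon confinement argument fully rigorous in the presence of irrational coordinates and arbitrary rotations: one must show that \emph{any} placement of a wide-but-short polygon inside $K$ forces its diameter segment (a segment of length $\ell_i>N$) to be nearly aligned with the diagonal and confined to the thin hexagonal strip, and then control where the rest of the polygon — which sits within distance $h_i$ of that segment — can be. This is a purely geometric estimate (a segment longer than the side of the square, sitting inside the square, is pinned near a diagonal), but turning "nearly aligned" and "thin strip" into the precise inequalities that feed the area bound, while keeping all constants explicit and compatible with the $h_i\le h_i'/8$ threshold, is where the care is needed. Everything downstream — the bucketing, the container sizes via Lemma~\ref{lem:rotation}, the stacking, and the knapsack DP — is routine once the confinement lemma is in hand.
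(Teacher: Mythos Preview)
Your bucketing is by the wrong parameter, and this breaks both the area bound and the container construction. You bucket $\ell_i$ multiplicatively over $(N,\sqrt 2 N]$, which indeed gives $O(1)$ groups, but the quantity that governs everything here is $h'_i=\sqrt 2 N-\ell_i$: it determines (via Lemma~\ref{lem:rotation}) how tall a width-$\ell_i$ container can be, it upper-bounds $8h_i$, and it controls the thickness of the confining hexagon. Within one of your groups $[(1+\epsilon_0)^{j-1},(1+\epsilon_0)^{j})$, the ratio $(\sqrt 2 N-\ell_{\min})/(\sqrt 2 N-\ell_{\max})$ is unbounded: in the topmost bucket $\ell_{\max}$ may be arbitrarily close to $\sqrt 2 N$ while $\ell_{\min}$ sits $\Theta(\epsilon_0 N)$ below it. Then a container wide enough to hold every polygon of the group (width $\ge\ell_{\max}$) has admissible height $\sqrt 2 N-\ell_{\max}$, which can be arbitrarily smaller than $h_i\le(\sqrt 2 N-\ell_{\min})/8$ for polygons at the other end of the bucket; your suggested fix of ``enlarging the container width by a constant factor \dots\ absorbing this into the constants'' fails because the allowed height can drop to zero. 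The same mismatch kills the area comparison: the hexagon confining the group has area $\Theta\bigl(N(\sqrt 2 N-\ell_{\min})\bigr)$ while the container has area $\Theta\bigl(N(\sqrt 2 N-\ell_{\max})\bigr)$, and these are not within a constant factor.

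The paper's remedy is precisely to bucket $h'_i=\sqrt 2 N-\ell_i$ dyadically (the classes $\P_j$ with $\ell_i\in[\sqrt 2 N-2^j,\sqrt 2 N-2^{j-1})$), so that $h'_i$ varies by at most a factor~$2$ within a group and the container for group~$j$ can have width $\sqrt 2 N-2^{j-1}\ge\ell_i$ and height $2^{j-3}\ge h_i$. The price is that there are now $\Theta(\log N)$ groups rather than $O(1)$, so one can no longer afford to output the best single group. Instead the paper proves (Lemma~\ref{lem:pack-containers}) that the containers for \emph{all} groups fit simultaneously into $K$, nested along the diagonal with geometrically decreasing heights, and returns the union. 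The remaining ingredients in your outline --- the hexagon confinement, the per-group two-knapsack computation, and the stacking --- match the paper once the bucketing is corrected.
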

The most difficult polygons are the hard polygons. First, we show
that in quasi-polynomial time we can obtain a $O(1)$-approximation
for them, see Section~\ref{subsec:Hard-polygons}.
\begin{lem}
\label{lem:hard-polygons}There is an algorithm with a running time
of $(nN)^{(\log nN)^{O(1)}}$ that computes a solution $\P'_{H}\subseteq\Ph$
with $w(\OPT\cap\Ph)\le O(w(\Ph'))$.
\end{lem}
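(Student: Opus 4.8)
The plan is to deduce Lemma~\ref{lem:hard-polygons} from two facts: that $\OPT$ contains only $O(\log(nN))$ hard polygons, and that — losing only a constant factor in the weight — the placement of each of them may be taken from a precomputable set of $\mathrm{poly}(nN)$ candidates; once both are available the algorithm is a brute-force enumeration.

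\emph{Step 1 (the structural bound $|\OPT\cap\Ph|=O(\log(nN))$).} Let $P_i\in\Ph$ be a hard polygon that fits into $K$ in some placement. Then $\ell_i$ equals the length of its (horizontal) longest diagonal, $\ell_i=\diam(P_i)\le\sqrt2\,N$; since $h_i\le\ell_i$ and $P_i\notin\Pe$ we have $\ell_i>N$; since $P_i$ is hard, $h'_i=\sqrt2\,N-\ell_i<8h_i$; and since the input vertices are integral, $\area(P_i)\ge\tfrac12$, so (using $\diam(P_i)\le\sqrt2\,N$) $h_i\ge\tfrac1{2\sqrt2\,N}$. Partition $\Ph$ into the groups $\mathcal P^{(k)}:=\{P_i\in\Ph:2^k\le h_i<2^{k+1}\}$; by the last two bounds only $O(\log(nN))$ groups contain a polygon that fits into $K$. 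I claim $|\OPT\cap\mathcal P^{(k)}|=O(1)$. Since the two endpoints of the longest diagonal of $P_i$ lie in $K$ and are at distance $\ell_i=\sqrt2\,N-h'_i$, an elementary estimate of their coordinates shows that they lie within $O(h'_i)$ of two opposite corners of $K$; hence the longest-diagonal segment lies within $O(h'_i)$ of one of the two diagonals of $K$, and, as every vertex of $P_i$ projects orthogonally onto this segment at distance at most $h_i$, the polygon $P_i$ lies in the $O(h_i)=O(2^k)$-neighbourhood $R$ of that diagonal of $K$, a region with $\area(R\cap K)=O(N2^k)$. On the other hand $P_i$ contains a triangle with the longest-diagonal segment as one side and a third vertex at distance at least $h_i/2$ from it, so $\area(P_i)\ge\tfrac14\ell_ih_i\ge\tfrac14 N2^k$ (using $\ell_i\ge N$, $h_i\ge2^k$). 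Therefore at most $O(1)$ pairwise disjoint polygons of $\mathcal P^{(k)}$ can be placed inside $R$; summing over the two diagonals and over the $O(\log(nN))$ relevant values of $k$ gives the claim.

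\emph{Step 2 (discretizing placements).} The second ingredient is a lemma saying that, at the expense of a constant factor in the weight, we may assume that each hard polygon in our solution is placed according to one of $\mathrm{poly}(nN)$ precomputable placements, where a placement is a rotation angle together with a translation. I would prove it by rounding, in $\OPT$'s packing of its hard polygons, every rotation angle to a grid of step $\gamma:=1/\mathrm{poly}(nN)$ and every translation to a grid of width $\eta:=1/\mathrm{poly}(nN)$; since a hard polygon has diameter $O(N)$, this displaces each point of each polygon by at most $O(\gamma N+\eta)$, which can be made an arbitrarily small polynomial fraction of $N$. The feasibility that is lost is then restored by a rearrangement step that uses the structure from Step 1: $\OPT$'s hard polygons lie in $O(\log(nN))$ thin strips around the two diagonals of $K$, with only $O(1)$ of them per strip, so one can slide them apart inside their strips to create $\Omega(1/\mathrm{poly}(nN))$ of slack, absorb the rounding error, and discard at most a constant fraction of the weight for the polygons that cannot be re-accommodated. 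The set of candidate placements — grid positions and grid angles that place $P_i$ inside $K$ — is enumerable in polynomial time.

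\emph{Step 3 (enumeration).} The algorithm guesses the set $\OPT\cap\Ph$, of which there are at most $n^{O(\log(nN))}$ by Step 1; for each guessed set it guesses, for each of its (at most $O(\log(nN))$) polygons, one of the $\mathrm{poly}(nN)$ precomputed placements, giving at most $(nN)^{O(\log(nN))}$ joint guesses; for each joint guess it checks in polynomial time that the placed convex polygons are pairwise disjoint and contained in $K$; and it outputs the feasible guess of maximum weight. By Step 2 this weight is $\Omega(w(\OPT\cap\Ph))$, and the running time is $(nN)^{O(\log(nN))}\le (nN)^{(\log nN)^{O(1)}}$. I expect Step 2 to be the main obstacle: turning the continuum of feasible placements — which may use irrational coordinates, so the angles cannot be enumerated directly — into a polynomial-size list while losing only a constant factor, i.e., proving that the rounded arrangement can be repaired using the diagonal-strip structure without sacrificing more than a constant fraction of the weight.
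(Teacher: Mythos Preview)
Your proposal is correct and follows essentially the same three-step approach as the paper: bound $|\OPT\cap\Ph|$ by $O(\log N)$ via a group-and-area argument (the paper's Lemma~\ref{lem:Pj-properties}), discretize placements to a polynomial-size list at the cost of a constant factor (the paper's Lemma~\ref{lem:placement-hard-polygons}), and enumerate. The only cosmetic difference is that you group hard polygons by $h_i$ whereas the paper groups by $h'_i=\sqrt2N-\ell_i$; since $h'_i<8h_i$ for hard polygons, both groupings give $O(1)$ polygons per group and $O(\log N)$ groups, and your sketch of Step~2 (create $1/\mathrm{poly}(nN)$ slack along the diagonal strip, then round angles and translations) is exactly the mechanism the paper uses.
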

Combining Lemmas~\ref{lem:easy-polygons}, \ref{lem:medium-polygons},
and \ref{lem:hard-polygons} yields Theorem~\ref{thm:approx-qpoly}.
If all polygons are triangles, we obtain a $O(1)$-approximation even
in polynomial time. The following lemma is proved in Section~\ref{subsec:Hard-triangles}
and together with Lemmas~\ref{lem:easy-polygons} and \ref{lem:medium-polygons}
implies Theorem~\ref{thm:approx-poly-triangles}.
\begin{lem}
\label{lem:hard-triangles}If all input polygons are triangles, then
there is an algorithm with a running time of $(nN)^{O(1)}$ that computes
a solution $\P'_{H}\subseteq\Ph$ with $w(\OPT\cap\Ph)\le O(w(\Ph'))$.
\end{lem}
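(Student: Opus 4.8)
The plan is to adapt the $O(1)$-approximation strategy for hard polygons to triangles, but to make it run in polynomial rather than quasi-polynomial time by exploiting the additional structure of triangles. Recall that a hard polygon $P_i\in\Ph$ satisfies $h_i > h'_i/8 = (\sqrt2 N-\ell_i)/8$, and by Lemma~\ref{lem:rotation} no rectangle of width $\ell_i$ can be much taller; combining these facts one sees that each hard triangle is ``large'' in the sense that its bounding box nearly fills $K$ in at least one direction. The first step is to make this precise: I would show that there are only $O(\log N)$ groups $G_1,\dots,G_{O(\log N)}$ of hard triangles, where group $G_j$ collects triangles whose longest edge has length in $[2^{j-1},2^j)$ (after the rotation that makes the diameter horizontal), and that $\OPT\cap\Ph$ contains only $O(1)$ triangles from each group — hence $|\OPT\cap\Ph| = O(\log N)$ in total. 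The key geometric input here is that a hard triangle has its longest edge of length $\Omega(N)$ (since otherwise, being hard, the slack condition $h_i>h'_i/8$ together with $\ell_i$ small would force $h_i=\Omega(N)$ as well, and one checks a triangle cannot have both a short longest edge and a tall bounding box), so the groups with index $j$ below $\log N - O(1)$ are empty.

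The second step is to split $\OPT\cap\Ph$ into two types according to the shape of the triangle, roughly: type~1 are the triangles that, when placed with their longest edge roughly horizontal, are ``flat'' enough that a constant fraction of them can be stacked in a \emph{top-left-packing} (the triangles sorted by longest-edge length, placed on top of one another in a triangular region, as in Figure~\ref{fig:packings}b)); type~2 are the remaining ones. For type~1 I would argue as sketched in the introduction: essentially every hard triangle in $\OPT$ must cross the horizontal mid-line $y=N/2$ of $K$ (otherwise it would fit in a half-knapsack, contradicting hardness), and it crosses it along a segment of some length $\lambda_i$; a top-left-packing of a set $S$ of triangles is feasible provided $\sum_{P_i\in S}\lambda_i$ is at most (a constant times) $N$, because the triangles in such a packing interact with the mid-line in essentially the same way as they do in $\OPT$. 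One then sets up a DP over the triangles sorted by longest-edge length that, cell by cell, decides how much ``mid-line budget'' has been consumed, and outputs the most profitable feasible top-left-packing; by a rearrangement argument (discarding an $O(1)$ fraction to repair overlaps and to fit the sorted-stacking order) this recovers an $O(1)$ fraction of $w(\OPT\cap\Ph\cap\text{type 1})$ in time $(nN)^{O(1)}$.

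For type~2 triangles I would use a geometric divide-and-conquer DP: recursively subdivide $K$ along $O(1)$ axis-parallel or diagonal cuts determined by the placements of a constant number of guessed triangles of $\OPT$, and recurse in each resulting subregion (Figure~\ref{fig:packings}c)). Two ingredients make this polynomial. First, losing an $O(1)$ factor, one may assume each hard triangle's placement comes from a precomputable polynomial-size set of candidate positions/rotations (the same discretization used for Lemma~\ref{lem:hard-polygons}), so each ``guess'' has only $(nN)^{O(1)}$ options; structural properties of type~2 triangles bound the recursion depth and the number of distinct subregions that can arise by a polynomial. Second — and this is the step I expect to be the main obstacle — when a cut splits $K$ into two parts on which we recurse independently, we must prevent a single high-weight triangle from being selected in both parts. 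I would handle this exactly as in the introduction: globally pick at most one triangle from each of the $O(\log N)$ groups (valid up to an $O(1)$ factor, since $\OPT\cap\Ph$ has $O(1)$ triangles per group and a standard averaging/coloring argument lets us keep one per group), and pass to each subproblem a bitmask over the $O(\log N)$ groups recording which groups may still contribute there; since each subproblem gets a subset of a master budget that is consistent across the two sides of a cut, the number of mask-guesses is $2^{O(\log N)} = N^{O(1)}$, keeping the whole DP polynomial. Combining the best of the type~1 DP output and the type~2 DP output, and noting the two types partition $\OPT\cap\Ph$, yields a single set $\P'_H\subseteq\Ph$ with $w(\OPT\cap\Ph)\le O(w(\P'_H))$ computed in time $(nN)^{O(1)}$, as required.
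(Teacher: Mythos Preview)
Your high-level architecture matches the paper's: split $\OPT\cap\Ph$ into two types, handle one with a top-left-packing DP (what the paper calls \emph{edge-facing} triangles) and the other with a geometric divide-and-conquer DP carrying a group-bitmask (\emph{corner-facing} triangles), after discretising placements to a polynomial set via Lemma~\ref{lem:placement-hard-polygons}. However, your grouping step has a concrete gap. You group by $\ell_i\in[2^{j-1},2^j)$, but every hard polygon already has $\ell_i>N$ (otherwise its bounding box fits and it would be easy), so $\ell_i\in(N,\sqrt2\,N]$ spans at most two such dyadic classes and you end up with only $O(1)$ groups, not $O(\log N)$. Then ``globally pick at most one triangle per group'' keeps only $O(1)$ triangles in total, which is not an $O(1)$-approximation when $|\OPT\cap\Ph|=\Theta(\log N)$. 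The paper instead groups by the \emph{slack} $\sqrt2\,N-\ell_i\in[2^{j-1},2^j)$; this quantity ranges (for hard polygons) over roughly $[1/N,(\sqrt2-1)N]$ and genuinely yields $\Theta(\log N)$ dyadic classes, each containing $O(1)$ hard triangles of $\OPT$ (Lemma~\ref{lem:Pj-properties}). That is the grouping that simultaneously makes the one-per-group reduction an $O(1)$ loss and keeps the bitmask space $2^{O(\log N)}=N^{O(1)}$.

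A second, smaller issue: your type~1/type~2 split is shape-based (``flat enough''), but the split that actually drives the two DPs is \emph{placement}-based. A triangle in $\OPT$ is edge-facing if the two rays from the apex $v_i^*$ along its two longest edges hit the \emph{same} knapsack side, and corner-facing if they hit adjacent sides (Lemma~\ref{lem:all-facing} disposes of the remainder). This distinction---not flatness---is what makes the corner-facing DP work: it yields the structural fact (Lemma~\ref{lem:no-intersect-up}) that the vertical ray upward from each $v_i^*$ misses every other corner-facing triangle, which is precisely what lets the DP carve the knapsack into independent subregions via the lines $R_i^{\mathrm{up}}$ and $L_i$ without triangles straddling the cuts.
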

Orthogonal to the characterization into easy, medium and hard polygons,
we subdivide the polygons in $\P$ further into classes according
to the respective values $\ell_{i}$. More precisely, we do this according
to their difference between $\ell_{i}$ and the diameter of $K$,
i.e., $\sqrt{2}N$. Formally, for each $j\in\Z$ we define $\P_{j}:=\{P_{i}\in\P|\ell_{i}\in[\sqrt{2}N-2^{j},\sqrt{2}N-2^{j-1})\}$
and additionally $\P_{-\infty}:=\{P_{i}\in\P|\ell_{i}=\sqrt{2}N\}$.
Note that for each polygon $P_i \in \P$ we can compute the group $\P_{j}$ even though $\ell_i$ 
might be irrational.

\subsection{\label{subsec:easy}Easy polygons}

We present a $O(1)$-approximation algorithm for the polygons in $\Pe$.
First, we show that the area of each polygon is at least half of the
area of its bounding box. We will use this later for defining lower
bounds using area arguments. 
\begin{lem}
	\label{lem:area-bound}For each $P_{i}\in\P$ it holds that $\area(P_{i})\ge\frac{1}{2}\area(B_{i})$. 
\end{lem}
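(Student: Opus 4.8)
The plan is to use the fact, built into the setup of Section~\ref{sec:Constant-factor-approximation}, that every $P_i\in\P$ has been rotated so that its diameter is horizontal. After translating, I would take the two diameter endpoints to be $u=(0,0)$ and $v=(\ell_i,0)$, so that the diameter has length $\ell_i$ and lies on the line $y=0$, and then compare $P_i$ with $B_i$ directly.

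The first step is to observe that the bounding box $B_i$ has width exactly $\ell_i$. Since $u,v\in P_i\subseteq B_i$, the width is at least $\ell_i$. Conversely, if some vertex $w=(x_w,y_w)$ of $P_i$ had $x_w<0$, then $\|w-v\|^2=(\ell_i-x_w)^2+y_w^2>\ell_i^2$, contradicting $\ell_i=\diam(P_i)$; symmetrically $x_w>\ell_i$ is impossible. Hence $\min_{\ell}x_{i,\ell}=0$ and $\max_{\ell}x_{i,\ell}=\ell_i$, so $B_i$ has width $\ell_i$. This is the one place where horizontality of the diameter is genuinely used, and I expect it to be the only slightly delicate point; everything after it is immediate.

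Next I would let $p^{+}$ be a vertex of $P_i$ of maximum $y$-coordinate, say $y=h^{+}\ge 0$, and $p^{-}$ a vertex of minimum $y$-coordinate, say $y=-h^{-}\le 0$, so that $B_i$ has height $h_i=h^{+}+h^{-}$ and $\area(B_i)=\ell_i(h^{+}+h^{-})$. By convexity the triangles $T^{+}:=\conv\{u,v,p^{+}\}$ and $T^{-}:=\conv\{u,v,p^{-}\}$ are both contained in $P_i$. Each has base $uv$ of length $\ell_i$ on the line $y=0$, with apex at height $h^{+}$ resp.\ $h^{-}$, so $\area(T^{+})=\frac12\ell_i h^{+}$ and $\area(T^{-})=\frac12\ell_i h^{-}$. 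Moreover $T^{+}\subseteq\{y\ge 0\}$ and $T^{-}\subseteq\{y\le 0\}$, so their interiors are disjoint, giving
\[
\area(P_i)\ \ge\ \area(T^{+})+\area(T^{-})\ =\ \tfrac12\ell_i(h^{+}+h^{-})\ =\ \tfrac12\area(B_i).
\]
Degenerate configurations (for instance $P_i$ lying entirely in one closed half-plane bounded by the diameter line, so $h^{+}=0$ or $h^{-}=0$, or $P_i$ degenerating to a segment so both vanish) require no separate treatment: the corresponding triangle then has area $0$ and the identities above still hold.
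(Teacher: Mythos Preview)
Your proof is correct and follows essentially the same approach as the paper: split along the diameter $D=\overline{uv}$, take the triangles formed by $D$ together with the farthest vertex on each side, and compare their combined area to $\area(B_i)$. Your version is in fact more careful, since you explicitly justify that the width of $B_i$ equals $\ell_i$ (the point the paper takes for granted when asserting $\frac{1}{2}\area(B_i)=\area(T_P\cup T_Q)$).
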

\begin{proof}
	If $P_{i}$ is a triangle the result is clear. Suppose that $P_{i}$
	has more than three sides and call $D$ its longest diagonal. Split
	$P_{i}$ into two polygons $P,Q$ by $D$. Call $T_{P}$ and $T_{Q}$
	the triangles formed by $D$ and the vertices further away from $D$
	in $P$ and $Q$ respectively. By convexity we know that $T_{P}\subseteq P$
	and $T_{Q}\subseteq Q$. We conclude by noting that $\frac{1}{2}\area(B_{i})=\area(T_{P}\cup T_{Q})\leq\area(P\cup Q)=\area(P_{i})$. 
\end{proof}
On the other hand, it is known that we can pack any set of axis-parallel
rectangles into $K$, as long as their total area is at most $\area(K)/2$
and each single rectangle fits into $K$.
\begin{thm}[\cite{steinberg1997strip}]
	\label{thm:pack-bounding-boxes}Let $r_{1},...,r_{k}$ be a set of
	axis-parallel rectangles such that $\sum_{i=1}^{k}\area(r_{i})\le\area(K)/2$
	and each individual rectangle $r_{i}$ fits into $K$. Then there is a polynomial
	time algorithm that packs $r_{1},...,r_{k}$ into $K$.
\end{thm}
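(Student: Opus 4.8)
The plan is to obtain the statement from Steinberg's packing theorem~\cite{steinberg1997strip}, which is a packing result of exactly this type for an arbitrary rectangular box $[0,a]\times[0,b]$ in place of the knapsack $K=[0,N]\times[0,N]$, together with a polynomial-time algorithm that realises the packing. Concretely, I would invoke Steinberg's theorem with $a=b=N$: the assumption that each $r_i$ fits into $K$ supplies the required bounds on the side lengths of the $r_i$, and the assumption $\sum_i\area(r_i)\le\area(K)/2$ is the area condition that drives Steinberg's inductive construction. Since that construction is algorithmic and terminates after a bounded recursion, the running time is polynomial in $k$.

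If one instead wants a self-contained argument, the natural route is to reproduce Steinberg's recursive ``cut and recurse'' strategy. Sort the rectangles by width and look at the widest one, $r_1$. If $r_1$ is wide but short, reserve a horizontal strip of height $h_1$ at the bottom of $K$, pack $r_1$ and the other sufficiently wide rectangles into it side by side, and recurse with the remaining rectangles into the rectangular sub-box $[0,N]\times[h_1,N]$; symmetrically, if the extreme rectangle is tall rather than wide, cut a vertical strip instead. Each step removes at least one rectangle and performs a single axis-parallel cut, so after $O(k)$ steps every rectangle is placed, and the total work is near-linear after the initial sort, hence polynomial. The crux is to verify that the area hypothesis is inherited by the sub-box on which one recurses; this is exactly where the constant $1/2$ is used.

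I expect the genuinely delicate step to be the case of a rectangle that is large in \emph{both} dimensions at once (width and height both exceeding $N/2$): then neither a horizontal nor a vertical strip leaves a clean rectangular remainder, and a greedy placement can get stuck. Steinberg handles this by seating such a rectangle in a corner of $K$ and splitting the resulting $L$-shaped free region into two rectangular sub-boxes, distributing the remaining items between them by a careful balancing of their areas so that the area condition holds for both sub-boxes simultaneously; making this balancing go through is the technical heart of the argument. The ``wide strip'' and ``tall strip'' cases are comparatively routine, so I would concentrate the effort on this corner case.
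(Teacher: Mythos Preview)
Your proposal is correct and matches the paper: this theorem is not proved in the paper at all but is simply quoted as Steinberg's result~\cite{steinberg1997strip}, so your plan to invoke Steinberg's packing theorem with $a=b=N$ is exactly what the authors do. Your additional sketch of Steinberg's recursive strategy is a bonus beyond what the paper provides.
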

We first compute (essentially) the most profitable set of polygons
from $\P_{E}$ whose total area is at most $\area(K)$ via a reduction
to one-dimensional knapsack. 
\begin{lem}
	\label{lem:large-profit}In time $(\frac{n}{\epsilon})^{O(1)}$
	we can compute a set of
	polygons $\P'\subseteq\P_{E}$ such that $w(\P')\ge(1-\epsilon)w(\OPT\cap\Pe)$
	and $\sum_{P_{i}\in\Pe}\area(P_{i})\le\area(K)$.
\end{lem}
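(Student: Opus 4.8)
The plan is to reduce the problem to the classical one-dimensional knapsack problem, where the "size" of each polygon $P_i\in\Pe$ is taken to be $\area(P_i)$ and its profit is $w_i$, and the knapsack capacity is $\area(K)=N^2$. First I would observe that since $\OPT$ is a feasible packing, the polygons in $\OPT\cap\Pe$ are pairwise non-overlapping and contained in $K$, so their total area is at most $\area(K)$; hence $\OPT\cap\Pe$ is a feasible solution to this one-dimensional knapsack instance, and the optimal one-dimensional knapsack solution has profit at least $w(\OPT\cap\Pe)$. Running a standard FPTAS for one-dimensional knapsack then yields in time $(n/\epsilon)^{O(1)}$ a set $\P'\subseteq\Pe$ with $w(\P')\ge(1-\epsilon)w(\OPT\cap\Pe)$ and $\sum_{P_i\in\P'}\area(P_i)\le\area(K)$. (I note that the statement of the lemma writes "$\sum_{P_i\in\Pe}\area(P_i)\le\area(K)$", but this is clearly a typo for $\sum_{P_i\in\P'}\area(P_i)\le\area(K)$, which is exactly the capacity constraint enforced by the knapsack solver.)

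The one genuine subtlety — and the step I expect to be the main obstacle — is that the areas $\area(P_i)$ may be irrational, since we rotated each polygon so that its diameter is horizontal, and the one-dimensional knapsack FPTAS needs to operate on numbers it can actually manipulate. To handle this I would not use the exact areas but suitable rational approximations. Concretely, for each $P_i$ compute a rational $\tilde a_i$ with $\area(P_i)\le \tilde a_i \le \area(P_i)+\frac{\epsilon'\area(K)}{n}$ for a small constant $\epsilon'$; such an approximation is computable in polynomial time from the (algebraic) vertex coordinates, e.g.\ via the shoelace formula evaluated with enough bits of precision. Using $\tilde a_i$ as the item sizes and capacity $\area(K)$, any solution feasible for the $\tilde a_i$-instance is feasible for the true-area instance (since $\tilde a_i\ge\area(P_i)$), so the output set $\P'$ automatically satisfies $\sum_{P_i\in\P'}\area(P_i)\le\area(K)$. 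Conversely, $\OPT\cap\Pe$ may now slightly overflow the $\tilde a_i$-capacity by at most $\epsilon'\area(K)$, but since all polygons in $\Pe$ are easy their bounding boxes, and hence the polygons themselves, have area at most $\area(K)$, so dropping at most a constant number of the lightest items from $\OPT\cap\Pe$ restores feasibility while losing at most an $\epsilon'$-fraction of the weight (or one can absorb this by a standard shifting/rounding argument); choosing $\epsilon'$ appropriately in terms of $\epsilon$ gives the claimed bound $w(\P')\ge(1-\epsilon)w(\OPT\cap\Pe)$.

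Putting this together: the algorithm is (i) compute rational over-approximations $\tilde a_i$ of the areas, (ii) run the one-dimensional knapsack FPTAS on the instance with sizes $\tilde a_i$, profits $w_i$, capacity $\area(K)$, and accuracy parameter a suitable function of $\epsilon$, and (iii) output the resulting set $\P'$. The running time is dominated by the FPTAS, which is $(n/\epsilon)^{O(1)}$, and the two displayed properties follow from the feasibility and near-optimality arguments above. I would remark that this lemma is only a stepping stone: the set $\P'$ has total area at most $\area(K)$, which is twice the area bound $\area(K)/2$ required to invoke Steinberg's algorithm (Theorem~\ref{thm:pack-bounding-boxes}) on the bounding boxes, so a further constant-factor loss (splitting $\P'$ into two halves by area, or equivalently losing a factor $2$) will be needed downstream before the bounding boxes can actually be packed — but that is the business of Section~\ref{subsec:easy} proper, not of this lemma.
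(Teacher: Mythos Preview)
Your approach is essentially the same as the paper's: reduce to one-dimensional knapsack with item sizes $\area(P_i)$, profits $w_i$, and capacity $\area(K)$, observe that $\OPT\cap\Pe$ is feasible for this instance, and run an FPTAS.

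One remark: the ``genuine subtlety'' you flag is not actually present. Area is invariant under rotation, so $\area(P_i)$ can be computed directly from the original integer vertex coordinates via the shoelace formula, yielding a half-integer. The rotation that makes the diameter horizontal affects $\ell_i$ and $h_i$ (which the paper indeed treats carefully elsewhere), but not the area. So the whole machinery of rational over-approximations $\tilde a_i$ and the subsequent item-dropping argument is unnecessary here. This is fortunate, because your dropping argument as stated has a small gap: removing the \emph{lightest} items need not reduce the total \emph{area} by the required amount, and removing an item of large area could in principle cost a large fraction of the weight; fixing this would require the usual ``guess the heaviest item first'' trick. But since the areas are rational from the outset, none of this is needed.
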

\begin{proof}
	We define an instance of one-dimensional knapsack with a set of items
	$I$ where we introduce for each polygon $P_{i}\in\Pe$ an item $a_{i}\in I$
	with size $s_{i}:=\area(P_{i})$ and profit $p_{i}:=w_{i}$ and define
	the size of the knapsack to be $\area(K)$. We apply the FPTAS in \cite{JinKnapsack}
	on this instance and obtain a set of items $I'\subseteq I$
	such that $p(I'):=\sum_{i\in I'}p_{i}\ge(1-\epsilon)\OPT(I)$ where
	$\OPT(I)$ denote the optimal solution for the set of items $I$,
	given a knapsack of size $\area(K)$. We define $\P':=\{P_{i}\in\Pe|a_{i}\in I'\}$.
\end{proof}
The idea is now to partition $\P'$ into at most 7 sets $\P'_{1},...,\P'_{7}$.
Hence, one of these sets must contain at least a profit of $w(\P')/7$.
We define this partition such that each set $\P'_{j}$ contains only
one polygon or its polygons have a total area of at most $\area(K)/4$.
\begin{lem}
	\label{lem:large-subset}Given a set $\P'\subseteq\Pe$ with $\sum\limits _{P_{i}\in\Pe}\area(P_{i})\le\area(K)$.
	In polynomial time we can compute a set $\P''\subseteq\P'$ with $w(\P'')\ge\frac{1}{7}w(\P')$
	and additionally $\sum\limits _{P_{i}\in\P''}\area(P_{i})\le\area(K)/4$ or $|\P''|=1$.
\end{lem}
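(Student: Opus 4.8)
The plan is to build the promised partition of $\P'$ into at most seven parts by a single greedy pass, in the spirit of next-fit bin packing with ``bin capacity'' $\area(K)/4$, and then to output the part of largest total weight. The areas cause no arithmetic difficulty: rotations preserve area and the original vertices are integral, so each $\area(P_i)$ is a rational (in fact half-integral) number and all comparisons with $\area(K)/4=N^2/4$ below are exact.

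Concretely, fix an arbitrary order $P_{i_1},P_{i_2},\dots$ of the polygons of $\P'$ and maintain a ``current'' group, initialised to $\{P_{i_1}\}$. For $t=2,3,\dots$ in turn, add $P_{i_t}$ to the current group if this keeps its total area at most $\area(K)/4$, and otherwise close the current group and start a new current group equal to $\{P_{i_t}\}$. Let $G_1,\dots,G_m$ be the resulting groups. Two facts follow directly from the construction. First, each $G_j$ is either a singleton or has total area at most $\area(K)/4$: if a second polygon was ever added to $G_j$, then at that moment its total area was at most $\area(K)/4$ and it never grew beyond this afterwards; a group that stayed a singleton satisfies the other alternative (in particular, any polygon of area more than $\area(K)/4$ ends up alone in its group). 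Second, for any two consecutive groups $G_j,G_{j+1}$ one has $\area(G_j)+\area(G_{j+1})>\area(K)/4$: the first polygon $P$ placed into $G_{j+1}$ failed the area test for $G_j$, i.e.\ $\area(G_j)+\area(P)>\area(K)/4$, and at that moment $\area(G_j)$ was already final whereas $\area(G_{j+1})\ge\area(P)$.

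To bound $m$, group the $G_j$ into the $\lfloor m/2\rfloor$ disjoint pairs $(G_1,G_2),(G_3,G_4),\dots$; each such pair has total area strictly more than $\area(K)/4$, so the total area of $\P'$ exceeds $\lfloor m/2\rfloor\cdot\area(K)/4$. Since $\P'\subseteq\Pe$ and $\sum_{P_i\in\Pe}\area(P_i)\le\area(K)$ by hypothesis, the total area of $\P'$ is at most $\area(K)$, which forces $\lfloor m/2\rfloor\le 3$ and hence $m\le 7$. Padding with empty sets if $m<7$, we obtain a partition of $\P'$ into seven parts $\P'_1,\dots,\P'_7$, each a singleton or of total area at most $\area(K)/4$; by averaging, some part $\P''$ satisfies $w(\P'')\ge\frac17 w(\P')$, and we output it. Computing the areas, running the greedy pass, and selecting the heaviest part all take time $n^{O(1)}$.

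I do not anticipate a genuine obstacle. The only point needing a little care is getting the count tight enough to land at seven rather than eight parts, which is exactly why the grouping is greedy (so that, on average, consecutive groups are ``more than a quarter full''); the rotation-invariance of area needs only the one-line remark above.
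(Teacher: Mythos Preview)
Your proof is correct. Both your argument and the paper's use a next-fit style grouping with capacity $\area(K)/4$ and then pick the heaviest part, but the mechanics of bounding the number of parts by seven differ. The paper first sorts $\P'$ by decreasing area, sets aside the three largest polygons as singleton parts $\C_1,\C_2,\C_3$, and then greedily packs the remainder; the sorting guarantees that every subsequent polygon has area at most $\area(P'_3)$, so each non-final middle part has area at least $\tfrac14\area(K)-\area(P'_3)$, and summing (together with the $3\cdot\area(P'_3)$ from the singletons) rules out $p\ge 8$. You avoid sorting entirely and instead use the next-fit invariant that any two consecutive groups together exceed $\area(K)/4$, then pair them up to get $\lfloor m/2\rfloor\le 3$. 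Your route is a bit cleaner and needs no preprocessing; the paper's route makes explicit which polygons might have to stand alone (the large ones), which is conceptually nice but not logically necessary here. Your remark that the areas are half-integers (rotation-invariant, shoelace formula on the original integer vertices) is a useful addition that the paper glosses over.
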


\begin{proof}
	Note that every set of rectangles that fit into the Knapsack with
	total area less than $\frac{1}{2}\area(K)$ can be packed into the
	Knapsack by Theorem \ref{thm:pack-bounding-boxes}. Therefore, any set polygons of total area
	less than $\frac{1}{4}\area(K)$ can be put into their bounding boxes
	and, since the total area of these boxes has at most doubled, the
	convex polygons can be packed accordingly. Moreover if the height
	and width of the bounding box can be computed in polynomial time,
	this placement can also be computed in polynomial time.
	
	Sort $\P'=\{P'_{1},\dots,P'_{l}\}$ decreasingly by area and define $\C_{1}=\{P'_{1}\}$, $\C_{2}=\{P'_{2}\}$
	and $\C_{3}=\{P'_{3}\}$. We now partition $\P'\setminus\{C_{1}.C_{2},C_{3}\}$ obtaining parts $\C_{4},\dots,C_{p}$ as follows. We begin by defining a sequence of $\ell_j$:
	\begin{align*}
	\ell_0 & = 0, \\
	\ell_j & = \max \left\{ k\in\{\ell_{j-1}, \dots , \ell \} : \sum_{i=\ell_{j-1}+1}^k \area(P_i')\leq \frac{1}{4} \area (K)   \right\},
	\end{align*}
	and consider the partition of $\P'\setminus\{C_{1},C_{2},C_{3}\}$ into parts of the form $\{c_{l'_{j}+1},\dots,c_{l'_{j+1}}\}$.
	Suppose for the sake of contradiction that $p\geq8$. Note that $\sum\limits _{C\in\C_{i}}\area(C)\geq\frac{1}{4}\area(K)-\area(P'_{3})$
	for each $i$ such that $4\leq i<p$, therefore: 
	\[
	\area(K)\geq\sum_{i=1}^{p-1}\sum_{C\in\C_{i}}\area(C)\geq3\cdot\area(P'_{3})+\sum_{i=4}^{p-1}\sum_{C\in\C_{i}}\area(C)\geq
	\]
	\[
	3\cdot\area(P'_{3})+(p-4)\left(\frac{\area(K)}{4}-\area(P'_{3})\right)>\area(K),
	\]
	arriving at a contradiction. Pick $\P''$ the part with largest weight,
	therefore $w(\P'')\geq\frac{1}{7}w(\P')$ and, by construction, either
	$\P''$ is a single polygon that fits or has area at most $\frac{1}{4}\area(K)$
	and can be placed non-overlappingly into the Knapsack. 
\end{proof}
If $|\P''|=1$ we simply pack the single polygon in $\P''$ into the knapsack. Otherwise,
using Lemmas~\ref{lem:area-bound} and  \ref{lem:large-profit} and
Theorem~\ref{thm:pack-bounding-boxes} we know that we can pack the 
bounding boxes of the polygons in $\P''$ into $K$. Note that their heights and widths might be irrational.
Therefore, we slightly increase them such that these values become rational, before applying Theorem~\ref{thm:pack-bounding-boxes} to compute the actual packing. If as a result the total area of the bounding boxes exceeds $\area(K)/2$ we partition them into two sets where each set 
satisfies that the total area of the bounding boxes is at most $\area(K)/2$ or it contains only one polygon and we keep the more profitable of these two sets (hence losing a factor of 2 in the approximation ratio). This yields a $O(1)$-approximation algorithm for the easy polygons and thus proves Lemma~\ref{lem:easy-polygons}.

\subsection{\label{subsec:Medium-polygons}Medium polygons}

We describe a $O(1)$-approximation algorithm for the polygons in
$\Pm$. In its solution, for each $j\in\Z$ we will define two rectangular
containers $R_{j},R'_{j}$ for polygons in $\Pm\cap\P_{j}$, each
of them having width $\sqrt{2}N-2^{j-1}$ and height $2^{j-3}$, see
Figures~\ref{fig:hexagons-rectangles}. Let $\RR:=\cup_{j}\{R_{j},R'_{j}\}$.
First, we show that we can pack all containers in $\RR$ into $K$
(if we rotate them by 45\degree).
\begin{lem}
\label{lem:pack-containers}The rectangles in $\RR$ can be packed
non-overlappingly into $K$.
\end{lem}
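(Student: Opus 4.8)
The plan is to exhibit one explicit packing in which every container of $\RR$ is rotated by $45\degree$ and the containers are stacked symmetrically around the diagonal of $K$, as in Figure~\ref{fig:hexagons-rectangles}. First I would discard irrelevant indices: a container $R_j$ or $R'_j$ occurs only when $\Pm\cap\P_j\neq\emptyset$, and any polygon $P_i$ that fits into $K$ satisfies $\ell_i\le\sqrt2 N$; hence for every $R_j\in\RR$ we have $W_j:=\sqrt2 N-2^{j-1}\ge\sqrt2 N-\ell_i\ge0$, so in particular $2^{j-2}\le N/\sqrt2$, and $H_j:=2^{j-3}>0$. (One could sharpen this to $\ell_i>N$, hence $2^{j-1}<(\sqrt2-1)N$, for medium polygons, but only $W_j\ge0$ is needed below.) Next I would pass to coordinates adapted to the diagonal: let $\phi$ be the rigid motion given by $\phi(u,v)=(N/2,N/2)+\tfrac{u}{\sqrt2}(1,1)+\tfrac{v}{\sqrt2}(-1,1)$, so that $\phi$ rotates by $45\degree$, maps the $u$-axis onto the diagonal line of $K$, and $\phi^{-1}(K)=\tilde K:=\{(u,v):|u|+|v|\le N/\sqrt2\}$ (the four corners of $K$ map to $(\pm N/\sqrt2,0),(0,\pm N/\sqrt2)$); this is the two-dimensional content of Lemma~\ref{lem:rotation}.

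With this set up I would place, for each index $j$ occurring in $\RR$,
\[
R_j:=\phi\left(\left[-\tfrac{W_j}{2},\tfrac{W_j}{2}\right]\times\left[2^{j-3},2^{j-2}\right]\right),
\qquad
R'_j:=\phi\left(\left[-\tfrac{W_j}{2},\tfrac{W_j}{2}\right]\times\left[-2^{j-2},-2^{j-3}\right]\right),
\]
which are indeed rectangles of width $W_j$ and height $2^{j-3}$, rotated by $45\degree$ and stacked perpendicularly to the diagonal. Since $\phi$ is an isometry, containment of $R_j$ in $K$ amounts to $[-\tfrac{W_j}{2},\tfrac{W_j}{2}]\times[2^{j-3},2^{j-2}]\subseteq\tilde K$, and using $\tfrac{W_j}{2}=N/\sqrt2-2^{j-2}\ge0$, every corner $(\pm\tfrac{W_j}{2},v)$ with $v\in\{2^{j-3},2^{j-2}\}$ satisfies $|{\pm\tfrac{W_j}{2}}|+|v|\le\tfrac{W_j}{2}+2^{j-2}=N/\sqrt2$, so the box lies in $\tilde K$; the argument for $R'_j$ is symmetric. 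For disjointness, the vertical projections $[2^{j-3},2^{j-2}]$ over the relevant $j$ are consecutive dyadic intervals, hence pairwise share at most one endpoint; so the $R_j$ have pairwise disjoint interiors, likewise the $R'_j$, while each $R_j$ lies in $\{v>0\}$ and each $R'_j$ in $\{v<0\}$. Applying the isometry $\phi$ preserves containment and interior-disjointness, which gives the claim.

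The only point that needs genuine care --- and essentially the sole nontrivial one --- is the choice of the stacking intervals $[2^{j-3},2^{j-2}]$: a rectangle of $u$-width $W_j$ centered on the diagonal fits into $\tilde K$ only inside the band $|v|\le 2^{j-2}$, so the intervals chosen for different $j$ must nest compatibly with these bands while still tiling a single interval without overlap, and the dyadic choice above does exactly this (note that $[2^{j-3},2^{j-2}]$ over all relevant $j$ tiles $(0,\sup_j 2^{j-2}]$). A more naive stacking --- e.g.\ laying the containers next to the diagonal in some fixed order --- need not respect these bands. Once this alignment is fixed, the lemma follows from the single inequality $W_j\ge0$ together with the elementary estimates above.
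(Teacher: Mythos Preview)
Your proof is correct and follows essentially the same construction as the paper: both place each $R_j$ in the dyadic band at distance $[2^{j-3},2^{j-2}]$ from a diagonal of $K$ (the paper uses the anti-diagonal and explicit vertex coordinates together with separating hyperplanes $x+y=N+2^{j-2}/\sqrt{2}$, whereas you pass to the rotated frame where these bands become the $v$-intervals $[2^{j-3},2^{j-2}]$). The one cosmetic point is your justification of $W_j\ge 0$: the chain $W_j\ge\sqrt{2}N-\ell_i\ge 0$ tacitly uses $2^{j-1}\le \ell_i$, which is true for medium polygons but not argued; the direct observation $W_j=\sqrt{2}N-2^{j-1}>\ell_i>0$ from the definition of $\P_j$ is cleaner.
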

\begin{proof}
	Let $j^{*}$ be the largest integer such that $\sqrt{2}N-2^{j^{*}-1}>0$.
	For $j\leq j^{*}$ we place the rectangle $R_{j}$ such that its vertices
	are at the following coordinates: $(\frac{2^{j-1}}{\sqrt{2}},N)$,
	$(N,\frac{2^{j-1}}{\sqrt{2}})$, $(N-\frac{2^{j-3}}{\sqrt{2}},\frac{3}{\sqrt{2}}2^{j-3})$,
	$(\frac{3}{\sqrt{2}}2^{j-3},N-\frac{2^{j-3}}{\sqrt{2}})$. Note that
	since: 
	\begin{align*}
	\left\Vert \left(\frac{2^{j-1}}{\sqrt{2}},N\right)-\left(N,\frac{2^{j-1}}{\sqrt{2}}\right)\right\Vert _{2} & =\sqrt{2}N-2^{j-1},\\
	\left\Vert \left(N-\frac{2^{j-3}}{\sqrt{2}},\frac{3}{\sqrt{2}}2^{j-3}\right)-\left(N,\frac{2^{j-1}}{\sqrt{2}}\right)\right\Vert _{2} & =2^{j-3},
	\end{align*}
	this is indeed the rectangle $R_{j}$.
	
	To prove that these are placed non-overlappingly, we define a family
	of hyperplanes $\{H_{j}\}_{j\leq j^{*}}$ such that for each $j$,
	$R_{j}$ and $R_{j-1}$ are in different half-spaces defined by $H_{j}$.
	Define these half-spaces as: 
	\begin{align*}
	H_{j}^{=} & =\left\{ (x,y)\in[0,N]^{2}:x+y=N+\frac{2^{j-2}}{\sqrt{2}}\right\} ,\\
	H_{j}^{\leq} & =\left\{ (x,y)\in[0,N]^{2}:x+y\leq N+\frac{2^{j-2}}{\sqrt{2}}\right\} ,\\
	H_{j}^{\geq} & =\left\{ (x,y)\in[0,N]^{2}:x+y\geq N+\frac{2^{j-2}}{\sqrt{2}}\right\} .
	\end{align*}
	Note that $R_{j+1}\subseteq H_{j}^{\geq}$ and $R_{j}\subseteq H_{j}^{\leq}$.
	Therefore, if $i<j$ we have that $R_{j}\subseteq H_{j}^{\leq}$ and
	$R_{i}\subseteq H_{i-1}^{\geq}\subseteq H_{j}^{\geq}$, making this
	a non-overlapping packing.
	
	Noting that the rectangles $\{R_{j}\}_{j\leq j^{*}}$ are packed in
	the half-space $H^{+}=\{(x,y)\in[0,N]^{2}:x+y\geq N\}$, we can pack
	the rectangles $\{R'_{j}\}_{j\leq j^{*}}$ symmetrically. 
\end{proof}
For each $j\in\Z$ we will compute a set of polygons $\P'_{j}\subseteq\Pm\cap\P_{j}$
of large weight. Within each container $R_{j},R'_{j}$ we will stack
the bounding boxes of the polygons in $\P'_{j}$ on top of each other
and then place the polygons in $\P'_{j}$ in their respective bounding
boxes, see Figure~\ref{fig:hexagons-rectangles}. In particular,
a set of items $\P''_{j}\subseteq\P_{j}$ fits into $R_{j}$ (or $R'_{j}$)
using this strategy if and only if $h(\P''_{j}):=\sum_{P_{i}\in\P''_{j}}h{}_{i}\le2^{j-3}$.
Observe that for a polygon $P_{i}\in\P_{j}$ with $P_{i}\in\Ph$ it
is not necessarily true that $h_{i}\le2^{j-3}$ and hence for hard
polygons this strategy is not suitable. We compute the essentially
most profitable set of items $\P'_{j}$ that fits into $R_{j}$ and
$R'_{j}$ with the above strategy. For this, we need to solve a variation
of one-dimensional knapsack with two knapsacks (instead of one) that
represent $R_{j}$ and $R'_{j}$. The value $h_i$ for a polygon $P_i$ 
might be irrational, therefore we work with a $(1+\epsilon)$-estimate
of $h_i$ instead. This costs only a factor $O(1)$ in the approximation guarantee.

\begin{lem}
\label{lem:compute-sets}Let $\epsilon>0$. For each $j\in\Z$ there
is an algorithm with a running time of $n^{O(\frac{1}{\epsilon})}$
that computes two disjoint sets $\P'_{j,1},\P'_{j,2}\subseteq\P_{j}\cap\Pm$
such that $h(\P'_{j,1})\le2^{j-3}$ and $h(\P'_{j,2})\le2^{j-3}$
and $w(\P^{*}{}_{j,1}\cup\P^{*}{}_{j,2}) \le O(w(\P'_{j,1}\cup\P'_{j,2}))$
for any disjoint sets $\P^{*}{}_{j,1},\P^{*}{}_{j,2}\subseteq\P_{j}\cap\Pm$
such that $h(\P{}_{j,1}^{*})\le2^{j-3}$ and $h(\P{}_{j,2}^{*})\le2^{j-3}$.
\end{lem}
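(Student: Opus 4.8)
My plan is to read Lemma~\ref{lem:compute-sets} as asking for a constant‑factor approximation to a two‑bin (multiple) knapsack problem on the item set $\Pm\cap\P_j$, where $P_i$ has size $h_i$ and profit $w_i$ and both bins have capacity $2^{j-3}$; the only nonstandard feature is that the heights $h_i$ may be irrational. I would proceed in three steps: reduce the two‑bin optimum to a single bin (losing a factor $2$), discretize the heights, and then repack. For the first step, note that every $P_i\in\Pm\cap\P_j$ satisfies $h_i\le 2^{j-3}$: being medium gives $h_i\le h'_i/8$, and $P_i\in\P_j$ gives $\ell_i\ge\sqrt2N-2^j$, hence $h'_i=\sqrt2N-\ell_i\le 2^j$ and $h_i\le 2^{j-3}$. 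Therefore each of the given sets $\P^*_{j,1},\P^*_{j,2}$ is by itself a feasible single‑bin solution, so the heavier one — call it $A$ — satisfies $h(A)\le 2^{j-3}$ and $w(A)\ge\tfrac12 w(\P^*_{j,1}\cup\P^*_{j,2})$, and it suffices to compute, up to a constant factor, a set of height at most $2^{j-3}$ of weight at least that of the best such set. Throughout I will assume $\epsilon\le 1/3$.

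For the discretization step, I would first bound $h_i$ from below: by Lemma~\ref{lem:area-bound} together with the fact that a nondegenerate polygon with integer vertices has area at least $1/2$, one gets $h_i\ge\area(P_i)/\ell_i\ge 1/(2\sqrt2 N)$; since the rotated coordinates are algebraic numbers (the rotation has algebraic sine and cosine), each $h_i$ can be computed to arbitrary precision, so in polynomial time I obtain rationals $\tilde h_i$ with $h_i\le\tilde h_i\le(1+\epsilon)h_i$. I would then set up a one‑dimensional knapsack instance with items $\Pm\cap\P_j$, sizes $\tilde h_i$, profits $w_i$, and capacity $(1+\epsilon)2^{j-3}$ — the enlarged capacity is crucial — and apply the FPTAS of~\cite{JinKnapsack}. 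Since $h(A)\le 2^{j-3}$ implies $\sum_{P_i\in A}\tilde h_i\le(1+\epsilon)2^{j-3}$, the set $A$ is feasible for this instance, so the FPTAS returns $\P'\subseteq\Pm\cap\P_j$ with $\sum_{P_i\in\P'}\tilde h_i\le(1+\epsilon)2^{j-3}$ and $w(\P')\ge(1-\epsilon)w(A)$.

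For the repacking step, since $\tilde h_i\ge h_i$ it is enough to split $\P'$ into two parts whose $\tilde h$‑sums are each at most $2^{j-3}$. If some $P_i\in\P'$ has $\tilde h_i>2^{j-3}$ — there is at most one such, since two would already overflow $(1+\epsilon)2^{j-3}$ — then $h_i\le 2^{j-3}$ lets me put $P_i$ alone in the first bin, while the remaining items, of $\tilde h$‑sum below $\epsilon 2^{j-3}$, go to the second bin. Otherwise every item of $\P'$ has $\tilde h_i\le 2^{j-3}$; I would order them by non‑increasing $\tilde h_i$, fill the first bin greedily until the next item no longer fits, and put the rest in the second bin. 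This works because $\sum_{P_i\in\P'}\tilde h_i\le\tfrac43 2^{j-3}$: the first rejected item has $\tilde h$‑value at most $\tfrac23 2^{j-3}$ (together with the largest item it accounts for at most $\tfrac43 2^{j-3}$, and it is no larger), so the second bin's $\tilde h$‑sum is below $\tfrac13 2^{j-3}+\tfrac23 2^{j-3}=2^{j-3}$. In every case no item of $\P'$ is discarded, so $h(\P'_{j,1})\le 2^{j-3}$, $h(\P'_{j,2})\le 2^{j-3}$, and $w(\P'_{j,1}\cup\P'_{j,2})=w(\P')\ge(1-\epsilon)w(A)\ge\tfrac{1-\epsilon}{2}w(\P^*_{j,1}\cup\P^*_{j,2})$; the running time is dominated by the FPTAS, i.e.\ $(n/\epsilon)^{O(1)}\le n^{O(1/\epsilon)}$.

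I expect the real obstacle to be the tension between feasibility and optimality that the irrational heights create: to be sure the output actually fits I am forced to overestimate the $h_i$, but then the true optimum $A$ need not be feasible for the rounded instance — think of $A$ consisting of a single polygon whose $h_i$ is just below $2^{j-3}$. Enlarging the knapsack capacity by a $(1+\epsilon)$ factor and then repacking the mildly oversized solution into the two bins I am allowed — possible precisely because each medium polygon in $\P_j$ fits inside one bin — is what resolves this, and it is the place where the argument must be carried out carefully.
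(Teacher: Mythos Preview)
Your proposal is correct and takes a genuinely different route from the paper. The paper's proof is much terser: it observes that the problem is a two-identical-knapsack instance (items $\P_j\cap\Pm$, sizes $h_i$, profits $w_i$, two bins of capacity $2^{j-3}$) and invokes the multiple-knapsack PTAS of Chekuri and Khanna as a black box, after a one-line remark that replacing each $h_i$ by a $(1+\epsilon)$-overestimate costs only an $O(1)$ factor in profit.

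By contrast, you reduce the two-bin optimum to a single bin (losing a factor $2$), run the ordinary single-knapsack FPTAS with capacity inflated to $(1+\epsilon)2^{j-3}$ so that the true optimum remains feasible after rounding, and then explicitly repack the mildly oversized solution into the two available bins, exploiting the key fact that every $P_i\in\Pm\cap\P_j$ already satisfies $h_i\le 2^{j-3}$. What your approach buys is a more elementary toolchain (no multiple-knapsack machinery) and a rigorous treatment of the irrationality issue that the paper leaves implicit; the paper's appeal to Chekuri--Khanna is shorter but sweeps under the rug exactly the feasibility-versus-optimality tension you identify as the real obstacle. Your repacking argument is fine: with $\epsilon\le 1/3$ the total rounded size is at most $\tfrac{4}{3}2^{j-3}$, and once the first rejected item in the greedy fill has $\tilde h\le\tfrac{2}{3}2^{j-3}$ the second bin receives at most $\epsilon 2^{j-3}+\tfrac{2}{3}2^{j-3}\le 2^{j-3}$. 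As a side remark, your running time $(n/\epsilon)^{O(1)}$ is in fact better than the $n^{O(1/\epsilon)}$ stated in the lemma, which reflects the cost of the PTAS the paper cites.
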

\begin{proof}
First, for each polygon $P_i$ we compute an estimate for $h_i$ that
overestimates the true value for $h_i$ by at most a factor $1+\epsilon$. 
Working with this estimate instead of the real value for $h_i$ loses at 
most a factor of $O(1)$ in the profit.
    
For each $j$, note that we aim to solve a variation of one-dimensional knapsack
with two identical knapsacks instead of one. In our setting we have two knapsacks
$K_{1}$ and $K_{2}$, each with capacity $2^{j-3}$, the set of objects
to choose from are $\mathcal{I}:=\P_{j}\cap\P_{M}$ and each $i\in\mathcal{I}$
has size $h_{i}$ and profit $w_{i}$. This can be done with the algorithm in \cite{ChekuriKhanna2000}.
\end{proof}

For each $j\in\Z$ with $\P_{j}\cap\Pm\ne\emptyset$ we apply Lemma~\ref{lem:compute-sets}
and obtain sets $\P'_{j,1},\P'_{j,2}$. We pack $\P'_{j,1}$ into
$R_{j}$ and $\P'_{j,2}$ into $R'_{j}$, using that $h(\P'_{j,1})\le h(R_{j})$
and $h(\P'_{j,2})\le h(R'_{j})$. Then we pack all containers $R_{j},R'_{j}$
for each $j\in\Z$ into $K$, using Lemma~\ref{lem:pack-containers}.

Let $\P_{M}':=\bigcup_{j}\P'_{j,1}\cup\P'_{j,2}$ denote the selected
polygons. We want to show that $\P_{M}'$ has large weight; more precisely
we want to show that $w(\OPT\cap\Pm)\le O(w(\P_{M}'))$. First, we
show that for each $j\in\Z$ the polygons in $\P_{j}\cap\Pm\cap\OPT$
have bounded area. To this end, we show that they are contained inside
a certain (irregular) hexagon (see Figure~\ref{fig:hexagons-rectangles}) which has small area
if the polygons $P_{i}\in\P_{j}$ are wide, i.e., if $\ell_{i}$ is
close to $\sqrt{2}N$. The reason is that then $P_{i}$ must be placed
close to the diagonal of the knapsack 
and on the other hand $h_{i}$ is relatively small (since $P_{i}$
is medium), which implies that all of $P_{i}$ lies close to the diagonal
of the knapsack. For any object $O\subseteq\R^{2}$ we define
$\area(O)$ to be its area. 
\begin{lem}\label{lem:hexagon-bound}
For each $j$ it holds that $\area(\P_{j}\cap\Pm)\le O(\area(R_{j}\cup R_{j}'))$.
\end{lem}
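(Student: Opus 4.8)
The plan is to bound the area of the region occupied, in the optimal packing, by the polygons of $\P_j\cap\Pm$ that lie in $\OPT$ (this region is what $\area(\P_j\cap\Pm)$ refers to here); since these polygons are pairwise non-overlapping, this area equals the sum of their areas. I would show that, as placed by $\OPT$, every such polygon lies inside one of two congruent hexagonal regions $H_1,H_2$, one associated to each diagonal of $K$, each of area $O(\area(R_j\cup R_j'))$; a union bound over $H_1,H_2$ then finishes the argument. Two ingredients enter. First, for $P_i\in\P_j$ we have $h_i\le h'_i/8\le 2^j/8=2^{j-3}$ by the definitions of a medium polygon and of the group $\P_j$. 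Second, let $\sigma_i$ be the image, under the rigid motion that $\OPT$ applies to $P_i$, of the segment realizing the diameter of $P_i$; this is a segment of length exactly $\ell_i$, and I claim that the placed polygon $P_i$ lies within distance $h_i$ of $\sigma_i$. Indeed, $P_i$ sits inside a bounding box of dimensions $\ell_i\times h_i$ one of whose long sides is parallel to $\sigma_i$; since the endpoints of a diameter are extreme among the points of $P_i$ in the diameter direction, $\sigma_i$ joins the two short sides of that box, so the deviation of any box point from $\sigma_i$ in the short direction is at most $h_i$.

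The heart of the argument is the following purely planar claim: any line segment of length $\ell\ge\sqrt2 N-s$ contained in $K=[0,N]^2$ lies within distance $O(s)$ of one of the two diagonals of $K$. To prove it I would write the endpoints as $a=(a_1,a_2),b=(b_1,b_2)\in[0,N]^2$ and, after possibly reflecting $K$ (which interchanges its two diagonals), assume $b_1\ge a_1$ and $b_2\ge a_2$. With $u:=b_1-a_1\in[0,N]$ and $v:=b_2-a_2\in[0,N]$ we get $u^2+v^2=\ell^2\ge 2N^2-2\sqrt2 Ns$, and since $u\le N$ this forces $v^2\ge N^2-2\sqrt2 Ns$, hence $N-v\le\frac{2\sqrt2 Ns}{N+v}\le 2\sqrt2 s$ and, symmetrically, $N-u\le 2\sqrt2 s$ (both also holding trivially when $s\ge N/(2\sqrt2)$). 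From $u\le N-a_1$, $v\le N-a_2$ we get $a_1,a_2\le 2\sqrt2 s$, and from $b_1=a_1+u\ge u$, $b_2=a_2+v\ge v$ we get $N-b_1,N-b_2\le 2\sqrt2 s$. Thus $a$ is within distance $4s$ of the corner $(0,0)$ and $b$ within distance $4s$ of the corner $(N,N)$; since $x\mapsto\operatorname{dist}(x,D)$ is convex and takes value $\le 4s$ at both $a$ and $b$ (as $(0,0),(N,N)\in D$), it is $\le 4s$ on all of $ab$, so the whole segment lies in the $4s$-neighborhood of the diagonal $D=\{(t,t):t\in[0,N]\}$.

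I would then combine the two ingredients. For $P_i\in\P_j\cap\Pm\cap\OPT$ we have $\ell_i\ge\sqrt2 N-2^j$, so by the planar claim $\sigma_i$ lies within distance $O(2^j)$ of some diagonal $D\in\{D_1,D_2\}$, and therefore $P_i$ lies within distance $O(2^j)+h_i=O(2^j)$ of $D$. Since also $P_i\subseteq K$, it is contained in $H_D:=\{x\in K:\operatorname{dist}(x,D)\le c\,2^j\}$ for an absolute constant $c$. Taking $D$ to be the main diagonal, the perpendicular foot of any point of $K$ onto $D$ lies inside the segment $D$, so $H_D=\{(x,y)\in K:|x-y|\le\sqrt2\,c\,2^j\}$, i.e.\ $K$ with two opposite corner triangles deleted --- an irregular, possibly degenerate, hexagon --- and hence $\area(H_D)\le 2\sqrt2\,c\,N2^j$; the same holds for the other diagonal. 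By non-overlappingness the polygons of $\P_j\cap\Pm\cap\OPT$ have total area at most $\area(H_{D_1})+\area(H_{D_2})=O(N2^j)$. On the other hand $\area(R_j\cup R_j')=2(\sqrt2 N-2^{j-1})\,2^{j-3}$, and a medium polygon of $\P_j$ is not easy, so its bounding box does not fit into $K$ without rotation; as $h_i\le\ell_i$ (two points of $P_i$ at vertical distance $h_i$ are at Euclidean distance $\ge h_i$, which cannot exceed the diameter $\ell_i$), this forces $\ell_i>N$, so $\sqrt2 N-2^{j-1}>\ell_i>N$ and $\area(R_j\cup R_j')\ge N2^{j-2}=\Theta(N2^j)$. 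Comparing the two estimates yields $\area(\P_j\cap\Pm)\le O(\area(R_j\cup R_j'))$; if $\P_j\cap\Pm\cap\OPT=\emptyset$ there is nothing to prove.

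The \emph{one genuinely non-routine step} is the planar claim that a near-diameter chord of the square must hug one of its diagonals; the rest --- reducing to the area of a disjoint union, noting that a convex polygon stays within $h_i$ of its diameter segment, estimating the hexagon's area, and tracking the constants (in particular, when $2^j=\Theta(N)$ the hexagon degenerates to nearly all of $K$, but then $\area(R_j\cup R_j')=\Theta(N^2)$ as well) --- is bookkeeping.
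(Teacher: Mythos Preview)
Your proof is correct and follows essentially the same strategy as the paper: show that any placed polygon of $\P_j\cap\Pm$ must lie in a thin hexagonal neighborhood of one of the two diagonals of $K$, then compare the area of that hexagon with $\area(R_j\cup R_j')$. The only real difference is in execution: the paper pins down the endpoints of the diameter segment in corner triangles of size $r_i=N-\sqrt{\ell_i^2-N^2}$ via a diameter-of-auxiliary-polygon argument and then computes $\area(H_{1,j})$ exactly, whereas you prove the cleaner planar claim ``a chord of length $\ge\sqrt2 N-s$ in $K$ is within $4s$ of a diagonal'' by a direct coordinate estimate and then compare both sides to $\Theta(N2^j)$. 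Your route is shorter and avoids the somewhat fiddly hexagon-area calculation; the paper's route yields slightly sharper constants (which are not needed here).
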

\begin{proof}
	We show that we can choose
	a constant $C_{j}$ such that if a polygon $P_{i}\in\P_{j}\cap\Pm$
	is placed in the knapsack, then it is contained in the (irregular)
	hexagon $H_{1,j}$ that we define to be the hexagon with vertices
	$(0,0),(0,C_{j}),(N,N-C_{j}),(N,N),(N-C_{j},N),(C_{j}0)$,
	or in the (irregular) hexagon $H_{2,j}$ that we define to be the
	hexagon with vertices $(0,N-C_{j}),(0,N),(C_{j},N),(N-C_{j},0),(N,0),(N,C_{j})$,
	see Figure~\ref{fig:hexagons-rectangles}. To prove this, we first show that in the placement
	of $P_{i}$ inside $K$ the vertices of the longest diagonal of $P_{i}$
	essentially lie near opposite corners of $K$. We then use this to
	show that this longest diagonal of $P_{i}$ lies inside one of two
	hexagons that are even smaller than $H_{1,j}$ and $H_{2,j}$, respectively,
	see Figure~\ref{fig:hexagons-rectangles}. If additionally $P_{i}\in\Pm$ we conclude that $P_{i}$
	is placed completely within $H_{1,j}\cup H_{2,j}$ (while if $P_{i}\in\Ph$
	the latter is not necessarily true). 
	
	Given a value $r\ge0$ we define $T_{1}(r)$ as the union of the triangles
	with vertices $(0,r),(r,0),(0,0)$ and $(N-r,N),(N,N-r),(N,N)$ and
	$H_{1}(r):=\conv(T_{1}(r))$. Note that $H_{1,j}=H_{1}(C_{j})$.
	Similarly, we define $T_{2}(r)$ as the union of the triangles with
	vertices $(0,N-r),(0,N),(r,N)$ and $(N-r,0),(N,0),(N,r)$ and $H_{2}(r):=\conv(T_{2}(r))$,
	noting that $H_{2,j}=H_{2}(C_{j})$.
	\begin{claim}
		\label{lem:claims} Consider a polygon $P_{i}\in\P_{j}$. Let $P$
		be a placement of $P_{i}$ inside $K$ and let $D=\overline{v^{1}v^{2}}$
		denote the longest diagonal of $P$ for two vertices $v^{1},v^{2}$
		of $P$. Define $r_{i}:=N-\sqrt{\ell_{i}^{2}-N^{2}}$. If $H_{1}(r_{i})\cup H_{2}(r_{i})\ne K$
		then it holds that $D\subseteq H_{1}(r_{i})$ or $D\subseteq H_{2}(r_{i})$.
		If additionally $P_{i}\in\P_{M}$ then $P\subseteq H_{1}(r_{i}+\sqrt{2} 2^{j-3})$
		or $P\subseteq H_{2}(r_{i}+\sqrt{2}2^{j-3})$. 
	\end{claim}
	\begin{claimproof}[Proof of claim.]
		We can assume that $r_{i}<N-r_{i}$ since otherwise $H_{1}(r_{i})\cup H_{2}(r_{i})=K$.
		Call $H$ the polygon with vertices $(0,r_{i}),(r_{i},0),(N-r_{i},N),(N,N-r_{i}),(0,N-r_{i}),(r_{i},N),(N-r_{i},0),(N,r_{i})$.
		Note that since $H$ is a polyhedron, its diameter is given by the
		two vertices furthest apart. Therefore:	\begin{align*}
		\diam(H) & =\max\left\{ \sqrt{N^{2}+(N-2r_{i})^{2}},\sqrt{(N-r_{i})^{2}+(N-r_{i})^{2}}\right\} \\
		& \le\sqrt{N^{2}+(N-r_{i})^{2}}\\
		& =\sqrt{N^{2}+(\ell_{i}^{2}-N^{2})}\\
		& =\ell_{i}
		\end{align*}
		Hence, $v^{1}\in T_{1}(r_{i})\cup T_{2}(r_{i})$ or $v^{2}\in T_{1}(r_{i})\cup T_{2}(r_{i})$.
		Assume w.l.o.g.~that $v^{1}\in T_{1}(r_{i})\cup T_{2}(r_{i})$. Furthermore,
		assume that $v^{1}\in T_{1}(r_{i})$ and assume w.l.o.g. that $v^{1}$
		is contained in the triangle with vertices $(0,r_{i}),(r_{i},0),(0,0)$.
		
		Call $H'$ the hexagon with vertices $(0,0)$, $(N-r_{i},0)$, $(N,r_{i})$,
		$(N,N-r_{i})$, $(N-r_{i},N)$, $(0,N)$. Note that since $H'$ is
		a polyhedron, its diameter is given by the two vertices furthest apart.
		Therefore $\diam(H')=\sqrt{N^{2}+(N-r_{i})^{2}}=\sqrt{N^{2}+(\ell_{i}^{2}-N^{2})}=\ell_{i}$.
		Therefore, $v^{2}$ must lie inside the triangles with vertices $(N-r_{i},N),(N,N-r_{i}),(N,N)$
		or inside the triangle with vertices $(N-r_{i},0),(N,0),(N,r_{i})$.
		If $v^{2}$ lies in the latter triangle, then $\ell_{i}\le\sqrt{N^{2}+r_{i}^{2}}<\sqrt{N^{2}+(N-r_{i})^{2}}=\ell_{i}$
		which is a contradiction. Therefore, $v^{2}\in T_{1}(r_{i})$. This
		implies that $D\subseteq H_{1}(r_{i})$. Assume now additionally that
		$P_{i}\in\P_{M}$. Let $v$ be a vertex of $P$ with $v^{1}\ne v\ne v^{2}$.
		The distance between $v$ and $D$ is at most $h_{i}\leq \frac{1}{8}(\sqrt{2}N-\ell_{i})\leq 2^{j-3}$.
		Therefore, $v$ is contained in $H_{1}(r_{i}+\sqrt{2}2^{j-3})$
		and hence $P\subseteq H_{1}(r_{i}+\sqrt{2}2^{j-3})$. 
		
		The case that $v^{1}\in T_{2}(r_{i})$ can be handled similarly and
		in this case we conclude that $v^{2}\in T_{2}(r_{i})$, $D\subseteq H_{2}(r_{i})$,
		and $P\subseteq H_{2}(r_{i}+\sqrt{2}2^{j-3})$.
	\end{claimproof}
We choose $C_{j}:=N-\sqrt{(\sqrt{2}N-2^{j})^{2}-N^{2}}+\sqrt{2}2^{j-3}$
and note that for every placement $P$ of $P_{i}$ we have that $P\subseteq H_{1}(r_{i}+\sqrt{2}2^{j-3})\subseteq H_{1}(C_{j})=H_{1,j}$
or $P\subseteq H_{2}(r_{i}+\sqrt{2}2^{j-3})\subseteq H_{2}(C_{j})=H_{2,j}$
as $r_{i}\leq N-\sqrt{(\sqrt{2}N-2^{j})^{2}-N^{2}}$. Furthermore
the area of $H_{1,j}$ (and hence the area of $H_{2,j}$) is upper
bounded as follows:

Note that we can compute $\area(H(r))$ by computing twice the area of half the hexagon: i.e. the quadrilateral $(0,0)$, $(N,N)$, $(N-r,N)$ and $(r,N)$, we now divide this quadrilateral into two isosceles right triangles with hypotenuse $r$ and a rectangle with sides $2\sqrt N- 2r/\sqrt 2$ and $r/\sqrt 2$, obtaining the following:\[ \frac{1}{2}\area(H(r))=\frac{r^2}{2}+ \frac{r}{\sqrt 2}(\sqrt 2 N - 2r)=(2N - r) \frac{r}{2} \]
More generally if $r_2>r_1$ we can compute $\area(H(r_2)\setminus H(r_1))$ by computing twice the area of the quadrilateral $(0,r_1)$, $(r_1,N)$, $(r_2,N)$, $(0,r_2)$, and doing the same splitting into triangles and rectangles we get:
\[
\frac{1}{2}\area(H(r_2)\setminus H(r_1))=\frac{r_2-r_1}{2}(2N-r_2-r_1)
\]
We can now compute the area of $H_{1,j}$
\begin{align*}
\area(H_{1,j}) & =\area(H(C_{j}-\sqrt{2}2^{j-3}))+\area(H(C_{j})\setminus H(C_{j}-\sqrt{2}2^{j-3}))\\
& \leq \left(N+\sqrt{(\sqrt{2}N-2^{j})^{2}-N^{2}}\right)\left(N-\sqrt{(\sqrt{2}N-2^{j})^{2}-N^{2}}\right)+\sqrt{2}2^{j-3}\sqrt{(\sqrt{2}N-2^{j})^{2}-N^{2}}\\
& =(2N^{2}-(\sqrt{2}N-2^{j})^{2})+\sqrt{2}2^{j-3}(\sqrt{2}N-2^{j-1})\\
& = 2^{j+1}(\sqrt 2 N - 2^{j-1}) +\sqrt{2}2^{j-3}(\sqrt{2}N-2^{j-1})\\
& \leq \left(1+\frac{1}{8\sqrt 2}\right)2^{4}2^{j-3}(\sqrt{2}N-2^{j-1})\\
& =\left(1+\frac{1}{8\sqrt 2}\right)2^{4}\area(R_{j}\cup R_{j}')\qedhere
\end{align*}
\end{proof}
Using this, we can partition $\P_{j}\cap\Pm\cap\OPT$ into at most
$O(1)$ subsets such that for each subset $\P'$ it holds that $h(\P')\le2^{j-3}$
and hence $\P'$ fits into $R_{j}$ (and $R'_{j}$) using our packing
strategy above. Here we use that each medium polygon $P_{i}\in\P_{j}$
satisfies that $h_{i}\le2^{j-3}$.
\begin{lem}
\label{lem:profitable-sets}
For each $j\in\Z$ there are disjoint set $\P^{*}{}_{j,1},\P^{*}{}_{j,2}\subseteq\P_{j}\cap\Pm\cap\OPT$
with $w(\P_{j}\cap\Pm\cap\OPT)\le O(w(\P^{*}{}_{j,1}\cup\P^{*}{}_{j,2}))$
such that $h(\P{}_{j,1}^{*})\le2^{j-3}$ and $h(\P{}_{j,2}^{*})\le2^{j-3}$.
\end{lem}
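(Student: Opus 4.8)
The plan is to combine the area bound from Lemma~\ref{lem:hexagon-bound} with a greedy/bucketing argument. Fix $j\in\Z$ and recall that every medium polygon $P_i\in\P_j$ satisfies $h_i\le 2^{j-3}$ (since $P_i\in\Pm$ means $h_i\le h_i'/8 = (\sqrt2 N-\ell_i)/8 \le 2^{j-1}/8 = 2^{j-4} \le 2^{j-3}$; this is already used just before the statement). Thus every individual polygon in $\P_j\cap\Pm$ fits, by itself, into a single container using the stacking strategy. By Lemma~\ref{lem:area-bound}, $\area(P_i)\ge \tfrac12\area(B_i)$, and since $\area(B_i)=\ell_i h_i \ge (\sqrt2 N - 2^j)\, h_i$, we get a uniform lower bound relating $h_i$ to $\area(P_i)$: namely $h_i \le \frac{2\,\area(P_i)}{\sqrt2 N - 2^j}$. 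Summing over any subset $\P'\subseteq\P_j\cap\Pm\cap\OPT$ gives $h(\P') \le \frac{2\,\area(\P')}{\sqrt2 N - 2^j} \le \frac{2}{\sqrt2 N - 2^j}\cdot O(\area(R_j\cup R_j')) = O(1)\cdot \frac{2^{j-3}(\sqrt2 N - 2^{j-1})}{\sqrt2 N - 2^j}$, which is $O(2^{j-3})$ since $\sqrt2 N - 2^{j-1}$ and $\sqrt2 N - 2^j$ differ by only a factor of at most $2$ (for $j\le j^*$). So the \emph{total} height of all of $\P_j\cap\Pm\cap\OPT$ is at most $c\cdot 2^{j-3}$ for some absolute constant $c$.

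Next I would turn this total-height bound into a partition. Sort the polygons of $\P_j\cap\Pm\cap\OPT$ in an arbitrary order and greedily pack them into buckets $Q_1,Q_2,\dots$, each bucket being filled until adding the next polygon would exceed height $2^{j-3}$; then start a new bucket. As in the proof of Lemma~\ref{lem:large-subset}, each bucket $Q_t$ except possibly the last satisfies $h(Q_t) > 2^{j-3} - 2^{j-3} = 0$ — more usefully, each non-final bucket has height strictly greater than $2^{j-3} - \max_i h_i \ge 2^{j-3} - 2^{j-4} = 2^{j-4}$ (using $h_i\le 2^{j-4}$ for medium polygons). Hence the number of buckets is at most $\frac{h(\P_j\cap\Pm\cap\OPT)}{2^{j-4}} + 1 \le \frac{c\,2^{j-3}}{2^{j-4}} + 1 = 2c+1 = O(1)$. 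Each bucket has height at most $2^{j-3}$ by construction, so each fits into $R_j$ (or $R_j'$) under the stacking strategy.

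Finally, among these $O(1)$ buckets, pick the two with the largest weight and call them $\P^*_{j,1}$ and $\P^*_{j,2}$. Since the buckets partition $\P_j\cap\Pm\cap\OPT$, their combined weight is at least a $\frac{2}{O(1)}=\Omega(1)$ fraction of $w(\P_j\cap\Pm\cap\OPT)$, which gives $w(\P_j\cap\Pm\cap\OPT)\le O(w(\P^*_{j,1}\cup\P^*_{j,2}))$. They are disjoint because distinct buckets are disjoint, and each has height at most $2^{j-3}$, as required. The main obstacle is purely bookkeeping: making sure the constants line up, i.e.\ that the ratio $(\sqrt2 N - 2^{j-1})/(\sqrt2 N - 2^j)$ is genuinely bounded (which needs $j\le j^*$, so one must separately note that for $j$ with $\sqrt2 N - 2^{j-1}\le 0$ the set $\P_j$ is empty or trivial), and that the "$-\max_i h_i$" slack in the bucketing is controlled by the medium-polygon inequality $h_i\le 2^{j-4}$ rather than merely $h_i\le 2^{j-3}$; both are straightforward given the definitions.
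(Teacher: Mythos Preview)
Your approach is essentially the same as the paper's: greedily bucket $\P_j\cap\Pm\cap\OPT$ into groups of total height at most $2^{j-3}$, use Lemma~\ref{lem:hexagon-bound} together with $\area(P_i)\ge\tfrac12\ell_i h_i$ to bound the number of buckets by $O(1)$, and return the two heaviest buckets.

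There is one arithmetic slip you should fix. From $P_i\in\P_j$ you have $\ell_i\ge\sqrt2 N-2^{j}$, hence $\sqrt2 N-\ell_i\le 2^{j}$ (not $2^{j-1}$), so the medium condition only gives $h_i\le 2^{j}/8=2^{j-3}$, not $2^{j-4}$. With $h_i\le 2^{j-3}$ your ``each non-final bucket has height $>2^{j-3}-\max_i h_i\ge 2^{j-4}$'' degenerates to ``$>0$'', which is useless. The repair is immediate: for any non-final bucket $\C_\ell$, by maximality $h(\C_\ell)+h(\text{first polygon of }\C_{\ell+1})>2^{j-3}$, so every \emph{pair} of consecutive buckets has total height exceeding $2^{j-3}$; combined with your (correct) bound $h(\P_j\cap\Pm\cap\OPT)\le c\cdot 2^{j-3}$ this still yields at most $2c+1=O(1)$ buckets. (The paper's own proof contains an analogous slip in the same place, so you are in good company.)
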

\begin{proof}
	We begin by partitioning $\P_{j}\cap\Pm\cap\OPT=\{P'_{1},\dots,P'_{m}\}$
	into groups such that each group fits into $R_{j}\cup R_{j}'$. To
	do so, define a sequence $s_{j}$ such that: 
	\begin{align*}
	s_{0} & =0,\\
	s_{\ell} & =\max\left\{ k\in\{s_{\ell-1}+1,\dots,m\}:\sum_{i=s_{\ell-1}+1}^{k}h(P'_{i})\leq2^{\ell-3}\right\} .
	\end{align*}
	Consider now the partition of $\P_{j}\cap\Pm\cap\OPT$ into $p$ parts
	of the form $\mathcal{C}_{\ell}=\{P'_{s_{\ell}}+1,\dots,P'_{s_{\ell+1}}\}$.
	Recall that $\sqrt{2}N-2^{j}\leq\ell_{i}$. Since $h_{i}\leq  \frac{h'_{i}}{8}\leq \frac{1}{8}2^{j-3}$,
	for $1\leq\ell<p$ we have $\sum_{P_{i}\in\C_{j}}h_{i}\geq(1-\frac{1}{8})2^{j-3}$.
	Therefore:
	
	\[
	\sum_{P_{i}\in\C_{j}}\area(P_{i})\ge\frac{1}{2}\sum_{P_{i}\in\C_{j}}\area(B_{i})=\frac{1}{2}\sum_{P_{i}\in\C_{j}}h_{i}\ell_{i}\geq\frac{1}{2}(1-\frac{1}{8})2^{j-3}(\sqrt{2}N-2^{j}).
	\]
	More so, as the polygons are not in $\Pe$, we get that $\sqrt{2}N-2^{j}>N\geq N(1-\frac{2^{j-1}}{\sqrt{2}N})=\frac{1}{\sqrt{2}}\left(\sqrt{2}N-2^{j-1}\right)$.
	Using this: 
	\[
	\sum_{P_{i}\in\C_{j}}\area(P_{i})\geq\frac{1}{2}\left(1-\frac{1}{8}\right)2^{j-3}(\sqrt{2}N-2^{j})\geq\frac{1}{2\sqrt{2}}\left(1-\frac{1}{8}\right)2^{j-3}(\sqrt{2}N-2^{j-1})=\frac{(1-\frac{1}{8})}{2\sqrt{2}}\area(R_{j}\cup R'_{j}).
	\]
	By Lemma \ref{lem:hexagon-bound} we get that:
	
	\[
	2^{5}(1+\frac{1}{8\sqrt 2})\area(R_{j}\cup R'_{j})\geq\area(\P_{j}\cap\Pm\cap OPT)=\sum_{\ell=1}^{p}\sum_{P_{i}\in\C_{\ell}}\area(P_{i})\geq p\frac{(1-\frac{1}{8})}{2\sqrt{2}}\area(R_{j}\cup R'_{j}),
	\]
	concluding that $p\leq 7\cdot2^{6} $.
	Call $\C_{1}^{*}$ and $\C_{2}^{*}$ the most profitable parts of
	the partition. Using our bound on $p$ we get:
	
	\[
	w(\P_{j}\cap\Pm\cap OPT)\leq\frac{p}{2}w(\C_{1}^{*}\cup C_{2}^{*})\leq7\cdot2^5  w(\C_{1}^{*}\cup C_{2}^{*}).\qedhere
	\]
\end{proof}
By combining Lemmas~\ref{lem:pack-containers}, \ref{lem:compute-sets}
and \ref{lem:profitable-sets} we obtain the proof of Lemma~\ref{lem:medium-polygons}.

\subsection{\label{subsec:Hard-polygons}Hard polygons}

We first show that for each class $\P_{j}$ there are at most a constant
number of polygons from $\P_{j}\cap\Ph$ in $\OPT$, and that for only $O(\log N)$ classes
$\P_{j}$ it holds that $\P_{j}\cap\Ph\ne\emptyset$. 
\begin{lem}
\label{lem:Pj-properties} For each $j\in\Z$ it holds that 
 $|\P_{j}\cap\Ph\cap\OPT|\leq O(1)$.
Also, if $\P_{j}\cap\Ph\ne\emptyset$ then $j\in\{j_{\min},...,j_{\max}\}$
with $j_{\min}:=-\left\lceil \log N\right\rceil $ and $j_{\max}:=1+\left\lceil \log((\sqrt{2}-1)N)\right\rceil $.
\end{lem}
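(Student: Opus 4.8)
The plan is to prove the three assertions separately, after the harmless preprocessing step of discarding any $P_i\in\P$ that does not fit into $K$ at all; in particular then $\ell_i\le\sqrt2N$ for every remaining polygon.

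\emph{The range of $j$.} If $P_i\in\P_j\cap\Ph$, then $P_i$ is not easy, and since $h_i\le\ell_i$ this forces $\ell_i>N$; combined with $\ell_i<\sqrt2N-2^{j-1}$ (membership in $\P_j$) this gives $2^{j-1}<(\sqrt2-1)N$, i.e.\ $j\le j_{\max}$. For the lower bound I would exploit integrality of the input: $\ell_i$ is the distance between two integer points, so $\ell_i^2\in\Z$; since $\ell_i\le\sqrt2N$ while $\ell_i=\sqrt2N$ is impossible for $P_i\in\P_j$ (it would give $h'_i=0\notin(2^{j-1},2^j]$, and anyway $P_i\in\P_{-\infty}$), we get $\ell_i^2\le2N^2-1$, hence $h'_i=\sqrt2N-\ell_i\ge\sqrt2N-\sqrt{2N^2-1}=(\sqrt2N+\sqrt{2N^2-1})^{-1}\ge\tfrac{1}{2\sqrt2N}$; as $P_i\in\P_j$ forces $h'_i\le2^j$, this yields $2^j\ge\tfrac{1}{2\sqrt2N}$, and hence $j\ge j_{\min}$ after an elementary estimate of the ceiling (the exact constant here is not important, only that $\P_j\cap\Ph=\emptyset$ for all but $O(\log N)$ values of $j$).

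\emph{The bound $|\P_j\cap\Ph\cap\OPT|=O(1)$.} Fix $j$ and a small constant $\alpha>0$. If $2^j\ge\alpha N$, then every $P_i\in\P_j\cap\Ph$ has $\ell_i>N$ and $h_i>h'_i/8>2^{j-1}/8\ge\alpha N/16$, so by Lemma~\ref{lem:area-bound} $\area(P_i)\ge\tfrac12\ell_ih_i\ge\tfrac{\alpha}{32}N^2$; since the polygons of $\OPT$ are disjoint and contained in $K$, at most $32/\alpha=O(1)$ of them lie in $\P_j$. If $2^j<\alpha N$, then $h'_i=\sqrt2N-\ell_i\in(2^{j-1},2^j]$, so $\ell_i>(\sqrt2-\alpha)N>\tfrac{\sqrt5}{2}N$ for $\alpha$ small, and therefore $r_i:=N-\sqrt{\ell_i^2-N^2}<N/2$, i.e.\ $H_1(r_i)\cup H_2(r_i)\ne K$, so Claim~\ref{lem:claims} applies to the placement of $P_i$ in $\OPT$: its diameter segment $D_i$ lies in $H_1(r_i)$ or in $H_2(r_i)$, and in particular the two endpoints $v_i^1,v_i^2$ of $D_i$ lie within distance $r_i$ of two opposite corners of $K$ whose midpoint is the center $z_0=(\tfrac N2,\tfrac N2)$. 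An elementary estimate gives $r_i\le 2\sqrt2\,h'_i=O(2^j)$, so the midpoint $m_i$ of $D_i$ satisfies $|m_i-z_0|\le r_i=O(2^j)$. Moreover, since $P_i$ has $y$-extent $h_i$ while $D_i\subseteq P_i$ is horizontal, some extreme point $u_i$ of $P_i$ is at perpendicular distance $\ge h_i/2$ from the line through $D_i$, so $P_i\supseteq T_i:=\conv\{v_i^1,v_i^2,u_i\}$, a triangle with base $D_i$ of length $\ell_i=\Theta(N)$ and height $\ge h_i/2>h'_i/16>2^{j-5}$. Let $B_0$ be the disk of radius $c\cdot2^j$ around $z_0$ for a large enough constant $c$; since $m_i\in B_0$ and $\ell_i\gg 2^j$, $D_i$ meets $B_0$ in a chord of length $\Omega(2^j)$ all of whose points are within $O(2^j)$ of $m_i$ along $D_i$, along which $T_i$ has perpendicular width at least a constant fraction of $h_i/2$, hence $\Omega(2^j)=\Omega(h_i)$; an $\Omega(2^j)$-wide (bounded-multiple-of-$2^j$) slab of $T_i$ around this chord therefore stays inside $B_0$, so each $T_i$ fills area $\Omega(2^{2j})$ of $B_0$. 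As the $T_i$'s for $P_i\in\P_j\cap\Ph\cap\OPT$ are pairwise disjoint (being inside disjoint polygons of $\OPT$) and $\area(B_0)=O(2^{2j})$, only $O(1)$ of them exist.

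The main obstacle is the case $2^j<\alpha N$, and within it the point that the deviation of $D_i$ from the diagonal of $K$ has order $\Theta(r_i)=\Theta(2^j)$, which is of the same order as — indeed somewhat larger than — the perpendicular extent $\Theta(h_i)$ of $P_i$ near $m_i$; so one cannot confine all of $T_i$ to an arbitrarily thin tube around the diagonal, which is why an area bound in such a tube would fail. The resolution is exactly to take the radius of $B_0$ a large enough constant multiple of $2^j$: it is then still small relative to $K$ (as $2^j<\alpha N$), but contains $m_i$, a $\Theta(2^j)$-long sub-chord of $D_i$, and the $\Omega(2^j)$-wide portion of $T_i$ around it, which is all the packing argument needs. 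The remaining steps are routine computations with explicit constants (choice of $\alpha$ and $c$, the chord-length and width estimates).
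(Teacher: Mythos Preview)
Your argument is correct. The range-of-$j$ part is essentially the same as the paper's (both use $\ell_i>N$ for the upper cutoff and integrality of $\ell_i^2$ for the lower cutoff, and both are content with an $O(\log N)$ window rather than the exact stated $j_{\min}$).

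For the $O(1)$ bound your route is genuinely different from the paper's. The paper splits $\P_j\cap\Ph\cap\OPT$ into the polygons $\F$ contained in the hexagon $H(C_j)$ and those in $\overline{\F}$ that spill outside it; it bounds $|\F|$ by comparing $\area(P_i)=\Omega(2^j N)$ against $\area(H(C_j))=O(2^j N)$, and then argues separately, via an intersection-of-diagonals contradiction, that each of the two triangular components of $K\setminus H(C_j)$ can meet at most one polygon, so $|\overline{\F}|\le 2$. You instead dichotomize on the scale of $2^j$: for $2^j\ge\alpha N$ a direct global area bound suffices, while for $2^j<\alpha N$ you localize to a disk $B_0$ of radius $\Theta(2^j)$ around the knapsack's center and show each polygon occupies $\Omega(2^{2j})$ area there. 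Your method is a single uniform packing argument with no residual combinatorial case like $\overline{\F}$, at the cost of a slightly more delicate estimate (controlling the chord and the perpendicular width of the inscribed triangle near the midpoint). The paper's method, by contrast, recycles the hexagon machinery already developed for medium polygons in Lemma~\ref{lem:hexagon-bound}, so it is more economical within the paper's overall structure.
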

\begin{proof}
Recall that $\P_{j}:=\{i\in\P\setminus\Pe|\ell_{i}\in[\sqrt{2}N-2^{j},\sqrt{2}N-2^{j-1})\}$.
	Hence, if $j>1+\log((\sqrt{2}-1)N)$ then $\sqrt{2}N-2^{j-1}<N$ and
	therefore $\P_{j}=\emptyset$ since all polygons $P_{i}$ with $\ell_{i}\in[\sqrt{2}N-2^{j},\sqrt{2}N-2^{j-1})$
	satisfy that $\ell_{i}<N$ and hence $P_{i}\notin\P\setminus\Pe$.
	
	Note that $\ell_{i}^{2}$ must be a positive integer. Then it is sufficient
	to prove that if $j\le\frac{3}{2}-\log N$, then $[(\sqrt{2}N-2^{j})^{2},(\sqrt{2}N-2^{j-1})^{2})\cap\Z$
	is empty. We prove the stronger claim that $[(\sqrt{2}N-2^{j})^{2},2N^{2})\cap\Z=\emptyset$.
	Note that: 
	\[
	(\sqrt{2}N)^{2}-(\sqrt{2}N-2^{j})^{2}=2^{j}(2\sqrt{2}N-2^{j})<2^{j+\frac{3}{2}}N\leq1
	\]
	From which we conclude.
	
	We now prove the second part of the lemma by breaking the polygons
	into two sets. Define $\F\subseteq\P_{j}\cap\Ph\cap\OPT$ as the polygons
	that fit completely into $H(C_{j})$ as in Claim \ref{lem:claims}
	and $\overline{\F}$ as $(\P_{j}\cap\Ph\cap\OPT)\setminus\F$. We
	begin with the polygons in $\F$. Recall that by Lemma $\ref{lem:area-bound}$
	$\area(P_{i})\geq\frac{1}{2}\ell_{i}h_{i}$ using additionally that
	$P_{i}\in\P_{j}\cap\Ph$ we get: 
	\[
	\area(P_{i})\geq\frac{1}{2}\ell_{i}h_{i}>\frac{1}{16}(\sqrt{2}N-2^{j})(\sqrt{2}N-\ell_{i})\geq\frac{1}{16}(\sqrt{2}N-2^{j})2^{j-1}.
	\]
	Therefore: 
	\[
	\area(\F)\geq\frac{1}{8}|\F|(\sqrt{2}N-2^{j})2^{j-2}
	\]
	Additionally, by Lemma \ref{lem:hexagon-bound} we have that $\area(\F)\leq2^{j+1}(\sqrt{2}N-2^{j-1})$.
	Combining these two, and recalling that by the first part of the Lemma
	we may assume that $j\leq\log\left(\sqrt{2}-1)N\right)$ we get: 
	\[
	|\F|\leq 2^6\left(\frac{\sqrt{2}N-2^{j-1}}{\sqrt{2}N-2^{j}}\right)\leq 2^6\left(\frac{\sqrt{2}N}{\sqrt{2}N-2^{j}}\right)\leq 2^6 \left(\frac{\sqrt{2}N}{\sqrt{2}N-(\sqrt{2}-1)N}\right)\leq 2^{6+\frac{1}{2}}
	\]
	We now deal with $\overline{\F}$ note that $K\setminus H(C_{j})$
	has two connected components, we show that in each one there is at
	most one polygon intersecting it. These two are exactly the triangle
	$T_{1}$ with vertices $(0,N)$, $(N-C,N)$, $(0,C)$ and $T_{2}$
	with vertices $(N,0)$, $(N,N-C)$ and $(C,0)$.
	
	Suppose for the sake of contradiction that there exist points $p_{1},p_{2}\in T_{1}$
	belonging to some polygons $P_{1}$ and $P_{2}$. Let $D^{1}=u^{1}v^{1}$
	and $D^{2}=u^{2}v^{2}$ be the diagonals of $P_{1}$ and $P_{2}$
	respectively. Let $r_{1}$ and $r_{2}$ be as in Claim \ref{lem:claims}
	and define $s=\max(r_{1},r_{2})$. By the same Claim we get $\{u^{1},v^{1},u^{2},v^{2}\}\subseteq T(s)$.
	We now consider the rays for $i\in[2]$, $R_{i}\doteq\{p_{i}+\lambda\binom{1}{-1}:\lambda\geq0\}$.
	Note that since $s\leq C_{j}$, then $R_{1}$ and $R_{2}$ don't
	intersect $T(s)$, but by convexity of $P_{1}$ and $P_{2}$ they
	must intersect $D^{1}$ and $D^{2}$. Note that $R_{1}$ intersects
	first $D^{1}$ and then $D^{2}$ (otherwise $P_{1}$ and $P_{2}$
	would be intersecting) and $R_{2}$ intersects first $D^{2}$ and
	then $D^{1}$. Therefore $R_{1}$ and $R_{2}$ intersect $D^{1}$
	and $D^{2}$ in different order, which means that $D^{1}$ and $D^{2}$
	must intersect, a contradiction. 
\end{proof}
We describe now a quasi-polynomial time algorithm for hard polygons, i.e., we want
to prove  Lemma~\ref{lem:hard-polygons}.
Lemma~\ref{lem:Pj-properties}
implies that $|\Ph\cap\OPT|\le O(\log N)$. Therefore, we
can enumerate all possibilities for $\Ph\cap\OPT$ in time $n^{O(\log N)}$.
For each for each enumerated set $\Ph'\subseteq\Ph$ we need to check 
whether it fits into $K$. We cannot try all
possibilities for placing $\Ph'$ into $K$ since we are allowed tof
rotate the polygons in $\Ph'$ by arbitrary angles. To this end,
we show that there is a subset of $\Ph\cap\OPT$ of large weight which
contains only a single polygon or it does not use the complete space
of the knapsack but leaves some empty space. We use this empty space
to move the polygons slightly and rotate them such that each of them is
placed in one out of $(nN)^{O(1)}$ different positions that we can
compute beforehand. Hence, we can guess all positions of these polygons
in time $(nN)^{O(\log N)}$. We define that a \emph{placement
}of a polygon $P_{i}\in\P$ inside $K$ is a polygon $\tilde{P}_{i}$
such that $d+\rot_{\alpha}(P_{i})=\tilde{P}_{i}\subseteq K$ where
$d\in\R^{2}$ and $\rot_{\alpha}(P_{i})$ is the polygon that we obtain
when we rotate $P_{i}$ by an angle $\alpha$ clockwise around its
first vertex.
\begin{lem}
\label{lem:placement-hard-polygons}For each polygon $P_{i}\in\Ph$
we can compute a set of $(nN)^{O(1)}$ possible placements $\L_{i}$
in time $(nN)^{O(1)}$ such that there exists a set $\Ph'\subseteq\Ph\cap\OPT$
with $w(\Ph\cap\OPT)\le O(w(\Ph'))$ which can be packed into $K$
such that each polygon $P_{i}$ is packed according to a placement
in $\L_{i}$.
\end{lem}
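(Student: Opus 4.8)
The plan is to reduce the continuum of possible placements of each hard polygon to a polynomial-size set by exploiting the fact that hard polygons in $\OPT$ are ``pinned'' near the boundary of $K$. First I would invoke Lemma~\ref{lem:Pj-properties}, so that $\Ph\cap\OPT$ contains $O(\log N)$ polygons, at most $O(1)$ per class $\P_j$. For a single polygon in a class there is nothing to lose, so the interesting case is several polygons. Recall from Claim~\ref{lem:claims} that for each $P_i\in\P_j\cap\Ph$ its longest diagonal $D$ lies in $H_1(r_i)$ or $H_2(r_i)$, i.e.\ its two diagonal-endpoints $v^1,v^2$ lie in opposite corner triangles $T_1(r_i)$ (or $T_2(r_i)$) of $K$. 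Thus, up to reflecting the knapsack, we may assume every hard polygon of $\OPT$ has its long diagonal running between the bottom-left and top-right corners, with both endpoints confined to small corner triangles whose legs have length $r_i = N-\sqrt{\ell_i^2-N^2}\le N-\sqrt{(\sqrt2N-2^j)^2-N^2}$, which is small when $\ell_i$ is close to $\sqrt2 N$.

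Next I would argue that, losing only a constant factor in weight, we may assume the placement leaves ``slack''. Concretely, take the $O(1)$ hard polygons of class $\P_j$ in $\OPT$; drop a constant fraction of them (keeping the most profitable $O(1)$, which is a constant fraction of the weight) so that in each corner triangle there is room to shift. More importantly, scale the whole configuration of hard polygons toward the center of $K$ by a factor $1-\eta$ for a suitable small constant $\eta$: since each hard polygon already fits in $K$, the scaled copies fit in a concentric square of side $(1-\eta)N$, leaving an annular strip of width $\Theta(\eta N)$ free. Re-examining the geometry, scaling brings the diagonal endpoints of each $P_i$ from the tiny corner triangles $T_\cdot(r_i)$ into the interior by distance $\Theta(\eta N)$, which is $\gg r_i$ when $r_i = o(N)$; the degenerate situation $r_i = \Theta(N)$ corresponds to only $O(1)$ classes and $O(1)$ polygons total, which we can afford to guess by brute force on a fine grid. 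So after scaling each hard polygon has both diagonal endpoints in the interior of $K$ with a margin, and the polygon itself has margin $\Theta(\eta N)$ to the boundary of $K$ on all sides.

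Now I would discretize. Once a polygon has a $\Theta(\eta N)$ margin to $\partial K$, I can round the position of its first vertex to the nearest point of a grid of spacing $1/\mathrm{poly}(nN)$ (this grid has $(nN)^{O(1)}$ points inside $K$), absorbing the $o(1)$ shift into the margin. For the rotation angle: the polygon has $k_i=O(n)$ vertices, and what matters for fit/non-overlap is, coarsely, when each vertex crosses a grid line or the boundary, or when two vertices of two polygons become equidistant in a coordinate — there are $(nN)^{O(1)}$ ``combinatorially distinct'' angular intervals, and within each it suffices to pick one representative angle; alternatively, round $\alpha$ to a grid of $(nN)^{O(1)}$ equally-spaced angles and note that rotating $P_i$ (of diameter $\le\sqrt2 N$) by $O(1/\mathrm{poly}(nN))$ moves every point by $o(1)$, again absorbed by the margin. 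This defines $\L_i$ as all grid-position $\times$ grid-angle combinations that place $P_i$ inside $K$: $|\L_i| = (nN)^{O(1)}$, computable in $(nN)^{O(1)}$ time. The surviving subset $\Ph'\subseteq\Ph\cap\OPT$ (after dropping a constant fraction and scaling) then has $w(\Ph\cap\OPT)\le O(w(\Ph'))$ and, after rounding each polygon's placement to $\L_i$, still packs into $K$ because every rounding perturbation is $o(1)$ while the available margin is $\Theta(\eta N)\ge 1$.

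The main obstacle is the rotation discretization: translations are easy to handle with a grid once there is positional slack, but showing that rounding the \emph{angle} to a polynomial-size set is harmless requires carefully bounding how far a point of $P_i$ moves under a small rotation (it is $\le \diam(P_i)\cdot|\Delta\alpha| \le \sqrt2 N \cdot |\Delta\alpha|$, so a grid of $\mathrm{poly}(nN)$ angles gives displacement $o(1)$) and then checking that this $o(1)$ displacement, combined with the $o(1)$ translation rounding, is dominated by the $\Theta(\eta N)$ slack we created by scaling. A secondary subtlety is making the scaling-toward-center step rigorous — one must verify that shrinking the union of the placed hard polygons by $1-\eta$ about the center of $K$ keeps them inside $K$ and pairwise non-overlapping (both are immediate since scaling about a common point is an affine contraction) and genuinely creates a boundary margin of width $\Theta(\eta N)$ for each polygon, using that each $P_i$ has diameter $\le\sqrt2N$ so lies within distance $O(N)$ of the center.
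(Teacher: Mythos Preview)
Your proposal has a genuine gap in the scaling step. A placement of $P_i$ is by definition of the form $d+\rot_\alpha(P_i)$, a rigid motion applied to $P_i$. Scaling the configuration by $1-\eta$ about the center of $K$ is not a rigid motion: it replaces each placed copy $\tilde P_i$ by a polygon congruent to $(1-\eta)P_i$, not to $P_i$, so the scaled configuration does not consist of placements of the original polygons and there is nothing to round to an element of $\L_i$. Even if you try to patch this by putting each original $P_i$ ``at'' the scaled location, a second problem appears: scaling about a common center creates margin only to $\partial K$, not between the polygons. If $\tilde P_i$ and $\tilde P_{i'}$ are tangent in $\OPT$, their scaled copies are still tangent (all pairwise distances scale by the same factor), so replacing them by the larger originals---or applying any nonzero rounding perturbation---makes them overlap. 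Dropping a constant fraction per class does not help, since two tangent polygons can come from different classes $\P_j$.

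The paper obtains the needed inter-polygon slack by a different mechanism. It first pays a constant factor to restrict to hard polygons that avoid a thin strip of width $\tfrac{1}{nN}$ along one side of $K$; then it orders the remaining polygons by where they cross the horizontal midline and translates the $i$-th one toward that free strip by an amount depending on $i$. Because consecutive polygons shift by \emph{different} amounts, a separating line between them becomes a separating strip of positive width. This inter-polygon slack, together with the boundary slack from the freed strip, is what permits rounding each angle to a multiple of $\Theta(\tfrac{1}{nN^2})$ and each translation to a grid of spacing $\Theta(\tfrac{1}{nN})$ without creating overlaps. The missing idea in your argument is precisely such a graded-translation device that creates slack between polygons while keeping them congruent to the originals; uniform scaling cannot do this.
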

\begin{proof}
Let $\epsilon>0$ to be defined later. First, we observe that there
can be only $O_{\epsilon}(1)$ classes $\P_{j}$ containing polygons
whose respective values $\ell_{i}$ are not larger than $(\sqrt{2}-\epsilon)N$;
recall that $\sqrt{2}N$ is the length of the diagonal of $K$.
\begin{proposition}
For each $\epsilon>0$ there is a constant $k_{\epsilon}\in\N$ such
that each polygon $P_{i}\in\bigcup_{j=j_{\min}}^{j_{\max}-k_{\epsilon}}\P_{j}$
satisfies that $\ell_{i}\ge(\sqrt{2}-\epsilon)N$.
\end{proposition}

Due to Lemma~\ref{lem:Pj-properties} there can be only $O_{\epsilon}(1)$
hard polygons in $\OPT\cap\bigcup_{j=j_{\max}-k_{\epsilon}+1}^{j_{\max}}\P_{j}$.
Hence, it suffices to prove the claim for the hard polygons in $\OPT':=\OPT\cap\bigcup_{j=j_{\min}}^{j_{\max}-k_{\epsilon}}\P_{j}$.
Since for each polygon $P_{i}\in\OPT'$ it holds that $\ell_{i}\ge(\sqrt{2}-\epsilon)N$
we have that in any placement of $P_{i}$ inside $K$ the line segment
defining $\ell_{i}$ (i.e., the line segment connecting the two vertices
of $P_{i}$ with maximum distance) has to have essentially a 45\degree
angle with the edges of the knapsack.

Let $L$ denote the line segment connecting $p_{L}:=\binom{0}{N/2}$,
and $p_{R}:=\binom{N}{N/2}$. Let $L_{1}$ denote the line segment
connecting $p_{M}:=\binom{N/2}{N/2}$ and $p_{R}$, and let $L_{2}$
denote the line segment connecting $p_{L}$ and $p_{M}$ (see Figure~\ref{fig:corner-face-DP-notation}).
By losing a factor 3 we can assume that each polygon in $\OPT'$ intersects
$L_{1}$ but not $L_{2}$. We group the polygons in $\OPT'$ into
three groups $\OPT^{(1)},\OPT^{(2)},\OPT^{(3)}$. We define that $\OPT^{(1)}$
contains the polygons in $\OPT'$ that have empty intersection with
$[0,\frac{1}{nN}]\times[0,N]$. We define that $\OPT^{(2)}$ contains
the polygons in $\OPT'\setminus\OPT^{(1)}$ that have empty intersection
with $[0,N]\times[0,\frac{1}{nN}]$. Finally, we define $\OPT^{(3)}:=\OPT'\setminus(\OPT^{(1)}\cup\OPT^{(2)})$.

Consider the group $\OPT^{(1)}$. If $\epsilon$ is sufficiently small
then every polygon in $\OPT'$ intersects $L$. We sort the polygons
in $\OPT'$ in the order in which they intersect $L$ from left to
right, let $\OPT'=\{Q_{1},\dots,Q_{k}\}$ denote this ordering. For
each $i\in\{1,...,k\}$ we now translate each polygon $Q_{i}$ to
the left by $\frac{n-i+1}{n^{2}N}$ units. We argue that between any
two consecutive polygons $Q_{i},Q_{i+1}$ there is some empty space
that intuitively we can use as slack. Since $Q_{i},Q_{i+1}$ are convex,
for their original placement there is a line $L'$ that separates
them. If $\epsilon$ is sufficiently small, then the line segments
defining $\ell_{i}$ and $\ell'_{i}$ have essentially a 45\degree
angle with the edges of the knapsack. Since $\ell_{i}\ge(\sqrt{2}-\epsilon)N$
and $\ell'_{i}\ge(\sqrt{2}-\epsilon)N$ this implies that also $L'$
also essentially forms a 45\degree angle with the edges of the knapsack.
After translating $Q_{i}$ and $Q_{i+1}$ we can draw not only line
separating them (like $L'$) but instead a strip separating them,
defined via two lines $L'',L'''$ whose angle is identical to the
angle of $L'$, and such that the distance between $L''$ and $L'''$
is at least $\Omega(\frac{1}{nN})$. Next, we rotate $Q_{i}$ around
one of its vertices  until the angle of the line segment defining
$\ell_{i}$ is a multiple of $\eta\frac{1}{nN^{2}}$ for some small
constant $\eta>0$ to be defined later, or one of the vertices of
$Q_{i}$ touches an edge of the knapsack. In the latter case, let
$v$ be a vertex of $Q_{i}$ that now touches an edge of the knapsack.
We rotate $Q_{i}$ around $v$ until the angle of the line segment
defining $\ell_{i}$ is a multiple of $\eta\frac{1}{nN^{2}}$ or another
vertex $v'$ of $Q_{i}$ touches an edge of the knapsack. In the latter
case we observe that two vertices of $Q_{i}$ touch an edge of the
knapsack. Since $Q_{i}$ has at most $O(N)$ vertices, there are at
most $O(N^{2})$ such orientations for $Q_{i}$. Otherwise, there
are only $nN^{2}/\eta$ possibilities for the angle of the line segment
defining $\ell_{i}$ which gives at most $O(nN^{2}/\eta)$ possible
orientations for $Q_{i}$ in total. Finally, we move $Q_{i}$ to the
closest placement with the property that the first vertex $v$ of
$Q_{i}$ is placed on a position in which both coordinates are multiples
of $\eta\frac{1}{nN}$. One can show that due to the empty space between
any two consecutive polygons $Q_{i},Q_{i+1}$ no two polygons overlap
after the movement, if $\eta$ is chosen sufficiently small. This
yields a placement for the polygons in $\OPT^{(1)}$ in which each
polygon $Q_{i}$ is placed according to one out of $(nN)^{O(1)}$
positions.

We use a symmetric argumentation for the polygons in $\OPT^{(2)}$.
Finally, we want to argue that $|\OPT^{(3)}|=O(1)$. Observe that
each polygon $P_{i}\in\OPT^{(3)}$ intersects $L_{1}$, the stripe
$[0,\frac{1}{nN}]\times[0,N]$, the stripe $[0,N]\times[0,\frac{1}{nN}]$,
but has empty intersection with $L_{2}$. Hence, the line segment
defining $\ell_{i}$ must have length at least $\sqrt{2}N-\frac{2\sqrt{2}}{nN}$.
Therefore, $P_{i}\in\P_{-\infty}$ or $P_{i}\in\P_{j}$ with $2^{j}\le\frac{2\sqrt{2}}{nN}$
and then $j\le\log(2\sqrt{2})-\log(nN)$. First assume that $P_{i}\in\P_{j}$.
If $n$ is a sufficiently large constant we have that $j<j_{\min}$
which contradicts Lemma~\ref{lem:Pj-properties}. On the other hand,
if $n=O(1)$ then the claim is trivially true since then $|\OPT|=O(1)$.
Otherwise, assume that $P_{i}\in\P_{-\infty}$. By Lemma~\ref{lem:Pj-properties}
there can be at most $O(1)$ polygons from $\P_{-\infty}$ in $\OPT$.
\end{proof}
This yields the proof of Lemma~\ref{lem:hard-polygons}.

\subsection{\label{subsec:Hard-triangles}Hard triangles}

In this section we present a $O(1)$-approximation
algorithm in \emph{polynomial }time for hard polygons assuming that they are
all triangles, i.e., we prove Lemma~\ref{lem:hard-triangles}. Slightly abusing notation, denote by $\OPT$ the set
$\Ph'$ obtained by applying Lemma~\ref{lem:placement-hard-polygons}.
We distinguish the triangles in $\OPT$ into two types: \emph{edge-facing
}triangles
 and \emph{corner-facing }triangles. Let $P_{i}\in\OPT\cap\Ph$, let 
 $e_{1},e_{2}$ denote the two longest edges of $P_{i}$, and let $v_{i}^{*}$
the vertex of $P_{i}$ adjacent to $e_{1}$ and $e_{2}$. Let $R_{i}^{(1)}$
and $R_{i}^{(2)}$ be the two rays that originate at $v_{i}^{*}$
and that contain $e_{1}$ and $e_{2}$, respectively, in the placement
of $P_{i}$ in $\OPT$. We have that $R_{i}^{(1)}\setminus\{v_{i}^{*}\}$
and $R_{i}^{(2)}\setminus\{v_{i}^{*}\}$ intersect at most one edge
of the knapsack each. If $R_{i}^{(1)}\setminus\{v_{i}^{*}\}$ and
$R_{i}^{(2)}\setminus\{v_{i}^{*}\}$ intersect the same edge of the
knapsack then we say that $P_{i}$ is \emph{edge-facing, }if one of
them intersects a horizontal edge and the other one intersects a vertical
edge we say that $P_{i}$ is \emph{corner-facing. }The next lemma
shows that there can be only $O(1)$ triangles in $\OPT\cap\Ph$ that
are neither edge- nor corner-facing, and therefore we compute a $O(1)$-approximation
with respect to the total profit of such triangles by simply selecting
the input triangle with maximum weight. 
\begin{lem}
\label{lem:all-facing} There can be at most $O(1)$ triangles in
$\OPT\cap\Ph$ that are neither edge-facing nor corner-facing. 
\end{lem}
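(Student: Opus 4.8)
The plan is to show that any triangle $P_i \in \OPT \cap \Ph$ which is neither edge-facing nor corner-facing must be very wide, i.e., $\ell_i$ is extremely close to $\sqrt 2 N$, and then invoke Lemma~\ref{lem:Pj-properties} to conclude that there are only $O(1)$ such triangles. The starting observation is the dichotomy already set up before the statement: the two rays $R_i^{(1)}, R_i^{(2)}$ emanating from $v_i^*$ along the two longest edges each hit at most one edge of the knapsack. If $P_i$ is neither edge- nor corner-facing, then either (a) one of the two rays $R_i^{(\cdot)} \setminus \{v_i^*\}$ does not intersect \emph{any} edge of $K$, or (b) the two rays intersect two \emph{parallel but distinct} edges of $K$ (the two horizontal edges, or the two vertical edges). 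I would handle these two cases separately.

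For case (b): if, say, $R_i^{(1)}$ hits the bottom edge and $R_i^{(2)}$ hits the top edge of $K$, then the segment of the line through $e_1$ inside $K$ together with the segment through $e_2$ inside $K$ connect the bottom edge to the top edge. Since $v_i^*$ lies between these crossings and $e_1, e_2$ are the two \emph{longest} edges of $P_i$, the third (shortest) edge $\overline{w_1 w_2}$ — whose endpoints $w_1 \in e_1$, $w_2 \in e_2$ are the other two vertices of $P_i$ — has length at least... here is the key point: one of the two longest edges has length $\ell_i$ (it is the diameter), and since $P_i$ is hard, $\ell_i \ge N$ (indeed $\ell_i \ge \sqrt 2 N - 2^{j_{\max}} > N$ since $P_i \notin \Pe$ forces $h_i > h'_i/8$, and hardness/the definition of $j_{\max}$ pins $\ell_i$ near $\sqrt 2 N$). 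A segment of length $\ge N$ with one endpoint $v_i^*$ whose containing ray, extended, crosses from the bottom edge to the interior, combined with the analogous fact for the other ray, forces $v_i^*$ to sit near a corner and both $w_1, w_2$ near the opposite corner region, so that $\|v_i^* - w_1\|$ and $\|v_i^* - w_2\|$ are both close to $\sqrt 2 N$; hence $\ell_i$ is within $o(1)$ (in fact within an absolute constant over $N$, or we can even get it within a small additive term) of $\sqrt 2 N$. Quantitatively I would mimic the diameter computation in Claim~\ref{lem:claims}: the vertices of $P_i$ that realize the two longest edges must lie near antipodal corners of $K$, which forces $\ell_i \ge \sqrt{2}N - c$ for a constant $c$, hence $P_i \in \P_{-\infty}$ or $P_i \in \P_j$ with $2^j \le O(1)$, i.e., $j \le O(1) - \log 1 = O(1)$; combined with $j \ge j_{\min} = -\lceil \log N\rceil$ this still allows $\Theta(\log N)$ classes, so I need the sharper bound — pushing the estimate to $\ell_i \ge \sqrt 2 N - O(1/N)$ (analogous to the $\OPT^{(3)}$ argument in the proof of Lemma~\ref{lem:placement-hard-polygons}) forces $2^j \le O(1/N)$, hence $j < j_{\min}$ for $N$ large, contradicting Lemma~\ref{lem:Pj-properties}, or $n = O(1)$ in which case $|\OPT| = O(1)$ trivially.

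For case (a): if $R_i^{(1)} \setminus \{v_i^*\}$ misses every edge of $K$, then the entire ray from $v_i^*$ in the direction of $e_1$ stays outside $K$ after leaving $v_i^*$ — but $e_1 \subseteq P_i \subseteq K$, so in fact the ray immediately exits $K$, which means $v_i^*$ lies on $\partial K$ and the direction of $e_1$ points out of $K$; then $e_1$, which has length comparable to $\ell_i \ge N$, is forced to lie along (a neighborhood of) $\partial K$, again pinning $\ell_i$ close to $\sqrt 2 N$ by the same antipodal-corner reasoning, and we finish as above. (Alternatively, case (a) simply cannot happen for a longest edge of a hard triangle once one checks the geometry, in which case only case (b) remains.)

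The main obstacle I anticipate is getting the estimate sharp enough: a crude argument only yields $\ell_i \ge \sqrt 2 N - O(1)$, which bounds the number of relevant classes $\P_j$ by $O(\log N)$ rather than $O(1)$ — not good enough. The fix, as in the $\OPT^{(3)}$ analysis inside the proof of Lemma~\ref{lem:placement-hard-polygons}, is to observe that a triangle that is neither edge- nor corner-facing must simultaneously reach near two opposite corners \emph{and} span across the knapsack in a way that, after accounting for where $v_i^*$, $w_1$, $w_2$ can lie, forces $\ell_i$ to be within a $1/\mathrm{poly}(nN)$ additive term of $\sqrt 2 N$; then $2^j \le O(1/(nN))$ gives $j < j_{\min}$ and Lemma~\ref{lem:Pj-properties} (together with the trivial case $n = O(1)$) closes the argument. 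A secondary subtlety is that $\ell_i$ may be irrational, so all "$\ell_i \ge \dots$" inequalities should be phrased as strict/non-strict bounds that survive the integrality of $\ell_i^2$ used in Lemma~\ref{lem:Pj-properties}, but this is routine given that lemma.
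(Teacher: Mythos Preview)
Your approach has a genuine gap: the central claim that a hard triangle which is neither edge- nor corner-facing must have $\ell_i$ extremely close to $\sqrt{2}N$ is false, and no sharpening will save it. Consider the triangle with vertices $v_i^{*}=(\tfrac{1}{100}N,\tfrac{1}{2}N)$, $w_1=(\tfrac{99}{100}N,0)$, $w_2=(\tfrac{99}{100}N,N)$. Its two longest edges are $\overline{v_i^{*}w_1}$ and $\overline{v_i^{*}w_2}$, each of length $\sqrt{1.2104}\,N\approx 1.10\,N$, while the third edge $\overline{w_1w_2}$ has length $N$; hence $v_i^{*}$ is indeed the vertex adjacent to the two longest edges and $\ell_i\approx 1.10\,N$. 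One checks $h_i\approx 0.89\,N>h'_i/8\approx 0.04\,N$, so the triangle is hard. The ray from $v_i^{*}$ through $w_1$ exits $K$ on the bottom edge and the ray through $w_2$ exits on the top edge---two \emph{opposite} horizontal edges---so the triangle is neither edge-facing nor corner-facing. Yet $\sqrt{2}N-\ell_i\approx 0.31\,N=\Theta(N)$, placing $P_i$ in a class $\P_j$ with $j$ near $j_{\max}$, nowhere near $j_{\min}$. Thus even your ``crude'' bound $\ell_i\ge\sqrt{2}N-O(1)$ fails, let alone the sharper $\ell_i\ge\sqrt{2}N-O(1/(nN))$ you need; the analogy with the $\OPT^{(3)}$ analysis in Lemma~\ref{lem:placement-hard-polygons} breaks down because there the triangle is additionally forced to intersect two thin strips near adjacent edges of $K$, which is what pins it to a corner.

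The paper's argument is entirely different and uses area. If the two rays hit opposite horizontal edges, say $R_i^{(1)}$ hits the bottom edge while $v_i^{*}$ lies in the upper half of $K$, then $R_i^{(1)}$ drops at least $N/2$ vertically over at most $N$ horizontally, so it makes an angle $\beta$ with the horizontal satisfying $\tan\beta\ge 1/2$; since $R_i^{(2)}$ goes to the top edge it points upward, and the angle $\alpha$ at $v_i^{*}$ satisfies $\alpha\ge\beta>\pi/8$. The two longest edges have lengths at least $N$ and $N/2$ (the latter by the triangle inequality), so $\area(P_i)\ge\tfrac{1}{2}\cdot N\cdot\tfrac{N}{2}\cdot\sin(\pi/8)=\Omega(N^2)$. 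Since the knapsack has area $N^2$, only $O(1)$ such triangles fit.
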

\begin{proof}
Let $P_{i}\in\OPT\cap\Ph$ that is neither edge-facing nor corner-facing.
Assume w.l.o.g.~that both $R_{i}^{(1)}\setminus\{v_{i}^{*}\}$ and
$R_{i}^{(2)}\setminus\{v_{i}^{*}\}$ intersect a horizontal edge of
the knapsack. Let $e_{1},e_{2}$ denote the two longest edges of $P_{i}$.
Since $P_{i}$ is hard, we know that one of these edges is longer
than $N$ and therefore the other one is longer than $N/2$. Let $\alpha$
denote the angle between $e_{1}$ and $e_{2}$. It holds that $\alpha$
cannot be arbitrarily small since otherwise it cannot be that $R_{i}^{(1)}\setminus\{v_{i}^{*}\}$
and $R_{i}^{(2)}\setminus\{v_{i}^{*}\}$ intersect different horizontal
edges of the knapsack. Formally, assume w.l.o.g.~that $v_{i}^{*}$
lies in the upper half of the knapsack and that $R_{i}^{(1)}\setminus\{v_{i}^{*}\}$
intersects the bottom edge of the knapsack. Let $\beta$ denote the
angle between the horizontal line going through $v_{i}^{*}$ and $R_{i}^{(1)}$
(note that $\alpha\ge\beta$). Then one can show that $\tan\beta\ge1/2$
and thus $\alpha\ge\beta\ge\pi/8$. Therefore, $\area(P_{i})\ge\Omega(N^{2})$.
Hence, there can be at most $O(1)$ triangles in $P_{i}\in\OPT\cap\Ph$
that are neither edge-facing nor corner-facing.
\end{proof}
Let $p_{TL}$, $p_{TR}$, $p_{BL}$, and $p_{BR}$ denote the top
left, top right, bottom left, and bottom right corners of $K$, respectively,
and let $p_{M}:=\binom{N/2}{N/2}$, $p_{L}:=\binom{0}{N/2}$, and
$p_{R}:=\binom{N}{N/2}$, see Figure~\ref{fig:corner-face-DP-notation}. 
By losing a factor $O(1)$ we assume from now on that 
that $\OPT$ contains at most one hard triangle from each
group $\P_{j}$, using Lemma~\ref{lem:Pj-properties}.

Let $\OPTef\subseteq\OPT\cap\Ph$ denote the edge-facing hard triangles
in $\OPT$ and denote by $\OPTcf\subseteq\OPT\cap\Ph$ the corner-facing
hard triangles in $\OPT$. In the remainder of this section we
present now $O(1)$-approximation algorithms for edge-facing and for
corner-facing triangles in $\Ph$. By selecting the best solution
among the two we obtain the proof of Lemma~\ref{lem:hard-triangles}.

\subsubsection{Edge-facing triangles}

We define a special type of solutions called \emph{top-left-packings
}that our algorithm will compute. We will show later that there are
solutions of this type whose profit is at least a constant fraction
of the profit of $\OPTef$.

For each $t\in\N$ let $p_{t}:=p_{M}+\frac{t}{N^2}\binom{1}{0}$.
Let $\P'=\{P_{i_{1}},...,P_{i_{k}}\}$ be a set of triangles that
are ordered according to the groups $\P_{j}$, i.e., such that for
any $P_{i_{\ell}},P_{i_{\ell+1}}\in\P'$ with $P_{i_{\ell}}\in\P_{j}$
and $P_{i_{\ell+1}}\in\P_{j'}$ for some $j,j'$ it holds that $j\le j'$.
We define a placement of $\P'$ that we call a \emph{top-left-packing.
}First, we place $P_{i_{1}}$ such that $v_{i_{1}}^{*}$ concides
with $p_{TL}$ and one edge of $P_{i_{1}}$ lies on the diagonal of
$K$ that connects $p_{TL}$ and $p_{0}$. Note that there is a unique
way to place $P_{i_{1}}$ in this way. Iteratively, suppose that we
have packed triangles $\{P_{i_{1}},...,P_{i_{\ell}}\}$ such that for
each triangle $P_{i_{\ell'}}$ in this set its respective vertex $v_{i_{\ell'}}^{*}$
coincides with $p_{TL}$, see Figure~\ref{fig:packings}c). Intuitively, we pack $P_{i_{\ell+1}}$
on top of $P_{i_{\ell}}$ such that $v_{i_{\ell+1}}^{*}$ coincides
with $p_{TL}$. Let $t$ be the smallest integer such that the line
segment connecting $p_{t}$ and $p_{R}$ has empty intersection with
each triangle $P_{i_{1}},...,P_{i_{\ell}}$ according to our placement. We place $P_{i_{\ell+1}}$
such that $v_{i_{\ell+1}}^{*}$ concides with $p_{TL}$ and one of
its edges lies on the line that contains $p_{TL}$ and $p_{t}$. There
is a unique way to place $P_{i_{\ell+1}}$ in this way. We continue
until we placed all triangles in $\P'$. If all of them are placed
completely inside $K$ we say that the resulting solution
is a \emph{top-left-packing }and that $\P'$ is \emph{top-left-packable}.
We define \emph{bottom-right-packing }and \emph{bottom-right-packable
}symmetrically\emph{, }mirroring the above definition along the line
that contains $p_{BL}$ and $p_{TR}$\emph{.}

In the next lemma, we show that there is always a top-left-packable
or a bottom-right-packable solution with large profit compared to
$\Ph\cap\OPT$ or there is a single triangle with large profit. 
\begin{lem}
\label{lem:exists-good-packable-solution}There exists a solution
$\Ph^{*}\subseteq\Ph\cap\OPTef$ such that $w(\Ph\cap\OPTef)\le O(w(\Ph^{*}))$
and 
\begin{itemize}
\item $\Ph^{*}$ is top-left-packable or bottom-right-packable and for each
$j$ we have that $|\Ph^{*}\cap\P_{j}|\le1$, 
\item or it holds that $|\Ph^{*}|=1$. 
\end{itemize}
\end{lem}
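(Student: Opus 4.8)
The plan is to build $\Ph^*$ through a bounded sequence of $O(1)$-factor losses, reducing everything to a model situation along the main diagonal of $K$. Fix a small constant $\epsilon>0$. For a triangle $P$ placed inside $K$ write $\lambda(P)$ for the length of the chord $P\cap L$, where $L=\overline{p_Lp_R}$ is the horizontal mid-line of $K$. Recall from the reductions leading to this statement --- in particular the construction of $\Ph'$ in Lemma~\ref{lem:placement-hard-polygons} --- that in $\OPT$ we may assume $\ell_i\ge(\sqrt2-\epsilon)N$ for all but $O_\epsilon(1)$ of the hard triangles, and that $|\OPT\cap\Ph\cap\P_j|\le1$ for every $j$ (the latter makes $|\Ph^*\cap\P_j|\le1$ automatic for any $\Ph^*\subseteq\OPT$). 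First I would set the $O_\epsilon(1)$ hard triangles with $\ell_i<(\sqrt2-\epsilon)N$ aside: their total weight is at most $O_\epsilon(1)$ times that of the heaviest input triangle, which the alternative $|\Ph^*|=1$ absorbs; so it remains to treat a set $\mathcal Q\subseteq\OPTef$ all of whose triangles have $\ell_i\ge(\sqrt2-\epsilon)N$. For such a triangle the longest edge has length within $O(\epsilon)N$ of the diameter $\sqrt2N$, hence in any placement in $[0,N]^2$ its two endpoints lie within $O(\sqrt\epsilon)N$ of a pair of opposite corners; in particular $v_i^*$ is within $O(\sqrt\epsilon)N$ of a corner of $K$ (so its distance to $L$ is $(1\pm O(\sqrt\epsilon))N/2$) and the longest edge is within $O(\sqrt\epsilon)$ of the diagonal through that corner. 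Moreover, as a segment of length $\ge(\sqrt2-\epsilon)N$ fits in neither closed horizontal half of $K$, each triangle of $\mathcal Q$ also crosses $L$, so the chords $\{P_i\cap L\}_{P_i\in\mathcal Q}$ are non-empty and (since $\OPT$ is non-overlapping) interior-disjoint, whence $\sum_{P_i\in\mathcal Q}\lambda(P_i)\le|L|=N$. Finally, classifying the triangles of $\mathcal Q$ by which corner $v_i^*$ is near and on which side of the longest edge the third vertex lies --- $O(1)$ classes --- keeping the most profitable (loss $O(1)$), and then applying a rotation of $K$ by a multiple of $90^\circ$ and, if needed, the definitional symmetry between top-left- and bottom-right-packings, I may assume we are in the one class the top-left-packing is designed for; call the result $\mathcal Q'$, still with $\sum_{P_i\in\mathcal Q'}\lambda(P_i)\le N$.

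The second step is a sufficient condition for a subset $\P'\subseteq\mathcal Q'$, listed by increasing group index, to be top-left-packable, phrased purely through mid-line chords. Its engine is that, since every $\ell_i$ is within $O(\epsilon)N$ of $\sqrt2N$, whenever a long edge of such a triangle lies on a line through $p_{TL}$ inside $K$ that line must be within $O(\epsilon)$ of the diagonal direction --- otherwise the edge leaves $[0,N]^2$. So in the greedy top-left-packing construction: (a) the part of each placed triangle above $L$ is a near-$45^\circ$ wedge out of $p_{TL}$ and the part below $L$ hugs the segment $\overline{p_Mp_{BR}}$, both inside $K$, so the construction can only fail by leaving $K$ through the right side; and (b) because the edge put on the line $\overline{p_{TL}p_t}$ meets $L$ exactly at $p_t$, the parameter $t$ chosen at each step equals the right endpoint (from $p_M$) of the union of the chords placed so far, so the successive chords $P_{i_1}\cap L,P_{i_2}\cap L,\dots$ are laid contiguously from $p_M$ rightwards. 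The "not too shallow" constraint from (a) translates into the accumulated chord length having to stay $O(\epsilon N)$; hence, writing $\bar\lambda(P_i)$ for $P_i$'s chord on $L$ in the top-left-packing, if $\sum_{P_i\in\P'}\bar\lambda(P_i)\le c\epsilon N$ for a suitable constant $c>0$, then all edges fit and $\P'$ is top-left-packable. (Listing by group --- longest edge first --- is what lets the most diagonal triangles take the steepest positions and the shorter-edged ones fan above them.)

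The third and hardest step is the geometric insight $\bar\lambda(P_i)\le C\cdot\lambda(P_i)$ for an absolute constant $C$ and every $P_i\in\mathcal Q'$, uniformly over which triangles precede $P_i$. Both $\lambda(P_i)$ and $\bar\lambda(P_i)$ equal, up to a $1\pm O(\sqrt\epsilon)$ factor, the length of the chord cut on a horizontal line at vertical distance $N/2$ from $v_i^*$ by the opening-angle-$\alpha_i$ wedge bounded by $P_i$'s two long edges (possibly truncated by the third edge), the edges being within $O(\sqrt\epsilon)$ of $45^\circ$: in $\OPT$ because $v_i^*$ is within $O(\sqrt\epsilon)N$ of a height-$0$ or height-$N$ corner and the long edges are near $45^\circ$ or $135^\circ$, and in the top-left-packing by construction. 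A short computation in this wedge model (the chord is $\approx N\alpha_i$ for small $\alpha_i$) shows the two chord lengths agree up to a constant factor, and the bound is robust under deleting earlier triangles since that only moves $P_i$'s long edge closer to the diagonal, changing $\bar\lambda(P_i)$ by at most a constant. Combining with $\sum_{P_i\in\mathcal Q'}\lambda(P_i)\le N$ gives $\sum_{P_i\in\mathcal Q'}\bar\lambda(P_i)\le CN$. Partition $\mathcal Q'$, in group order, greedily into $\lceil C/(c\epsilon)\rceil=O(1)$ consecutive blocks, each with $\sum\bar\lambda\le c\epsilon N$ (the robustness of the chord bound lets us impose this budget block by block); by the second step each block is top-left-packable. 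Take $\Ph^*$ to be the most profitable block, or the singleton of the heaviest input triangle if that is heavier. Then $w(\Ph\cap\OPTef)\le O(1)\,w(\Ph^*)$, and $\Ph^*$ is top-left-packable with $|\Ph^*\cap\P_j|\le1$, or $|\Ph^*|=1$ --- the claim. The main obstacle is this third step --- certifying that moving each hard triangle into a top-left-packing distorts its mid-line chord by only a constant, and that this persists after deleting earlier triangles --- the careful rearrangement flagged in the introduction; the second step, while geometric, is comparatively routine once one knows all these triangles live in an $O(\sqrt\epsilon)N$-neighbourhood of the diagonal.
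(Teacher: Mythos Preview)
Your high-level strategy matches the paper's closely: discard the $O_\epsilon(1)$ ``short'' hard triangles, reduce by corner and orientation to one canonical class, measure each triangle by its chord on the mid-line $L$, prove the chord-comparability $\bar\lambda(P_i)=O(\lambda(P_i))$ (your step~3 is exactly the paper's Lemma~\ref{lem:overlap-difference}), and split into $O(1)$ packable pieces. Where you diverge from the paper is in step~2, and that is where a genuine gap sits.

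You assert that a set $\P'$ ordered by group is top-left-packable whenever $\sum_{P_i\in\P'}\bar\lambda(P_i)\le c\epsilon N$. This is not true as a free-standing geometric statement. The fitting constraint for the $\ell$-th triangle in the top-left construction is not a global $O(\epsilon N)$ bound on the accumulated chord $s_\ell$, but rather $s_\ell\le c'(\sqrt 2N-\ell_{i_\ell})$: the longest edge of $P_{i_\ell}$ has length $\ell_{i_\ell}$, and the line through $p_{TL}$ and $p_{t}$ stays inside $K$ for length $\ell_{i_\ell}$ only when $p_t$ is within $\Theta(\sqrt2N-\ell_{i_\ell})$ of $p_M$. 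So if a block happens to contain a ``fat'' triangle $P_1$ with $\ell_1$ extremely close to $\sqrt2N$ but $h_1=\Theta(\epsilon N)$ (permitted, since $\theta_1\le\epsilon$ allows $h_1$ up to $\epsilon\ell_1$), followed by a second triangle $P_2$ with $\ell_2$ also extremely close to $\sqrt2N$, then after placing $P_1$ the accumulated chord is $\Theta(\epsilon N)$ while the slack $\sqrt2N-\ell_2$ may be, say, $2^{-100}$; $P_2$'s long edge then does not fit, even though $\sum\bar\lambda\le c\epsilon N$. Such a pair cannot coexist in $\OPT$ (their chords would overlap near $p_M$), but your step~2 is phrased as a condition on an arbitrary group-ordered set, and your greedy partition into consecutive blocks by the budget $c\epsilon N$ never invokes this non-overlap.

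The paper closes exactly this gap with two ingredients you omit. First, Lemma~\ref{lem:group-respect} shows that after a further $O(1)$ loss the triangles of $\OPT_W$ intersect $L_1$ in group order, so the $L$-ordering and the group ordering coincide. Second, instead of a fixed budget the paper uses an \emph{adaptive} sufficient condition (Lemma~\ref{lem:right-facing-tlp} and its bottom-facing analogue): one may append $P_{i_k}$ to a bin $\OPT_W^{(t)}$ whenever $\sum_{\OPT_W^{(t)}}\hat d\le\sum_{\ell<k}d_{i_\ell}$, i.e.\ whenever the bin's accumulated chord does not exceed the position of $P_{i_k}$ on $L_1$ in $\OPT$. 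This is what lets the per-triangle constraint $s\le c'(\sqrt2N-\ell_{i_k})$ be met, because $P_{i_k}$ already fits in $\OPT$ at that (or a larger) offset. The bins are then filled by a round-robin/pigeonhole argument using $\hat d_i=O(d_i)$, not by consecutive greedy blocks. Your step~3 (the chord comparability) is the right engine, but it needs to be fed into the paper's adaptive comparison with $\OPT$, not into a uniform $c\epsilon N$ threshold.
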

A complete proof of this Lemma is the main topic of the next subsection.

We describe now a polynomial time algorithm that computes the most
profitable solution that satisfies the properties of Lemma~\ref{lem:exists-good-packable-solution}.
To find the most profitable solution $\Ph^{*}$ that satisfies that
$|\Ph^{*}|=1$ we simply take the triangle with maximum weight, let
$P_{i^{*}}$ be this triangle. We establish now a dynamic program
that computes the most profitable top-left-packable solution; computing
the most profitable bottom-right-packable solution works analogously.
Our DP has a cell corresponding to pairs $(j,t)$ with $j,t\in\Z$.
Intuitively, $(j,t)$ represents the subproblem of computing a set
$\Ph'\subseteq\Ph$ of maximum weight such that $\Ph'\cap\P_{j'}=\emptyset$
for each $j'<j$ and $|\Ph'\cap\P_{j''}|\le1$ for each $j''\ge j$
and such that $\Ph'$ is top-left-packable inside the triangular area
$T_{t}$ defined by the line that contains $p_{TL}$ and $p_{t}$,
the top edge of $K$, and the right edge of $K$.
Given a cell $(j,t)$ we want to compute a solution $DP(j,t)$ associated
with $(j,t)$. Intuitively, we guess whether the optimal solution
$\Ph'$ to $(j,t)$ contains a triangle from $\Ph\cap\P_{j}$. Therefore,
we try each triangle $P_{i}\in\Ph\cap\P_{j}$ and place it inside
$T_{t}$ such that $v_{i}^{*}$ concides with $p_{TL}$ and one of
its edges lies on the line containing $p_{TL}$ and $p_{t}$. Let
$t'(P_{i})$ denote the smallest integer such that $t'(P_{i})\ge t$
and $p_{t'(P_{i})}$ is not contained in the resulting placement of
$P_{i}$ inside $T_{t}$. We associate with $P_{i}$ the solution
$P_{i}\cup DP(j+1,t'(P_{i}))$. Finally, we define $DP(j,t)$ to be
the solution of maximum profit among the solutions $P_{i}\cup DP(j+1,t'(P_{i}))$
for each $P_{i}\in\Ph\cap\P_{j}$ and the solution $DP(j+1,t)$.

We introduce a DP-cell $DP(j,t)$ for each pair $(j,t)\in\Z^{2}$
where $j_{\min}\le j\le j_{\max}$ and $0\le t\le\log_{1+1/n}\left(\frac{N}{2}\right)$.
Note that due to Lemma~\ref{lem:Pj-properties} for all other values
of $j$ we have that $\P_{j}\cap\Ph=\emptyset$. Also note that $p_{t}\notin K$
if $t\geq N^2/2$. 
This yields at most
$(nN)^{O(1)}$ cells in total. Finally, we output the solution
$DP(j_{\min},0)$.

In the next lemma we prove that our DP computes the optimal top-left-packable
solution with the properties of Lemma~\ref{lem:exists-good-packable-solution}.
\begin{lem}
There is an algorithm with a running time of $(nN)^{O(1)}$ that
computes the optimal solution $\P'\subseteq\Ph$ such that $\P'$
is top-left-packable or bottom-right-packable and such that for each
$j$ we have that $|\P'\cap\P_{j}|\le1$. 
\end{lem}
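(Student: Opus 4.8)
The plan is to verify that the dynamic program described just before the statement is correct and has the claimed running time. The key observation is that a top-left-packable solution with at most one triangle per group $\P_j$ is built up group by group, and the only piece of information we need to carry from the groups processed so far to the groups still to be processed is the ``frontier'' of the current packing along the line segments emanating from $p_{TL}$ towards the right edge of $K$ --- and this frontier is captured (up to the discretization by the points $p_t$) by the single integer $t$. So the first step is to make this precise: I would show by induction on $\ell$ that in any top-left-packing of $\P'=\{P_{i_1},\dots,P_{i_k}\}$, after placing $P_{i_1},\dots,P_{i_\ell}$ the region still available for the remaining triangles is exactly the triangular area $T_{t_\ell}$ for the value $t_\ell$ defined in the construction (smallest $t$ such that the segment $p_{t_\ell}p_R$ misses all of $P_{i_1},\dots,P_{i_\ell}$), and that $t_{\ell+1}=t'(P_{i_{\ell+1}})$ when $P_{i_{\ell+1}}$ is placed inside $T_{t_\ell}$ with $v^*_{i_{\ell+1}}$ at $p_{TL}$ and one edge on the line through $p_{TL}$ and $p_{t_\ell}$. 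This reduces feasibility of a top-left-packing to a chain of local feasibility checks, one per group.

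Given this structural fact, the second step is to argue the DP recurrence is correct. The cell $DP(j,t)$ is meant to hold an optimal top-left-packable subset of $\bigcup_{j'\ge j}\Ph\cap\P_{j'}$ inside $T_t$ with at most one triangle per group. I would prove optimality of $DP(j,t)$ by reverse induction on $j$ (starting from $j>j_{\max}$, where by Lemma~\ref{lem:Pj-properties} the relevant groups are empty and $DP$ is the empty set). For the inductive step, an optimal solution for $(j,t)$ either uses no triangle from $\P_j$ --- in which case it is a feasible solution for $(j+1,t)$, handled by the term $DP(j+1,t)$ --- or it uses exactly one triangle $P_i\in\Ph\cap\P_j$, which (because the packing processes groups in increasing order of $j$) must be the first triangle placed, hence placed with $v^*_i$ at $p_{TL}$ and an edge on the line through $p_{TL}$ and $p_t$; by the structural step the remaining triangles then form a top-left-packable solution inside $T_{t'(P_i)}$ using only groups $\ge j+1$, so they are a feasible solution for $(j+1,t'(P_i))$ and by induction have weight at most $w(DP(j+1,t'(P_i)))$. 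Taking the maximum over all these cases gives exactly the DP recurrence, so $DP(j,t)$ is optimal; applying this to $(j_{\min},0)$ and noting $T_0$ is the top-left half of $K$ gives the optimal top-left-packable solution. The bottom-right case is symmetric (mirror along the $p_{BL}$--$p_{TR}$ diagonal), and we return the better of the two.

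For the running time: there are $O((j_{\max}-j_{\min})\cdot \log_{1+1/n}(N/2)) = (nN)^{O(1)}$ cells, using Lemma~\ref{lem:Pj-properties} to bound the range of $j$ by $O(\log N)$ and the elementary estimate $\log_{1+1/n}(N/2)=O(n\log N)$ for the range of $t$; each cell examines at most $|\Ph\cap\P_j|\le n$ triangles, and for each one computes a single placement (one intersection of a line with a triangle --- polynomial time, even though coordinates may be irrational we work with the $(nN)^{O(1)}$ precomputed placements from Lemma~\ref{lem:placement-hard-polygons}) and the integer $t'(P_i)$, which is again a constant-time geometric computation; memoization makes the whole DP run in $(nN)^{O(1)}$ time.

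I expect the main obstacle to be the structural step, i.e., rigorously justifying that the ``state'' $t$ really does capture everything needed --- one has to check that the available region after placing the first $\ell$ triangles is genuinely the triangle $T_{t_\ell}$ and not something more complicated, which relies on the triangles being stacked with the common apex $v^*$ at $p_{TL}$ and their long edges nested in increasing-angle order, so that the union of the placed triangles is itself ``star-shaped'' from $p_R$ in the appropriate sense. A secondary subtlety worth a sentence is why ``at most one triangle per group'' combined with ``ordered by groups'' forces the triangle of group $\P_j$ (if present) to be placed before all triangles of groups $\P_{j'}$ with $j'>j$ in the top-left-packing order, which is what licenses peeling it off first in the recurrence; this is immediate from the definition of the ordering used in top-left-packings but should be stated explicitly.
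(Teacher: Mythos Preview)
Your proposal is correct and follows essentially the same approach as the paper: both arguments prove correctness of the DP by reverse induction on $j$, splitting into the two cases (no triangle from $\P_j$ versus one triangle $P_i\in\P_j$ placed first) and appealing to the optimality of the subproblem $(j+1,t)$ or $(j+1,t'(P_i))$ respectively. Your write-up is actually a bit more explicit than the paper's about the ``structural step'' --- that the available region after the first $\ell$ placements is exactly $T_{t_\ell}$, so the integer $t$ really is a sufficient state --- which the paper treats as implicit in the phrase ``since they are both $(j+1,t'(P_{i^*}))$-solutions.''
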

\begin{proof}

We say that a set of triangles $S$ is a $(j,t)$-solution if it is top-left-packable inside of $T_t$, only uses items from $\bigcup_{j''\geq j}\Ph \cap \P_{j''}$ and $|\Ph\cap\P_{j''}|\le1$ for each $j''\geq j$. Let $\OPT_{j,t}$ be the $(j,t)$-solution of maximum weight. We aim to show that $\OPT_{j,t}=DP(j,t)$ for each $j_{\min} \leq j \leq j_{\max}$ and $0\leq t \leq \log_{1+\frac{1}{n}}\left(\frac{N}{2}\right)$.

We proceed by backwards induction on $j$. If $j=j_{\max}$ then $\OPT_{j,t}$ is exactly the packing of the top-left-packable triangle of maximum weight in $T_t$. Since $DP(j,t)$ tries to top-left-pack all triangles into $T_t$ it is clear that $\OPT_{j,t}=DP(j,t)$. We now deal with the case $j<j_{\max}$. By induction $DP(j,t)$ is the solution of maximum profit among $P_i \cup \OPT(j+1,t'(P_i))$ for $P_i \in \Ph \cap \P_j$ and $\OPT(j+1,t)$. Suppose, by contradiction, that there exists a $(j,t)$-solution $S$ such that $w(S)>w(DP(j,t))$. We now consider two cases. If $|S \cap \P_j| \neq \emptyset$, we select $P_{i^*} \in S \cap \P_j$ and note that:
\[w(P_{i^*}) + w(\OPT(j+1,t'(P_{i^*}))) \leq  w(DP(j,t)) < w(P_{i^*})+w(S\setminus \{ P_{i^*}\} ).\]
Therefore $w(S\setminus \{ P_{i^*}\} )> w(\OPT(j+1,t'(P_{i^*}))) $ which contradicts the optimality of $w(\OPT(j+1,t'(P_{i^*}))) $, since they are both $(j+1,t'(P_{i^*}))$-solutions. We now consider the case $|S \cap \P_j|=\emptyset$. In this case we have:
\[\OPT(j+1,t) \leq w(DP(j,t)) < w(S).\]
Hence $w(S)> w(\OPT(j+1,t))$ which contradicts the optimality of $\OPT(j+1,t)$, as they are both $(j+1,t)$-solutions. 

We conclude that $\OPT_{j,t}=DP(j,t)$ for each $j_{\min} \leq j \leq j_{\max}$ and $0\leq t \leq \log_{1+\frac{1}{n}}\left(\frac{N}{2}\right)$. In particular $DP(j_{\min},0)=\OPT_{j,0}$, as desired. Note that the bottom-right-packable case can be dealt in a similar manner, concluding the proof.
\end{proof}

We execute the above DP and its counterpart for bottom-right-packable
solutions to obtain a top-left-packable solution $\P'_{1}$ and a
bottom-right-packable solution $\P'_{2}$. We output the most profitable
solution among $\{P_{i^{*}}\},\P'_{1},\P'_{2}$. Due to Lemma~\ref{lem:exists-good-packable-solution}
this yields a solution with weight at least $\Omega(w(\Ph\cap\OPT))$. 
\begin{lem}
\label{lem:DP-EF}There is an algorithm with a running time of $(nN)^{O(1)}$
that computes a solution $\Ph'\subseteq\Ph$ such that $w(\OPTef)\le O(w(\Ph'))$. 
\end{lem}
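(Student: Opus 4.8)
The plan is to chain Lemma~\ref{lem:exists-good-packable-solution} with the dynamic program analysed in the preceding lemma. The algorithm runs three subroutines and returns the best of their outputs: (i) the DP described above, which computes the maximum-weight top-left-packable solution $\P'_1\subseteq\Ph$ with $|\P'_1\cap\P_j|\le 1$ for every $j$; (ii) its mirror image along the line through $p_{BL}$ and $p_{TR}$, computing the maximum-weight bottom-right-packable solution $\P'_2\subseteq\Ph$ with the analogous per-group restriction; and (iii) the trivial subroutine returning an input triangle $P_{i^*}$ of maximum weight. Let $\Ph'$ be whichever of $\{P_{i^*}\},\P'_1,\P'_2$ has the largest weight.

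For correctness I would apply Lemma~\ref{lem:exists-good-packable-solution} to obtain a set $\Ph^*\subseteq\Ph\cap\OPTef=\OPTef$ with $w(\OPTef)\le O(w(\Ph^*))$ that is top-left-packable with at most one triangle per group, or bottom-right-packable with at most one triangle per group, or a singleton. In the first case $\Ph^*$ is a feasible candidate for the DP of (i), so $w(\P'_1)\ge w(\Ph^*)$ by the optimality guarantee of the preceding lemma; symmetrically $w(\P'_2)\ge w(\Ph^*)$ in the second case; and in the third case $\Ph^*$ consists of a single input triangle, so $w(P_{i^*})\ge w(\Ph^*)$. In every case $w(\Ph')\ge w(\Ph^*)\ge\Omega(w(\OPTef))$, that is, $w(\OPTef)\le O(w(\Ph'))$. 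Feasibility of the output is immediate: the triangular region $T_0$ on which the top-left DP operates has vertices $p_{TL},p_{TR},p_{BR}$ and is contained in $K$, so a top-left-packing inside $T_0$ is a non-overlapping placement of the selected triangles inside $K$; likewise for the bottom-right case.

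For the running time I would bound the number of DP-cells by $(nN)^{O(1)}$: by Lemma~\ref{lem:Pj-properties} only $O(\log N)$ values of $j$ are relevant, and $t$ ranges over $0\le t\le\log_{1+1/n}(N/2)=(nN)^{O(1)}$ values (cells with $p_t\notin K$ are discarded). Each cell $(j,t)$ inspects the at most $n$ triangles of $\Ph\cap\P_j$; for each such triangle, placing it inside $T_t$ with $v_i^*$ at $p_{TL}$ and an edge on the line through $p_{TL}$ and $p_t$ is a single geometric construction, and computing $t'(P_i)$ reduces to intersecting an edge of the placed triangle with the horizontal line through $p_M$ and rounding, a constant-time computation. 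Hence each cell costs $(nN)^{O(1)}$ time, and the whole procedure together with its mirror and subroutine (iii) runs in $(nN)^{O(1)}$ time. The one place where all the real content lies is Lemma~\ref{lem:exists-good-packable-solution}, proved in the next subsection; given that lemma and the DP-correctness lemma, the present statement is a short bookkeeping argument, the only mild subtlety being to confirm that ``top-left-packable inside $T_0$'' coincides with the earlier notion of top-left-packable, so that the two lemmas chain cleanly.
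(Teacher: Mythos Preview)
Your proposal is correct and follows essentially the same approach as the paper: run the top-left DP, its bottom-right mirror, and the single-triangle routine, return the best output, and invoke Lemma~\ref{lem:exists-good-packable-solution} together with the DP-optimality lemma to conclude. The paper's argument is the short paragraph immediately preceding the statement of Lemma~\ref{lem:DP-EF}; you have simply fleshed out the running-time and feasibility bookkeeping that the paper leaves implicit.
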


\subsubsection{Corner-facing triangles}

We present now a $O(1)$-approximation algorithm for the corner-facing
triangles in $\OPT$, i.e., our algorithm computes a solution $\P'\subseteq\P$
of profit at least $\Omega(w(\OPTcf))$. We first establish some properties
for $\OPTcf$. We argue that by losing a constant factor we can assume
that each triangle in $\OPTcf$ intuitively faces the bottom-right
corner.
\begin{lem}\label{lem:face-bottom-right}
By losing a factor 4 we can assume that for each triangle $P_{i}\in\OPTcf$
we have that $R_{i}^{(1)}\setminus\{v_{i}^{*}\}$ intersects the bottom
edge of the knapsack and and $R_{i}^{(2)}\setminus\{v_{i}^{*}\}$
intersects the right edge of the knapsack, or vice versa.
\end{lem}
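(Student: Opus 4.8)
The plan is to partition $\OPTcf$ according to which two edges of $K$ the two rays emanating from $v_i^*$ point towards, and then use the rotational symmetry of the square knapsack to bring the resulting subfamily into the desired ``bottom--right'' configuration.

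First I would recall that, by the very definition of a corner-facing triangle, for every $P_i\in\OPTcf$ exactly one of the rays $R_i^{(1)}\setminus\{v_i^*\}$, $R_i^{(2)}\setminus\{v_i^*\}$ meets a horizontal edge of $K$ (top or bottom) and the other meets a vertical edge (left or right), where we always refer to the placement of $P_i$ in $\OPT$ (if a ray happens to exit $K$ exactly through a corner we break the tie arbitrarily, which affects nothing below). Thus $\OPTcf$ is partitioned into at most four classes, one per corner of $K$: the ``bottom--right'' class consists of the triangles whose two rays hit the bottom edge and the right edge, and analogously for ``bottom--left'', ``top--right'', ``top--left''. By averaging, one of these at most four classes, say $C$, facing some corner $c$ of $K$, satisfies $w(C)\ge\tfrac14 w(\OPTcf)$.

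Next I would invoke the rotational symmetry of $K=[0,N]^2$. The four rotations about the center $p_M$ by multiples of $90^{\circ}$ map $K$ onto itself and cyclically permute the four corners (equivalently, they cyclically permute the four edges bottom, right, top, left). Hence there is such a rotation $\rho$ with $\rho(K)=K$ that sends $c$ to the bottom--right corner $p_{BR}$. Apply $\rho$ to the $\OPT$-placement of every triangle in $C$. Since $\rho$ is an isometry with $\rho(K)=K$, the images still lie non-overlappingly inside $K$; since $\rho$ acts on each individual triangle as a rotation followed by a translation, each image is a legal re-placement of the corresponding input triangle (arbitrary rotations and translations being permitted); and since $\rho$ carries the rays of each triangle to the rays of its image while sending $\{$bottom, top$\}$ edges to $\{$left, right$\}$ edges consistently with $c\mapsto p_{BR}$, in the rotated packing every triangle has one of its two rays hitting the bottom edge and the other hitting the right edge (in one order or the other, which is exactly the ``or vice versa'' of the statement). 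All weights are unchanged, and hardness, the group membership $\P_j$, and the property of being corner-facing are properties of the triangles themselves and so are preserved.

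We may therefore replace $\OPTcf$ by the rotated copy $\rho(C)$: it is a feasible packing of a subset of the input triangles inside $K$, every triangle of which faces the bottom--right corner, and $w(\rho(C))=w(C)\ge\tfrac14 w(\OPTcf)$, so we have lost only a factor $4$; at the end of the section, applying $\rho^{-1}$ to whatever packing our algorithm returns converts it back into a packing of the original (un-rotated) triangles of the same weight, so the reduction is legitimate. I do not expect any genuine obstacle here; the only points that need a little care are checking that each corner-facing triangle lands in exactly one of the four classes, and making sure we only ever use rotations of $K$ (never reflections), so that the re-placements remain within the operations allowed by the problem.
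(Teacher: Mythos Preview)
Your argument is correct and follows the same approach as the paper: partition $\OPTcf$ into four classes according to the corner faced, keep the heaviest class (losing a factor~$4$), and rotate the packing by the appropriate multiple of $90^{\circ}$ to make that class face the bottom-right corner. The paper's proof is essentially a two-line version of what you wrote; your additional remarks about tie-breaking at corners and about only using rotations (not reflections) are correct but not needed here, since arbitrary rotations are allowed by the problem and the corner-case of a ray hitting a vertex of $K$ is negligible.
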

\begin{proof}
We can partition $\OPTcf$ into four groups according to which corner
the triangles in this group face. By losing a factor of 4 we keep
only the group with largest weight. Then we rotate the solution appropriately
such that the claim of the lemma holds.
\end{proof}
In the following lemma we establish a property that will be crucial
for our algorithm. For each $P_{i}\in\OPTcf$ let $R_{i}^{\mathrm{up}}$
denote the ray originating at $v_{i}^{*}$ and going upwards. We establish
that we can assume that $R_{i}^{\mathrm{up}}$ does not intersect
with any triangle $P_{i'}\in\OPTcf$, see Figure~\ref{fig:corner-face-DP-notation}. 
\begin{lem}
\label{lem:no-intersect-up}By losing a factor $O(1)$ we can assume
that for each $P_{i},P_{i'}\in\OPTcf$ it holds that $R_{i}^{\mathrm{up}}\cap P_{i'}=\emptyset$.
\end{lem}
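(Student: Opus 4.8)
Fix $P_i \in \OPTcf$. By Lemma~\ref{lem:face-bottom-right} we may assume $R_i^{(1)}\setminus\{v_i^*\}$ meets the bottom edge and $R_i^{(2)}\setminus\{v_i^*\}$ meets the right edge (so $v_i^*$ lies in the upper-left region and the triangle ``opens'' toward the bottom-right corner). The ray $R_i^{\mathrm{up}}$ leaves $v_i^*$ going straight up, i.e.\ into the region \emph{away} from the two edges $e_1,e_2$ that emanate from $v_i^*$. The intuition is that $R_i^{\mathrm{up}}$ emanates from $v_i^*$ into the ``free'' quadrant opposite the triangle's opening, so the only triangles $P_{i'}$ it can hit are those whose own apex $v_{i'}^*$ lies strictly above $v_i^*$ (in the $y$-coordinate), and because all hard triangles have an edge longer than $N$, each such $P_{i'}$ occupies a large area wedged between $R_{i'}^{\mathrm{up}}$ and $R_i^{\mathrm{up}}$ near the top of the knapsack --- so only $O(1)$ of them can coexist. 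The plan is therefore to (i) set up a partial order on $\OPTcf$ by ``$P_{i'}$ blocks $P_i$ if $R_i^{\mathrm{up}}\cap P_{i'}\neq\emptyset$'', (ii) show this relation has chains of length $O(1)$, i.e.\ it is ``almost a matching'', and (iii) drop a constant fraction of the triangles to destroy all blocking relations, losing only a factor $O(1)$ in weight.

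For step (ii), the key geometric claim is: if $P_{i'}$ blocks $P_i$ then $v_{i'}^*$ lies above the horizontal line through $v_i^*$, and moreover $P_{i'}$ must intersect the ray $R_i^{\mathrm{up}}$ while itself opening toward the bottom-right, which forces $v_{i'}^*$ to be close to the left edge and high up --- more precisely, $P_{i'}$ must contain a point of $R_i^{\mathrm{up}}$ and have its long edge $e_{1}'$ or $e_{2}'$ (length $\geq N$) running down toward the bottom or right edge; combining this with the position of $v_i^*$ one shows the portion of $P_{i'}$ lying to the left of $R_i^{\mathrm{up}}$ and above $v_i^*$ has area $\Omega(N^2 / k^2)$ or similar, and these regions are disjoint for distinct blockers. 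Actually the cleaner route is to observe that if $P_{i_1}$ blocks $P_{i_2}$ blocks $\cdots$ blocks $P_{i_m}$ then the apexes $v_{i_1}^*, \dots, v_{i_m}^*$ have strictly increasing $y$-coordinates and each consecutive pair is separated vertically by $\Omega(N)$ (because the blocking triangle has an edge of length $\geq N$ that it must ``reach over''), so $m = O(1)$; then a greedy/independent-set argument on the resulting DAG (or just: partition $\OPTcf$ into $O(1)$ antichains, keep the heaviest) gives the factor $O(1)$ loss. Care must be taken that the transformation is only a \emph{selection} of a subset --- we do not move any triangle --- so feasibility is automatic and no interaction with the $\L_i$ placement sets of Lemma~\ref{lem:placement-hard-polygons} is disturbed.

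The main obstacle I expect is pinning down the precise geometric statement that makes the chain length $O(1)$: one must carefully use that $P_i$ is corner-facing (so $v_i^*$ is bounded away from the bottom-right corner and the angle at $v_i^*$ is not too large, by an argument like the one in Lemma~\ref{lem:all-facing}), that the two edges at $v_i^*$ both have length $\geq N/2$, and that the blocker $P_{i'}$, while intersecting $R_i^{\mathrm{up}}$, cannot have $v_{i'}^*$ below or too near $v_i^*$ without overlapping $P_i$ (here convexity of $P_i$ and the fact that $R_i^{\mathrm{up}}$ is disjoint from the interior of $P_i$ are used). Once the ``$\Omega(N)$ vertical gap per blocking step'' claim is in hand, the combinatorial conclusion --- $O(1)$ triangles in any blocking chain, hence $\OPTcf$ splits into $O(1)$ blocking-free groups, keep the heaviest --- is routine. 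One subtlety: ``block'' as defined need not be symmetric or transitive, so to extract a blocking-free subset one should either take a maximal antichain-type subset or iterate ``remove the triangle whose apex is highest among those still causing a conflict,'' charging removed weight against kept weight using the bounded chain length; I would phrase the final bookkeeping via partitioning into $O(1)$ levels according to rank in the blocking order and keeping the single most profitable level.
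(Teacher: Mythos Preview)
Your approach differs substantially from the paper's, and the step you yourself flag as the ``main obstacle'' --- the $\Omega(N)$ vertical gap between consecutive apexes in a blocking chain --- is indeed the crux, and you have not established it.

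The paper takes a much shorter route. It first discards, at $O_\epsilon(1)$ cost, all triangles in $\OPTcf$ whose longest edge is shorter than $(1-\epsilon)\sqrt{2}N$: those lie in only $O_\epsilon(1)$ groups $\P_j$, each contributing $O(1)$ triangles to $\OPT$ by Lemma~\ref{lem:Pj-properties}, so the single most profitable one among them already yields an $O_\epsilon(1)$-approximation. For the surviving triangles, every apex $v_i^*$ is forced to lie within $O(\epsilon)N$ of the corner $p_{TL}$, since an edge of length at least $(1-\epsilon)\sqrt{2}N$ must nearly span a diagonal of $K$. With all apexes clustered in one $O(\epsilon)N$-ball, the lemma follows by a direct contradiction argument with \emph{no further selection}: if $R_i^{\mathrm{up}}\cap P_{i'}\neq\emptyset$ then $v_{i'}^*$ lies left of the vertical line $\ell$ through $v_i^*$; either both remaining vertices of $P_{i'}$ lie right of $\ell$ (contradicting the bottom-right corner-facing orientation), or one lies left, and then --- because $v_{i'}^*$ is within $O(\epsilon)N$ of $p_{TL}$ and is incident to edges of length $>N$ and $>N/2$ --- for small enough $\epsilon$ the triangle $P_{i'}$ must contain $v_i^*$ itself, contradicting disjointness.

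Your chain-length argument, by contrast, works with arbitrary hard corner-facing triangles, whose apexes need not cluster: a hard corner-facing triangle can have $v_i^*$ near $(0,N/2)$, say. In that generality the asserted $\Omega(N)$ vertical separation between a blocked apex and its blocker's apex is not evident. A thin hard triangle with $\ell_i$ barely above $N$ and $h_i$ barely above $h'_i/8$ has area $o(N^2)$, so area considerations alone do not preclude many such blockers stacking along $R_i^{\mathrm{up}}$ with small vertical gaps; and the fact that the blocker has an edge of length $>N$ does not force its apex to sit $\Omega(N)$ above $v_i^*$, since that long edge can run nearly horizontally. The chain/antichain bookkeeping you sketch would go through \emph{if} the chain-length bound held, but as written that bound is asserted rather than proved, and I do not see a routine way to fill it in at this level of generality. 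The missing idea is precisely the paper's preliminary restriction to nearly-diagonal-length triangles, which collapses all apexes into an $O(\epsilon)N$-neighborhood of $p_{TL}$ and makes the geometry immediate.
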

\begin{proof}
Let $\epsilon>0$ be a constant to be defined later. By losing a factor
$O_{\epsilon}(1)$ we can assume for each triangle in $\OPTcf$ that
its longest edge has length at least $(1-\epsilon)\sqrt{2}N$. This
holds since all other triangles are contained in only $O_{\epsilon}(1)$
groups $\P_{j}$ with only $O_{\epsilon}(1)$ triangles in $\OPTcf$
in total (see Lemma~\ref{lem:Pj-properties}). Note that this implies
then for each triangle $P_{i}\in\OPTcf$ in any placement inside of
the knapsack that the vertex $v_{i}^{*}$ lies close to $p_{TL}$,
i.e., at distance of at most $O(\epsilon)N$.

Assume by contradiction that there is a triangle $P_{i'}\in\OPTcf$
with $R_{i}^{\mathrm{up}}\cap P_{i'}\ne\emptyset$. Then $v_{i'}^{*}$
must lie on the left of the vertical line $\ell$ that contains $R_{i}^{\mathrm{up}}$.
If both other vertices of $P_{i'}$ lie on the right of $\ell$ then
this contradicts that $P_{i'}$ faces the bottom right corner. Otherwise,
we can choose $\epsilon$ sufficiently small such that $v_{i}^{*}$
is contained in the placement of $P_{i'}$ inside the knapsack since
both $v_{i}^{*}$ and $v_{i'}^{*}$ are close to $p_{TL}$ and $v_{i'}^{*}$
is incident to one edge with length at least $N$ and to one edge
with length at least $N/2$. This is a contradiction
\end{proof}
\begin{figure}
\centering 
\includegraphics[height=4.5cm]{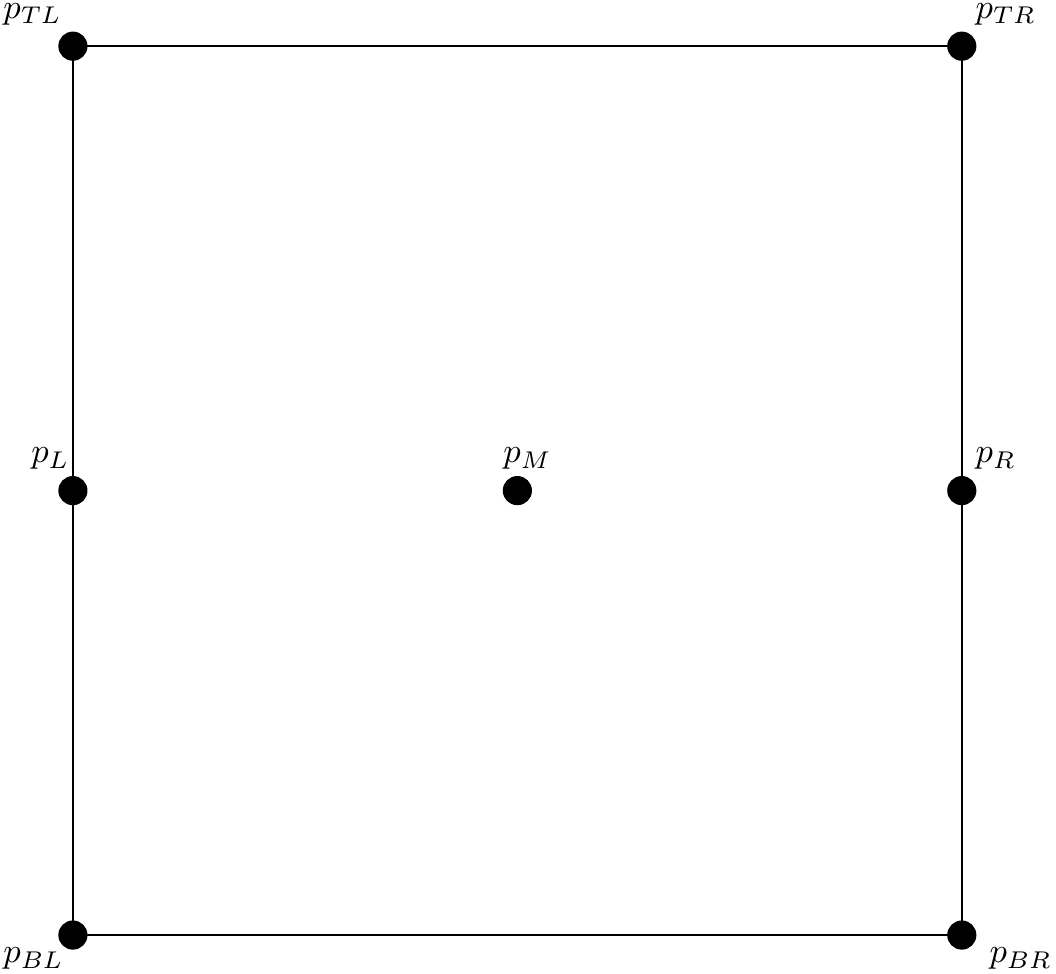}~~~~~~~~\includegraphics[height=4.5cm]{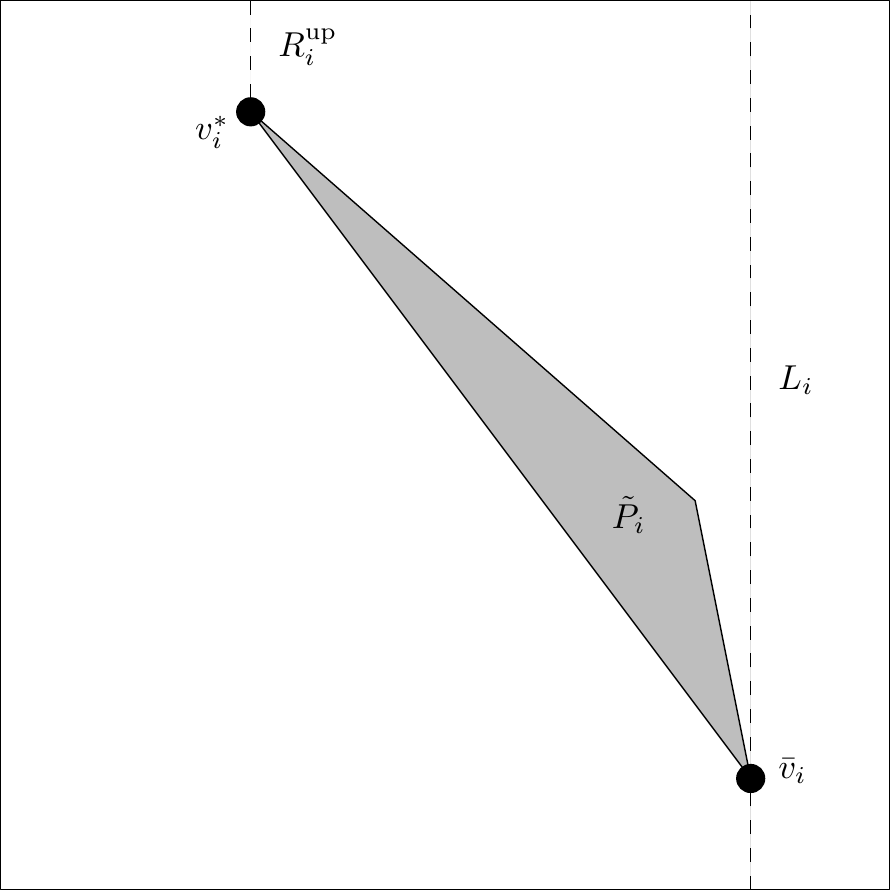}
\caption{Left: The points $p_{TL},p_{TR},p_{BL},p_{BR}, p_L, p_M, p_R$. 
Right: A corner-facing triangle, its vertices $v_i^{*}$ and $\bar{v}_i$, and the lines $L_i$ and $R_{i}^{\mathrm{up}}$.}
\label{fig:corner-face-DP-notation} 
\end{figure}

Our algorithm is a dynamic program that intuively guesses the placements
of the triangles in $\OPTcf$ step by step. To this end, each DP-cell
corresponds to a subproblem that is defined via a part $K'\subseteq K$
of the knapsack and a subset of the groups $J\subseteq\left\{ j_{\min},...,j_{\max}\right\} $.
The goal is to place triangles from $\bigcup_{j\in J}\P_{j}$ of maximum
profit into $K'$. Formally, each DP-cell is defined by up to two
triangles $P_{i},P_{i'}$, placements $\tilde{P}_{i},\tilde{P}_{i'}$
for them, and a set $J\subseteq\left\{ j_{\min},...,j_{\max}\right\} $;
if the cell is defined via exactly one triangle $P_{i}$ then there
is also a value $\mathrm{dir}\in\{\lleft,\mmid\}$. The corresponding
region $K'$ is defined as follows: if the cell is defined via zero
triangles then the region is the whole knapsack $K$. Otherwise, let
$\bar{v}_{i}$ denote the right-most vertex of $\tilde{P}_{i}$, i.e.,
the vertex of $\tilde{P}_{i}$ that is closest to the right edge of
the knapsack (see Figure~\ref{fig:corner-face-DP-notation}). 
Let $L_{i}$ denote the vertical line that goes through $\bar{v}_{i}$
(and thus intersects the top and the bottom edge of the knapsack).
If the cell is defined via one triangle $P_{i}$ then observe that
$K\setminus(\tilde{P}_{i}\cup R_{i}^{\mathrm{up}}\cup L_{i})$ has
three connected components, 
\begin{itemize}
\item one on the left, surrounded by $R_{i}^{\mathrm{up}}$, parts of $\tilde{P}_{i}$,
the left edge of the knapsack, and parts of the top and bottom edge
of the knapsack 
\item one on the right, surrounded by $L_{i}$, the right edge of the knapsack,
and parts of the top and bottom edge of the knapsack, and 
\item one in the middle, surrounded by the top edge of the knapsack, $\tilde{P}_{i}$,
and $L_{i}$. 
\end{itemize}
If $\dir=\lleft$ then the region of the cell equals the left component,
if $\dir=\mmid$ then the region of the cell equals the middle component.
Assume now that the cell is defined via two triangles $P_{i},P_{i'}$.
Assume w.l.o.g. that $\bar{v}_{i}$ is closer to the right edge of
the knapsack than $\bar{v}_{i'}$. Then $K\setminus(\tilde{P}_{i}\cup\tilde{P}_{i'}\cup R_{i}^{\mathrm{up}}\cup R_{i'}^{\mathrm{up}}\cup L_{i'})$
has one connected component that is surrounded by $\tilde{P}_{i},\tilde{P}_{i'},R_{i}^{\mathrm{up}},R_{i'}^{\mathrm{up}},L_{i'}$
and we define the region of the cell to be this component. Observe
that the total number of DP-cells is bounded by $(nN)^{O(1)}$, using
that there are only $(nN)^{O(1)}$ possible placements for each triangle.

We describe now a dynamic program that computes the optimal solution
to each cell.
\begin{figure}[t]
\begin{centering}
\includegraphics[scale=0.6]{}
\end{centering}
\caption{\label{fig:corner-DP-transition}The cases in the transition of the
DP for corner-facing triangles (see Lemma~\ref{lem:DP-transition}).}
\end{figure}
Assume that we are given a cell $C$ for which we want to compute
the optimal solution. We guess the triangle $P_{i^{*}}$ in the optimal
solution to this cell such that $\bar{v}_{i^{*}}$ is closest to the
right edge of the knapsack, and its placement $\tilde{P}_{i^{*}}$
in the optimal solution to $C$. Let $j^{*}$ such that $P_{i^{*}}\in\P_{j^{*}}$.
We will prove in the next lemma that the optimal solution to $C$
consists of $P_{i^{*}}$ and the optimal solutions to two other DP-cells,
see Figure~\ref{fig:corner-DP-transition}.

\begin{lem}
\label{lem:DP-transition}Let $C$ be a DP-cell, let $J\subseteq\left\{ j_{\min},...,j_{\max}\right\} $,
and let $P_{i}\in\P_{\ell},P_{i'}\in\P_{\ell'}$ be two triangles
with $\ell<\ell'$ and let $\tilde{P}_{i},\tilde{P}_{i'}$ be placements
for them. Then there are disjoint sets $J',J''\subseteq J$ such that
\begin{enumerate}
\item if $C=(J)$, then its optimal solution consists of $P_{i^{*}}$ and
the optimal solutions to the cells $(J',P_{i^{*}},\tilde{P}_{i^{*}},\lleft)$
and $(J'',P_{i^{*}},\tilde{P}_{i^{*}},\mmid)$,
\item if $C=(J,P_{i},\tilde{P}_{i},\lleft)$ then its optimal solution consists
of $P_{i^{*}}$ and the optimal solutions to the cells $(J',P_{i^{*}},\tilde{P}_{i^{*}},\lleft)$
and $(J'',P_{i},\tilde{P}_{i},P_{i^{*}},\tilde{P}_{i^{*}})$,
\item if $C=(J,P_{i},\tilde{P}_{i},\mmid)$ then its optimal solution consists
of $P_{i^{*}}$ and the optimal solutions to the cells $(J',P_{i^{*}},\tilde{P}_{i^{*}},\mmid)$
and $(J'',P_{i},\tilde{P}_{i},P_{i^{*}},\tilde{P}_{i^{*}})$,
\item if $C=(J,P_{i},\tilde{P}_{i},P_{i'},\tilde{P}_{i'})$ then the optimal
solution to $C$ consists of $P_{i^{*}}$ and the optimal solutions
to the cells $(J',P_{i},\tilde{P}_{i},P_{i^{*}},\tilde{P}_{i^{*}})$
and $(J'',P_{i^{*}},\tilde{P}_{i^{*}},P_{i'},\tilde{P}_{i'})$.
\end{enumerate}

\end{lem}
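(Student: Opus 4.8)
The statement is the correctness of one transition step of the corner-facing DP, and the plan is to prove it directly; it is then fed into an induction on $|J|$ (every recursive call uses a group set strictly contained in $J$, since the guessed group $j^{*}$ is removed) to conclude that the DP computes the optimum of each cell. Fix an optimal solution $\OPT(C)$ of $C$ and recall that $P_{i^{*}}\in\P_{j^{*}}$ is the triangle of $\OPT(C)$ whose right-most vertex $\bar{v}_{i^{*}}$ is closest to the right edge of $K$, with placement $\tilde{P}_{i^{*}}$. I would show that the obstacle $\tilde{P}_{i^{*}}\cup R_{i^{*}}^{\mathrm{up}}\cup L_{i^{*}}$ cuts the region $K'$ of $C$ into at most three connected pieces, two of which are exactly the regions of the two sub-cells $C_{1},C_{2}$ named in the part of the lemma that applies to $C$, while the third lies entirely to the right of $L_{i^{*}}$. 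Given this, the lemma is finished off by two short inequalities.

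For the decomposition I would go through the four possible shapes of $K'$, with $C=(J)$ (so $K'=K$) as the model. In this case $K\setminus(\tilde{P}_{i^{*}}\cup R_{i^{*}}^{\mathrm{up}}\cup L_{i^{*}})$ has exactly the three components of the DP construction: a left one bounded by the left edge of $K$, by $R_{i^{*}}^{\mathrm{up}}$ and $\tilde{P}_{i^{*}}$, and by parts of the top and bottom edges; a middle one bounded by the top edge, $\tilde{P}_{i^{*}}$, and $L_{i^{*}}$; and a right one to the right of $L_{i^{*}}$ --- and these are, respectively, the region of $(J',P_{i^{*}},\tilde{P}_{i^{*}},\lleft)$, the region of $(J'',P_{i^{*}},\tilde{P}_{i^{*}},\mmid)$, and a piece containing no triangle of $\OPT(C)$. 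For a one- or two-triangle cell the same cut of $K'$ again yields a ``left''/``middle'' pair plus a right-of-$L_{i^{*}}$ piece, except that the two useful pieces are additionally bounded by the boundary of $K'$ (by $P_{i}$'s or $P_{i'}$'s triangle-plus-up-ray, or by $L_{i'}$), so that they coincide with the sub-cell regions exactly as listed in parts (2)--(4): in each case one sub-cell is the ``left'' or ``middle'' cell of $P_{i^{*}}$ and the other is the two-triangle cell pairing $P_{i^{*}}$ with the previous triangle. In checking this one uses that, after the standing reductions (see Lemma~\ref{lem:no-intersect-up}), every triangle is corner-facing with $v^{*}$ within $O(\epsilon)N$ of $p_{TL}$, so $R_{i^{*}}^{\mathrm{up}}$ is a short vertical segment that reaches the top boundary of $K'$ cleanly --- no spurious components, no merging of the right piece into the others --- together with the monotonicity in $\ell_{i}$ of the positions of $v^{*}$ and $\bar{v}$ of corner-facing triangles, which is what keeps the new two-triangle cells well-formed, i.e.\ with the right-most $\bar{v}$ belonging to the triangle of smaller group index as demanded by the hypothesis $\ell<\ell'$.

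With the decomposition in hand, every triangle $P\in\OPT(C)\setminus\{P_{i^{*}}\}$ satisfies: (i) it avoids the interior of $\tilde{P}_{i^{*}}$, since $\OPT(C)$ is non-overlapping; (ii) it does not cross $R_{i^{*}}^{\mathrm{up}}$, by Lemma~\ref{lem:no-intersect-up}; and (iii) it lies weakly to the left of $L_{i^{*}}$, because $P$ lies to the left of the vertical line through its own right-most vertex and, by the choice of $P_{i^{*}}$, that vertex is no farther right than $\bar{v}_{i^{*}}$. Being connected, $P$ therefore lies inside exactly one of the two useful pieces, and the right-of-$L_{i^{*}}$ piece contains no triangle of $\OPT(C)$. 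Let $J'$ and $J''$ be the sets of group indices of the triangles of $\OPT(C)$ lying in $C_{1}$ and in $C_{2}$ respectively; by Lemma~\ref{lem:Pj-properties} and the standing one-triangle-per-group reduction, $J'$ and $J''$ are disjoint subsets of $J$ with $j^{*}\notin J'\cup J''$. Writing $\OPT(C)=\{P_{i^{*}}\}\sqcup A_{1}\sqcup A_{2}$ with $A_{k}$ the triangles of $\OPT(C)$ in $C_{k}$, each $A_{k}$ is a feasible solution of the sub-cell $C_{k}$, so $w(\OPT(C))\le w(P_{i^{*}})+w(\OPT(C_{1}))+w(\OPT(C_{2}))$; conversely, $P_{i^{*}}$ placed at $\tilde{P}_{i^{*}}$ together with optimal solutions of $C_{1}$ and $C_{2}$ occupies pairwise-disjoint regions and uses the pairwise-disjoint group sets $\{j^{*}\},J',J''\subseteq J$, hence is a feasible solution of $C$ and gives $w(\OPT(C))\ge w(P_{i^{*}})+w(\OPT(C_{1}))+w(\OPT(C_{2}))$. (The same feasibility argument with an arbitrary guessed triangle and arbitrary disjoint $J',J''$ shows that the DP transition never overshoots, which is the other half of the DP-correctness statement.) The two inequalities give the claimed decomposition.

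I expect the main obstacle to be the case-by-case geometric verification of the second paragraph: showing, for each of the four cell shapes, that $K'$ minus the obstacle really has the stated component structure and that the two useful components coincide \emph{as sets} with the sub-cell regions as they were defined, and that the ordering $\ell<\ell'$ is inherited by the new two-triangle cells. Everything after the decomposition --- the two inequalities and the disjointness of $J'$ and $J''$ --- is routine.
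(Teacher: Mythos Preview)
Your proposal is correct and follows essentially the same route as the paper: the paper also isolates the key fact that no triangle of $\OPT_C$ crosses $L_{i^{*}}$ (proved exactly as your point (iii)), combines it with Lemma~\ref{lem:no-intersect-up} for $R_{i^{*}}^{\mathrm{up}}$, and concludes that every remaining triangle lies in one of the two sub-cell regions, defining $J',J''$ by which region each triangle lands in. The paper treats only the case $C=(J)$ explicitly and dismisses the others with ``similarly'', and it does not spell out the reverse inequality or the well-formedness of the new two-triangle cells; your write-up is more complete on all three points but does not diverge in substance.
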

\begin{proof}
Let $\OPT_{C}$ denote the optimal solution to the cell $C$. 

\begin{claim}
\label{lem:no-intersect-right}Consider a feasible solution $S$ for the cell $C$.
Let $P_{i^{*}}$ be the triangle in $S$ whose vertex $\bar{v}_{i^{*}}$
is closest to the right edge of the knapsack. Then it holds that $L_{i^{*}}\cap P_{i}=\emptyset$
for each triangle $P_{i}\in S$. 
\end{claim}
\begin{claimproof}[Proof of Claim.]
If this was not the case then there would be a triangle $P_{i}\in S$
that intersects $L_{i^{*}}$. Hence, $P_{i}$ has one vertex that
is closer to the right edge of the knapsack than $\bar{v}_{i^{*}}$.
In particular, $\bar{v}_{i}$ is closer to the right edge of the knapsack
than $\bar{v}_{i^{*}}$ which yields a contradiction. 
\end{claimproof}
First
assume that $C=(J)$. By Lemmas~\ref{lem:no-intersect-up} and the Claim
no triangle in $\OPT_{C}$ intersects $L_{i^{*}}$ or $R_{i^{*}}^{\mathrm{up}}$
and no triangle in $\OPT_{C}$ has a vertex on the right of $L_{i^{*}}$.
Hence, each triangle in $\OPT_{C}$ is contained in the area corresponding
to the cells $(J',P_{i^{*}},\tilde{P}_{i^{*}},\lleft)$ and $(J'',P_{i^{*}},\tilde{P}_{i^{*}},\mmid)$.
We define $J'$ to be the set of indices $j\in J$ such that in $\OPT_{C}$
there is a triangle $P_{i}\in\OPT_{C}$ contained in the area corresponding
to $(J',P_{i^{*}},\tilde{P}_{i^{*}},\lleft)$ and $J''$ similarly.
The other cases can be verified similarly
\end{proof}
We guess the sets $J',J''\subseteq J$ according to Lemma~\ref{lem:DP-transition}
and store in $C$ the solution consisting of $P_{i^{*}}$, and the
solutions stored in the two cells according to the lemma. At the end,
the cell $C=(\left\{ j_{\min},...,j_{\max}\right\} )$ (whose corresponding
region equals to $K$) contains the optimal solution.  
\begin{lem}
\label{lem:DP-CF}There is an algorithm with a running time of $(nN)^{O(1)}$
that computes a solution $\P'\subseteq\P$ such that $w(\OPTcf)\le O(w(\P'))$.
\end{lem}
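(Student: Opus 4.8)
The plan is to show that the dynamic program described above can be evaluated in time $(nN)^{O(1)}$ and that it computes an optimal solution among all packings that have the structure guaranteed, after a constant-factor loss, by Lemmas~\ref{lem:face-bottom-right} and~\ref{lem:no-intersect-up} and the reduction (using Lemma~\ref{lem:Pj-properties}) to at most one triangle per group $\P_{j}$. Concretely, after those reductions there is a set $S\subseteq\OPTcf$ with $w(\OPTcf)\le O(w(S))$ that faces the bottom-right corner, satisfies $R_{i}^{\mathrm{up}}\cap P_{i'}=\emptyset$ for all $P_{i},P_{i'}\in S$, uses at most one triangle from each $\P_{j}$, and (by Lemma~\ref{lem:placement-hard-polygons}) places each triangle according to one of $(nN)^{O(1)}$ precomputed placements. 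Since the DP computes a feasible packing of weight at least $w(S)$, this proves $w(\OPTcf)\le O(w(\P'))$ for the returned solution $\P'$.

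First I would bound the number of DP-cells. A cell is specified by a subset $J\subseteq\{j_{\min},\dots,j_{\max}\}$, by up to two input triangles with placements for them drawn from the sets of size $(nN)^{O(1)}$ of Lemma~\ref{lem:placement-hard-polygons}, and, in the one-triangle case, by a value $\mathrm{dir}\in\{\lleft,\mmid\}$. Since $j_{\max}-j_{\min}=O(\log N)$ by Lemma~\ref{lem:Pj-properties}, there are $2^{O(\log N)}=N^{O(1)}$ choices for $J$, and hence $(nN)^{O(1)}$ cells in total. I would evaluate the cells in order of increasing $|J|$. This order is consistent with the recursion of Lemma~\ref{lem:DP-transition}: the separator triangle $P_{i^{*}}$ there lies in some $\P_{j^{*}}$ with $j^{*}\in J$, and since $P_{i^{*}}$ is contained in neither of the two sub-regions, the index sets $J',J''$ used for the two recursive cells are disjoint subsets of $J\setminus\{j^{*}\}$, so $|J'|,|J''|\le|J|-1$. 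The base case $|J|=\emptyset$ is trivial: no triangle can be placed, so the stored solution is empty.

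For the transition at a cell $C$ I would enumerate all $n$ choices of the triangle $P_{i^{*}}$, all of its $(nN)^{O(1)}$ candidate placements $\tilde P_{i^{*}}$ that fit into the region of $C$ (checking that a triangle lies inside a region bounded by $O(1)$ segments, rays, and triangle edges takes polynomial time), and all $3^{|J|}=N^{O(1)}$ ways of assigning each group of $J$ to $J'$, to $J''$, or to neither; for each such guess I take $P_{i^{*}}$ together with the solutions already stored in the two cells prescribed by Lemma~\ref{lem:DP-transition}, and let $DP(C)$ be the most profitable candidate among all of these and the empty solution. A straightforward induction on $|J|$, whose inductive step is exactly Lemma~\ref{lem:DP-transition} (together with Lemma~\ref{lem:no-intersect-up} and Claim~\ref{lem:no-intersect-right}, which guarantee that every triangle of $\OPT_{C}$ other than $P_{i^{*}}$ lies cleanly in one of the three components of $K\setminus(\tilde P_{i^{*}}\cup R_{i^{*}}^{\mathrm{up}}\cup L_{i^{*}})$), shows that $DP(C)=\OPT_{C}$; disjointness of $J'$ and $J''$ together with the one-triangle-per-group reduction ensures no triangle is selected in both sub-regions, so the stored solutions are genuine packings. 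In particular $DP(\{j_{\min},\dots,j_{\max}\})$ has weight $w(S)$, and multiplying the $(nN)^{O(1)}$ cells by the $(nN)^{O(1)}$ work per transition gives the claimed running time.

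The main obstacle is not any of this bookkeeping but the correctness of the transition itself, i.e.\ that the rightmost triangle $P_{i^{*}}$ of an optimal cell solution genuinely splits the remaining triangles into two independent sub-instances of the same type, with a consistent partition of groups; this is precisely Lemma~\ref{lem:DP-transition}, and its delicate geometric ingredient is that $R_{i^{*}}^{\mathrm{up}}$ and $L_{i^{*}}$ meet no other triangle of the solution (Lemma~\ref{lem:no-intersect-up} and Claim~\ref{lem:no-intersect-right}), so that the three connected components of $K\setminus(\tilde P_{i^{*}}\cup R_{i^{*}}^{\mathrm{up}}\cup L_{i^{*}})$ are exactly the regions of the recursive cells. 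Once that is in hand, the lemma follows by combining the cell count, the per-cell running time, the induction above, and the constant-factor reductions.
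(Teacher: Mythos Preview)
Your proposal is correct and follows essentially the same approach as the paper: bound the number of DP-cells via $2^{O(\log N)}$ choices for $J$ and $(nN)^{O(1)}$ placements per triangle, enumerate in each cell the separator triangle $P_{i^{*}}$, its placement, and the partition $J',J''$, and appeal to Lemma~\ref{lem:DP-transition} for correctness. You spell out a few more details (the evaluation order by $|J|$ and the explicit induction) than the paper does, but the argument is the same.
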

\begin{proof}
Since we applied Lemma~\ref{lem:few-positions} there are only $(nN)^{O(1)}$
different placements for each triangle. Also, there are only $2^{O(\log N)}=N^{O(1)}$
possibilities for the set $J$ in the description of the DP-cell.
Therefore, the number of DP-cells is bounded by $(nN)^{O(1)}$. In
order to compute the value of a DP-cell $C$, we guess the triangle
$P_{i^{*}}$ and its corresponding placement $\tilde{P}_{i^{*}}$,
and in particular we reject a guess if $\tilde{P}_{i^{*}}$ is not
contained in the region corresponding to $C$. Also, we guess $J'$
and $J''$ for which there are only $N^{O(1)}$ possibilities each
and reject guesses which do not satisfy that $J'\subseteq J$, $J''\subseteq J$,
and that $J'\cap J''=\emptyset$. Therefore, in each DP-cell we store
a solution that is feasible. We can fill the complete DP-table in
time $(nN)^{O(1)}$. Using Lemma~\ref{lem:DP-transition} one can
show that the cell $C=(\left\{ j_{\min},...,j_{\max}\right\} )$ contains
a solution $\P'$ such that with weight at least $\Omega(w(\OPTcf))$.
\end{proof}
By combining Lemmas~\ref{lem:DP-EF} and \ref{lem:DP-CF} we obtain
the proof of Lemma~\ref{lem:hard-triangles}.

\subsubsection{\label{subsec:Existence-profitable-solution}Existence of profitable
top-left- or bottom-right-packable solution}

In this subsection we prove Lemma~\ref{lem:exists-good-packable-solution}.
Let $\epsilon>0$ be a constant to be defined later. Like in the proof
of Lemma~\ref{lem:placement-hard-polygons}, we observe that there
can be only $O_{\epsilon}(1)$ classes $\P_{j}$ containing polygons
whose respective values $\ell_{i}$ are not larger than $(\sqrt{2}-\epsilon)N$;
recall that $\sqrt{2}N$ is the length of the diagonal of $K$. Furthermore,  we have that in any
placement of $P_{i}$ inside $K$ the angle $\alpha_i$ between the line segment defining $\ell_{i}$
(i.e., the line segment connecting the two vertices of $P_{i}$ with
maximum distance) and the bottom edge of the knapsack is essentially $\frac{\pi}{4}$. This implies that $\sin \alpha_i$ and $\cos \alpha_i$ are essentially $\frac{\sqrt 2}{2}$. Finally, recall that $r_i = N-\sqrt{\ell_i^2 -N^2}$ as defined in Lemma \ref{lem:hexagon-bound} this is at most $\epsilon N$.
We summarize this in the next proposition.

\begin{proposition}\label{lem:epsilon} For each $\epsilon>0$ there is a constant $k_{\epsilon}\in\N$
such that each polygon $P_{i}\in\bigcup_{j=j_{\min}}^{j_{\max}-k_{\epsilon}}\P_{j}$
satisfies that:
\begin{enumerate}
	\item $\ell_i \geq (\sqrt 2 -\epsilon)N$,
	\item $\frac{\pi}{4}-\epsilon \leq \alpha_i \leq \frac{\pi}{4}+\epsilon$,
	\item $\frac{\sqrt{2}}{2}-\epsilon \leq \sin\alpha_i \leq \frac{\sqrt{2}}{2}+\epsilon  $,
	\item $\frac{\sqrt{2}}{2}-\epsilon \leq \cos\alpha_i \leq \frac{\sqrt{2}}{2}+\epsilon  $,
	\item $r_i\leq\epsilon N$.
\end{enumerate}
\begin{proof}
	We only prove 1 and 5 as 2, 3 and 4 are direct consequence of 1. Note that by choosing $k_\epsilon=j_{\max} - \log(\epsilon N)$ we get that:
\[
	\ell_i \geq \sqrt 2 N - 2^{j_{\max} - k_\epsilon} = (\sqrt 2 -\epsilon) N. 
\]	We now prove that $r_i\leq \epsilon N$. Let $\epsilon'= \frac{\epsilon^2}{2\sqrt 2}$. By part one of this proposition we assume that $\ell_i \geq \left(2 - \epsilon'\right)N$. Therefore:
	\begin{align*}
	r_i & \leq \left(1-\sqrt{(\sqrt 2 - \epsilon ')^2 -1}\right)N \\
	& \leq (1 - \sqrt{1-2\sqrt 2 \epsilon'}) N \\
	& \leq \sqrt{2\sqrt 2 }\sqrt{\epsilon'}N = \epsilon N.\qedhere
	\end{align*}\end{proof}
\end{proposition}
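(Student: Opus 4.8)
The plan is to prove item~(1) directly from the definition of the groups $\P_j$, and then to derive items~(2)--(5) from it, since each of those is a purely geometric or algebraic consequence of the segment realizing $\ell_i$ being almost as long as the diagonal $\sqrt 2 N$ of $K$.

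First I would dispatch item~(1). By definition $P_i\in\P_j$ forces $\ell_i\ge\sqrt 2 N-2^{j}$, so whenever $j\le j_{\max}-k_\epsilon$ it suffices to take $k_\epsilon$ to be the smallest integer with $2^{j_{\max}-k_\epsilon}\le\epsilon N$, which gives $\ell_i\ge\sqrt 2 N-\epsilon N=(\sqrt 2-\epsilon)N$. The one point that deserves justification is that $k_\epsilon$ is genuinely a constant depending on $\epsilon$ alone and not on $N$: since $j_{\max}=1+\lceil\log((\sqrt 2-1)N)\rceil$, the quantity $j_{\max}-\log N$ is $O(1)$, and hence $k_\epsilon=\log(1/\epsilon)+O(1)$.

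Next I would obtain items~(2)--(4) from item~(1) by the obvious packing bound. In any placement of $P_i$ in $K=[0,N]^2$, the segment realizing $\ell_i$ has horizontal extent $\ell_i\cos\alpha_i$ and vertical extent $\ell_i\sin\alpha_i$, both at most $N$, so $\cos\alpha_i,\sin\alpha_i\le N/\ell_i\le 1/(\sqrt 2-\epsilon')$ once item~(1) is applied with a small auxiliary accuracy $\epsilon'$ in place of $\epsilon$. Since $\sin^2\alpha_i+\cos^2\alpha_i=1$, this upper bound pins each of $\sin\alpha_i$ and $\cos\alpha_i$ to an interval about $\frac{\sqrt 2}{2}$ whose width tends to $0$ with $\epsilon'$, which gives items~(3) and~(4) for $\epsilon'$ small enough; item~(2) then follows because on that range $\alpha_i$ is a continuous function of $\cos\alpha_i$ equal to $\frac{\pi}{4}$ at $\cos\alpha_i=\frac{\sqrt 2}{2}$. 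For item~(5) I would insert item~(1) into the identity $r_i=N-\sqrt{\ell_i^2-N^2}$ recalled in the statement: with $\ell_i\ge(\sqrt 2-\epsilon')N$ one gets $\ell_i^2-N^2\ge(1-2\sqrt 2\,\epsilon')N^2$, and using the elementary inequality $1-\sqrt{1-x}\le\sqrt x$ for $x\in[0,1]$ with $x=2\sqrt 2\,\epsilon'$ this gives $r_i\le\sqrt{2\sqrt 2\,\epsilon'}\,N$, so the choice $\epsilon':=\epsilon^2/(2\sqrt 2)$ yields $r_i\le\epsilon N$.

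I do not anticipate a genuine obstacle; the proposition is essentially bookkeeping around the single observation that the group index $j$ controls the gap $\sqrt 2 N-\ell_i$. The only thing to be careful about is the order of the quantifiers: one should first fix the five target accuracies, translate each of items~(2)--(5) back into a sufficient lower bound on $\ell_i$ (equivalently, on the group cutoff index), take the most restrictive of these together with the bound needed for item~(1), and only then define $k_\epsilon$, so that a single constant $k_\epsilon$ depending only on $\epsilon$ simultaneously enforces all five conclusions.
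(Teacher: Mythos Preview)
Your proposal is correct and follows essentially the same approach as the paper: item~(1) from the definition of $\P_j$ and the choice of $k_\epsilon$, item~(5) via the same substitution $\epsilon'=\epsilon^2/(2\sqrt 2)$ and the bound $1-\sqrt{1-x}\le\sqrt x$, and items~(2)--(4) as consequences of~(1). If anything, you are more careful than the paper in two places: you explicitly check that $k_\epsilon$ is independent of $N$ (the paper's formula $k_\epsilon=j_{\max}-\log(\epsilon N)$ leaves this implicit), and you actually supply the packing argument for items~(2)--(4) rather than just asserting they follow.
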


Due to Lemma~\ref{lem:Pj-properties} there can be only $O_{\epsilon}(1)$
hard polygons in $\OPT\cap\bigcup_{j=j_{\max}-k_{\epsilon}+1}^{j_{\max}}\P_{j}$.
Hence, it suffices to prove the claim for the hard polygons in $\OPT_{W}:=\OPTef\cap\bigcup_{j=j_{\min}}^{j_{\max}-k_{\epsilon}}\P_{j}$
since otherwise the second case of Lemma~\ref{lem:exists-good-packable-solution}
applies if we define that $\P_{H}^{*}$ contains the polygon in $\OPTef$
of maximum weight. Note that it holds that each triangle $P_{i}\in\OPT_{W}$
intersects the line segment $L$ that we define to be the line segment
that connects $p_{L}$ with $p_{R}$. Let $L_{1}$ denote the subsegment
of $L$ that connects $p_{M}$ with $p_{R}$ and let $L_{2}$ denote
the line segment connecting $p_{L}$ with $p_{M}$. Now each triangle
in $\OPT_{W}$ either overlaps $p_{M}$ or intersects $L_{1}$ but
not $L_{2}$ or it intersects $L_{2}$ but not $L_{1}$. Therefore,
by losing a factor of 3 we can restrict ourselves to one of these
cases.
\begin{lem}
\label{lem:factor-3-L1}If $\epsilon$ is sufficiently small then
by losing a factor 3 we can assume that for each triangle $P_{i}\in\OPT_{W}$
we have that $P_{i}\cap L=P_{i}\cap L_{1}$ or that $|\OPT_{W}|=1$. 
\end{lem}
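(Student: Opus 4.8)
The plan is to carry out the three‑way case distinction announced just above the lemma and to show that its only genuinely troublesome case forces $|\OPT_W|=1$. First I would note that every $P_i\in\OPT_W$ meets $L$: the segment realising $\ell_i$ is the longest edge of the triangle $P_i$, hence is contained in $P_i\subseteq K$, and a segment of length $\ell_i\ge(\sqrt2-\epsilon)N$ lies in neither $[0,N]\times[0,N/2]$ nor $[0,N]\times[N/2,N]$, since for $\epsilon$ small each of these has diameter $\tfrac{\sqrt5}{2}N<(\sqrt2-\epsilon)N$; thus this edge, and therefore $P_i$, crosses $\{y=N/2\}$ inside $K$, i.e.\ $P_i\cap L\neq\emptyset$. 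By convexity $P_i\cap L$ is a subsegment of $L$, so exactly one of the following holds: (A)~$P_i\cap L\subseteq L_1$; (B)~$P_i\cap L\subseteq L_2$; (C)~$P_i\cap L$ is a nondegenerate segment whose relative interior contains $p_M$ (the tie $P_i\cap L=\{p_M\}$ is assigned to case~(A)). This partitions $\OPT_W$ into three classes, and we keep the class of largest weight, losing a factor of $3$.

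If the surviving class is (A), then $P_i\cap L\subseteq L_1\subseteq L$ gives $P_i\cap L=P_i\cap L\cap L_1=P_i\cap L_1$ for every kept triangle, which is the first alternative of the lemma. If it is (B), the same conclusion follows after reflecting the whole configuration across the vertical midline of $K$, which swaps $L_1$ and $L_2$; this is harmless since in the proof of Lemma~\ref{lem:exists-good-packable-solution} one separately searches for the best top-left-packable solution and the best bottom-right-packable solution, and the two reflected variants of case~(A) feed into these two symmetric constructions.

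There remains class~(C), and the claim is that it contains at most one triangle, so that after discarding classes (A) and (B) we land in the case $|\OPT_W|=1$. Let $P_i$ belong to (C); then $P_i$ contains a segment of the line $\{y=N/2\}$ whose relative interior contains $p_M$, so either $p_M$ is an interior point of $P_i$, or $p_M\in\partial P_i$ and then a supporting line of $P_i$ at $p_M$ must contain that whole segment, forcing some edge of $P_i$ to lie on $\{y=N/2\}$. I would exclude the second possibility using Claim~\ref{lem:claims} together with Proposition~\ref{lem:epsilon}: since $r_i\le\epsilon N<N/2$, the longest edge $D_i$ of $P_i$ lies in $H_1(r_i)$ or $H_2(r_i)$, a thin hexagon of width $O(\epsilon N)$ around one diagonal of $K$; as $|D_i|=\ell_i\ge\sqrt2 N-\epsilon N$ is almost equal to that hexagon's diameter $\sqrt2 N$, both endpoints of $D_i$ lie within $O(\epsilon N)$ of the two opposite corners of $K$ at which this diameter is attained, hence at distance at least $N/2-O(\epsilon N)$ from $\{y=N/2\}$. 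Since every edge of $P_i$ is incident to one of those two endpoints, no edge of $P_i$ can lie on $\{y=N/2\}$; therefore every triangle in class~(C) contains $p_M$ in its interior, and as the triangles of $\OPT_W$ have pairwise disjoint interiors, class~(C) contains at most one triangle. The step I expect to be the crux is exactly this last one — converting the near‑extremality of $\ell_i$, via Claim~\ref{lem:claims}, into the statement that the endpoints of $D_i$ sit near two corners of $K$; the supporting‑line argument and the reflection handling case~(B) are then routine, provided the reflection is set up compatibly with the top‑left/bottom‑right dichotomy used afterwards.
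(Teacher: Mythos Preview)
Your argument is correct and follows the same three-way split as the paper; your treatment of the $p_M$ case is in fact more careful than the paper's one-line assertion that at most one triangle can overlap $p_M$. One small point on the symmetry you flagged: the paper handles case~(B) by rotating $\OPT$ by $180^\circ$ about $p_M$ rather than reflecting across the vertical midline, which sends $p_{TL}$ to $p_{BR}$ and so aligns directly with the top-left/bottom-right dichotomy used afterwards (your reflection would send $p_{TL}$ to $p_{TR}$ instead).
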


\begin{proof}
There can be at most one triangle $P_{i^{*}}\in\OPT_{W}$ that overlaps
$p_{M}$. Each other triangle $P_{i}\in\OPT_{W}$ satisfies that $P_{i}\cap L=P_{i}\cap L_{1}$
or that $P_{i}\cap L=P_{i}\cap L_{2}$. If the triangles $P_{i}\in\OPT_{W}$
satisfying $P_{i}\cap L=P_{i}\cap L_{1}$ have a total weight of at
least $\frac{1}{3}w(\OPT{}_{W})$ or if $w_{i^{*}}\ge\frac{1}{3}w(\OPT_{W})$
then we are done. Otherwise the triangles satisfying that $P_{i}\cap L=P_{i}\cap L_{2}$
have a total weight of at least $\frac{1}{3}w(\OPT{}_{W})$ and we
establish the claim of the lemma by rotating $\OPT$ by 180\degree.
\end{proof}
If $|\OPT_{W}|=1$ then we are done. Therefore, assume now that $P_{i}\cap L=P_{i}\cap L_{1}$
for each $P_{i}\in\OPT_{W}$. In the next lemma we prove that by losing
a factor of $O(1)$ we can assume that the triangles in $\OPT_{W}$
intersect $L_{1}$ in the order of their groups $\P_{j}$ (assuming
that $\epsilon$ is a sufficiently small constant). We call such a
solution group-respecting as defined below.
\begin{defn}
Let $\P'=\{P_{i_{1}},...,P_{i_{k}}\}$ be a solution in which each
triangle intersects $L_{1}$ and assume w.l.o.g.~that the triangles
in $\P'$ intersect $L_{1}$ in the order $P_{i_{1}},...,P_{i_{k}}$
when going from $p_{M}$ to $p_{R}$. We say that $\P'$ is \emph{group-respecting
}if for any two triangles $P_{i_{\ell}},P_{i_{\ell+1}}\in\P'$ with
$P_{i_{\ell}}\in\P_{j}$ and $P_{i_{\ell+1}}\in\P_{j'}$ for some
$j,j'$ it holds that $j\le j'$.
\end{defn}

For each $P_{i}\in\OPT_{W}$ let $d_{i}$ denote the length of the
intersection of $P_{i}$ and $L_{1}$ in the placement of $\OPT$.
\begin{lem}
\label{lem:group-respect}If $\epsilon$ is sufficiently small, then
by losing a factor $O(1)$ we can assume that $\OPT_{W}$ is group-respecting
and that $|\OPT_{W}\cap\P_{j}|\le1$ for each $j$. 
\end{lem}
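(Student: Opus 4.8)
The lemma has two assertions. The bound $|\OPT_{W}\cap\P_{j}|\le1$ is essentially free: by Lemma~\ref{lem:Pj-properties} each group contributes only $O(1)$ triangles to $\OPT\cap\Ph$, so splitting $\OPT_{W}$ into $O(1)$ sub-solutions each using at most one triangle per group and keeping the most profitable one costs only an $O(1)$ factor; this is exactly the reduction already invoked for $\OPT$ before the present subsection, so we inherit it. The remaining task is to arrange, at the cost of a further $O(1)$ factor, that in the placement of $\OPT$ the triangles of $\OPT_{W}$ intersect $L_{1}$ in non-decreasing order of their group index. Recall that $j<j'$ forces $\ell_{i}>\ell_{i'}$, so ``group-respecting'' means that the triangles with larger $\ell_{i}$ --- those more tightly wedged against the diagonal of $K$ --- come first, i.e.\ closest to $p_{M}$.

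The plan is to first pin down the geometry. By Proposition~\ref{lem:epsilon} every $P_{i}\in\OPT_{W}$ has $\ell_{i}\ge(\sqrt2-\epsilon)N$ and $r_{i}\le\epsilon N$, so, arguing exactly as in the proof of Claim~\ref{lem:claims}, the two endpoints of the longest edge of $P_{i}$ lie within $O(r_{i})$ of a pair of antipodal corners of $K$. Losing an $O(1)$ factor --- keep the most profitable among the $O(1)$ sub-families obtained by distinguishing which corner plays the role of $v_{i}^{*}$ and which diagonal the longest edge is close to, then rotate/reflect $K$ so they all look the same --- we may assume that for every $P_{i}\in\OPT_{W}$ the vertex $v_{i}^{*}$ lies within $O(\epsilon N)$ of $p_{TL}$ and the other endpoint of the longest edge within $O(\epsilon N)$ of $p_{BR}$. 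Hence the longest edge of $P_{i}$ lies within $O(\epsilon N)$ of the segment $\overline{p_{TL}p_{BR}}$, which crosses $L$ exactly at $p_{M}$, and for $P_{i}\in\P_{j}$ the sharper estimate $r_{i}=\Theta(2^{j})$ (equivalently $h'_{i}=\Theta(2^{j})$) shows this edge lies within $O(2^{j})$ of $\overline{p_{TL}p_{BR}}$. Writing $P_{i}\cap L_{1}=[a_{i},b_{i}]$ (nonempty by Lemma~\ref{lem:factor-3-L1}), the fact that $P_{i}\cap L_{2}\subseteq\{p_{M}\}$ gives $a_{i}\ge N/2$, and tracing where the longest edge meets the horizontal line through $p_{M}$ gives $a_{i}\le N/2+O(2^{j})$. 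Finally, since $P_{i}$ is hard it satisfies $h_{i}\ge h'_{i}/8=\Theta(2^{j})$, and combining this with the localization above one obtains a matching lower bound $d_{i}:=b_{i}-a_{i}=\Omega(2^{j})$ on the extent by which $P_{i}$ overlaps $L_{1}$ (the $O(1)$ triangles of $\OPT_{W}$ that actually contain $p_{M}$, for which these estimates degenerate, are discarded up front at the cost of an $O(1)$ factor, keeping the single heaviest triangle alone if those carry most of the weight --- a solution consisting of one triangle is trivially group-respecting).

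Now suppose that in the placement of $\OPT$ a triangle $P_{i}\in\P_{j}$ crosses $L_{1}$ strictly before $P_{i'}\in\P_{j'}$, i.e.\ $b_{i}\le a_{i'}$, but $j>j'$ (an \emph{inversion}). Then $N/2\le a_{i}<b_{i}\le a_{i'}\le N/2+O(2^{j'})$, so $d_{i}=b_{i}-a_{i}\le O(2^{j'})$; together with $d_{i}=\Omega(2^{j})$ this forces $2^{j}=O(2^{j'})$, i.e.\ $j\le j'+O(1)$. Hence inversions occur only between groups whose indices differ by at most an absolute constant. Partitioning the groups into blocks of $O(1)$ consecutive indices and retaining from $\OPT_{W}$ only the heaviest triangle of each block costs one more $O(1)$ factor (a per-block, hence global, argument) and produces a subsolution with at most one triangle per block in which, by the previous sentence, the left-to-right order along $L_{1}$ now coincides with the block order, hence with the group order: it is group-respecting in its $\OPT$-placement. (Had the surviving order come out reversed we would instead have obtained a bottom-right-packable solution, which is precisely the second alternative in Lemma~\ref{lem:exists-good-packable-solution}.) The step I expect to be the main obstacle is the quantitative geometry in the second paragraph: converting the heuristic ``smaller group $\Rightarrow$ $L_{1}$-footprint closer to $p_{M}$'' into the clean bounds $a_{i}\le N/2+O(2^{j})$ and, especially, the lower bound $d_{i}=\Omega(2^{j})$ derived from hardness, so that these estimates survive the $\Theta(2^{j})$ slacks and so that the number of inversions --- and hence the constant factor lost --- is genuinely $O(1)$, independent of $n$ and $N$.
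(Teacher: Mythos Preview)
Your overall strategy matches the paper's: both arguments establish that an ``inversion'' (a pair $P_i\in\P_j$, $P_{i'}\in\P_{j'}$ with $P_i$ left of $P_{i'}$ on $L_1$ but $j>j'$) can only occur when $j-j'\le C$ for an absolute constant $C$, and then partition the groups into $O(1)$ classes so that within each class no inversion survives. Your geometric ingredients --- $a_i\ge N/2$, $a_{i'}\le N/2+O(2^{j'})$ via the diagonal of $P_{i'}$ lying in $H_1(r_{i'})$ with $r_{i'}=\Theta(2^{j'})$, and $d_i=\Omega(2^j)$ from hardness $h_i>h'_i/8$ together with $d_i=\Theta(h_i)$ --- are exactly the ones the paper uses, and the inversion bound you derive is correct.

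The gap is in your final step. You partition the group indices into consecutive blocks of size $O(1)$ and keep the heaviest triangle from each block, then assert that ``by the previous sentence, the left-to-right order along $L_1$ now coincides with the block order.'' This does not follow: the kept triangle from block $k$ may have group index equal to the top of its block, and the kept triangle from block $k+1$ may sit at the bottom of its block, so their indices differ by only $1$, which is $\le C$ and hence still permits an inversion. Your bound only rules out inversions between triangles whose group indices differ by \emph{more} than $C$, and consecutive blocks do not guarantee that. The paper avoids this by partitioning into residue classes modulo $\Gamma$, i.e.\ $\OPT_W^{(a)}:=\OPT_W\cap\bigcup_{k\in\Z}\P_{a+k\Gamma}$ for $a\in\{0,\dots,\Gamma-1\}$; then any two triangles in the same class have group indices differing by at least $\Gamma>C$, so every class is automatically group-respecting, and keeping the heaviest class loses only a factor $\Gamma=O(1)$. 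Your argument is easily repaired either this way or by additionally discarding, say, all even-numbered blocks after your reduction.
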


\begin{proof}
Due to Lemma~\ref{lem:Pj-properties} we lose only a factor $O(1)$
by requiring that $|\OPT_{W}\cap\P_{j}|\le1$ for each $j$. We prove
now that by losing another factor $O(1)$ we can assume that $\OPT_{W}$
is group-respecting.

Let $P_{i}\in\P_{j}\cap\P_{H}$. Let $D$ be the longest edge of $P_{i}$
in the placement of $P_{i}$ in $\OPT_{W}$. Let $\alpha$ be the
angle between $D$ and $L_{1}$.

Like in the proof of Lemma~\ref{lem:hexagon-bound} we define $r_{i}:=N-\sqrt{\ell_{i}^{2}-N^{2}}$.
Intuitively, using $r_{i}$ we can define two hexagonical areas $H_{1}(r_{i}),H_{2}(r_{i})$
close to the diagonals of the knapsack such that the longest edge
of $P_{i}$ lies within $H_{1}(r_{i})$ or within $H_{2}(r_{i})$.
Also, let $r'_{i}$ denote the length of the line segment $L_{1}\cap H_{1}(r_{i})$.
Observe that $r'_{i}=\Theta(r_{i})$.

Let $\tilde{B}_{i}$ denote the rectangle obtained by taking the bounding
box $B_{i}$ of $P_{i}$ and moving and rotating it such that one
of its edges coincides with $P_{i}$ (in the placement of $\OPT$)
inside $K$. Note that the intersection of $\tilde{B}_{i}$ and $L_{1}$
has length at most $\frac{h_{i}}{\sin\alpha}\leq \frac{1}{1-\epsilon}h_i\leq2h_{i}$ by Proposition \ref{lem:epsilon}.
Therefore, $d_{i}\le2h_{i}$. On the other hand, if $\epsilon$ is
sufficiently small then $h_{i}\le O(d_{i})$ and thus $d_{i}=\Theta(h_{i})$.
Also, it holds that $r_{i}=\Theta(h'_{i})$ and hence $d_{i}=\Omega(r_{i})$.
Let $\gamma$ be a constant such that $d_{i}\ge\gamma\cdot r_{i}$
for each $P_{i}\in\OPT_{W}$.

Then there exists a constant $\Gamma\ge1$ such that the following
holds. For any two polygons $P_{i}\in\OPT_{W}\cap P_{j}$, $P_{i'}\in\OPT_{W}\cap P_{j'}$
such that $j+\Gamma\le j'$ we have that $r'_{i}\le d_{i'}$ since
$r'_{i}\le\Omega(r_{i})\le\Omega(r_{i'})\le\Omega(d_{i})$ and for
each constant $\gamma\ge1$ we can guarantee that $r_{i}\le\gamma r_{i'}$
if we choose $\Gamma$ sufficiently large. Let $D$ be the longest
edge of $P_{i'}$. Assume w.l.o.g.~that $D$ lies within $H_{1}(r_{i})$.
However, we have that $r_{i}\le d_{i'}$. Since $P_{i}$ and $P_{i'}$
do not intersect, they must therefore be placed in a group-respecting
manner in $\OPT_{W}$.

We split $\OPT_{W}$ into $\Gamma$ groups such that for each offset
$a\in\{0,...,\Gamma-1\}$ we define $\OPT_{W}^{(a)}:=\OPT_{W}\cap\bigcup_{k\in\Z}\P_{a+k\Gamma}$.
Therefore, for each solution $\OPT_{W}^{(a)}$ it holds that for any
two distint polygons $P_{i}\in\OPT_{W}^{(a)}\cap P_{j}$, $P_{i'}\in\OPT_{W}^{(a)}\cap P_{j'}$
for values $j,j'$ it holds that $P_{i}$ and $P_{i'}$ are placed
in a group-respecting manner in $\OPT_{W}$. Then taking the most
profitable solution among the solutions $\left\{ \OPT_{W}^{(a)}\right\} _{a\in\{0,...,\Gamma-1\}}$
loses at most another factor $\Gamma=O(1)$.
\end{proof}

For each triangle $P_{i}\in\OPT_{W}$ let $v_{i}^{*}$ be the vertex
adjacent to the two longest edges of $P_{i}$ in the placement of
$P_{i}$ in $\OPT$. Also, let $\theta_i$ denote the angle at $v_i^*$. 
We note that there exist only constantly many triangles $P_{i}\in\OPT_{W}$ with $\theta_i>\epsilon$.

\begin{lem}\label{lem:theta-bo}
	There exist at most $O_\epsilon(1)$ triangles in $P_i \in \OPT_W$ such that $\theta_i > \epsilon$. 
\end{lem}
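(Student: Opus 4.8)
The plan is to prove this by an area argument: I will show that every triangle $P_i \in \OPT_W$ with $\theta_i > \epsilon$ has area $\Omega_\epsilon(N^2)$, and since the triangles of $\OPT$ (in particular those of $\OPT_W$) are packed pairwise non-overlappingly into $K$, which has area $N^2$, only $O_\epsilon(1)$ of them can have this property.

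To carry this out, I would fix $P_i \in \OPT_W$ with $\theta_i > \epsilon$ and let $e_1, e_2, e_3$ be the edges of $P_i$ in non-increasing order of length. Since $P_i \in \OPT_W$, Proposition~\ref{lem:epsilon} gives $\ell_i \ge (\sqrt 2 - \epsilon)N$, and for a triangle $\ell_i$ (the longest diagonal) equals $|e_1|$, so $|e_1| \ge (\sqrt 2 - \epsilon)N$. By the triangle inequality $|e_1| \le |e_2| + |e_3| \le 2|e_2|$, hence $|e_2| \ge |e_1|/2 \ge (\sqrt 2 - \epsilon)N/2$. Finally, $v_i^*$ is the vertex opposite $e_3$, so $\theta_i$ is the angle opposite the shortest edge and is therefore the smallest angle of $P_i$; in particular $\theta_i \le \pi/3$, so (assuming the constant $\epsilon < \pi/3$, which we may) monotonicity of $\sin$ on $[0,\pi/2]$ gives $\sin\theta_i \ge \sin\epsilon$. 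Combining these,
\[
\area(P_i) \;=\; \tfrac12\,|e_1|\,|e_2|\,\sin\theta_i \;\ge\; \tfrac12\,(\sqrt 2 - \epsilon)N\cdot\tfrac{(\sqrt 2 - \epsilon)N}{2}\cdot\sin\epsilon \;=\; \tfrac{(\sqrt 2 - \epsilon)^2\sin\epsilon}{4}\,N^2,
\]
which is $\Omega_\epsilon(N^2)$ since $\epsilon$ is a constant.

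Since the triangles in $\OPT_W$ are placed pairwise non-overlappingly inside $K$ and $\area(K) = N^2$, it follows that the number of triangles $P_i \in \OPT_W$ with $\theta_i > \epsilon$ is at most $N^2 / \bigl(\tfrac{(\sqrt 2 - \epsilon)^2\sin\epsilon}{4}N^2\bigr) = \tfrac{4}{(\sqrt 2 - \epsilon)^2\sin\epsilon} = O_\epsilon(1)$, as claimed. I do not expect a real obstacle here; the only point that needs a moment's care is the observation that $\theta_i$, being opposite the shortest side, is the smallest angle of the triangle — this is exactly what rules out the degenerate situation of a very flat triangle with $\theta_i$ close to $\pi$ (and hence $\sin\theta_i$ small), which would otherwise weaken the area bound.
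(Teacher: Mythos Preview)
Your proposal is correct and follows essentially the same area argument as the paper: show that every $P_i\in\OPT_W$ with $\theta_i>\epsilon$ has area $\Omega_\epsilon(N^2)$ and conclude via the non-overlapping packing into $K$. The only (minor) difference is in how the area lower bound is obtained: you use the direct formula $\area(P_i)=\tfrac12|e_1||e_2|\sin\theta_i$ together with the clean observation that $\theta_i$, being opposite the shortest side, satisfies $\theta_i\le\pi/3$ (so $\sin\theta_i\ge\sin\epsilon$), whereas the paper inscribes a right sub-triangle at $v_i^*$ and bounds its area by $\frac{\ell_i^2}{8}\tan\theta_i$; both routes give the same $\Omega_\epsilon(N^2)$ bound.
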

\begin{proof}
	Let $P_{i_k} \in \OPT_W$.
	Note that the triangle $T$ with angle $\theta_i$ at $v_{i}^*$ and side lengths $\frac{\ell_i}{2}$ and $\tan\theta_i\frac{\ell_i}{2}$ is contained in $P_{i}$ (otherwise $v_{i}^*$ is not adjacent to the two longest sides). Therefore:
	\begin{align*}
	\area(P_{i}) & \geq \area(T) \\
	& =\frac{\ell_i^2}{8} \tan \theta_i \\
	& > \frac{\ell_i^2}{8} \tan\epsilon \\
	& \geq \frac{(\sqrt 2 -\epsilon)^2}{8}N^2\tan \epsilon .
	\end{align*} 
	Hence there can only be at most $O_\epsilon(1)$ such triangles in $\OPT_W$.
\end{proof}

Due to Lemma \ref{lem:theta-bo} we assume now $\theta_i \leq \epsilon$ for each $P_i \in \OPT_W$. 
Furthermore, since each triangle $P_{i}\in\OPT_{W}$
is very wide, $v_{i}^{*}$ must be close to one of the four corners
of $K$ since otherwise the longest edge of $P_{i}$ does not fit
into $K$. Thus, by losing a factor 4 we assume that $v_{i}^{*}$
is close to $p_{TL}$ for each $P_{i}\in\OPT_{W}$. 
\begin{lem}
\label{lem:close_top} By losing a factor $2$ we can assume for each $P_{i}\in\OPT_{W}$ that
$\left\Vert v_{i}^{*}-p_{TL}\right\Vert _{2}\le 2\epsilon N.$

\end{lem}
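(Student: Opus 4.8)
The plan is to localise $v_i^*$ near a corner of $K$ and then, losing only a factor $2$, to reduce to the corner $p_{TL}$.

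First I would observe that $v_i^*$, being the vertex of $P_i$ adjacent to its two longest edges, is in particular an endpoint of the longest edge of $P_i$; since $P_i$ is a triangle, this longest edge is exactly its diameter $D_i$, of length $\ell_i\ge(\sqrt2-\epsilon)N$ because $P_i\in\OPT_W$. By Proposition~\ref{lem:epsilon} we have $r_i:=N-\sqrt{\ell_i^2-N^2}\le\epsilon N$, so for $\epsilon$ small $r_i<N/2$ and hence $H_1(r_i)\cup H_2(r_i)\ne K$; thus Claim~\ref{lem:claims} (from the proof of Lemma~\ref{lem:hexagon-bound}) applies to the placement of $P_i$ in $\OPT$. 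Its proof in fact establishes that \emph{both} endpoints of $D_i$ lie in $T_1(r_i)\cup T_2(r_i)$, a union of four small regions, one clustered at each corner of $K$ and each within distance $r_i$ of the respective corner. Consequently $v_i^*$, and likewise the other endpoint $w_i$ of $D_i$, lies within distance $r_i\le\epsilon N$ of one of $p_{TL},p_{TR},p_{BL},p_{BR}$; since $|D_i|=\ell_i\ge(\sqrt2-\epsilon)N$ these two corners must be opposite, so $D_i$ lies in an $O(\epsilon N)$-neighbourhood of one of the two diagonals of $K$, and since moreover $\theta_i\le\epsilon$ and the other two vertices of $P_i$ are at distance at most $\ell_i$ from $v_i^*$, the whole triangle $P_i$ lies in such a neighbourhood.

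Next I would partition $\OPT_W$ into the triangles whose diameter is near the main diagonal $\overline{p_{BL}p_{TR}}$ and those whose diameter is near the anti-diagonal $\overline{p_{TL}p_{BR}}$, and keep the more profitable class, losing a factor $2$. Applying, if necessary, the reflection of $K$ across the horizontal midline $L$ — which interchanges the two diagonals while fixing $L$, $L_1$ and $L_2$ pointwise and not changing any $\ell_i$, hence preserving every property accumulated so far (in particular $P_i\cap L=P_i\cap L_1$ from Lemma~\ref{lem:factor-3-L1} and group-respectingness from Lemma~\ref{lem:group-respect}) — I may assume all diameters lie near the anti-diagonal, so every $v_i^*$ is within $\epsilon N$ of $p_{TL}$ or of $p_{BR}$. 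It then remains to exclude $p_{BR}$: here I would use that, by Lemma~\ref{lem:factor-3-L1}, $P_i$ does not contain $p_M$ and meets $L$ only inside $L_1$, together with $P_i$ being an $O(\epsilon N)$-thin sliver along the anti-diagonal that is edge-facing — a sliver whose sharp vertex $v_i^*$ sits near $p_{BR}$, whose body runs up towards $p_{TL}$, and which stays strictly to the right of $p_M$ on $L$ is forced either to cross $L_2$ or to have its two long edges, extended past $v_i^*$, exit $K$ through different edges, contradicting respectively $P_i\cap L=P_i\cap L_1$ or edge-facingness. Hence $\|v_i^*-p_{TL}\|_2\le r_i\le\epsilon N\le 2\epsilon N$ for every remaining triangle.

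I expect this last step to be the main obstacle: the localisation of $v_i^*$ near some corner is essentially a corollary of Claim~\ref{lem:claims}, but trimming the loss down to a factor $2$ (rather than the factor $4$ one gets by naively keeping the best of four corner classes) forces one to combine the edge-facing property, the thinness coming from $\theta_i\le\epsilon$, and the previously-established restriction $P_i\cap L=P_i\cap L_1$ in order to rule out two of the four corners.
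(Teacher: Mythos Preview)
Your localisation of $v_i^*$ near some corner via Claim~\ref{lem:claims} and Proposition~\ref{lem:epsilon} is correct and matches the paper. Where your proposal diverges is in the accounting: the paper does not attempt to rule out any corners. It simply partitions $\OPT_W$ into four sets $N_{TL},N_{TR},N_{BL},N_{BR}$ according to the nearest corner, keeps the heaviest, and rotates $K$ so that this corner becomes $p_{TL}$. In particular the paper's own argument loses a factor~$4$, not the factor~$2$ announced in the lemma statement; the discrepancy is a typo in the paper (the sentence immediately preceding the lemma already says ``by losing a factor~$4$'').

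Your attempt to honour the stated factor~$2$ by first reducing to one diagonal and then excluding $p_{BR}$ via edge-facingness and the condition $P_i\cap L\subseteq L_1$ does not go through. Here is a concrete obstruction. Fix any small $\epsilon>0$, pick $0<\eta\le\sqrt{2}\,\epsilon N$, and take the triangle with vertices
\[
v_i^{*}=(N,0)=p_{BR},\qquad w_i=(\eta,N),\qquad u_i=(2\eta,N).
\]
Then $\ell_i=|v_i^{*}w_i|\ge(\sqrt 2-\epsilon)N$, the second longest edge is $v_i^{*}u_i$, so $v_i^{*}$ is indeed the vertex between the two longest edges, and a direct computation gives $h_i\approx h_i'$, so the triangle is hard. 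The two rays $R_i^{(1)},R_i^{(2)}$ from $v_i^{*}$ pass through $w_i,u_i$ and both exit $K$ through the top edge, so the triangle is edge-facing. Finally, the two long edges meet the line $y=N/2$ at $x=(N+\eta)/2$ and $x=(N+2\eta)/2$, both strictly larger than $N/2$, so $P_i\cap L\subset L_1$ and $p_M\notin P_i$. Thus every hypothesis you invoke (edge-facing, $\theta_i\le\epsilon$, $\ell_i\ge(\sqrt 2-\epsilon)N$, $P_i\cap L=P_i\cap L_1$) is satisfied while $v_i^{*}=p_{BR}$: none of these properties excludes the bottom-right corner, and the last paragraph of your proposal cannot be completed as written. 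The safe route is the paper's: accept the factor~$4$.
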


\begin{proof}
Let $u,v$ be the vertices that define $\ell_i$. By Claim \ref{lem:claims} we know that $u$ or $v$ is at distance $r_i$ of some corner $v_C$ of $K$. Without loss of generality and applying Proposition \ref{lem:epsilon} we assume that $\|v - v_C\| \leq \epsilon N$. Recall that by Proposition \ref{lem:epsilon} $\|u-v\|=\ell_i \geq (\sqrt 2 -\epsilon)N$. Hence:
\begin{align*}
\| u - v_C\| & = \|u-v -(v_C-v) \| \\
& \geq \|u-v\| - \|v_C-v\| \\
& \geq (\sqrt 2 - \epsilon) N -\epsilon N \\
& = (\sqrt 2 - 2\epsilon )N 
\end{align*}

Let $B(x,r):=\{ p |  \|x-p\|\leq r \}$ and call $v_{C'}$ the corner furthest away from $v_C$. Note that $u \in K \setminus B(v_C, (\sqrt{2}-2\epsilon)N) \subseteq B(v_{C'}, 2\epsilon N)$, from which we conclude that every endpoint of the diagonal is at distance at most $2\epsilon N$ from a corner. In particular $v_{i}^{*}$ must be a distance at most $2\epsilon N$ from some corner.

We define $N_{TL}:=\{P_{i}\in\OPT_{W}|\|v_{i}^{*}-p_{TL}\|_{2}\leq 2\epsilon N \}$
and $N_{TL}$, $N_{BL}$, $N_{BR}$ in a similar fashion. These sets
partition $\OPT_{W}$ into four sets. Note that one of these sets
must have weight at least $\frac{1}{4}w(\OPT_{W})$. If this set
is $N_{TL}$ we are done. Otherwise, we simply rotate $\OPT_{W}$
accordingly. 
\end{proof}

Due to Lemma~\ref{lem:close_top}, if $\epsilon$ is sufficiently
small we have for each triangle $P_{i}\in\OPT_{W}$ that
both $R_{i}^{(1)}\setminus\{v_{i}^{*}\}$ and $R_{i}^{(2)}\setminus\{v_{i}^{*}\}$
intersect the right edge of the knapsack or both $R_{i}^{(1)}\setminus\{v_{i}^{*}\}$
and $R_{i}^{(2)}\setminus\{v_{i}^{*}\}$ intersect the bottom edge
of the knapsack. We call triangles $P_{i}$ of the former type \emph{right-facing}
triangles and we call the triangles of the latter type \emph{bottom-facing
}triangles.
\begin{proposition}
If $\epsilon$ is sufficiently small, we have that by losing a factor
of 2 we can assume that each triangle in $\OPT_{W}$ is right-facing
or bottom-facing.
\end{proposition}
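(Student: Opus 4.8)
The plan is to establish the \emph{stronger} statement that, once $\epsilon$ is a sufficiently small constant, every triangle in $\OPT_W$ is already right-facing or bottom-facing, so that in fact no triangle has to be discarded; the factor $2$ in the statement is then more than enough, the only loss being a negligible one to handle the degenerate situation in which one of the defining rays passes exactly through the corner $p_{BR}$ (which lies on both the bottom and the right edge of $K$).

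First I would collect the facts available at this point. Each $P_i\in\OPT_W\subseteq\OPTef$ is edge-facing in the placement of $\OPT$, and this property survives all transformations applied so far, since those consist only of passing to subsets and of rotations of the whole solution by multiples of $90\degree$, and such a rotation carries the single edge of $K$ hit by both $R_i^{(1)}\setminus\{v_i^*\}$ and $R_i^{(2)}\setminus\{v_i^*\}$ to another single edge. Hence for every $P_i\in\OPT_W$ the two rays $R_i^{(1)}\setminus\{v_i^*\}$ and $R_i^{(2)}\setminus\{v_i^*\}$ still hit the \emph{same} edge of $K$. In addition, Lemma~\ref{lem:close_top} gives $\|v_i^*-p_{TL}\|_2\le 2\epsilon N$, Lemma~\ref{lem:theta-bo} lets us assume $\theta_i\le\epsilon$, and Proposition~\ref{lem:epsilon} gives $\ell_i\ge(\sqrt2-\epsilon)N$.

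The geometric core is to pin down the directions of the two rays. Let the longest edge of $P_i$ lie on $R_i^{(1)}$ and let $q$ be its endpoint distinct from $v_i^*$, so $\|q-v_i^*\|_2=\ell_i\ge(\sqrt2-\epsilon)N$ and hence $\|q-p_{TL}\|_2\ge(\sqrt2-3\epsilon)N$. Writing $q=(x,y)\in[0,N]^2$ and using $x^2+(N-y)^2\le 2N^2$ together with $\sqrt{1-t}\ge 1-t$ (the same flavour of estimate used in Claim~\ref{lem:claims} and Proposition~\ref{lem:epsilon}) yields $N-x\le O(\epsilon)N$ and $y\le O(\epsilon)N$, i.e.\ $q$ lies within $O(\epsilon)N$ of $p_{BR}$. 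Since $v_i^*$ lies within $2\epsilon N$ of $p_{TL}$ and $\|q-v_i^*\|_2\ge N$, the direction of $R_i^{(1)}$ therefore differs by at most $\arctan(O(\epsilon))=O(\epsilon)$ from the direction $p_{TL}\to p_{BR}$, which points $45\degree$ below the horizontal and to the right. As $R_i^{(2)}$ makes angle $\theta_i\le\epsilon$ with $R_i^{(1)}$, its direction also differs from that of $p_{TL}\to p_{BR}$ by $O(\epsilon)$.

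To conclude I would pick $\epsilon$ small enough that this $O(\epsilon)$ error is below $\pi/4$; then both $R_i^{(1)}\setminus\{v_i^*\}$ and $R_i^{(2)}\setminus\{v_i^*\}$ point strictly into the open ``down-and-to-the-right'' quadrant, so each of them leaves $K$ through the bottom edge or through the right edge, and through no other edge (the top and left edges are incident to $p_{TL}$ and lie behind these directions). Together with the edge-facing property, either both rays hit the bottom edge, so $P_i$ is bottom-facing, or both hit the right edge, so $P_i$ is right-facing. The step I expect to need the most care is the quantitative estimate placing $q$ near $p_{BR}$ and the resulting angular bound on the rays; the only genuinely borderline case, a ray through $p_{BR}$ itself, can be eliminated by an arbitrarily small shrinking or perturbation of that triangle, and is the sole reason any constant-factor loss (comfortably absorbed by the stated factor $2$) appears at all.
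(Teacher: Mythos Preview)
Your geometric argument is correct and matches the paper's own (terse) reasoning: the sentence immediately preceding the proposition already asserts, as a consequence of Lemma~\ref{lem:close_top}, that for each $P_i\in\OPT_W$ both rays hit the right edge or both hit the bottom edge. Your proof makes this explicit by locating the far endpoint $q$ of the longest edge near $p_{BR}$ and bounding the angular deviation of both rays from the direction $p_{TL}\to p_{BR}$; this is exactly the content the paper leaves implicit.

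Where you go astray is the role of the factor~$2$. It is not there to absorb the measure-zero event of a ray passing exactly through $p_{BR}$---such a triangle can simply be declared right-facing (or bottom-facing) at no cost, so no loss whatsoever is incurred. Look instead at the two lemmas that immediately follow: Lemma~\ref{lem:right-facing-tlp} assumes that \emph{every} triangle in $\OPT_W$ is right-facing, and the next lemma assumes that \emph{every} triangle is bottom-facing. The proposition, despite its per-triangle phrasing, is meant to reduce to one of these two \emph{uniform} cases: having established (as you do) that each triangle is individually right-facing or bottom-facing, one partitions $\OPT_W$ into the two types and keeps the heavier half---this partition is precisely the factor~$2$. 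Your proof supplies the substantive geometric step but omits this final, trivial partition; without it the subsequent case analysis (top-left-packable versus bottom-right-packable) cannot be invoked.
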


Assume that $\OPT_{W}=\left\{ P_{i_{1}},...,P_{i_{|\OPT_{W}|}}\right\} $.
We partition now $\OPT_{W}$ into $g=O(1)$ groups such that each
group is top-left-packable. Then the most profitable such group yields
a $g$-approximation. We initialize $\OPT_{W}^{(1)}:=\OPT_{W}^{(2)}:=...\OPT_{W}^{(g)}:=\emptyset$
and $k:=0$. Suppose inductively that for some $k\in\N_{0}$ we partitioned
the triangles $P_{i_{1}},...,P_{i_{k-1}}$ into $\OPT_{W}^{(1)},...,\OPT_{W}^{(g)}$
such that each of these sets is top-left-packable. We argue that there
is one value $t\in\{1,...,g\}$ such that $\OPT_{W}^{(t)}\cup\{P_{i_{k}}\}$
is also top-left-packable. To this end, observe that in the top-left-packing
of each set $\OPT_{W}^{(t)}$ each triangle $P_{i}\in\OPT_{W}^{(t)}$
blocks a certain portion of $L_{1}$ such that no other triangle in
this packing can overlap this part of $L_{1}$. 
To this end, for each triangle $P_{i}\in\left\{ P_{i_{1}},...,P_{i_{k-1}}\right\} $
let $t(i)\in\N_{0}$ be the smallest integer $t$ such that if $P_{i}\in\OPT_{W}^{(g')}$
for some $g'\in\{1,...,g\}$ then in the top-left packing of $\OPT_{W}^{(g')}$
the longest edge $e$ of $P_{i}$ lies on the line that contains $p_{TL}$
and $p_{t}$. Also, let $t'(i)$ be the smallest integer $t'$ such
that $t(i)<t'$ and $P_{i}$ does not overlap the point $p_{t'}$.
Then, after placing $P_{i}$ we cannot add another triangle in a top-left-packing
to $\OPT_{W}^{(t)}$ that touches the subsegment of $L_{1}$ that
connects $p_{t(i)}$ with $p_{t'(i)}$. Hence, intuitively $P_{i}$
blocks the the latter subsegment. We define $\hat{d}_{i}:=\left\Vert p_{t'(i)}-p_{t(i)}\right\Vert _{2}$.
Our crucial insight is now that up to a constant factor, in our top-left
packing the triangle $P_{i}$ blocks as much of $L_{1}$ as it covers
of $L_{1}$ in $\OPT_{W}$. 
\begin{lem}
\label{lem:overlap-difference}If $\epsilon$ is sufficiently small
then for each triangle $P_{i}\in\OPT_{W}$ it holds that $\hat{d}_{i}=O(d_{i})$. 
\end{lem}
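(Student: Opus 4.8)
The plan is to prove the stronger statement $\hat d_i=O(h_i)$, where $h_i$ is the height of the bounding box $B_i$ of $P_i$; since $h_i=O(d_i)$ was already established in the proof of Lemma~\ref{lem:group-respect}, this yields $\hat d_i=O(d_i)$.

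First I would pin down the top-left placement of $P_i$. In it, $v_i^*$ lies on $p_{TL}=\binom 0N$ and the longest edge of $P_i$ lies on the line $g$ through $p_{TL}$ and $p_{t(i)}$; let $\beta\in(0,\pi/2)$ be the angle between $g$ and the horizontal, so the longest edge points from $p_{TL}$ in direction $(\cos\beta,-\sin\beta)$. On the one hand, $p_{t(i)}=p_M+\tfrac{t(i)}{N^2}\binom 10$ has first coordinate $\ge N/2$, forcing $\beta\le\pi/4$. On the other hand, the far endpoint $p_{TL}+\ell_i(\cos\beta,-\sin\beta)$ of the longest edge is a vertex of the placed triangle and hence lies in $K$, so $\ell_i\cos\beta\le N$; combined with $\ell_i\ge(\sqrt2-\epsilon)N$ (Proposition~\ref{lem:epsilon}) this forces $\sin\beta\ge\tfrac12$ once $\epsilon$ is small enough. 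Consequently $\ell_i\sin\beta>N/2$, so the longest edge actually crosses the horizontal line $y=N/2$; this crossing point is exactly $p_{t(i)}$, and since $P_i$ lies entirely in the region $T_{t(i)}$ (on the side of $g$ towards $p_{TR}$), the point $p_{t(i)}$ is the left endpoint of the segment $P_i\cap\{y=N/2\}$.

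Next I would estimate $\hat d_i$ through the width of $P_i$ along the line $y=N/2$. In its top-left placement $P_i$ is contained in the placed copy of $B_i$, an $\ell_i\times h_i$ rectangle whose long side is parallel to the longest edge of $P_i$ and hence makes angle $\beta$ with the horizontal. Such a rotated rectangle lies in a slab of width $h_i$ whose boundary lines make angle $\beta$ with the horizontal, so its intersection with any horizontal line is a segment of length at most $h_i/\sin\beta\le 2h_i$; thus $|P_i\cap\{y=N/2\}|\le 2h_i$. Since $p_{t'(i)}$ is the first point of the grid $\{p_M+\tfrac t{N^2}\binom10:t\in\N\}$, of spacing $1/N^2$, lying strictly to the right of $P_i$ on this line, we get
\[
\hat d_i=\|p_{t'(i)}-p_{t(i)}\|_2\le |P_i\cap\{y=N/2\}|+\tfrac1{N^2}\le 2h_i+\tfrac1{N^2}.
\]
Finally, the vertices of $P_i$ are integral in the input, so $\area(B_i)=\ell_ih_i\ge\area(P_i)\ge\tfrac12$ and, using $\ell_i\le\sqrt2 N$, we obtain $h_i\ge\tfrac1{2\ell_i}\ge\tfrac1{2\sqrt2\,N}$; as $N\ge1$ this gives $\tfrac1{N^2}\le\tfrac1N\le 2\sqrt2\,h_i$, whence $\hat d_i\le(2+2\sqrt2)h_i=O(h_i)=O(d_i)$.

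The one delicate step is the lower bound $\sin\beta\ge\tfrac12$: it encodes that in any valid top-left packing each triangle is forced to lie with its longest edge almost exactly along the main diagonal of $K$, because that edge has length close to $\sqrt2N$ and emanates from the corner $p_{TL}$, so it fits inside $K$ only if its direction is within $O(\epsilon)$ of the diagonal. Making this quantitative is exactly where the standing assumption that $\epsilon$ is a sufficiently small constant is used; everything else is elementary plane geometry together with the inequality $h_i=O(d_i)$ from Lemma~\ref{lem:group-respect}.
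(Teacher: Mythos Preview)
Your proof is correct and follows essentially the same strategy as the paper: both bound the horizontal cross-section of $P_i$ at height $N/2$ by $O(h_i)$ via the rotated bounding box, using that the longest edge makes an angle close to $\pi/4$ with the horizontal, and then invoke $h_i=O(d_i)$ from the proof of Lemma~\ref{lem:group-respect}.

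There are two small differences worth noting. First, where the paper appeals to Proposition~\ref{lem:epsilon} (valid for \emph{any} placement of $P_i$ inside $K$) to pin the angle near $\pi/4$, you instead derive the angle bound directly from the top-left-packing geometry: $\beta\le\pi/4$ because $p_{t(i)}\in L_1$, and $\sin\beta\ge\tfrac12$ because the far endpoint of the longest edge must stay in $K$. This is a legitimate alternative, and it implicitly relies on the inductive hypothesis that the triangles already assigned to the groups $\OPT_W^{(g')}$ form top-left-packable sets (hence lie inside $K$); you use this when asserting the far endpoint is in $K$, which is fine in context. Second, your treatment of the grid term $1/N^2$ via $\area(P_i)\ge\tfrac12$ (integral input vertices) and $\ell_i\le\sqrt2 N$ is actually more careful than the paper's remark that ``if $n$ is sufficiently large\dots the points $p_t$ are sufficiently dense,'' since the spacing of the $p_t$ is $1/N^2$ and has nothing to do with $n$.
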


\begin{proof}
We argue in a similar way as in the proof of Lemma~\ref{lem:group-respect}.
Let $D$ be the longest edge of $P_{i}$ in an arbitrary placement
of $P_{i}$ inside $K$. Let $\tilde{d}_{i}$ denote the length of
the intersection of $P_{i}$ and $L_{1}$ in this placement. Let $\alpha$
be the angle between $D$ and $L_{1}$. Due to Proposition~\ref{lem:epsilon}
we can assume that $\alpha\in[\pi/4-\frac{1}{10},\pi/4+\frac{1}{10}]$.
Hence, if $\epsilon$ is a sufficiently small then the intersection
of $\tilde{B}_{i}$ and $L_{1}$ has length at most $2h_{i}$. Therefore,
$\tilde{d}_{i}\le2h_{i}$.

On the other hand, if $\epsilon$ sufficiently small then $h_{i}\le O(\tilde{d}_{i})$.
Hence, $\tilde{d}_{i}=\Theta(h_{i})$ and also $d_{i}=\Theta(h_{i})$
and therefore $\tilde{d}_{i}=\Theta(d_{i})$. Since $h_{i}\ge h'_{i}/8$
this implies that $d_{i}=\Omega(h'_{i})$. Also, it holds that $r_{i}=\Theta(h'_{i})$
and $D$ lies in $H_{1}(r_{i})$ or $H_{2}(r_{i})$. Therefore, if
$n$ is sufficiently large (which implies that the points $p_{t}$
are sufficiently dense on $L_{1}$) we have that $\hat{d}_{i}=O(\tilde{d}_{i})=O(d_{i})$. 
\end{proof}

{} Lemma \ref{lem:overlap-difference} implies that if $g$ is a sufficiently
large constant then there is a value $t\in\{1,...,g\}$ such that
$\sum_{P_{i_{\ell}}\in\OPT_{W}^{(t)}}\hat{d}_{i_{\ell}}\le\sum_{P_{i_{\ell}}\in\left\{ P_{i_{1}},...,P_{i_{k-1}}\right\} }d_{i_{\ell}}$.
Hence, in the top-left packing for $\OPT_{W}^{(t)}$ (which at this
point contains only triangles from $\left\{ P_{i_{1}},...,P_{i_{k-1}}\right\} $)
the triangles block less of $L_{1}$ than the amount of $L_{1}$ that
the triangles $P_{i_{1}},...,P_{i_{k-1}}$ cover in $\OPT_{W}$. On
the other hand, we know that in $\OPT_{W}$ the triangle $P_{i_{k}}$
is placed such that it intersects $L_{1}$ further on the right than
any triangle in $P_{i_{1}},...,P_{i_{k-1}}$ due to Lemma~\ref{lem:group-respect}.
Using this, in the next lemmas we show that we can add $P_{i_{k}}$
to $\OPT_{W}^{(t)}$.
\begin{figure}[t]
	\centering \includegraphics[height=7cm]{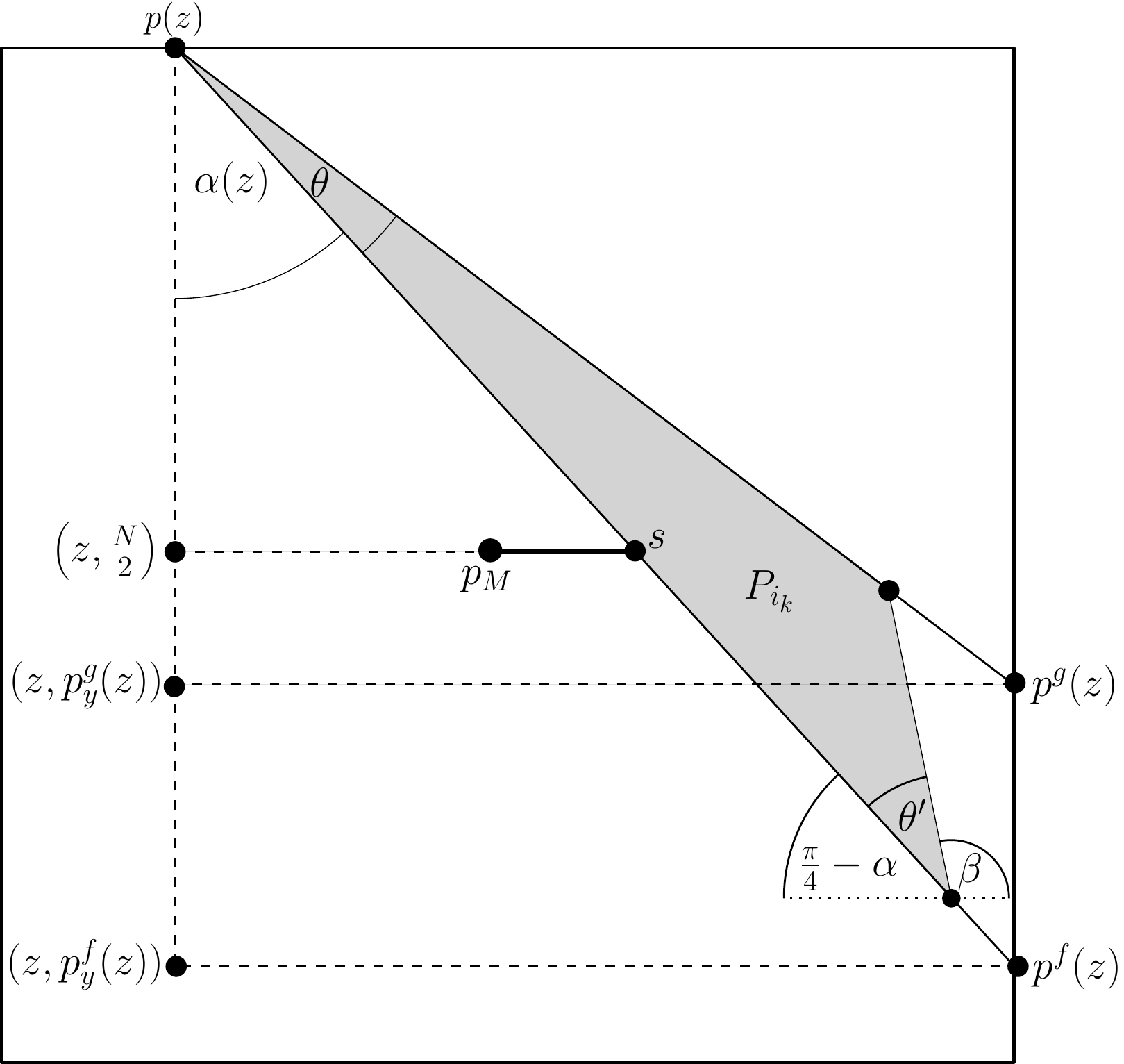}
	\caption{The points, line segments, and angles used in the proof of Lemma~\ref{lem:right-facing-tlp}}
	\label{fig:points_lemma} 
\end{figure}

\begin{lem}\label{lem:right-facing-tlp}
If each triangle in $\OPT_{W}$ is right-facing we have that $\OPT_{W}^{(t)}\cup\{P_{i_{k}}\}$
is top-left-packable.
\end{lem}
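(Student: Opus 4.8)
The plan is as follows. We place $P_{i_{k}}$ on top of the (inductively top-left-packable) solution $\OPT_{W}^{(t)}$ following the rule of a top-left-packing, and we must check that the resulting placement $\tilde{P}$ of $P_{i_{k}}$ lies entirely inside $K$; since the placements of the triangles already in $\OPT_{W}^{(t)}$ do not change, this is all that is needed. By definition of a top-left-packing, $\tilde{P}$ has $v_{i_{k}}^{*}$ at $p_{TL}$ and one of its two longest edges on the line $g$ through $p_{TL}$ and $p_{\tau}$, where $\tau$ is the smallest integer such that the segment from $p_{\tau}$ to $p_{R}$ is disjoint from all triangles of $\OPT_{W}^{(t)}$ placed so far. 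In particular $\tilde{P}$ is the thin wedge of apex angle $\theta_{i_{k}}\le\epsilon$ emanating from $p_{TL}$ and lying on the side of $g$ towards the top edge of $K$, and $\|p_{\tau}-p_{M}\|=\sum_{P_{i_{\ell}}\in\OPT_{W}^{(t)},\,\ell<k}\hat{d}_{i_{\ell}}$ up to the spacing $1/N^{2}$ of the points $p_{t}$.

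Next we bound $\|p_{\tau}-p_{M}\|$ against the placement of $P_{i_{k}}$ in $\OPT_{W}$. By the choice of the index $t$ (so that $\sum_{P_{i_{\ell}}\in\OPT_{W}^{(t)},\,\ell<k}\hat{d}_{i_{\ell}}\le\sum_{\ell=1}^{k-1}d_{i_{\ell}}$) together with Lemma~\ref{lem:overlap-difference}, it suffices to relate $\sum_{\ell=1}^{k-1}d_{i_{\ell}}$ to $\OPT_{W}$. Since $L_{1}$ is a segment and the triangles $P_{i_{1}},\dots,P_{i_{k-1}},P_{i_{k}}$ are pairwise disjoint convex sets, the sets $P_{i_{\ell}}\cap L_{1}$ are pairwise disjoint subintervals of $L_{1}$, and by Lemma~\ref{lem:group-respect} all of $P_{i_{1}}\cap L_{1},\dots,P_{i_{k-1}}\cap L_{1}$ lie to the left of $P_{i_{k}}\cap L_{1}$. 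Hence, writing $q$ for the left endpoint of $P_{i_{k}}\cap L_{1}$ in $\OPT_{W}$, we get $\sum_{\ell=1}^{k-1}d_{i_{\ell}}=\sum_{\ell=1}^{k-1}\bigl|P_{i_{\ell}}\cap L_{1}\bigr|\le\|q-p_{M}\|$, so that $\|p_{\tau}-p_{M}\|\le\|q-p_{M}\|$, i.e.\ $p_{\tau}$ lies (weakly) to the left of $q$ on $L_{1}$.

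Finally we carry out the geometric comparison. By Lemma~\ref{lem:close_top} we have $\|v_{i_{k}}^{*}-p_{TL}\|\le 2\epsilon N$ in $\OPT_{W}$, and by Proposition~\ref{lem:epsilon} the two longest edges of $P_{i_{k}}$ have length $\approx\sqrt{2}N$, make an angle $\approx\pi/4$ with the edges of $K$, and $r_{i_{k}}\le\epsilon N$; moreover the long edge of $P_{i_{k}}$ realising the left endpoint $q$ of $P_{i_{k}}\cap L_{1}$ lies on the ray from $v_{i_{k}}^{*}$ through $q$ (the far edge is too close to the right edge of $K$ to meet $L_{1}$). Thus the placement $\tilde{P}$ is obtained from the placement of $P_{i_{k}}$ in $\OPT_{W}$ by first translating $v_{i_{k}}^{*}$ onto $p_{TL}$ (a translation of length $\le 2\epsilon N$, which for a right-facing triangle with apex near $p_{TL}$ moves the far part of the triangle away from the right and top edges of $K$), and then rotating about $p_{TL}$ so that the relevant long edge moves from the line through $p_{TL}$ and $q$ onto the line $g$ through $p_{TL}$ and $p_{\tau}$. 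Since $p_{\tau}$ is (weakly) to the left of $q$ on $L_{1}$, this rotation is towards the main diagonal $p_{TL}p_{BR}$, which for a right-facing triangle moves its far vertices away from the right edge of $K$ while keeping them above the bottom edge (the main diagonal itself being a safe extreme position, as $\ell_{i_{k}}<\sqrt{2}N$). Combining the two movements with the fact that the placement with the edge through $q$ differs from the feasible $\OPT_{W}$ placement by only $O(\epsilon N)$, one verifies that $\tilde{P}\subseteq K$. The main obstacle is precisely this last verification: one has to control the two far vertices of $\tilde{P}$ (the one near the bottom-right corner and the one near the right edge) simultaneously, absorb the error of order $\epsilon N$ coming from $v_{i_{k}}^{*}\neq p_{TL}$ in $\OPT_{W}$, and check that the apex-angle wedge of opening $\theta_{i_{k}}\le\epsilon$ does not push the shorter of the two long edges past the right edge; this is where the quantitative estimates of Proposition~\ref{lem:epsilon} and the bound $\hat{d}_{i_{k}}=O(d_{i_{k}})$ of Lemma~\ref{lem:overlap-difference} enter, and where Figure~\ref{fig:points_lemma} is used to keep track of the points, segments, and angles involved.
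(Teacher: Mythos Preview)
Your proposal is an outline rather than a proof: the entire content of the lemma is the verification that the top-left-placed copy $\tilde P$ of $P_{i_k}$ stays inside $K$, and you explicitly defer this (``one verifies that $\tilde P\subseteq K$'' and ``The main obstacle is precisely this last verification''). Everything preceding that sentence---the inequality $\sum_{\ell<k}\hat d_{i_\ell}\le\sum_{\ell<k}d_{i_\ell}\le\|q-p_M\|$ and hence $p_\tau$ weakly left of $q$---is correct and is also the starting point of the paper, but it is only the setup.

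There is also a subtle issue with your decomposition into ``translate $v_{i_k}^*$ to $p_{TL}$, then rotate''. After the translation (which can be in any direction of length $\le 2\epsilon N$), the long edge of $P_{i_k}$ no longer passes through $q$ but through some nearby point $q'$, and you need $p_\tau$ left of $q'$, not of $q$; since $|q-q'|=O(\epsilon N)$ while $\|q-p_\tau\|$ can be arbitrarily small, this is not automatic. The paper avoids this by translating \emph{only vertically upward} until $v_{i_k}^*$ hits the top edge at $p(z^*)=(z^*,N)$ with $z^*\in[0,2\epsilon N]$; a purely vertical shift moves the $L_1$-crossing of the long edge to the right, so the translated triangle sits inside the wedge $T(z^*)$ with apex $p(z^*)$, one side through the point $s=(N/2+s^*,N/2)\approx p_\tau$, and opening angle $\theta_{i_k}$. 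The heart of the paper's argument is then a genuine computation: writing $f(z)=\|p(z)-p^f(z)\|^2$ for the squared length of the side of $T(z)$ hitting the right edge of $K$, it shows $f(0)\ge f(z)$ for $z\in[0,2s^*]$ by differentiating, reducing $f'(z)=0$ to a cubic, checking its discriminant is negative, and comparing endpoint values. A separate trigonometric argument handles the third vertex $u$. None of this is bypassed by the qualitative observation that rotating towards the main diagonal ``should help''; that intuition is correct but does not by itself control both far vertices simultaneously against the right and bottom edges of $K$ once the $O(\epsilon N)$ displacement of the apex is taken into account.
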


\begin{proof}
Let $s^{*}=\sum_{P_{i_{\ell}}\in\OPT_{W}^{(t)}}\hat{d}_{i_{\ell}}$ and $s=(s^{*}+N/2,N/2)$.
Consider now $p(z)=(z,N)$ and define the point $p^{f}(z)=(p_{x}^{f}(z),p_{y}^{f}(z))$
as the intersection between the right side of the Knapsack and $\overline{p(z)s}$.
Similarly, define $p^{g}(z)=(p_{x}^{g}(z),p_{y}^{g}(z))$ as the intersection
between $\{N\}\times \R$ and the line $L^{\theta}$
obtained by rotating $\overline{p(z)s}$ around $p(z)$ by $\theta:=\theta_{i_k}$ counterclockwise. Note that every polygon in $\OPT_{W}^{(t)}$ is contained in $H^- := K \cap\conv(\{ p_{TL},p^f(0),p_{BR},p_{BL}\})$
as they are top-left-packed. 

We translate $P_{i_{k}}$ upwards until it intersects the top side of the knapsack at a point $(z^{*},N)$. By Lemma \ref{lem:close_top} we get that $z^* \in [0,2\epsilon N]$. Note that now $P_{i_{k}}$ is placed inside the triangle $p(z^{*}),p^{f}(z^{*}),p^{g}(z^{*})$.
Define now $f(z):=\|p(z)-p^{f}(z)\|^2$, $g(z):=\|p(z)-p^{g}(z)\|^2$.
Note that it suffices to prove that $f(0)\geq f(t)$ and $g(0)\geq g(t)$
for each $t\in[0,2\epsilon N]$ as this implies that $P_{i_{k}}$ can be
placed inside the triangle with vertices $p(0),p^{f}(0),p^{g}(0)$, this triangle is contained in $K\setminus H^-$ and this placement is group respecting.

Since $\frac{N}{2}+s^*-z \geq \left( \frac{1}{2}-2\epsilon\right)N>0$ and $N-z\geq (1-2\epsilon)N>0$, a similarity argument between triangles $p(z),(z,p_{y}^{f}(z)),p^{f}(z)$
and $p(z),(z,N/2),s$ lets us obtain: 
\begin{equation}
\frac{\frac{N}{2}}{\frac{N}{2}+s^{*}-z}=\frac{N-p_{y}^{f}(z)}{N-z}.\label{eq:sim1}
\end{equation}
As $p_y^f(z)\geq 0$ we obtain $N-z\leq  N+2s^*-2z$, implying that $z \leq 2s^*$ and therefore we only need to prove that $f(0)\geq f(z)$ for each $z\in [0, \min\{\epsilon N , 2s^*\}]$. Pythagoras theorem on $p(z),p^{f}(z),p_{TR}$ gives us: 
\begin{equation}
f(z)=(N-z)^{2}+(N-p_{y}^{f}(z))^{2}\label{eq:pyt1}
\end{equation}
Combining \ref{eq:sim1} and \ref{eq:pyt1} we get: 
\begin{align*}
	f(z)& = (N-z)^2 + (N-p_{y}^{f}(z))^{2} \\
	& = (N-z)^ 2 \left( 1 + \frac{N^2}{(N+2s^*-2z)^2}  \right) 
\end{align*}
We now compute $f'(z)$.
\begin{align*}
f'(z) & =  -2(N-z) \left( 1 + \frac{N^2}{(N+2s^*-2z)^2}  \right)+(N-z)^ 2 \left( \frac{4N^2}{(N+2s^*-2z)^3}  \right) \\
& = \frac{(N-z)}{(N+2s^*-2z)^3} \left(-2(N+2s^*-2z)^3 -2N^2(N+2s^*-2z)+4N^2(N-z) \right)\end{align*}

Note that $f'(z)=0$ if and only if $\phi(z):=4N^2(N-z)-2(N+2s^*-2z)^3 -2N^2(N+2s^*-2z) =0$. After some algebra we obtain the following:
\[ \phi(z) = z^3\cdot 16+z^2(-48s^*-24N)+z(48(s^*)^2+48s^*N+12N^2)-16 N^2s ^*- 24 N (s^*)^2 - 16 (s^*)^3  \]

We can compute the discriminant of this polynomial and obtain:
\[\Delta_\phi = -27684 (N^6 -4N^5s^*+4N^4(s^*)^2)=-27684N^4 (N-2s^*)^2 < 0 \]
since $s^* < \frac{N}{2}$ (as $\OPT_W$ intersects $L_1$ by at least $s^*$ and the amount $P_{i_k}$ intersects $L_1$).  Therefore $\phi$ has a unique real root $\overline{z}$. Note that $\max\limits_{z \in [0,2s^*]} f(z) \in \{f(0), f(2s^*), f(\overline z))\}$. We begin by showing that $f(0)\geq f(2s^*)$. Indeed:

\begin{align*}
f(0)-f(2s^*) & = N^ 2 \left( 1 + \frac{N^2}{(N+2s^*)^2}  \right) -(N-2s^*)^ 2 \left( 1 + \frac{N^2}{(N-2s^*)^2}  \right) \\
& = \frac{N^4-(N+2s^*)^2(N-2s^*)^2}{(N+2s^*)} \\
& =  \frac{N^4-(N^2-(2s^*)^2)^2}{(N+2s^*)} \geq 0.
\end{align*} 

Therefore $\max\limits_{z \in [0,2s^*]} f(z) \in \{f(0), f(\overline z))\}$. If $\overline z \notin [0,2s^*]$, then it is clear that $f(0)=\max\limits_{z \in [0,2s^*]} f(z)$. Suppose then that $\overline{z}\in [0,2s^*]$. Since:

\[f'(0)= \frac{N}{(N+2s^*)^3} \left(-2(N+2s^*)^3 -2N^2(N+2s^*)+4N^3 \right) \leq 0,\]
then $f'(z)\leq 0$ for $z \in [0,\overline{z}]$ by continuity. Therefore $f$ is decreasing in $[0,\overline z]$ concluding that $f(0)\geq f(\overline z)$ and that $f(0)\geq f(z)$ for $z \in [0,\min\{2\epsilon N , 2s^*\}]$.

Let $\alpha:=\alpha(z)$ be the angle between $L$ and
$\overline{p(z)p^{f}(z)}$. Let $u(z)=(u_x(z),u_y(z))$ be the placement of the vertex of $P_{i_k}$ that is not adjacent to the longest edge. We aim to show that $u(0) \in K$ implying that $P_{i_k}$ is placed inside the Knapsack. By examining the triangle $p(0)$, $(z,p_y^g(0))$, $p^g(0)$ we get that $\tan (\alpha(0)+\theta)= \frac{\|p^g(0)-(0,p^g_y(z))\|}{\|p(0)-(0,p^g_y(z))\|}=\frac{N}{|N-p^g(0)|}$. 

Furthermore by applying the law of sines on triangle $p(0)$, $p^f(0)$, $p^g(0)$ we get:
\[ \frac{\sin \theta }{\|p^g(0)-p^f(0)\|}=\frac{\sin \alpha(0) }{\|p^g(0)-p(0)\|} \]

Therefore $\|p^g(0)-p^f(0)\| = \frac{\|p^g(0)-p(0)\|\sin\theta }{\sin\alpha(0)} \leq \frac{\sin\theta \ell_i}{\sin\alpha(0)} \leq \frac{\epsilon\sqrt 2 }{\frac{\sqrt 2}{2}-\epsilon} N$. By choosing $\epsilon$ small enugh we assume that 
$\|p^g(0)-p^f(0)\| \leq \frac{1}{10}N$. 
Note that by choosing $\epsilon$ small enough $\tan(\alpha(0)+\theta)\leq (1+\frac{1}{10})$.
Therefore:
\[ N \leq \left(1+\frac{1}{10}\right)|N-p^g(0)| \]
Which implies that $p^g_y(0) \leq \frac{1}{10}N$
 or $p^g_y(0) \geq \frac{21}{10}N$. If $p^g_y(0) \geq \frac{21}{10}N$ then $p^f(0)_y \geq \frac{20}{10}N$ which is not possible. We conclude that $p^g_y(0)\leq \frac{1}{10}N$ and since $u(0) \in \conv(\{ p^g(0), p(0) \}$ we conclude that $0 \leq u_y(0)\leq N$. It only remains to prove that the same holds for $u_x(0)$. 
 
 Let $u'(z)=(u'_x(z),u'_y(z))$ be the vertex of $P_{i_k}$ that is not $u$ or $v_{i_k}^*$. Let $\theta'$ be the angle of $P_{i_k}$ at $u'$ and $\beta(z)$ the angle between $\overline{uu'}$ and $\{u'_y(z)\} \times [u_x'(z),\infty)$. Note that $\beta(z)+\theta' + \frac{\pi}{2}-\alpha(z) = \pi$,  therefore $\alpha'(z)=\beta'(z)$. By examining the triangle $p(z)$, $s$, $(z,N/2)$ we obtain that $\tan (\alpha (z)) = \frac{\|s-(z,N/2)\|}{\|p(z)-(z,N/2)\|}= \frac{N+2s^*-2z}{N}= 1 +2\frac{s^*-z}{N}$. Therefore:
 
 \[\beta'(z)=\alpha'(z)= \frac{-\frac{2}{N}}{1+\left( 1 +2\frac{s^*-z}{N}\right)^2} \leq 0 \]
 
 We conclude that $\beta(0)\geq \beta(z)$ for each other $z$. Call $\ell'= \|u'-u\|$ and $R(\beta)$ the rotation matrix by $\beta$, then: $u(z)=u'(z)+ R(\beta (z))\binom{0}{\ell'}$.
 
 In particular $u_y(0)=u'(0)- \sin(\beta(0))\ell' \leq u'(z^*)- \sin(\beta(z^*))\ell' \leq u_y(z^*)\leq N$, from which we conclude.
 \end{proof}

\begin{lem}
If each triangle in $\OPT_{W}$ is bottom-facing we have that $\OPT_{W}^{(t)}\cup\{P_{i_{k}}\}$
is bottom-right-packable.
\end{lem}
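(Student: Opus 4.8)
The plan is to mirror the proof of Lemma~\ref{lem:right-facing-tlp}. Let $\rho$ be the reflection of $K$ across the diagonal through $p_{TL}$ and $p_{BR}$, a symmetry of the square $K$ that fixes $p_{TL},p_{M},p_{BR}$ and interchanges the bottom and right edges of $K$. Applying $\rho$ to the placement of $\OPT_{W}$ in $\OPT$ turns every bottom-facing triangle into a right-facing one (of the mirror-image shape), keeps each $v_{i}^{*}$ within $2\epsilon N$ of $p_{TL}$, and maps $L_{1}$ to the segment $L_{1}'$ joining $p_{M}$ with the midpoint of the bottom edge. All structural reductions already available for $\OPT_{W}$ carry over verbatim with $L_{1}$ replaced by $L_{1}'$: the solution is group-respecting (Lemma~\ref{lem:group-respect}), $\theta_{i}\le\epsilon$ for every triangle (Lemma~\ref{lem:theta-bo}), the vertices $v_{i}^{*}$ are near $p_{TL}$ (Lemma~\ref{lem:close_top}), $\hat{d}_{i}=O(d_{i})$ (Lemma~\ref{lem:overlap-difference}), the set $\OPT_{W}^{(t)}$ is already packed, and $P_{i_{k}}$ is the triangle meeting $L_{1}'$ farthest from $p_{M}$.

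Next I would rerun the computation of Lemma~\ref{lem:right-facing-tlp} line by line, with the right edge of $K$ replaced by the bottom edge and $L_{1}$ by $L_{1}'$: put $s^{*}:=\sum_{P_{i_{\ell}}\in\OPT_{W}^{(t)}}\hat{d}_{i_{\ell}}$, let $s$ be the point of $L_{1}'$ at distance $s^{*}$ from $p_{M}$, translate $P_{i_{k}}$ until it touches the relevant edge of $K$, and introduce the analogues of the functions $f(z)=\|p(z)-p^{f}(z)\|^{2}$ and $g(z)=\|p(z)-p^{g}(z)\|^{2}$. The monotonicity $f(0)\ge f(z)$ and $g(0)\ge g(z)$ on the relevant interval follows from the same derivative computation, whose critical-point polynomial has the same strictly negative discriminant, once more because $s^{*}<N/2$. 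This places $\rho(\OPT_{W}^{(t)}\cup\{P_{i_{k}}\})$ inside a region $T_{0}\subseteq K$ with apex at $p_{TL}$ and each longest edge on a line through $p_{TL}$ and a point of $L_{1}'$.

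Finally I would undo $\rho$ — which, since $\rho^{2}$ is the identity, turns the packed mirror-image triangles back into the original triangles, now filling $\rho(T_{0})$ — and rotate the whole configuration by the $180^{\circ}$ rotation $\iota$ about the center of $K$, obtaining a placement of $\OPT_{W}^{(t)}\cup\{P_{i_{k}}\}$ inside $\iota(\rho(T_{0}))$. This region has apex at $\iota(p_{TL})=p_{BR}$ and each longest edge on a line through $p_{BR}$ and a point of $\iota(L_{1}')$, the segment joining $p_{M}$ with the midpoint of the top edge; by definition this is exactly a bottom-right-packing, and applying $\iota$ is legal since it merely changes the rotation of each triangle. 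The step I expect to demand the most care is the reflection bookkeeping: checking that \emph{right-facing} and the entire chain of reductions truly transfer under $\rho$ with $L_{1}\mapsto L_{1}'$, and that the monotone-function argument of Lemma~\ref{lem:right-facing-tlp} is invoked in a coordinate-independent way; everything after that is a mechanical copy of the computations there.
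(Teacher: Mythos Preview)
There is a concrete error in your final bookkeeping. After packing the mirror images with apex $p_{TL}$ and reference points on $L_1'$, you apply $\rho$ and then $\iota$; the composite $\iota\circ\rho$ is the reflection $\sigma(x,y)=(y,x)$ across the line through $p_{BL}$ and $p_{TR}$. Under $\sigma$ the apex does go to $p_{BR}$, but the reference segment $L_1'$ (from $p_M$ to $(N/2,0)$) goes to $\sigma(L_1')=L_2$ (from $p_M$ to $p_L$), not to the segment from $p_M$ to the midpoint of the top edge. You wrote $\iota(L_1')$ where the correct image is $\iota(\rho(L_1'))=\sigma(L_1')$; the packing you end up with has its longest edges on lines through $p_{BR}$ and points of $L_2$, which is \emph{not} the definition of a bottom-right packing (that needs $\sigma(L_1)$, the segment from $p_M$ to $(N/2,N)$). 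Behind this sits a structural obstruction: no isometry of $K$ simultaneously turns bottom-facing into right-facing, keeps each $v_i^{*}$ near $p_{TL}$, and sends the segment the triangles cross in $\OPT$ (namely $L_1$) back to $L_1$. So after $\rho$ you are in a genuinely different configuration (right-facing but crossing a vertical segment), and a ``line by line rerun'' of Lemma~\ref{lem:right-facing-tlp} is a new computation whose output, as just shown, does not land in a bottom-right packing.

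The paper does not attempt to reduce to the right-facing lemma at all. It gives a short direct argument specific to bottom-facing triangles: rotate $\OPT_W^{(t)}$ by $180^{\circ}$, translate $P_{i_k}$ up and then left, use a single similarity relation (no derivative or discriminant computation) to show that the translated $P_{i_k}$ lies on the correct side of the line through $p_{TL}$ and the appropriate point of $L$, rotate $P_{i_k}$ about $p_{TL}$ to touch that line, and rotate everything back by $180^{\circ}$ to obtain the bottom-right packing.
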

\begin{proof}
	Let $s^* = \sum_{P_{i_\ell}\in \OPT_W^{t}} \hat d_{i_\ell}$, $s^r=(s^*+\frac{N}{2},\frac{N}{2} )$ and $s^\ell=(s^*+\frac{N}{2},\frac{N}{2})$. We define $p(z)$ as $(z,N)$. Define $p^f(z)$ as the intersection between $\overline{p(z)s^r}$ and the bottom edge of $K$. Similarly, let $L^\theta$ be the rotation of $\overline{p(z)s^r}$ by $\theta$ counterclockwise and call $p^ g(z)$ the intersection between $L^\theta$ and the bottom edge. Let also $T(z)$ the triangle $p(z)$, $p^f(z)$, $p^g(z)$. We begin by rotating $\OPT_W^{(t)}$ by 180\degree~around $p_M$.
	
	We now translate $P_{i_k}$ upwards until it intersects the top-edge of the Knapsack at $p(z^*)$ for some $z^*$. Note that $P_{i_k}$ is contained inside the triangle $T(z^*)$ 
	
	Let $I$ be the amount $T(z^*)$ intersects $L$. A similarity argument between $T(z^*)$ and $(z^*,N)$, $(z^*, N/2)$, $s^r +(I,0)$ gives us:
	\[ \frac{1}{2} = \frac{N/2+s^* -z^*+I}{p^g(z)-z^*} \]
	Since $p^g(z^*)\leq N$ we obtain $2s+2I\leq z^*$.
	
	We now translate $P_{i_k}$ to the left until $v_{i_k}^*$ coincides with $p_{TL}$. Let $q=(q_x,q_y)$ be the rightmost point in $P_{i_k} \cap L$. Since $2s+2I\leq z^*$ we know that $q_x \leq \frac{N}{2}-s$. Therefore $P_{i_k}$ is placed to the left of the line $L^*$ that passes through $v_{TL}$ and $s^\ell$. Furthermore, $\OPT_W^ {(t)}$ is to the right of $L^*$ as they are bottom-right packed. We know rotate $P_{i_k}$ counterclockwise around $v^*_{i_k}$ until it overlaps $s^\ell$. Finally, by rotating  $\OPT_W^{(t)}\cup \{P_{i_k}\}$ 180\degree~ around $p_M$ we arrive at a bottom-right packing of $\OPT_W^{(t)}\cup \{P_{i_k}\}$.
\end{proof}

We add $P_{i_{k}}$ to $\OPT_{W}^{(t)}$. We continue iteratively
until we assigned all triangles in $\OPT_{W}$ to the sets $\OPT_{W}^{(1)},...,\OPT_{W}^{(g)}$.
Then the most profitable set $\OPT_{W}^{(t^{*})}$ among them satisfies
that $w(\OPT_{W}^{(t^{*})})\ge\frac{1}{g}w(\OPT_{W})$. On the other
hand, $w(P_{i^{*}})\ge \Omega(w\left(\OPT\cap\Ph\setminus\OPT_{W}\right))$.
Hence, $w(\OPT_{W}^{(t^{*})})\ge\frac{1}{g}w(\OPT\cap\Ph)$ or $w(P_{i^{*}})\ge\Omega(w\left(\OPT\cap\Ph\right))$
which completes the proof of Lemma~\ref{lem:exists-good-packable-solution}.
\subsection{Hard polygons under resource augmentation}
Let $\delta>0$. We consider the setting of $(1+\delta)$-resource
augmentation, i.e., we want to compute a solution $\P'\subseteq\P$
that is feasible for a knapsack of size $(1+\delta)N\times(1+\delta)N$
and such that $w(\OPT)\le O(w(\P'))$ where $\OPT$ is the optimal
solution for the original knapsack of size $N\times N$. Note that
increasing $K$ by a factor of $1+\delta$ is equivalent to shrinking
the input polygons by a factor of $1+\delta$.

Given a polygon $P$ defined via coordinates $(x_{1},y_{1}),...,(x_{k},y_{k})\in\R^{2}$
we define $\shr_{1+\delta}(P)$ to be the polygon with coordinates
$(\bar{x}_{1},\bar{y}_{1}),...,(\bar{x}_{k},\bar{y}_{k})\in\R^{2}$
where $\bar{x}_{k'}=x_{k'}/(1+\delta)$ and $\bar{y}_{k'}=y_{k'}/(1+\delta)$
for each $k'$. For each input polygon $P_{i}\in\P$ we define its
shrunk counterpart to be $\bar{P}_{i}:=\shr_{1+\delta}(P_{i})$. 
{} Based on $\bar{\P}$ we define sets $\bPe,\bPm,\bPh$ and the set
$\bar{\P}_{j}$ for each $j\in\Z$ in the same way as we defined $\Pe,\Pm,\Ph$
and $\P_{j}$ based on $\P$ above.

For the sets $\bPe$ and $\bPm$ we use the algorithms due to Lemmas~\ref{lem:easy-polygons}
and \ref{lem:medium-polygons} as before. For the hard polygons $\bPh$
we can show that there are only $O_{\delta}(1)$ groups $\bar{\P}_{j}$
that are non-empty, using that we obtained them via shrinking the
original input polygons. Intuitively, this is true since $\bar{\ell}_{i}\leq\frac{\sqrt{2}N}{1+\delta}$
for each $\bar{P}_{i}\in\bar{\P}$ where $\bar{\ell}_{i}$ denotes
the length of the longest diagonal of $\bar{P}_{i}$, and hence $\bar{\P}_{j}\cap\bPh=\emptyset$
if $j<\log\left(\frac{\delta}{1+\delta}\sqrt{2}N\right)$. 
\begin{lem}
\label{lem:few-groups-shrink-1}We have that $\bar{\P}_{j}=\emptyset$
if $j<\log\left(\frac{\delta}{1+\delta}\sqrt{2}N\right)$. Hence,
there are only $\log\left(\frac{1+\delta}{\delta}\right)+1$ values
$j\in\Z$ such that $\bar{\P}_{j}\ne\emptyset$. 
\end{lem}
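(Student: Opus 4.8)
The plan is to reduce the whole statement to the single inequality $\bar{\ell}_{i}\le\frac{\sqrt{2}N}{1+\delta}$ for every $\bar{P}_{i}\in\bar{\P}$, which is the one point where the $(1+\delta)$-shrinking is used. First I would record why it holds. Because each $P_{i}$ was rotated so that its longest diagonal is horizontal, $\ell_{i}=\max_{\ell}x_{i,\ell}-\min_{\ell}x_{i,\ell}$ is exactly the distance between the two vertices of $P_{i}$ realizing its diameter, so $\ell_{i}=\diam(P_{i})$; by the same reasoning $h_{i}\le\diam(P_{i})=\ell_{i}$. Moreover we may discard from $\P$ every polygon that does not fit into $K$ — this changes neither $\OPT$ nor the shrunk instance, and the test ``$\ell_{i}^{2}\le 2N^{2}$'' is exact since $\ell_{i}^{2}$ is an integer — so we may assume $\ell_{i}=\diam(P_{i})\le\diam(K)=\sqrt{2}N$ for all $i$. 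Since $\shr_{1+\delta}$ scales all coordinates by $1/(1+\delta)$, this yields $\bar{\ell}_{i}=\ell_{i}/(1+\delta)\le\frac{\sqrt{2}N}{1+\delta}$ and $\bar{h}_{i}\le\bar{\ell}_{i}$ for every $\bar{P}_{i}\in\bar{\P}$.

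For the first assertion I would fix $j\in\Z$ with $j<\log\!\left(\frac{\delta}{1+\delta}\sqrt{2}N\right)$, i.e.\ $2^{j}<\frac{\delta}{1+\delta}\sqrt{2}N$, and note that then for every $\bar{P}_{i}\in\bar{\P}$,
\[
\bar{\ell}_{i}\le\frac{\sqrt{2}N}{1+\delta}=\sqrt{2}N-\frac{\delta}{1+\delta}\sqrt{2}N<\sqrt{2}N-2^{j}.
\]
Hence $\bar{\ell}_{i}$ lies strictly below the interval $[\sqrt{2}N-2^{j},\sqrt{2}N-2^{j-1})$ that governs membership in $\bar{\P}_{j}$, so $\bar{P}_{i}\notin\bar{\P}_{j}$; and $\bar{\ell}_{i}<\sqrt{2}N$ also rules out $\bar{P}_{i}\in\bar{\P}_{-\infty}$. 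Thus $\bar{\P}_{j}=\emptyset$ for all such $j$.

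For the counting statement I would add the matching upper cutoff. If $\bar{P}_{i}\in\bar{\P}_{j}$ then $\bar{P}_{i}\notin\bPe$, and since $\bar{h}_{i}\le\bar{\ell}_{i}$ this forces $\bar{\ell}_{i}>N$; together with $\bar{\ell}_{i}\le\frac{\sqrt{2}N}{1+\delta}$ and the fact that $\bar{P}_{i}\in\bar{\P}_{j}$ means $2^{j-1}<\sqrt{2}N-\bar{\ell}_{i}\le 2^{j}$, we obtain
\[
2^{j}\ \ge\ \sqrt{2}N-\bar{\ell}_{i}\ \ge\ \tfrac{\delta}{1+\delta}\sqrt{2}N
\qquad\text{and}\qquad
2^{j-1}\ <\ \sqrt{2}N-\bar{\ell}_{i}\ <\ (\sqrt{2}-1)N .
\]
Therefore every $j$ with $\bar{\P}_{j}\ne\emptyset$ satisfies $\log\!\left(\frac{\delta}{1+\delta}\sqrt{2}N\right)\le j<1+\log\bigl((\sqrt{2}-1)N\bigr)$; this half-open interval of indices has length
\[
1+\log\!\left(\frac{(\sqrt{2}-1)(1+\delta)}{\sqrt{2}\,\delta}\right)=1+\log\!\left(1-\tfrac{1}{\sqrt{2}}\right)+\log\!\left(\frac{1+\delta}{\delta}\right)<\log\!\left(\frac{1+\delta}{\delta}\right),
\]
using $\log\!\left(1-\tfrac{1}{\sqrt{2}}\right)<-1$, and hence contains at most $\log\!\left(\frac{1+\delta}{\delta}\right)+1$ integers. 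All of this is routine arithmetic with the definitions of $\bar{\P}_{j}$, $\bPe$ and $\shr_{1+\delta}$; the only step that genuinely requires a word of justification is the reduction to $\ell_{i}\le\sqrt{2}N$ in the first paragraph, since that (and only that) is where the resource augmentation hypothesis enters.
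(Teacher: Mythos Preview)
Your argument is correct and follows the same route as the paper: both derive $\bar\ell_i\le\frac{\sqrt2\,N}{1+\delta}$ from the shrinking and then read off the admissible range of $j$ from the defining inequalities $2^{j-1}<\sqrt2\,N-\bar\ell_i\le 2^j$. The one difference is in the upper cutoff. The paper uses only the trivial bound $\bar\ell_i\ge 0$, giving $2^{j-1}<\sqrt2\,N$ and hence $j<\log(2\sqrt2\,N)$; the interval $[\log(\tfrac{\delta}{1+\delta}\sqrt2\,N),\,\log(2\sqrt2\,N))$ then has length exactly $\log(\tfrac{1+\delta}{\delta})+1$. You instead invoke $\bar P_i\notin\bPe$ to get $\bar\ell_i>N$ and a sharper cutoff. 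That is fine, but it relies on reading $\bar{\P}_{j}$ as restricted to non-easy polygons---the paper is internally inconsistent about this (compare the definition of $\P_j$ at the end of the introductory part of Section~\ref{sec:Constant-factor-approximation} with the one recalled in the proof of Lemma~\ref{lem:Pj-properties}). The paper's cruder cutoff sidesteps that ambiguity and already gives the stated count, so your extra step, while correct under the restricted reading, is not needed.
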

\begin{proof}
Let $P_{i}$ be an arbitrary polygon. Note that $\bar{\ell}_{i}=\frac{1}{1+\delta}\ell_{i}$,
and therefore $\bar{\ell}_{i}\leq\frac{\sqrt{2}N}{1+\delta}$. We
conclude that $\frac{\delta}{1+\delta}\sqrt{2}N\leq\sqrt{2}N-\bar{\ell}_{i}\leq\sqrt{2}N$.
Note that, for any $j$, if $\bar{\P}_{j}$ is non-empty, there must
be a $P_{i}$ that satisfies $\sqrt{2}N-2^{j}\leq\bar{\ell}_{i}<\sqrt{2}N-2^{j-1}$
(or equivalently $2^{j-1}<\sqrt{2}N-\bar{\ell}_{i}\leq2^{j}$). We
conclude that such $j$'s must verify $\log\left(\frac{\delta}{1+\delta}\sqrt{2}N\right)\leq j<\log(2\sqrt{2}N)$
and therefore there are at most $\log(2\sqrt{2}N)-\log\left(\frac{\delta}{1+\delta}\sqrt{2}N\right)=\log\left(\frac{1+\delta}{\delta}\right)+1$
non-empty $\bar{\P}_{j}$
\end{proof}
Lemmas~\ref{lem:Pj-properties} and \ref{lem:few-groups-shrink-1}
imply that $|\overline{\OPT}\cap\bPh|\le O(\log\left(\frac{1+\delta}{\delta}\right))$
where $\overline{\OPT}$ denotes the optimal solution for the polygons
in $\bar{\P}$. Let $\bPh'\subseteq \bPh$ denote the set due to Lemma~\ref{lem:placement-hard-polygons}
when assuming that $\bPh$ are the hard polygons in the given instance.
Therefore, we guess $\bPh'$ in
time $n^{O(\log\left(\frac{1+\delta}{\delta}\right))}$. 
Finally, we output the solution of largest weight
among $\bPh'$ and the solutions due applying to Lemmas~\ref{lem:easy-polygons}
and \ref{lem:medium-polygons} to the input sets $\bPe$ and $\bPm$,
respectively. This yields the proof of Theorem~\ref{thm:RA-ptime}.

\section{Optimal profit under resource augmentation}

In this section we also study the setting of $(1+\delta)$-resource augmentation,
i.e., we want to compute a solution $\P'$ which is feasible for an
enlarged knapsack of size $(1+\delta)N\times(1+\delta)N$, for any
constant $\delta>0$. We present an algorithm with a running
time of $n^{(\log(n)/\delta)^{O(1)}}$ that computes such a solution
$\P'$ with $w(\P')\ge\OPT$ where $\OPT$ is the optimal solution
for the original knapsack of size $N\times N$. In particular, we
here do not lose any factor in our approximation guarantee.

First, we prove a set of properties that we can assume ``by $(1+\delta)$-resource
augmentation'' meaning that if we increase the size of $K$ by a
factor $1+\delta$ then there exists a solution of weight $w(\OPT)$
with the mentioned properties, or that we can modify the input in
time $n^{O(1)}$ such that it has these properties and there still
exists a solution of weight $w(\OPT)$.

\subsection{Few types of items}

We want to establish that the input polygons have only $(\log(n)/\delta)^{O(1)}$ different shapes.
Like in Section~\ref{sec:Constant-factor-approximation} for each
polygon $P_{i}\in\P$ denote by $B_{i}$ its bounding box with width
$\ell_{i}$ and height $h_{i}$. Note that $\ell_{i}\ge h_{i}$. The
bounding boxes of all polygons $P_{i}\in\P$ such that $h_{i}\le\delta\frac{N}{n}$
have a total height of at most $\delta N$. Therefore, we can pack
all these polygons into the extra capacity that we gain by increasing
the size of $K$ by a factor $1+\delta$ and therefore ignore them
in the sequel. 
\begin{lem}
\label{lem:large-height-width}By $(1+\delta)$-resource augmentation
we can assume for each $P_{i}\in\P$ that $\ell_{i}\ge h_{i}\ge\delta N/n$
and that $\area(P_{i})=\Omega(\area(K)\delta^{2}/n^{2})$. 
\end{lem}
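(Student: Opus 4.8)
The plan is to justify the two claimed assumptions separately. For the height lower bound, I would argue as sketched in the text preceding the lemma: consider the set $\P_{\mathrm{small}}:=\{P_i\in\P : h_i<\delta N/n\}$. Each bounding box $B_i$ with $P_i\in\P_{\mathrm{small}}$ has width $\ell_i\le N$ (this needs a word: if $\ell_i>N$ we can still clip, but more simply we observe that any polygon with $\ell_i\le\sqrt2 N$ — which holds for all input polygons since the knapsack has diameter $\sqrt2N$ and an item that does not fit at all can be discarded — has a bounding box of width at most $\sqrt2N$; to be safe I would just note $\ell_i\le N$ after possibly splitting off the at most $O(1)$ items with $\ell_i>N$, or stack the boxes in a strip of width $(1+\delta)N$). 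Stacking these bounding boxes on top of each other in a horizontal strip of width $(1+\delta)N$ produces a region of height at most $\sum_{P_i\in\P_{\mathrm{small}}}h_i<n\cdot\delta N/n=\delta N$. Hence all of $\P_{\mathrm{small}}$ fits into the horizontal strip $[0,(1+\delta)N]\times[N,(1+\delta)N]$ that we gain from $(1+\delta)$-resource augmentation, placed above the original knapsack $K$. Therefore we may add every polygon of $\P_{\mathrm{small}}$ to our output for free, and in the remaining instance every surviving polygon satisfies $h_i\ge\delta N/n$; combined with $\ell_i\ge h_i$ (which holds by our initial rotation making the longest diagonal horizontal, so $\ell_i$ is at least the diameter $\ge h_i$) this gives $\ell_i\ge h_i\ge\delta N/n$.

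For the area bound, I would invoke Lemma~\ref{lem:area-bound}, which gives $\area(P_i)\ge\frac12\area(B_i)=\frac12\ell_i h_i$. Using $\ell_i\ge h_i\ge\delta N/n$ we get $\area(P_i)\ge\frac12(\delta N/n)^2=\frac{\delta^2}{2n^2}N^2=\Omega(\area(K)\delta^2/n^2)$, since $\area(K)=N^2$. This is immediate once the height bound is established.

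The one genuinely delicate point is making the packing argument for $\P_{\mathrm{small}}$ clean: we must be sure that each individual bounding box fits into the strip we are stacking into, i.e.\ that $\ell_i$ does not exceed the strip width. The cleanest fix is to first discard the $O(1)$ input polygons whose longest diagonal exceeds $N$ in a way that they cannot possibly be placed in $K$ — but note some hard polygons with $\ell_i$ up to $\sqrt2N$ \emph{do} fit, so we cannot discard those. Instead I would stack the bounding boxes of $\P_{\mathrm{small}}$ inside the strip $[0,(1+\delta)N]\times[N,(1+\delta)N]$, which has width $(1+\delta)N>\sqrt2N$ is false for small $\delta$; so the correct statement is to stack them in a strip of the \emph{original} width $N$ after first noting that a polygon $P_i$ that fits into $K$ has a placement, but its axis-parallel bounding box in the rotated-to-horizontal-diagonal frame need not have width $\le N$. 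The robust resolution: stack the boxes in the strip $[0,(1+\delta)N]\times[N,(1+\delta)N]$ and observe that any polygon $P_i$ of $\P$ with $\ell_i>(1+\delta)N$ has $\ell_i>N$ and hence, having diameter exceeding $N$, still might fit in $K$ only if $\ell_i\le\sqrt2 N$; for $\delta$ with $(1+\delta)\ge\sqrt2$ this is a non-issue, and for smaller $\delta$ one treats the $O(1/\delta)=O_\delta(1)$ many ``wide and short'' polygons (those with $\ell_i>(1+\delta)N$ and $h_i<\delta N/n$) separately — there are only $O_\delta(1)$ of them in $\OPT$ by the same area/diagonal counting as in Lemma~\ref{lem:Pj-properties}, so we can guess them and lose nothing. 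Thus, modulo this bookkeeping, the statement follows.
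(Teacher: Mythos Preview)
Your core argument---stack the bounding boxes of the thin polygons in the extra strip, then invoke Lemma~\ref{lem:area-bound} for the area bound---is exactly what the paper does, and both the $\ell_i\ge h_i$ inequality and the area conclusion are handled the same way. In that sense the proposal is correct and matches the paper.

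Where you diverge is that you (correctly) notice a subtlety the paper's own proof glosses over: a thin polygon may have $\ell_i>N$, so its horizontal bounding box need not fit in a width-$N$ (or even width-$(1+\delta)N$) strip. However, your attempted patch is the weak point. The claim that there are only $O_\delta(1)$ ``wide and short'' polygons in $\OPT$ ``by the same area/diagonal counting as in Lemma~\ref{lem:Pj-properties}'' is not justified: those polygons have area $O(\delta N^2/n)$, which is tiny, so area counting gives no useful bound on their number, and nothing prevents many such needles from being stacked near the diagonal in $\OPT$. A cleaner repair is to apply resource augmentation twice (redefining $\delta$): first stack all thin polygons with $\ell_i\le N$ axis-parallel in a horizontal extra strip of width $N$; then, for thin polygons with $N<\ell_i\le\sqrt2 N$, stack their $\ell_i\times h_i$ bounding boxes rotated by $45^\circ$ along the diagonal of a further-augmented knapsack---each such rectangle fits at $45^\circ$ since $\ell_i+h_i<\sqrt2 N+\delta N/n\le\sqrt2(1+\delta)N$, and their total thickness $\sum h_i<\delta N$ is absorbed by the extra diagonal length. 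This avoids any guessing and keeps the argument elementary.
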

\begin{proof}
Let $v_1,v_2$ be the two vertices that define $h_i$. Note that $d(v_1,v_2)\geq h_i$ by the triangle inequality and $d(v_1,v_2)\leq \ell_i$ by definition, concluding that $\ell_i \geq h_i$. Additionally, let $\mathcal P_{thin}$ be the polygons $P_i$ that satisfy $h_i\leq \delta \frac{N}{n}$. Note that $\sum_{P_i \in \mathcal \P_{thin}} h_i \leq \delta N$, and therefore, we can pack these polygons into the extra capacity. Finally, by Lemma~\ref{lem:area-bound} $\area (P_i)\geq \frac{1}{2}\ell_{i}h_i = \frac{1}{2}\frac{\delta ^ 2 N^ 2}{n^2 }$ for each remaining polygon $P_i$.
\end{proof}
Next, intuitively we stretch the optimal solution $\OPT$ by a factor
$1+\delta$ which yields a container $C_{i}$ for each polygon $P_{i}\in\OPT$
which contains $P_{i}$ and which is slightly bigger than $P_{i}$.
We define a polygon $P'_{i}$ such that $P_{i}\subseteq P'_{i}\subseteq C_{i}$
and that globally there are only $(\log(n)/\delta)^{O_{\delta}(1)}$
different ways $P'_{i}$ can look like, up to translations and rotations.
We refer to those as a set $\S$ of \emph{shapes} of input objects.
Hence, due to the resource augmentation we can replace each input
polygon $P_{i}$ by one of the shapes in $\S$. 
\begin{lem}
\label{lem:few-items} By $(1+\delta)$-resource augmentation we can
assume that there is a set of shapes $\S$ with $|\S|\le(\log(n)/\delta)^{O_{\delta}(1)}$
such that for each $P_{i}\in\P$ there is a shape $S\in\S$ such
that $P_{i}=S$ and $S$ has only $\Lambda={(1/\delta)}^{O(1)}$ many
vertices. 
\end{lem}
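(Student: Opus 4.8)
The plan is to build $\S$ by rounding each input polygon \emph{outward} onto a coarse grid adapted to its bounding box, the resource augmentation being exactly what pays for the outward rounding. At the cost of replacing $\delta$ by a constant fraction of itself throughout (which absorbs both the composition with Lemma~\ref{lem:large-height-width} and the several rounding steps below), I assume $\delta N/n\le h_i\le\ell_i\le\sqrt2N$ for every $P_i\in\P$, and I work in the canonical frame of Section~\ref{sec:Constant-factor-approximation}: the longest diagonal of $P_i$ is horizontal, and after a translation $B_i=[0,\ell_i]\times[-a_i,b_i]$ with $a_i+b_i=h_i$; note the endpoints of this diagonal are the extreme points of $P_i$ in the $x$-direction (else some vertex would be farther from one endpoint than $\diam(P_i)$), so it spans the full width $\ell_i$ of $B_i$. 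For a point $q$ and $\mu>0$ write $\mathrm{dil}^{q}_{\mu}$ for the homothety of ratio $\mu$ about $q$. Take an optimal solution $\OPT$ for $K$ and scale the whole packing by $1+\delta$ about the origin; writing $c_i$ for the centroid of $P_i$, a placement $R_iP_i+t_i$ of $P_i$ becomes $R_i\bigl(\mathrm{dil}^{c_i}_{1+\delta}(P_i)\bigr)+t_i'$ for a suitable $t_i'$, so in the scaled packing each $P_i\in\OPT$ appears as a non-overlapping isometric copy of $\mathrm{dil}^{c_i}_{1+\delta}(P_i)$ inside $(1+\delta)K$. Hence it suffices to construct, for every $P_i\in\P$, a convex polygon $P_i'$ with $P_i\subseteq P_i'\subseteq\mathrm{dil}^{c_i}_{1+\delta}(P_i)$ having only $(1/\delta)^{O(1)}$ vertices and whose congruence class lies in a set $\S$ with $|\S|\le(\log(n)/\delta)^{O_{\delta}(1)}$: placing $P_i'$ by the isometry carrying $\mathrm{dil}^{c_i}_{1+\delta}(P_i)$ to its slot shows $\{P_i':P_i\in\OPT\}$ is feasible for $(1+\delta)K$ with weight $w(\OPT)$, while $P_i\subseteq P_i'$ turns any packing of the $P_i'$ back into one of the $P_i$ of the same weight.

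The construction hinges on a geometric claim: there is a small universal constant $\alpha>0$ with
\[ \mathrm{dil}^{c_i}_{1+\delta}(P_i)\ \supseteq\ P_i\ \oplus\ \bigl([-\alpha\delta\ell_i,\alpha\delta\ell_i]\times[-\alpha\delta h_i,\alpha\delta h_i]\bigr), \]
the box being axis-parallel in the canonical frame and $\oplus$ the Minkowski sum. Since $\mathrm{dil}^{c_i}_{1+\delta}(P_i)=P_i\oplus\delta(P_i-c_i)$, and both sides transform compatibly under the affine map $T_i$ sending $B_i$ to $[0,1]^2$ (affine maps preserve centroids and homothety ratios and send Minkowski sums to Minkowski sums after applying their linear part), and since $\hat P_i:=T_i(P_i)$ lies in $[0,1]^2$, meets all four of its sides (as $B_i$ is the bounding box of $P_i$), and contains a full-width horizontal chord (the image under $T_i$ of the longest diagonal, which spans the full width of $B_i$), the claim reduces to: \emph{any planar convex body with these three properties contains the square $[-\alpha,\alpha]^2$ centred at its centroid}. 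This is elementary: the centroid $\hat c$ lies in the middle third of the body along every direction, so the body minus $\hat c$ contains the shifted chord — a horizontal unit segment meeting the vertical axis at height within $[-\tfrac23,\tfrac23]$ — together with points of signed height $\ge\tfrac13$ and $\le-\tfrac13$, and a short case analysis on the convex hull of these witnesses yields $[-\alpha,\alpha]^2$. Granting the claim, fix $P_i$, round $\ell_i$ and $h_i$ up to the nearest powers of $1+\delta$ (a $1+\delta$ perturbation, part of the budget), overlay on the canonical frame the axis-parallel grid $\Gamma_i$ of horizontal spacing $\alpha\delta\ell_i$ and vertical spacing $\alpha\delta h_i$, and let $P_i'$ be the convex hull of all closed cells of $\Gamma_i$ meeting $P_i$. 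Then $P_i\subseteq P_i'\subseteq P_i\oplus\bigl([-\alpha\delta\ell_i,\alpha\delta\ell_i]\times[-\alpha\delta h_i,\alpha\delta h_i]\bigr)\subseteq\mathrm{dil}^{c_i}_{1+\delta}(P_i)$ by the claim (shrinking $\alpha$ in the grid step if needed), every vertex of $P_i'$ lies on $\Gamma_i$, and since $P_i'\subseteq\mathrm{dil}^{c_i}_{1+\delta}(B_i)$ lies in a box of dimensions $O(\ell_i)\times O(h_i)$ it meets only $O(1/\delta)$ horizontal grid lines; as a convex polygon has at most two vertices on each horizontal line, $P_i'$ has at most $\Lambda=O(1/\delta)=(1/\delta)^{O(1)}$ vertices.

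For $|\S|$: up to congruence (and the two orientations of the canonical frame) $P_i'$ is determined by the rounded values of $\ell_i$ and $h_i$ — each a power of $1+\delta$ in $[\delta N/n,\sqrt2N]$, hence one of $O\bigl(\tfrac1\delta\log\tfrac n\delta\bigr)=(\log(n)/\delta)^{O(1)}$ possibilities — together with which cells of the relevant $O(\ell_i)\times O(h_i)$ window of $\Gamma_i$ are included, of which there are at most $2^{O(1/\delta^2)}$ since that window has $O(1/\delta^2)$ cells. Thus $|\S|\le(\log(n)/\delta)^{O(1)}\cdot2^{O(1/\delta^2)}\le(\log(n)/\delta)^{O_{\delta}(1)}$, and each shape has at most $\Lambda=(1/\delta)^{O(1)}$ vertices. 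All $P_i'$ are computable in time $n^{O(1)}$ — using, as elsewhere in the paper, sufficiently accurate estimates of the possibly irrational $\ell_i,h_i$ — which proves the lemma.

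I expect the geometric claim of the second paragraph to be the main obstacle: that the $(1+\delta)$-homothety about the centroid simultaneously buys room $\Omega(\delta\ell_i)$ horizontally and $\Omega(\delta h_i)$ vertically, i.e.\ an axis-parallel box of the correct aspect ratio rather than just a tiny one. This is exactly where the canonical frame matters — one needs the longest diagonal, once made horizontal, to span the \emph{full} bounding-box width $\ell_i$, so that $\hat P_i$ (which automatically touches all four sides of $[0,1]^2$) also contains a full-width horizontal chord; without this a degenerate diagonal sliver would defeat the claim. Everything else — splitting the resource-augmentation budget across the rounding steps, and handling irrational coordinates via estimates — is routine bookkeeping.
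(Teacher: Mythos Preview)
Your approach is correct and genuinely different from the paper's. Both proofs exploit that resource augmentation lets one replace each $P_i$ by a slightly larger polygon $P_i'$ with $P_i\subseteq P_i'\subseteq\mathrm{dil}_{1+\delta}(P_i)$, but they construct $P_i'$ very differently. The paper works intrinsically: it splits the bounding box into four sub-rectangles $R_1,\dots,R_4$ around the extreme points $p,q,r,s$, normalises away degenerate sub-rectangles by repeated shrinkings toward carefully chosen centres, and then in each surviving sub-rectangle shoots $O_\delta(1)$ rays from a corner to discretise the boundary; the shape is encoded by $\ell_i'$ (a power of $1+\delta$), a handful of anchor points $t',t'',u',u''$, and the ray-hit distances. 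Your construction is more uniform: one geometric lemma --- that $\mathrm{dil}^{c_i}_{1+\delta}(P_i)\supseteq P_i\oplus([-\alpha\delta\ell_i,\alpha\delta\ell_i]\times[-\alpha\delta h_i,\alpha\delta h_i])$ --- reduces everything to snapping $P_i$ outward to an anisotropic grid adapted to $B_i$, after which the vertex bound and shape count are immediate. What your route buys is conceptual cleanliness and a single nontrivial step; what the paper's route buys is that it never needs that centroid-box lemma, whose proof (your ``short case analysis'') is the one place your write-up is thin --- it is true (the full-width horizontal chord is exactly what prevents the diagonal-sliver counterexample, and a compactness or direct convex-hull argument on the witnesses $(0,y_0),(1,y_0),(x_T,1),(x_B,0)$ together with the $n{=}2$ centroid reflection $-\tfrac12(K-c)\subseteq K-c$ finishes it), but in a full version you should spell it out. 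One bookkeeping point: make explicit that the grid $\Gamma_i$ uses the \emph{rounded} spacings $\alpha\delta\ell_i',\alpha\delta h_i'$ and a fixed origin in the canonical frame, so that the congruence class of $P_i'$ really is determined by $(\ell_i',h_i',\text{cell pattern})$.
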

\begin{proof}
Let $P$ be an input polygon. Assume that $P$ is rotated such that
its longest diagonal is horizontal, i.e., let $p=(p_{x},p_{y}),q=(q_{x},q_{y})$
denote the vertices of $P$ with largest distance; we assume that
$p$ and $q$ lie on a horizontal line, i.e., that $p_{y}=q_{y}$.
Furthermore, let $r=(r_{x},r_{y}),s=(s_{x},s_{y})$ denote the vertices
of $P$ with minimum and maximum $y$-coordinate, respectively (see Figure \ref{fig:lem25}).

\begin{figure}[t]
	\begin{centering}
		\includegraphics[scale=0.6]{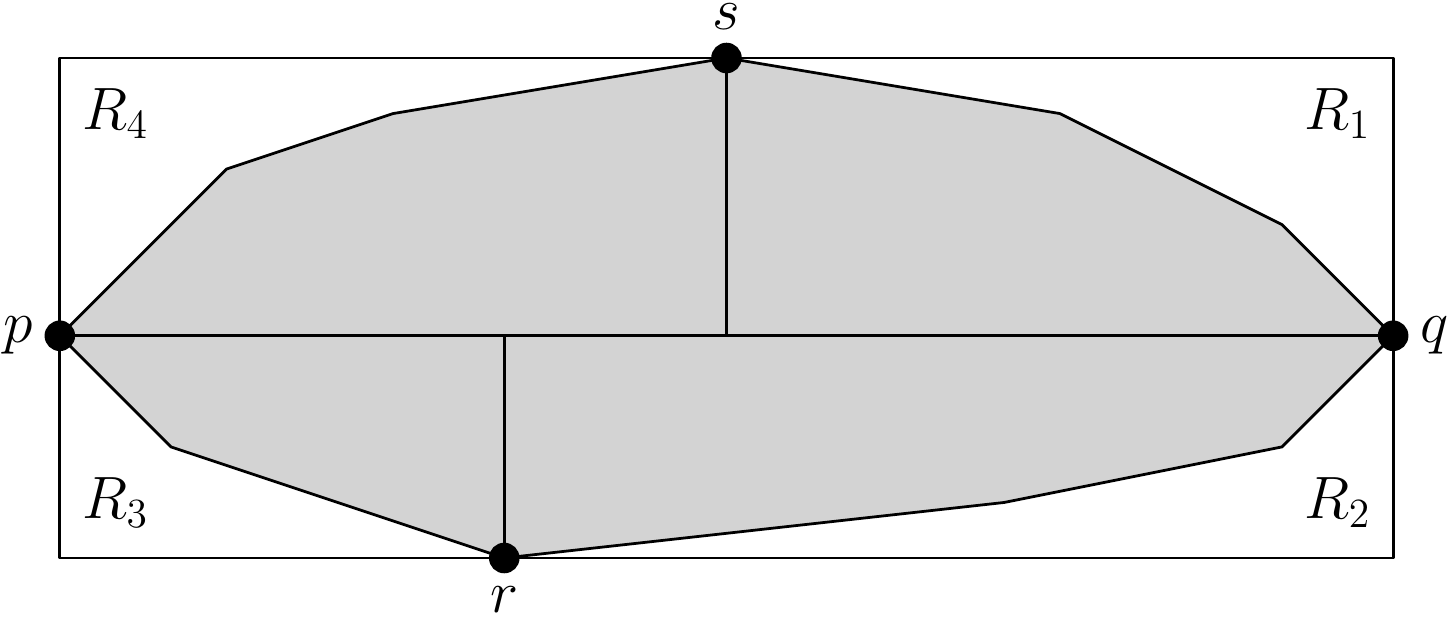}
	\end{centering}
	\caption{\label{fig:lem25} The points and rectangles used in the proof of Lemma \ref{lem:few-items}.}
\end{figure}

Note that extending the knapsack by a factor $1+\delta$ is equivalent
to shrinking each input polygon by a factor $1+\delta$. Let $\shr_{1+\delta}(P)$
denote the polygon obtained by shrinking $P$ towards the origin,
i.e., by replacing each vertex $v=(v_{x},v_{y})$ of $P$ by the vertex
$v'=(\frac{v_{x}}{1+\delta},\frac{v_{y}}{1+\delta})$. Our goal is
to show that there exists a polygon $P'$ whose shape is one shape
out of $(\log(n)/\delta)^{O_{\delta}(1)}$ options such that there
is a translation vector $\overrightarrow{a}$ with $\overrightarrow{a}+\shr_{1+\delta}(P)\subseteq P'\subseteq P$.
Then we replace in the input the polygon $P$ by a polygon $\tilde{P}$
which is congruent to $P'$. Hence, for the shapes of the resulting
polygons $\tilde{P}$ there are only $(\log(n)/\delta)^{O_{\delta}(1)}$
options. In the process, we will shrink $P$ a constant number of
times. Then the claim follows by redefining $\delta$ accordingly.

First, we shrink $P$ by a factor of at most $1+\delta$ such that
the line segment connecting $p$ and $q$ has a length that is a power
of $1+\delta$. Let $\ell'_{i}$ denote this new length. Since originally
$\sqrt{2}N\ge\ell_{i}\ge\delta N/n$ there are only $O(\log_{1+\delta}n)$
options for $\ell'_{i}$. We partition the bounding box of $P$ into
four rectangles where
\begin{itemize}
\item $R_{1}$ is the (unique) rectangle with vertices $s$ and $q$,
\item $R_{2}$ is the (unique) rectangle with vertices $r$ and $q$,
\item $R_{3}$ is the (unique) rectangle with vertices $p$ and $r$, and
\item $R_{4}$ is the (unique) rectangle with vertices $p$ and $s$.
\end{itemize}
We translate $P$ such that $p$ is the origin. If the width of $R_{1}$
is smaller than $\delta\ell'_{i}$ then intuitively we shrink $P$
by a factor $1+\delta$ towards $p$ such that $q_{x}$ is again a
power of $1+\delta$ and $s_{x}=q_{x}$. More formally, we replace
each vertex $v=(v_{x},v_{y})$ of $P$ by a vertex $v'=(v'_{x},v'_{y})$
such that $v'_{x}=\alpha v_{x}$ and $v'_{y}=\alpha v_{y}$ for some
value $\alpha\in(\frac{1}{1+\delta},1]$. First, we move $q$ towards
$p$ such that $q_{x}$ is the next smaller power of $1+\delta$.
Then we move $s$ towards $p$ such that $s_{x}=q_{x}$. Finally,
we move each remaining vertex $v$ by exactly a factor $1+\delta$
towards $p$. As a result, $R_{1}$ becomes empty. We perform similar
operations in case that the width of $R_{2},R_{3}$, or $R_{4}$ is
smaller than $\delta\ell'_{i}$. Also, we perform a similar operation
in case that the height of $R_{1}$ (identical to the height of $R_{4}$)
is smaller than $\delta h_{i}$ or that the height of $R_{2}$ (identical
to the height of $R_{3}$) is smaller than $\delta h_{i}$. In the
latter operations we move the vertices of $P$ towards $s$ or $r$,
respectively.

Assume again that $p$ is the origin. Let $t:=(s_{x},p_{y})$. Let
$t'=(t'_{x},t'_{y})$ such that $t'_{y}=p_{y}$, and $t'_{x}$ is
the smallest value $t'_{x}$ with $t'_{x}\ge t_{x}$ such that the
distance between $t'$ and $q$ is a multiple of $\delta^{3}\ell'_{i}$.
In particular, then $t'_{x}-t_{x}\le\delta^{3}\ell'_{i}\le\delta^{2}(q_{x}-t_{x})$
and note that $q_{x}-t_{x}$ is the width of $R_{1}$ for which $q_{x}-t_{x}\ge\delta\ell'_{i}$
holds.

We define $s'=(s'_{x},s'_{y})$ such that $s'_{x}=t'_{x}$ and $s'_{y}$
is that largest value $s'_{y}$ such that $s'=(s'_{x},s'_{y})$ lies
inside $P$. Observe that $s'_{y}\ge(1-\delta^{2})s_{y}$ since $P$
includes all points on the line segment connecting $s$ and $q$ by
convexity. Similarly, we define a point $t''$ between $p$ and $t$
and a corresponding point $s''$. We move each vertex $v=(v_{x},v_{y})$
of $P$ towards $t$ that satisfy that $t''_{x}\le v_{x}\le t'_{x}$
and $v_{y}\ge t''_{y}=t_{y}=t'_{y}$, i.e., we reduce the distance
between $v$ and $t$ by a factor $1/(1+\delta)$ which we justify
via shrinking. One can show that afterwards $v$ lies in the convex
hull spanned by the other vertices of $P$ and $s'$ and $s''$, using
that $s'_{y}\ge(1-\delta^{2})s_{y}\ge(1-\delta^{2})v_{y}$. Hence,
we can remove $v$.

We move $P$ such that $t'$ becomes the origin. Let $R'_{1}$ denote
the (unique) rectangle with vertices $s'$ and $q$. Our goal is
now to move the vertices within $R'_{1}$ such that only $O_{\delta}(1)$
vertices remain and that for the coordinate of each of them there
are only $(\log(n)/\delta)^{O_{\delta}(1)}$ options. Whenever we
move a vertex $v$ within $R'_{1}$ we move $v$ towards $t'$ such
that the distance between $v$ and $t'$ decreases by at most a factor
$1+\delta$ but keep $s'$ and $q$ unchanged. In this way the distance
between $t'$ and $q$ does not change and the distance between $t'$
and $s'$ does not change either. Let $h$ denote the distance between
$t'$ and $s'$ and let $w$ denote the distance between $t'$ and
$q$. Assume w.l.o.g.~that $h=w=1$ and that $t'$ is the origin.
Observe that by convexity each point on the line segment connecting
$s'$ and $q$ lies within $P$.

Let $k\in\N$ be a constant with $k=O_{\delta}(1)$ to be defined
later. We shoot rays $r_{0},...,r_{k}$ originating at $t'$ such
that $r_{0}$ goes through $s'$, $r_{k}$ goes through $q$, and
between any two consecutive rays $r_{j},r_{j+1}$ there is an angle
of exactly $\frac{\pi}{2k}$ (see Figure~\ref{fig:shrunk}).
\begin{figure}[t]
	\begin{centering}
		\includegraphics[scale=0.6]{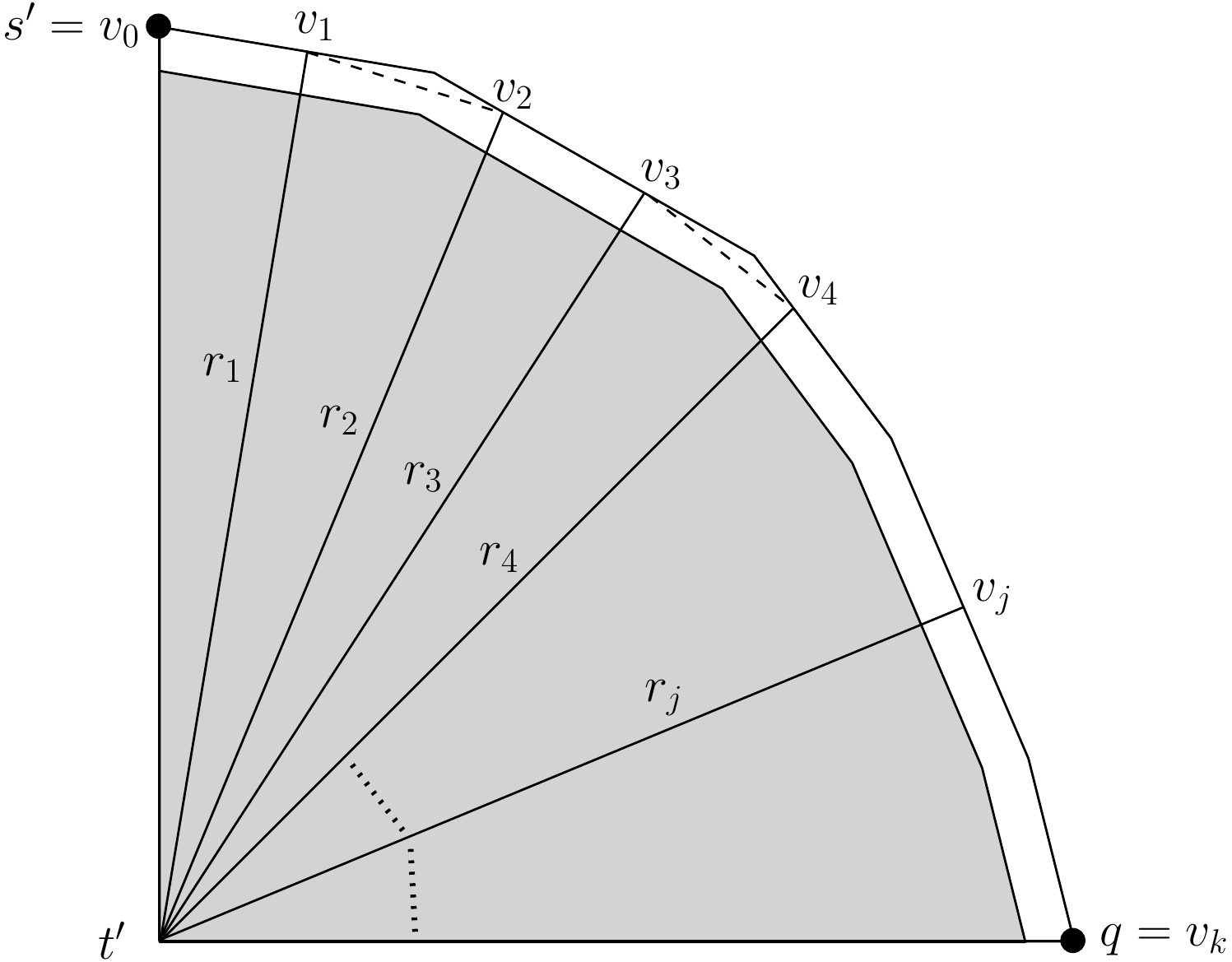}
	\end{centering}
	\caption{\label{fig:shrunk} The points and rays related to the shrinking of $R_1$. In gray we see the shrunk polygon and in white the original.}
\end{figure}
For each $j\in\{0,...,k\}$
denote by $v_{j}$ the point on the boundary of $P$ that is intersected
by $r_{j}$ (not necessarily a vertex of $P$). Imagine that we shrink
$P$ such that we move each vertex $v=(v_{x},v_{y})$ of $P$ towards
$t'=0$, i.e., we replace $v$ by the point $v':=(\frac{v_{x}}{1+\delta},\frac{v_{y}}{1+\delta})$.
We argue that $v'$ lies in the convex hull of $t',v_{0},...,v_{k}$
and hence we can remove $v'$. Let $r$ be a ray originating at $t'$
and going through $v$. Suppose that $r_{j}$ and $r_{j+1}$ are the
rays closest to $r$. Let $v_{j}=(v_{j}^{x},v_{j}^{y})$ and $v_{j+1}=(v_{j+1}^{x},v_{j+1}^{y})$.
Then $v$ lies in the convex hull of $v_{j},v_{j+1}$ and the point
$(v_{j+1}^{x},v_{j}^{y})$. We have that $v_{j}^{x}\ge1/3$ and $v_{j+1}^{x}\ge1/3$
or that $v_{j}^{y}\ge1/3$ and $v_{j+1}^{y}\ge1/3$. Assume w.l.o.g.~that
$v_{j}^{x}\ge1/3$ and $v_{j+1}^{x}\ge1/3$. We claim that then $v_{j}^{x}\le v_{x}\le v_{j+1}^{x}\le(1+\delta)v_{j}^{x}$.
The first two inequalities follow from convexity. For proving that
$v_{j+1}^{x}\le(1+\delta)v_{j}^{x}$, we can assume that $v_{j}^{x}\le1/(1+\delta)$
since otherwise the claim is immediate. This implies that $v_{j}^{y}\ge\Omega(\delta)$.
Also observe that $v_{j+1}^{y}\le v_{j}^{y}$ since otherwise $v_{j}$
would not be on the boundary of $P$, by convexity. Therefore, $v_{j+1}^{x}$
cannot be larger than the $x$-coordinate of the point on $r_{j+1}$
with $y$-coordinate $v_{j}^{y}$. Using that $1/(1+\delta)\ge v_{j}^{x}\ge1/3$
one can show that there is a choice for $k\in O_{\delta}(1)$ that
ensures that $v_{j+1}^{x}\le(1+\delta)v_{j}^{x}$. Finally, for each
point $v_{j}$ with $j\in\{1,...,k-1\}$ we move $v_{j}$ towards
$t'$ such that the distance between $v_{j}$ and $t'$ becomes a
power of $1+\delta$. Since before the shrinking this distance $\Omega(1)$
there are only $O_{\delta}(1)$ options for the resulting distance.

In a similar way we define $R'_{2},R'_{3}$, and $R'_{4}$ and perform
a symmetric operation on them. The resulting polygon is defined via
$\ell'_{i}$, the positions of $t',t''$, the positions of the vertices
$u'$ and $u''$ (which are defined analogously to $t'$ and $t''$),
for $R'_{1}$ the distance between $t'$ and $s'$ and the distances
of the $O_{\delta}(1)$ vertices $v_{j}$ to $t'$, and the respective
values for $R'_{2},R'_{3}$, and $R'_{4}$. For each of these values
there are only $(\log(n)/\delta)^{O_{\delta}(1)}$ options and there
are $O_{\delta}(1)$ such values in total. Hence, there are $(\log(n)/\delta)^{O_{\delta}(1)}$
possibilities for the resulting shape.
\end{proof}
Finally, we ensure that for each polygon $P_{i}\in\P$ we can restrict
ourselves to only $(n/\delta)^{O(1)}$ possible placements in $K$.
\begin{lem}
\label{lem:few-positions} By $(1+\delta)$-resource augmentation,
for each polygon $P_{i}\in\P$ we can compute a set $\L_{i}$ of at
most $(n/\delta)^{O(1)}$ possible placements for $P_{i}$ in time
$(n/\delta)^{O(1)}$ such that if $P_{i}\in\OPT$ then in $\OPT$
the polygon $P_{i}$ is placed inside $K$ according to one placement
$\tilde{P}_{i}\in\L_{i}$.
\end{lem}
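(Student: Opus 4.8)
The plan is to use the extra room provided by the $(1+\delta)$-resource augmentation to \emph{snap} every polygon of $\OPT$ onto a discrete grid of positions and orientations, without introducing any overlap.

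\textbf{Reductions.} By Lemmas~\ref{lem:large-height-width} and~\ref{lem:few-items} --- applied after replacing $\delta$ by a constant fraction of itself, so that resource augmentation remains available --- I may assume that every $P_i\in\P$ has $\ell_i\ge h_i\ge\delta N/n$, has $(1/\delta)^{O(1)}$ vertices, and (discarding polygons that do not fit into $K$, hence cannot lie in $\OPT$) satisfies $\diam(P_i)=\ell_i\le 2\sqrt2\,N$. Combining $h_i\ge\delta N/n$ with Lemma~\ref{lem:area-bound} gives $\area(P_i)\ge\tfrac12\ell_ih_i$; since $\area(P_i)\le\diam(P_i)\cdot\omega_i=\ell_i\omega_i$, where $\omega_i$ denotes the minimal width of $P_i$ over all directions, this forces $\omega_i\ge\tfrac12 h_i=\Omega(\delta N/n)$. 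By the classical fact that a planar convex body has minimal width at most three times its inradius, the inradius $\rho_i$ of $P_i$ therefore also satisfies $\rho_i=\Omega(\delta N/n)$.

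\textbf{Creating slack.} Take the placement of $\OPT$ in $[0,N]^2$ and scale the whole configuration (knapsack and all polygons) about the origin by the factor $1+\delta$; this yields pairwise non-overlapping sets $\hat P_i\subseteq[0,(1+\delta)N]^2$, one for each $P_i\in\OPT$, with $\hat P_i$ congruent to $(1+\delta)P_i$ and hence containing a ball $B(z_i,(1+\delta)\rho_i)$. Place inside $\hat P_i$ the homothet $\hat P_i':=z_i+\tfrac{1}{1+\delta}(\hat P_i-z_i)$, which is congruent to $P_i$ at full size. A one-line convexity check --- writing, for $p\in\hat P_i$ and $|u|\le\delta\rho_i$, $\;z_i+\tfrac{1}{1+\delta}(p-z_i)+u=\tfrac{1}{1+\delta}p+\tfrac{\delta}{1+\delta}\bigl(z_i+\tfrac{1+\delta}{\delta}u\bigr)$ and noting that $z_i+\tfrac{1+\delta}{\delta}u\in B(z_i,(1+\delta)\rho_i)\subseteq\hat P_i$ --- shows that $\hat P_i'\oplus B(0,\mu_i)\subseteq\hat P_i$ with $\mu_i:=\delta\rho_i=\Omega(\delta^2N/n)$. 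Thus the sets $\hat P_i'$ are pairwise non-overlapping, lie inside the enlarged knapsack, and each is surrounded by a margin of free space of width $\mu_i$.

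\textbf{Discretizing.} Fix a grid granularity $g:=\Theta(\delta^2N/n)$ and an angular granularity $\eta:=\Theta(\delta^2/n)$, with small enough implicit constants so that $g\sqrt2+\eta\cdot\diam(P_i)\le\mu_i$ for every $i$. Writing $\hat P_i'=d_i^{*}+\rot_{\alpha_i^{*}}(P_i)$, round $\alpha_i^{*}$ to the nearest multiple $\alpha_i$ of $\eta$ and $d_i^{*}$ to the nearest point $d_i\in g\Z^2$, and set $\tilde P_i:=d_i+\rot_{\alpha_i}(P_i)$. Rotating about the (fixed) first vertex by at most $\eta$ displaces every point of $P_i$ by at most $\eta\cdot\diam(P_i)$, and the subsequent translation by at most $g\sqrt2$, so $\tilde P_i$ lies within Hausdorff distance $\mu_i$ of $\hat P_i'$ and hence $\tilde P_i\subseteq\hat P_i\subseteq[0,(1+\delta)N]^2$. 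Therefore the $\tilde P_i$ form a feasible packing of the polygons of $\OPT$ into the enlarged knapsack, of weight $w(\OPT)$, in which each $P_i$ uses a placement from $\L_i:=\{\,d+\rot_\alpha(P_i):d\in g\Z^2,\ \alpha\in\eta\Z\cap[0,2\pi),\ d+\rot_\alpha(P_i)\subseteq[0,(1+\delta)N]^2\,\}$. Since $\diam(P_i)=O(N)$, only $O((N/g)^2)=O(n^2/\delta^4)$ translations and $O(1/\eta)=O(n/\delta^2)$ rotations are relevant, so $|\L_i|=(n/\delta)^{O(1)}$ and $\L_i$ is computable in time $(n/\delta)^{O(1)}$. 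The main obstacle is the slack estimate of the previous paragraph: establishing the \emph{uniform} margin $\mu_i=\Omega(\delta\rho_i)$ produced by the incenter homothety, and the lower bound $\rho_i=\Omega(\delta N/n)$ which rests on Lemmas~\ref{lem:large-height-width} and~\ref{lem:area-bound} and the width/inradius inequality; everything else is bookkeeping of constants.
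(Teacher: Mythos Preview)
Your proof is correct, but it follows a genuinely different route from the paper's. The paper does \emph{not} use the inradius at all: it orders the polygons of $\OPT$ ``left to right'' (using that two disjoint convex sets are consistently ordered on every horizontal line they share), shifts the $k$-th polygon to the right by $k\cdot\frac{\delta}{n}N$ units and similarly upward, which manufactures a uniform $\Theta(\delta N/n)$ gap around every polygon, and then snaps translations to a grid of mesh $\Theta(\delta N/n)$ and angles to multiples of $\Theta(\delta/n)$. Your approach instead scales the entire packing by $1+\delta$ and then contracts each blown-up copy towards its incenter by the reciprocal factor; the key inequality $\hat P_i'\oplus B(0,\delta\rho_i)\subseteq\hat P_i$ together with the inradius bound $\rho_i=\Omega(\delta N/n)$ gives a margin $\Theta(\delta^2 N/n)$, after which the same grid-snapping works. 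What each buys: the paper's argument is self-contained (no thickness hypothesis, hence no dependence on Lemma~\ref{lem:large-height-width}) but relies on the existence of a global left-to-right order of disjoint convex sets; your argument sidesteps the ordering entirely and is pleasantly uniform, at the cost of first invoking Lemma~\ref{lem:large-height-width} to discard thin polygons and of a slack that is one factor of $\delta$ smaller. Two minor remarks: your appeal to Lemma~\ref{lem:few-items} (the vertex bound) is never used in the subsequent argument and can be dropped; and you should note, if only in passing, that the thin polygons removed by Lemma~\ref{lem:large-height-width} also receive placements from a polynomial-size list, since the strip packing of their bounding boxes can itself be rounded to the same grid.
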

\begin{proof}
First, we prove that for each polygon $P_{i}$ it suffices to allow
	only $(n/\delta)^{O(1)}$ possible vectors $d$ when defining its
	placement $\tilde{P}_{i}$ as $\tilde{P}_{i}=d+\rot_{\alpha}(P_{i})$.
	To this end, assume that $\OPT=\{P_{1},...,P_{k}\}$. For each polygon
	$P_{i}\in\OPT$ denote by $\tilde{P}_{i}$ its corresponding placement
	in $\OPT$. We assume that for any $\tilde{P}_{i},\tilde{P}_{i'}$
	with $i<i'$ it holds that intuitively $\tilde{P}_{i}$ lies on the
	left of $\tilde{P}_{i'}$. Formally, we require that if there is a
	horizontal line $L$ that has non-empty intersection with both $\tilde{P}_{i}$
	and $\tilde{P}_{i'}$ then $L\cap\tilde{P}_{i}$ lies on the left
	of $L\cap\tilde{P}_{i'}$. Since the polygons are convex such an ordering
	exists.
	
	Now for each $k'\in\{1,...,k\}$ we move $\tilde{P}_{k'}$ by $k'\cdot\frac{\delta}{n}N$
	units to the right. Since $k\le n$ the resulting placement fits into
	the knapsack using $(1+\delta)$-resource augmentation. Intuitively,
	in the resulting placement, each polygon $\tilde{P}_{i}$ has $\frac{\delta}{n}N$
	units of empty space on its left and on its right.
	
	In a similar fashion we move all polygons up such that they still
	fit into the knapsack under $(1+\delta)$-resource augmentation and
	intuitively, each polygon has $\frac{\delta}{n}N$ units of empty
	space above and below it. For each polygon $P_{i}$ let $v_{i}$ be
	its first vertex $(x'_{i,1},y'_{i,1})$. We move each polygon $\tilde{P}_{i}$
	such that $v_{i}$ is placed on a point whose coordinates are integral
	multiples of $\frac{\delta}{4n}N$. For achieving this, it suffices
	to move $\tilde{P}_{i}$ by at most $\frac{\delta}{4n}N$ units down
	and by at most $\frac{\delta}{4n}N$ units to the left. By the above,
	we can do this for all polygons simultaneously without making them
	intersect. Also, intuitively each polygon still has $\frac{\delta}{4n}N$
	units of empty space around it in all four directions.
	
	Now we argue that we can rotate each polygon $\tilde{P}_{i}$ such
	that its angle is one out of $(n/\delta)^{O(1)}$ many possible angles.
	Consider a polygon $\tilde{P}_{i}$. Suppose that we rotate it around
	its vertex $v_{i}$. We want to argue that if we rotate $\tilde{P}_{i}$
	by an angle of at most $\frac{\delta}{16n}$ then this moves each
	vertex of $\tilde{P}_{i}$ by at most $\frac{\delta}{4n}N$ units.
	Let $v'_{i}$ be a vertex of $\tilde{P}_{i}$ with $v_{i}\ne v'_{i}$.
	Let $x$ denote the distance between the old and the new position
	of $v'_{i}$ if we rotate $\tilde{P}_{i}$ by an angle of $\alpha$.
	Then we have that $x=\sin\alpha\frac{\overline{v_{i}v'_{i}}}{\sin((\pi-\alpha)/2)}\le4N\alpha\le\frac{\delta}{4n}N$,
	assuming that $\alpha\le\frac{\delta}{16n}$ and that $\delta$ is
	sufficiently small.
	
	Therefore, we rotate $\tilde{P}_{i}$ around $v_{i}$ by an angle
	of at most $\frac{\delta}{16n}$ such that $d+\rot_{\alpha}(P_{i})=\tilde{P}_{i}$
	for an angle $\alpha$ which is an integral multiple of $\frac{\delta}{16n}$.
	Due to our movement of $\tilde{P}_{i}$ before we can assume that
	$d=\binom{d_1}{d_2}$ satisfies that $d_{1}$ and $d_{2}$ are
	integral multiples of $\frac{\delta}{4n}N$. Thus, for $d$ and for
	$\alpha$ there are only $(n/\delta)^{O(1)}$ possibilities which
	yields only $(n/\delta)^{O(1)}$ possible placements for $P_{i}$.
\end{proof}

\subsection{Recursive algorithm}

We describe our main algorithm. First, we guess how many polygons
of each of the shapes in $\S$ are contained in $\OPT$. Since there
are only $(\log(n)/\delta)^{O_{\delta}(1)}$ different shapes in $\S$
we can do this in time $n^{(\log(n)/\delta)^{O_{\delta}(1)}}$. Once
we know how many polygons of each shape we need to select, it is clear
which polygons we should take since if for some shape $S_{i}\in\S$
we need to select $n_{i}$ polygons with that shape then it is optimal
to select the $n_{i}$ polygons in $\P$ of shape $S_{i}$ with largest
weight. Therefore, in the sequel assume that we are given a set of
polygons $\P'\subseteq\P$ and we want to find a packing for them
inside $K$.

Our algorithm is recursive and it generalizes a similar algorithm
for the special case of axis-parallel rectangles in~\cite{adamaszek2014qptas}.
On a high level, we guess a partition of $K$ given by a separator
$\Gamma$ which is a polygon inside $K$. It has the property that
at most $\frac{2}{3}|\OPT|$ of the polygons of $\OPT$ lie inside
$\Gamma$ and at most $\frac{2}{3}|\OPT|$ of the polygons of $\OPT$
lie outside $\Gamma$. We guess how many polygons of each shape are
placed inside and outside $\Gamma$ in $\OPT$. Then we recurse separately
inside and outside $\Gamma$. For our partition, we are looking for
a polygon $\Gamma$ according to the following definition. 
\begin{defn}
Let $\ell\in\mathbb{N}$ and $\epsilon>0$. Let $\bar{\P}$ be a set
of pairwise non-overlapping polygons in $K$. A polygon $\Gamma$
is a \emph{balanced $\hat{\epsilon}$-cheap $\ell$-cut} if 
\end{defn}
\begin{itemize}
\item $\Gamma$ has at most $\ell$ edges, 
\item the polygons contained in $\Gamma$ have a total area of at most $2/3\cdot\area(\bar{\P})$, 
\item the polygons contained in the complement of $\Gamma$, i.e., in $K\setminus\Gamma$,
have a total area of at most $2/3\cdot\area(\bar{\P})$, and 
\item the polygons intersecting the boundary of $\Gamma$ have a total area
of at most $\hat{\epsilon}\cdot\area(\bar{\P})$.
\end{itemize}
In order to restrict the set of balanced cheap cuts to consider, we
will allow only polygons $\Gamma$ such that each of its vertices
is contained in a set $Q$ of size $(n/\delta)^{O(1)}$ defined as
follows. We fix a triangulation for each placement $P'_{i}\in\L_{i}$
of each polygon $P_{i}\in\P'$. We define a set $Q_{0}$ where for
each placement $P'_{i}\in\L_{i}$ for $P_{i}$ we add to $Q_{0}$
the positions of the vertices of $P'_{i}$. Also, we add the four
corners of $K$ to $Q_{0}$. Let $\V$ denote the set of vertical
lines $\{(\bar{x},\bar{y})|\bar{y}\in\R\}$ such that $\bar{x}$ is
the $x$-coordinate of one point in $Q_{0}$. We define a set $Q_{1}$
where for each placement $P'_{i}\in\L_{i}$ of each $P_{i}\in\P'$,
each edge $e$ of a triangle in the triangulation of $P'_{i}$, and
each vertical line $L\in\V$ we add to $Q_{1}$ the intersection of
$e$ and $L$. Also, we add to $Q_{1}$ the intersection of each line
in $L\in\V$ with the two boundary edges of $K$. Let $Q_{2}$ denote
the set of all intersections of pairs of line segments whose respective
endpoints are in $Q_{0}\cup Q_{1}$. We define $Q:=Q_{0}\cup Q_{1}\cup Q_{2}$.
A result in~\cite{adamaszek2014qptas} implies that there exists
a balanced cheap cut whose vertices are all contained in $Q$. 
\begin{lem}
[\cite{adamaszek2014qptas}]\label{lem:cheap-cut}Let $\epsilon>0$
and let $\P'$ be a set of pairwise non-intersecting polygons in the plane
with at most $\varLambda$ edges each such that $\area(P)<\area(\P')/3$ for each $P\in\P$. Then there
exists a balanced $O(\epsilon\varLambda)$-cheap $\varLambda(\frac{1}{\epsilon})^{O(1)}$-cut
$\Gamma$ whose vertices are contained in $Q$. 
\end{lem}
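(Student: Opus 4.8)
The plan is to reduce the statement to the corresponding result for axis-parallel rectangles proved in~\cite{adamaszek2014qptas}, by working with the triangulated pieces of the polygons rather than the polygons themselves, and by carefully controlling how the discretized point set $Q$ interacts with the separator produced there. First I would triangulate each polygon $P\in\P'$ into at most $\varLambda-2$ triangles; since the polygons are pairwise non-overlapping, this yields a set $\mathcal T$ of at most $\varLambda|\P'|$ pairwise non-overlapping triangles, and the area of each triangle is at most $\area(P)<\area(\P')/3$. I would then invoke the construction from~\cite{adamaszek2014qptas}, which builds a separator by a recursive ham-sandwich-type argument on a bounding structure of the triangles: one repeatedly cuts along axis-parallel lines and along edges of the triangles, charging the area of triangles crossed by the cut. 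The key quantitative output of that argument is that there is a polygonal separator with $\varLambda(1/\epsilon)^{O(1)}$ edges that is area-balanced with respect to $\mathcal T$ (hence with respect to $\P'$, since a polygon lies inside/outside $\Gamma$ only if all its triangles do) and that the triangles crossed by $\Gamma$ have total area $O(\epsilon)\cdot\area(\mathcal T)=O(\epsilon)\cdot\area(\P')$; multiplying by the at most $\varLambda$ triangles per polygon, the polygons crossed by $\Gamma$ have total area $O(\epsilon\varLambda)\cdot\area(\P')$.

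The second part of the argument is to show that we may take all vertices of $\Gamma$ to lie in $Q$. Here I would examine where the vertices of the separator produced by~\cite{adamaszek2014qptas} can possibly be. Each such vertex is either a corner of $K$, or the intersection of a cutting line with an edge of a triangle, or the intersection of two cutting lines, where each cutting line is either axis-parallel through a vertex of some triangle, or supported by an edge of some triangle. Since the triangulations of all placements $P'_i\in\L_i$ are fixed in advance, every triangle edge that can ever occur is among the ones used to define $Q_0$ and $Q_1$: $Q_0$ contains the triangle vertices (and the corners of $K$), $\V$ contains exactly the vertical lines through points of $Q_0$, $Q_1$ contains the intersections of triangle edges with these vertical lines and with the boundary of $K$, and $Q_2$ contains all pairwise intersections of segments with endpoints in $Q_0\cup Q_1$. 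A short case analysis then shows that every possible separator vertex is of one of these three forms and therefore lies in $Q=Q_0\cup Q_1\cup Q_2$. (If~\cite{adamaszek2014qptas} only uses vertical cutting lines and horizontal ones can be reduced to vertical by symmetry, this is immediate; otherwise one adds the horizontal analogue of $\V$, which only changes constants.) Finally, $|Q_0|,|Q_1|=(n/\delta)^{O(1)}$ because $|\L_i|=(n/\delta)^{O(1)}$ by Lemma~\ref{lem:few-positions} and each placement has $O_\delta(1)$ vertices, so $|Q_2|=(n/\delta)^{O(1)}$ as well, confirming that $Q$ has the claimed size.

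The main obstacle I anticipate is the second part: verifying that the separator from~\cite{adamaszek2014qptas} really can be taken with all vertices in the explicitly-defined set $Q$, rather than in some other natural but differently-structured point set. The cited paper's construction is stated for axis-parallel rectangles, and although the abstract combinatorial argument (recursive balanced cutting along $O(1)$-complexity curves, charging crossed area) carries over verbatim once one replaces "rectangle" by "triangle", one must check that the extra degree of freedom — triangle edges are not axis-parallel — does not create separator vertices outside $Q$. Making this rigorous requires pinning down exactly which lines the construction cuts along and arguing, line by line, that their pairwise intersections and their intersections with triangle edges are captured by $Q_1$ and $Q_2$; I would present this as a careful but routine adaptation, since the conceptual content is entirely inherited from~\cite{adamaszek2014qptas}. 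Everything else — the area bounds, the balance condition, and the edge count — transfers directly with only a factor $\varLambda$ lost when passing from triangles back to polygons.
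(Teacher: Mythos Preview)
The paper does not prove this lemma at all: it is stated with the attribution ``[\cite{adamaszek2014qptas}]'' and used as a black box, so there is no ``paper's own proof'' to compare against. Your proposal therefore goes beyond what the paper does, attempting to sketch why the cited result holds in the present setting.

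As a sketch it is broadly reasonable, but one step deserves a warning. Your accounting for the $O(\epsilon\Lambda)$ cheapness factor --- ``multiplying by the at most $\Lambda$ triangles per polygon'' --- does not work as stated. If the separator crosses triangles of total area $O(\epsilon)\cdot\area(\P')$, a crossed polygon may contribute only a single tiny crossed triangle while having large total area; there is no inequality of the form $\area(P)\le\Lambda\cdot\area(T)$ for an arbitrary triangle $T$ of $P$. The $\Lambda$ factors in both the cheapness and the edge count arise inside the construction of~\cite{adamaszek2014qptas} (the separator is built from pieces of object boundaries and auxiliary lines, and an object with $\Lambda$ edges contributes up to $\Lambda$ boundary segments to each level of the recursion), not from a post-hoc triangle-to-polygon conversion. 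If you want to present a proof rather than a citation, you should follow the recursive cutting argument directly on the $\Lambda$-gons, tracking how the edge budget and the crossed-area charge scale with $\Lambda$.

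Your discussion of why the separator's vertices lie in $Q$ is the right kind of analysis and matches the intent behind the definition of $Q_0,Q_1,Q_2$ in the paper; the paper itself simply asserts that the cited construction produces such a separator without spelling out this case analysis.
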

Our algorithm is recursive and places polygons from $\P'$, trying
to maximize the total area of the placed polygons. 
In each recursive call we are given an area $\bar{K}\subseteq K$
and a set of polygons $\bar{\P}\subseteq\P'$. In the main call these
parameters are $\bar{K}=K$ and $\bar{\P}=\P'$. If $\bar{\P}=\emptyset$
then we return an empty solution. If there is a polygon $P_{i}\in\bar{\P}$
with $\area(P_{i})\ge\area(\bar{\P})/3$ then we guess a placement
$P'_{i}\in\L_{i}$ and we recurse on the area $K\setminus P_{i}'$
and on the set $\bar{\P}\setminus\{P_{i}\}$. Otherwise, we guess
the balanced cheap cut $\Gamma$ due to Lemma~\ref{lem:cheap-cut}
with $\epsilon:=\frac{\delta}{\varLambda\log(n/\delta)}$ and for
each shape $S\in\S$ we guess how many polygons of $\P'$ with shape
$S$ are contained in $\Gamma\cap\bar{K}$, how many are contained
in $\bar{K}\setminus\Gamma$, and how many cross the boundary of $\Gamma$
(i.e., have non-empty intersection with the boundary of $\Gamma$).
Note that there are only $n^{(\Lambda\log(n/\delta))^{O(1)}}$ possibilities
to enumerate. Let $\bar{\P}_{\mathrm{in}},\bar{\P}{}_{\mathrm{out}}$,
and $\bar{\P}{}_{\mathrm{cross}}$ denote the respective sets of polygons.
Then we recurse on the area $\Gamma\cap\bar{K}$ with input polygons
$\bar{\P}_{\mathrm{in}}$ and on the area $\bar{K}\setminus\Gamma$
with input polygons $\bar{\P}{}_{\mathrm{out}}$. Suppose that the
recursive calls return two sets of polygons $\bar{\P}'_{\mathrm{in}}\subseteq\bar{\P}{}_{\mathrm{in}}$
and $\bar{\P}'{}_{\mathrm{out}}\subseteq\bar{\P}{}_{\mathrm{out}}$
that the algorithm managed to place inside the respective areas $\Gamma\cap\bar{K}$
and $\bar{K}\setminus\Gamma$. Then we return the set $\bar{\P}'_{\mathrm{in}}\cup\bar{\P}'_{\mathrm{out}}$
for the guesses of $\Gamma$, $\bar{\P}{}_{\mathrm{in}},\bar{\P}{}_{\mathrm{out}}$,
and $\bar{\P}{}_{\mathrm{cross}}$ that maximize $\area(\bar{\P}'_{\mathrm{in}}\cup\bar{\P}'_{\mathrm{out}})$.
If we guess the (correct) balanced cheap cut due to Lemma~\ref{lem:cheap-cut}
in each iteration then our recursion depth is $O(\log_{3/2}(n^{2}/\delta^{2}))=O(\log(n/\delta))$
since the cuts are balanced and each polygon has an area of at least
$\Omega(\area(K)\delta^{2}/n^{2})$ (see Lemma~\ref{lem:large-height-width}).
Therefore, if in a recursive call of the algorithm the recursion depth
is larger than $O(\log(n/\delta))$ then we return the empty set and
do not recurse further. Also, if we guess the correct cut in each
node of the recursion tree then we cut polygons whose total area is
at most a $\frac{\delta}{\log(n/\delta)}$-fraction of the area of
all remaining polygons. Since our recursion depth is $O(\log(n/\delta))$,
our algorithm outputs a packing for a set of polygons in $\P'$ with
area at least $(1-\frac{\delta}{\log(n/\delta)})^{O(\log(n/\delta))}\bar{w}(\bar{\P})=(1-O(\delta))\area(\bar{\P})$.
This implies the following lemma. 
\begin{lem}
Assume that there is a non-overlapping packing for $\P'$ in $K$.
There is an algorithm with a running time of $n^{(\Lambda\log (n/\delta))^{O(1)}}$
that computes a placement of a set of polygons $\bar{\P}'\subseteq\P'$
inside $K$ such that $\area(\bar{\P}')\ge(1-O(\delta))\area(\P')$. 
\end{lem}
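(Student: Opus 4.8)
The plan is to prove the lemma by analyzing the recursive algorithm just described: establish that it terminates within the claimed running time, and then argue by induction on the recursion depth that along the branch making the ``correct'' guesses it places a subset of $\P'$ of total area at least $(1-O(\delta))\area(\P')$.

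For the running time, I would first observe that every node of the recursion tree makes one of two kinds of guesses. Either it isolates a single polygon $P_i\in\bar\P$ with $\area(P_i)\ge\area(\bar\P)/3$ and guesses a placement $P_i'\in\L_i$, of which there are only $(n/\delta)^{O(1)}$ by Lemma~\ref{lem:few-positions}; or it guesses a balanced $O(\epsilon\Lambda)$-cheap $\Lambda(1/\epsilon)^{O(1)}$-cut $\Gamma$ with $\epsilon=\frac{\delta}{\Lambda\log(n/\delta)}$ whose vertices lie in the precomputable set $Q$ of size $(n/\delta)^{O(1)}$ (Lemma~\ref{lem:cheap-cut}), together with, for each of the $(\log(n)/\delta)^{O_\delta(1)}$ shapes in $\S$, the number of polygons of that shape that fall inside $\Gamma$, outside $\Gamma$, and across $\partial\Gamma$. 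Either way the number of guesses per node is $n^{(\Lambda\log(n/\delta))^{O(1)}}$; truncating the recursion at depth $c\log(n/\delta)$ for a suitable constant $c$ bounds the number of nodes by the same quantity, giving the stated running time.

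For correctness I would fix, once and for all, a feasible packing of $\P'$ in $K$ (which exists by hypothesis) and follow the branch that guesses, at each node, exactly the data this packing induces on the current region $\bar K$. The inductive claim is that with remaining depth budget $d$ the algorithm recovers area at least $(1-\frac{\delta}{\log(n/\delta)})^d\,\area(\bar\P)$ out of a feasibly packable instance $(\bar K,\bar\P)$. If a heavy polygon $P_i$ with $\area(P_i)\ge\area(\bar\P)/3$ exists, I would guess its placement $\tilde P_i\in\L_i$, note that $\bar\P\setminus\{P_i\}$ still packs feasibly into $\bar K\setminus\tilde P_i$, apply induction there, and add back $\area(P_i)\ge\area(\bar\P)/3$; since after removal the heaviest remaining polygon has area at most $\tfrac23\area(\bar\P)$, this step makes genuine progress toward emptying the instance. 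Otherwise every polygon has area $<\area(\bar\P)/3$, so Lemma~\ref{lem:cheap-cut} yields a cut $\Gamma$ with vertices in $Q$, the crossing polygons $\bar\P_{\mathrm{cross}}$ have total area at most $O(\epsilon\Lambda)\area(\bar\P)=\frac{\delta}{\log(n/\delta)}\area(\bar\P)$ and are discarded, and the two sides $\bar\P_{\mathrm{in}},\bar\P_{\mathrm{out}}$ pack feasibly into $\Gamma\cap\bar K$ and $\bar K\setminus\Gamma$ with each side carrying at most $\tfrac23\area(\bar\P)$; applying induction to both subproblems and summing yields placed area at least $(1-\tfrac{\delta}{\log(n/\delta)})^{d-1}\bigl(\area(\bar\P)-\area(\bar\P_{\mathrm{cross}})\bigr)\ge(1-\tfrac{\delta}{\log(n/\delta)})^d\area(\bar\P)$. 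Starting from $\bar\P=\P'$, $\bar K=K$ and $d=O(\log(n/\delta))$ then gives placed area at least $(1-\tfrac{\delta}{\log(n/\delta)})^{O(\log(n/\delta))}\area(\P')=(1-O(\delta))\area(\P')$.

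I expect the main obstacle to be making the recursion-depth bound airtight: one must show that in both cases the area handed to each recursive subproblem is a constant factor smaller than $\area(\bar\P)$, so that after $O(\log(n/\delta))$ levels nothing survives, using that every polygon kept has area $\Omega(\area(K)\delta^2/n^2)$ by Lemma~\ref{lem:large-height-width}. The balancedness of $\Gamma$ controls the area on each side, but it must be combined with the heavy-polygon case split; and one has to double-check that $Q$, hence the family of candidate cuts, is genuinely precomputable in time $(n/\delta)^{O(1)}$ and independent of the unknown optimal packing — this is exactly where the restriction to the placement sets $\L_i$ from Lemma~\ref{lem:few-positions} and the explicit construction of $Q_0,Q_1,Q_2$ enter.
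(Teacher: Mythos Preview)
Your proposal is correct and follows essentially the same approach as the paper: the paper does not give a separate proof but states the lemma as a consequence of the preceding paragraph, which is exactly the recursion-depth plus per-level-loss calculation you spell out, and your inductive formulation with the invariant $(1-\tfrac{\delta}{\log(n/\delta)})^d\area(\bar\P)$ is the natural way to make that sketch precise. One minor wording slip: in the heavy-polygon case the reason the subproblem shrinks is that $\area(\bar\P\setminus\{P_i\})\le\tfrac23\area(\bar\P)$, not that ``the heaviest remaining polygon has area at most $\tfrac23\area(\bar\P)$''; fix that and the argument goes through unchanged.
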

It remains to pack the polygons in $\tilde{\P}':=\P'\setminus\bar{\P}'$.
The total area of their bounding boxes is bounded by $\sum_{P_{i}\in\tilde{\P}'}B_{i}\le2\area(\tilde{\P}')\le O(\delta)\area(\P')\le O(\delta)\area(K)$.
Therefore, we can pack them into additional space that we gain via
increasing the size of $K$ by a factor $1+O(\delta)$, using the Next-Fit-Decreasing-Height algorithm~\cite{coffman1980performance}. 
\begin{thm}
There is an algorithm with a running time of $n^{(\log (n)/\delta))^{O(1)}}$
that computes a set $\P'$ with $w(\P')\ge\OPT$ such that $\P'$
fits into $K$ under $(1+\delta)$-resource augmentation. 
\end{thm}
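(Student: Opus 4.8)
The plan is to assemble the ingredients built in this section. First I would invoke, one after another and each ``by $(1+\delta)$-resource augmentation'', Lemmas~\ref{lem:large-height-width}, \ref{lem:few-items}, and \ref{lem:few-positions}. After these reductions we may assume that every (remaining) polygon satisfies $\ell_i\ge h_i\ge\delta N/n$ and $\area(P_i)=\Omega(\area(K)\delta^2/n^2)$, that every polygon equals one of $|\S|\le(\log(n)/\delta)^{O_{\delta}(1)}$ shapes, each shape having $\Lambda=(1/\delta)^{O(1)}$ vertices, and that for each polygon there is a precomputable list $\L_i$ of only $(n/\delta)^{O(1)}$ candidate placements; moreover the modified instance still admits a feasible solution of weight at least $w(\OPT)$ inside $K$. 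Each of these lemmas spends only a constant number of $(1+\delta)$-shrinking steps, so after an initial rescaling of $\delta$ all of them together consume only a single $(1+\delta)$-factor of extra room, leaving us a further constant-factor margin to spend at the end.

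Next I would guess, for each shape $S\in\S$, the number $n_S$ of polygons of shape $S$ used by the (modified) optimal solution; since $|\S|\le(\log(n)/\delta)^{O_{\delta}(1)}$ there are at most $n^{(\log(n)/\delta)^{O_{\delta}(1)}}$ possibilities. For the correct guess, the best choice is to take, for each $S$, the $n_S$ input polygons of shape $S$ of largest weight; call the resulting family $\P'$. Because the shapes are exact congruences, $\P'$ can be placed inside $K$ precisely where the optimal solution places its polygons of the corresponding shapes, so $\P'$ is packable in $K$, and by the choice of representatives $w(\P')\ge w(\OPT)$.

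Then I would run the recursive algorithm on $\bar K=K$ and $\bar\P=\P'$. Since $\P'$ is packable in $K$, the preceding lemma gives, in time $n^{(\Lambda\log(n/\delta))^{O(1)}}=n^{(\log(n)/\delta)^{O(1)}}$, a non-overlapping placement inside $K$ of a subset $\bar{\P}'\subseteq\P'$ with $\area(\bar{\P}')\ge(1-O(\delta))\area(\P')$. For the leftover polygons $\tilde{\P}':=\P'\setminus\bar{\P}'$, Lemma~\ref{lem:area-bound} gives $\sum_{P_i\in\tilde{\P}'}\area(B_i)\le 2\area(\tilde{\P}')\le O(\delta)\area(\P')\le O(\delta)\area(K)$, so enlarging $K$ by a further $1+O(\delta)$ factor produces a strip that holds all the bounding boxes of $\tilde{\P}'$ via Next-Fit-Decreasing-Height~\cite{coffman1980performance}; placing each $P_i\in\tilde{\P}'$ inside its box then packs all of $\tilde{\P}'$. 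Consequently we pack \emph{all} of $\P'$, the output has weight $w(\P')\ge w(\OPT)$, and after the initial rescaling of $\delta$ the entire construction fits into $(1+\delta)N\times(1+\delta)N$. The running time is the product of the $n^{(\log(n)/\delta)^{O_{\delta}(1)}}$ shape-count guesses with the $n^{(\log(n)/\delta)^{O(1)}}$ cost of the recursive algorithm, i.e.\ $n^{(\log(n)/\delta)^{O(1)}}$.

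The step I expect to require the most care is not any isolated computation but the bookkeeping that ties the pieces together: one must verify that the constantly-many $(1+\delta)$-shrinkings from Lemmas~\ref{lem:large-height-width}--\ref{lem:few-positions} plus the final $(1+O(\delta))$-shrinking for the NFDH strip compose into a single $(1+\delta)$-factor (handled by an up-front rescaling of $\delta$), and that the hypothesis ``$\P'$ is packable in $K$'' genuinely holds for the \emph{modified} instance — which hinges on the shapes produced by Lemma~\ref{lem:few-items} being exact congruences, so that the optimal packing of the modified instance itself certifies packability of our weight-maximal representative set $\P'$.
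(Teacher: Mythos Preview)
Your proposal is correct and follows essentially the same approach as the paper: invoke the three resource-augmentation reductions, guess the shape multiplicities, select the heaviest representatives to form $\P'$, run the recursive balanced-cut algorithm to place all but an $O(\delta)$-area fraction of $\P'$, and pack the remainder via NFDH in an extra strip. If anything, you are slightly more explicit than the paper about two points it leaves implicit—the packability of $\P'$ (via congruence with the modified optimum) and the composition of the constantly many $(1+\delta)$-factors into a single one by an up-front rescaling of $\delta$.
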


\bibliographystyle{alpha}
\bibliography{citations}

\end{document}